\DeclareMathOperator{\diag}{diag}
\DeclareMathOperator{\sgn}{sgn}
\crefname{hypothesis}{Hypothesis}{Hypotheses}
\long\def\comment#1{}
\title{Sub/super-stochastic matrix with applications to bipartite tracking control over signed networks
 \thanks{Submitted to SIAM Journal on Control and Optimization
}}
\author{Lei Shi\thanks{School of Automation Engineering, University of Electronic Science and Technology of China, Chengdu, Sichuan, 611731, China
  (\email{shilei@std.uestc.edu.cn}, \email{jinliangshao@uestc.edu.cn}, \email{yhcheng@uestc.edu.cn}).}
\and Wei Xing Zheng\thanks{School of Computing, Engineering and Mathematics, Western Sydney University, Sydney, NSW 2751, Australia
  (\email{w.zheng@westernsydney.edu.au}).}
\and Jinliang Shao\footnotemark[2]
\and Yuhua Cheng\footnotemark[2]}
\begin{document}

\maketitle
\begin{abstract}
In this contribution, the properties of sub-stochastic matrix and super-stochastic matrix are applied to analyze the bipartite tracking issues of multi-agent systems (MASs) over signed networks, in which the edges with positive weight and negative weight are used to describe the cooperation and competition among the agents, respectively. For the sake of integrity of the study, the overall content is divided into two parts. In the first part, we examine the dynamics of bipartite tracking for first-order MASs, second-order MASs and general linear MASs in the presence of asynchronous interactions, respectively. Asynchronous interactions mean that each agent only interacts with its neighbors at the instants when it wants to update the state rather than keeping compulsory consistent with other agents. In the second part, we investigate the problems of bipartite tracing in different practical scenarios, such as time delays, switching topologies, random networks, lossy links, matrix disturbance, external noise disturbance, and a leader of unmeasurable velocity and acceleration. The bipartite tracking problems of MASs under these different scenario settings can be equivalently converted into the product convergence problems of infinite sub-stochastic matrices (ISubSM) or infinite super-stochastic matrices (ISupSM). With the help of nonnegative matrix theory together with some key results related to the compositions of directed edge sets, we establish systematic algebraic-graphical methods of dealing with the product convergence of ISubSM and ISupSM. Finally, the efficiency of the proposed methods is verified by computer simulations.
\end{abstract}

\begin{keywords}
Sub-stochastic matrix, super-stochastic matrix, multi-agent systems, bipartite tracking control.
\end{keywords}

\begin{AMS}
93A14, 93C55, 15B51, 05C50
\end{AMS}

\section{Introduction}\label{section:1}

Consensus of multi-agent systems (MASs) has caused intense interest among researchers for quite a long period of time, thanks to its widespread applications in areas such as physics and computer science. Consensus refers to that a set of agents can implement a completely consistent state, such as position or opinion, by developing a distributed control protocol. In the existing literature focusing on consensus of MASs \cite{Cao2008Reaching,Li2010Consensus,Huang2009Coordination,Gao2011Sampled,Yu2011Second-order,Chen2013Consensus,Qin2013On,
Wen2014Consensus,Alvergue2016const,Liu2018net,Yan2018new,Yu2018dist}, it is usually assumed that the agents cooperate with each other, that is, the weight of each edge in the communication topology is positive. Nevertheless, in many real-world network systems including social networks \cite{Tao2006Interest,Green2006Children}, bus transport networks \cite{Hu2016Bus} and neural networks \cite{Mao2007Dynamics,Chandra2015Competition}, the coexistence of cooperation and competition among agents is an indisputable fact. Tao \emph{et al.}~\cite{Tao2006Interest} believed that agents can only achieve mutual benefit and win-win in the business environment by working together, but on the other hand, some selfish agents may compete with other agents for their own benefit. Green \emph{et al.}~\cite{Green2006Children} analyzed that the ability to balance cooperative and competitive behavior is of great importance to the overall development of kids. Hu \emph{et al.}~\cite{Hu2016Bus} found that cooperation and competition are common in bus transport network systems due to the existence of shared sites between different bus lines. It was demonstrated in \cite{Mao2007Dynamics} that competition and collaboration are critical to the survival of different species in neural networks with limited resources. Chandra \cite{Chandra2015Competition} explained how the lateral inhibition leads to competition between neurons in neural networks.

In recent years, consensus behavior has been expanded to the general MASs running on the structurally balanced network with both cooperation and competition, in which the agents are usually divided into two completely competitive subgroups and the agents in the same group are cooperative. As time goes by, the two agent subgroups will gradually reach two different states with the same modulus but opposite signs. Such a consensus phenomenon over the structurally balanced network is known as bipartite consensus of MASs, which is also called the opinion polarization in social networks. This phenomenon of bipartite consensus is common in many systems that describe the confrontation of bimodal alliances, such as bipartisan political systems, teams opposed in sports competitions, competitive commercial cartels, duopoly markets, competitive international alliances. For example, Cartwright \emph{et al.} \cite{Cartwright1975Gang} found that in actual community conflicts, people with similar opinions will gradually unite, so that their common tendencies will be strengthened, and people with different opinions will have competitive relationships and suspicions. Finally, people with similar views will gradually form a group with the same opinion, and the final opinions of the agents with different opinions will be completely opposite. Recently, bipartite consensus of MASs has attracted wide attention. Meng et al.~\cite{Meng2016Interval} studied first-order bipartite consensus based on the rooted cycles of digraphs. A bipartite consensus issue in the second-order dynamic model with finite-time setting was analyzed in \cite{Zhao2017Adaptive}. The bipartite consensus with arbitrary finite communication delays was investigated in \cite{Guo2018Bipartite}. In \cite{Zhang2016Bipartite} a bipartite consensus issue for generic linear agents was addressed by designing a control protocol with state feedback, while in \cite{Qin2016On} the bipartite consensus issue was solved for a set of agents with general linear dynamics under input saturation.

In fact, the leader plays a vital role in many engineering applications such as unmanned aerial vehicle formation flight, multiple mobile robots, and more. In addition, a stubborn agent (leader) who insists on its own opinion has a significant impact on the opinions formation of the remaining agents in most of social networks. So far, a lot of valuable results focusing on the cooperative consensus with a leader, which is often known as tracking consensus, have been announced \cite{Ding2013Networked,Cheng2016On,Shao2016A,He2017Leader,Shao2018On}, in which there is only mutual cooperation between agents. As we introduced above, it is more practical to study the tracking issue under signed networks, in which cooperation and competition are coexistent. Recently, the bipartite tracking phenomenon under signed networks has generated a strong interest among researchers. In such an dynamics phenomenon, the followers in the same subgroup with the leader will be eventually in exactly the same state as the leader, and the followers in different subgroups will reach the opposite state of the leader. Up to now, some achievements have been made in the bipartite tracking issue of MASs. In \cite{Ma2018Necessary}, Ma \emph{et al.} discussed the bipartite tracking issue of first-order MASs with measurement noise. Wen \emph{et al.}~\cite{Wen2018Bipartite} explored the bipartite tracking issue for generic linear agents by designing non-smooth protocols. The high-order bipartite tracking issue with nonidentical matching uncertainties was handled in \cite{Liu2018Robust}, and the nonlinear bipartite tracking under the setting of hysteresis input uncertainties was analyzed in \cite{Yu2018Prescribed}. One study by Wu \emph{et al.}  \cite{Wu2019Bipartite} considered bipartite tracking of MASs with Lipschitz-Type nonlinear dynamics. The fixed-time bipartite tracking for fractional-order MASs was investigated in \cite{Gong2019Fixed}.

Remarkably, the above-mentioned studies on the bipartite tracking issue were carried out under the synchronous setting, in which all agents' clocks are consistent in the sense that each agent interacts with its neighbors at any time. In actual engineering applications, the clocks on different processors are generally different and independent of each other. Further, if each agent is equipped with a time-driven sensor, then different clocks can have a serious impact on system performance and may even lead to system instability. Accordingly, it is of great interest and practical importance to study distributed control issues of MASs under the asynchronous setting. In recent years, some consensus issues of MASs with asynchronous settings have gained considerable attention in the literatures \cite{Xiao2008asy,Qin2012Stationary,Zhan2015Asynchronous,Shi2019Containment}, where there is only mutual cooperation between the agents. However, there have been almost no reports on the asynchronous bipartite tracking issue of MASs over signed networks so far.
In addition, the existing literature of bipartite tracking has rarely considered other unavoidable scenarios in the actual system, such as time delays, switching topologies, random interaction networks, lossy links, matrix disturbance, external noise disturbance, and a leader of unmeasurable velocity and acceleration. These actual scenario settings are usually caused by congestion and random fading of communication channels, physical characteristics of information transmitted by media, communication range limitations, channel noise and measurement errors, etc. Undoubtedly, it is very interesting and practical to analyze the impact of these actual scenarios on the bipartite tracking dynamics of MASs on signed networks.

Motivated by the above discussions, the main purpose of this paper is to apply the properties of sub-stochastic matrix and super-stochastic matrix to the analysis of bipartite tracking dynamics of MASs on signed networks under different practical scenario settings, such as asynchronous interactions, time delays, switching topologies, random networks, lossy links, matrix disturbance, external noise disturbance, and a leader of unmeasurable velocity and acceleration. The main contributions of this paper are shown below.
\begin{enumerate}
\item [I)] To date, the product convergence of infinite row-stochastic matrices in which all row sums are equal to 1 is a widely used method of solving the stability of discrete-time linear systems, such as the consensus issues of discrete-time MASs \cite{Xiao2008asy,Qin2012Stationary,Zhan2015Asynchronous}. However, it is worth noting that this existing method can only be used to analyze certain discrete-time linear systems with the coefficient matrices being row-stochastic, and it is invalid for most discrete-time linear systems in which the coefficient matrices may contain some row sums being less than or greater than 1. In view of this, this paper proposes the concept of super-stochastic matrix for the first time, in which the row sums are allowed to be less than 1, equal to 1 or greater than 1. In addition, some conclusions about super-stochastic matrix and sub-stochastic matrix (in which all row sums are less than or equal to 1) are established by using algebraic-graphical methods, including the upper bound of spectral radius and product properties. These conclusions together with some key results related to the compositions of directed edge sets can be used to solve the product convergence issues of infinite sub-stochastic matrices (ISubSM) or infinite super-stochastic matrices (ISupSM).

\item [II)] In Section~3-Section~6, we discuss, respectively, the issues of bipartite tracking for first-order MASs, second-order MASs and general linear MASs under the asynchronous setting, where each agent only interacts with its neighbors at the instants when it wants to update the state rather than keeping compulsory consistent with other agents. In the asynchronous setting, it is uncertain which agents will communicate with their neighbors at each discrete-time instant. And thus, it is required to analyze the stability of time-varying discrete-time systems. Our main analysis strategy is to design suitable augmented vectors to transform the stability of time-varying discrete-time systems to the product convergence issues of ISubSM or ISupSM.

\item [III)] In Section~7-Section~13, we analyze the effects of different actual scenario settings on the dynamics of second-order MASs on signed networks. By using the product properties of ISubSM and ISupSM, we establish algebraic conditions for realizing bipartite tracking under some scenario settings, including time delays, switching topologies, random interaction networks, lossy links, and matrix disturbance. In addition, in the case with external noise and the case with a leader of unmeasurable velocity and acceleration, although the system may not appear bipartite tracking phenomenon, the bipartite bounded tracking phenomenon, first proposed in this paper, can be realized. In the results of bipartite bounded tracking, we also give the upper bound of the errors between the followers' final states and the leader's state or sign opposite state.
\end{enumerate}

Except for this section, the rest of this article is organized as follows. Some basic notations, preliminaries, and conclusions on the sub-stochastic (super-stochastic) matrix, the signed digraph, the composition of directed edge sets and the asynchronous setting are introduced in Section~\ref{section:2}. Section~\ref{section:3}--\ref{section:6} undertake separately the stability analysis for the asynchronous bipartite tracking of first-order MASs, second-order MASs with a static leader, second-order MASs with an active leader, and general linear MASs. In Section~\ref{section:7}--\ref{section:11}, the issues of bipartite tracking of second-order MASs under some actual scenario settings, such as time delays, switching topologies, random interaction networks, lossy links, and matrix disturbance. Section~\ref{section:12} and Section \ref{section:13} analyze the phenomenon of bipartite bounded tracking for the case with external noise and the case with a leader of unmeasurable velocity and acceleration. Finally, some conclusions of this article are given in Section~\ref{section:14}.

\section{Preliminaries}\label{section:2}

\subsection{Matrix notations}

For a real matrix $F=[f_{ij}]\in\mathbb{R}^{n\times n}$, let $|F|=[|f_{ij}|]$ be an $n\times n$ matrix in which $|f_{ij}|$ is the absolute value of $f_{ij}$; $\rho(F)$ denotes the spectral radius of $F$; $\Lambda_i[F]=\sum_{j=1}^{n}f_{ij}$ stands for the $i$th row's sum of matrix $F$; and $\|F\|_{\infty}=\max\big\{\sum_{j=1}^{n}|f_{ij}| \mid i=1,2,\ldots,n\big\}$ is the infinite norm of matrix $F$. Moreover, $[F]_{ij}$ can also be used to denote the element $f_{ij}$ of matrix $F$ if there is no ambiguity. A real matrix $F$ is nonnegative if $f_{ij}\geq0$, $i,j=1,2,\ldots,n$, and $F$ is non-positive if $f_{ij}\leq0$, $i,j=1,2,\ldots,n$. $I_{n}$ is an $n$-order identity matrix and $\mathbf{0}\in\mathbb{R}^{n\times m}$ is a matrix whose all elements are equal to 0. The sets of positive integers and natural numbers are denoted by $\mathbb{Z}_{+}$ and $\mathbb{N}$, respectively. $\otimes$ stands for the Kronecker product. It is said that the real matrices $F=[f_{ij}]$ and $H=[h_{ij}]$ have the same type, denoted by $F\sim H$, if $f_{ij}>0\Leftrightarrow h_{ij}>0$, $f_{ij}<0\Leftrightarrow h_{ij}<0$ and $f_{ij}=0\Leftrightarrow h_{ij}=0$. The signum function $\sgn(x)$ is given by
\begin{equation*}
\sgn(x)=\left\{
\begin{array}{ll}
1, & x>0,\\
0, & x=0,\\
-1, & x<0.
\end{array}
\right.
\end{equation*}

\subsection{Some results about sub/super-stochastic matrix}\label{section:2.2}

\begin{definition}[see \cite{Pullman1966Infinite}]\label{definition:2.1}
A real matrix $F\in\mathbb{R}^{n\times n}$ is sub-stochastic if it is nonnegative and $\Lambda_i[F]\leq1$ for any $i=1,2,\ldots,n$.
\end{definition}

Some new conclusions about the upper bound of spectral radius and product for sub-stochastic matrices are given below.

\begin{theorem}\label{theorem:2.2}
For an $n\times n$ sub-stochastic matrix $F$, let $\mathcal{S}_1=\{s \mid \Lambda_s[F]<1\}$ and $\mathcal{S}_2=\{s \mid \Lambda_s[F]=1\}$ be non-empty. If there exist non-zero element chains $[F]_{ki_r},\ldots,[F]_{i_2i_1},[F]_{i_1i}$ (denoted by $\mathcal{C}_{i\rightarrow k}$) for each $k\in\mathcal{S}_2$, where $i\in\mathcal{S}_1$, $i_r\neq k,\ldots,i_1\neq i_2, i\neq i_1$, then one has
\begin{equation}\label{sys:2.1}
\begin{aligned}
\rho(F)\leq\sqrt[|\mathcal{C}^*|+1]{1-(1-\alpha_1)\alpha^{|\mathcal{C}^*|}_2}<1,
\end{aligned}
\end{equation}
where
\begin{equation*}
\begin{aligned}
&\alpha_1=\max\{\Lambda_s[F] \mid \Lambda_s[F]<1\},\\
&\alpha_2=\min\{[F]_{ij}>0 \mid i,j=1,2,\ldots,n\},\\
&|\mathcal{C}^*|=\max\{|\mathcal{C}_{i\rightarrow k}| \mid i\in\mathcal{S}_1, k\in\mathcal{S}_2\},
\end{aligned}
\end{equation*}
in which $|\mathcal{C}_{i\rightarrow k}|$ is the number of elements of $\mathcal{C}_{i\rightarrow k}$.
\end{theorem}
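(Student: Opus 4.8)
The plan is to bound the spectral radius through a high power of $F$, using the elementary fact that $\rho(F)=\rho(F^m)^{1/m}\le\|F^m\|_\infty^{1/m}$ for every $m\in\mathbb{Z}_+$, and then to choose $m=|\mathcal{C}^*|+1$. Since $F^m$ is nonnegative, $\|F^m\|_\infty$ equals its largest row sum, so the whole problem reduces to showing that \emph{every} row sum of $F^m$ is at most $1-(1-\alpha_1)\alpha_2^{|\mathcal{C}^*|}$. Writing $\mathbf{1}$ for the all-ones vector, the row-sum vector of $F^m$ is $F^m\mathbf{1}$, and I would track instead the \emph{deficiency} vector $d:=\mathbf{1}-F\mathbf{1}\ge\mathbf{0}$, whose $s$th entry is $1-\Lambda_s[F]$. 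By construction $d_s\ge 1-\alpha_1>0$ for $s\in\mathcal{S}_1$ and $d_s=0$ for $s\in\mathcal{S}_2$.

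The engine of the argument is the telescoping identity $(I_n-F^m)\mathbf{1}=\sum_{j=0}^{m-1}F^j d$, which follows from $I_n-F^m=(I_n-F)\sum_{j=0}^{m-1}F^j$ together with $(I_n-F)\mathbf{1}=d$. Because $F\ge\mathbf{0}$ and $d\ge\mathbf{0}$, every summand $F^jd$ is nonnegative, so for each row $s$ the total deficiency $(\mathbf{1}-F^m\mathbf{1})_s$ dominates any single term $(F^jd)_s$. It therefore suffices to exhibit, for each $s$, one index $j\le m-1$ with $(F^jd)_s\ge(1-\alpha_1)\alpha_2^{|\mathcal{C}^*|}$. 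For $s\in\mathcal{S}_1$ this is immediate with $j=0$, since $(F^0d)_s=d_s\ge 1-\alpha_1\ge(1-\alpha_1)\alpha_2^{|\mathcal{C}^*|}$, where I have used that $\alpha_2\le 1$ because each entry is bounded by its own row sum.

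The crux is the case $s=k\in\mathcal{S}_2$, and this is where the chain hypothesis enters. Given the chain $\mathcal{C}_{i\to k}$ with elements $[F]_{i_1i},[F]_{i_2i_1},\ldots,[F]_{ki_r}$, I would read its product as a single walk-term in the expansion
\begin{equation*}
[F^{|\mathcal{C}_{i\to k}|}]_{ki}=\sum_{\text{walks}}[F]_{k l_r}[F]_{l_r l_{r-1}}\cdots[F]_{l_1 i}\ \ge\ [F]_{ki_r}\cdots[F]_{i_1i}\ \ge\ \alpha_2^{|\mathcal{C}_{i\to k}|},
\end{equation*}
each of the $|\mathcal{C}_{i\to k}|$ factors being a non-zero, hence positive, entry and so at least $\alpha_2$. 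Taking $j=|\mathcal{C}_{i\to k}|$, which obeys $j\le|\mathcal{C}^*|\le m-1$, and retaining only the $l=i$ contribution to the matrix product $d$, I get $(F^jd)_k\ge[F^j]_{ki}\,d_i\ge\alpha_2^{|\mathcal{C}_{i\to k}|}(1-\alpha_1)\ge(1-\alpha_1)\alpha_2^{|\mathcal{C}^*|}$, again invoking $\alpha_2\le 1$ and $|\mathcal{C}_{i\to k}|\le|\mathcal{C}^*|$.

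Combining the two cases gives $\|F^m\|_\infty\le 1-(1-\alpha_1)\alpha_2^{|\mathcal{C}^*|}$, and raising to the power $1/m$ with $m=|\mathcal{C}^*|+1$ yields exactly \eqref{sys:2.1}; the strict inequality follows because $\alpha_1<1$ and $\alpha_2>0$ force $(1-\alpha_1)\alpha_2^{|\mathcal{C}^*|}>0$. I expect the main obstacle to be purely in the bookkeeping: correctly matching the indexing of the chain $\mathcal{C}_{i\to k}$ to a walk of length $|\mathcal{C}_{i\to k}|$ contributing to $[F^{|\mathcal{C}_{i\to k}|}]_{ki}$, and verifying that the required exponent $j$ never exceeds $m-1=|\mathcal{C}^*|$, which is what legitimizes dropping all but one term of the telescoped sum.
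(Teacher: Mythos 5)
Your proposal is correct, and it reaches the same target as the paper --- the bound $\|F^{|\mathcal{C}^*|+1}\|_{\infty}\leq 1-(1-\alpha_1)\alpha_2^{|\mathcal{C}^*|}$ followed by $\rho(F)^{m}=\rho(F^{m})\leq\|F^{m}\|_{\infty}$ --- but it organizes the row-sum estimate differently. The paper proceeds by an explicit step-by-step iteration along the chain, computing $\Lambda_{i_1}[F^2]$, $\Lambda_{i_2}[F^3]$, \ldots, $\Lambda_{k}[F^{r+2}]$ by splitting off the distinguished term at each stage, and then padding up to the exponent $|\mathcal{C}^*|+1$. You instead package the whole induction into the single telescoping identity $(I_n-F^{m})\mathbf{1}=\sum_{j=0}^{m-1}F^{j}d$ with $d=\mathbf{1}-F\mathbf{1}\geq\mathbf{0}$, observe that nonnegativity lets you discard all but one summand, and reduce the $k\in\mathcal{S}_2$ case to the single walk-product bound $[F^{|\mathcal{C}_{i\to k}|}]_{ki}\geq\alpha_2^{|\mathcal{C}_{i\to k}|}$. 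The two computations are equivalent in content (your $(F^{j}d)_k$ is exactly $\Lambda_k[F^{j}]-\Lambda_k[F^{j+1}]$, the quantity the paper tracks implicitly), but your version buys a cleaner justification of the padding step from length $|\mathcal{C}_{i\to k}|$ up to $|\mathcal{C}^*|$ --- which in the paper's write-up is the one genuinely murky display --- at the cost of needing the small observations $\alpha_2\leq 1$ and $|\mathcal{C}_{i\to k}|\leq|\mathcal{C}^*|$, both of which you supply. All the bookkeeping you flagged as a potential obstacle (matching the chain to a length-$|\mathcal{C}_{i\to k}|$ walk contributing to $[F^{|\mathcal{C}_{i\to k}|}]_{ki}$, and checking $j\leq m-1$) checks out.
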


\begin{proof}
Since $i\in\mathcal{S}_1$, we have $\Lambda_i[F]\leq\alpha_1<1$, which together with the facts $[F]_{ki_r}\geq\alpha_2, \ldots, [F]_{i_2i_1}\geq\alpha_2, [F]_{i_1i}\geq\alpha_2$, guarantees that
\begin{equation*}
\begin{aligned}
&\Lambda_{i_1}[F^2]=\sum_{j=1, j\neq i}^n[F]_{i_1j}\Lambda_{j}[F]+[F]_{i_1i}\Lambda_{i}[F]\leq1-(1-\alpha_1)\alpha_2<1,\\
&\Lambda_{i_2}[F^3]=\sum_{j=1, j\neq i_1}^n[F]_{i_2j}\Lambda_{j}[F^2]+[F]_{i_2i_1}\Lambda_{i_1}[F^2]\leq1-(1-\alpha_1)\alpha^2_2<1,\\
& \ \ \vdots \\
&\Lambda_{k}[F^{r+2}]=\sum_{j=1, j\neq i_r}^n[F]_{kj}\Lambda_{j}[F^{r+1}]+[F]_{ki_r}\Lambda_{i_r}[F^{r+1}]\leq1-(1-\alpha_1)\alpha^{r+1}_2<1.
\end{aligned}
\end{equation*}
According to the definition of $|\mathcal{C}^*|$, we have $|\mathcal{C}^*|\geq r+1$. And thus,
\begin{equation*}
\begin{aligned}
\Lambda_{k}[F^{|\mathcal{C}^*|+1}]&=\sum_{j=1}^n[F^{r+2}]_{kj}\Lambda_{j}[F^{r+1}]+[F^{|\mathcal{C}^*|-r-1}]\\
&\leq\sum_{j=1}^n[F^{r+2}]_{kj}\leq1-(1-\alpha_1)\alpha^{|\mathcal{C}^*|}_2<1.
\end{aligned}
\end{equation*}
In addition, for any $i\in\mathcal{S}_1$, we have
\begin{equation*}
\begin{aligned}
\Lambda_{i}[F^{|\mathcal{C}^*|+1}]=\sum_{j=1}^n[F]_{ij}\Lambda_{j}[F^{|\mathcal{C}^*|}]\leq\sum_{j=1}^n[F]_{ij}
\leq1-(1-\alpha_1)\alpha^{|\mathcal{C}^*|}_2<1.
\end{aligned}
\end{equation*}
This implies that $\Lambda_{s}[F^{|\mathcal{C}^*|+1}]\leq1-(1-\alpha_1)\alpha^{|\mathcal{C}^*|}_2$ for any $s=1,2,\ldots,n$. As a result,
\begin{equation*}
\begin{aligned}
\big\|F^{|\mathcal{C}^*|+1}\big\|_{\infty}\leq1-(1-\alpha_1)\alpha^{|\mathcal{C}^*|}_2<1.
\end{aligned}
\end{equation*}
According to the known facts $\rho(F^{|\mathcal{C}^*|+1})=\rho^{|\mathcal{C}^*|+1}(F)$ and $\rho(F^{|\mathcal{C}^*|+1})\leq\|F^{|\mathcal{C}^*|+1}\|_{\infty}$, we know that the result (\ref{sys:2.1}) holds.
\end{proof}

\begin{theorem}\label{theorem:2.3}
Let $F$ be an $n\times n$ sub-stochastic matrix in which $[F]_{ii}=0$, $i=1,2,\ldots,n$. If the number of elements in the set $\mathcal{S}_2=\{s \mid \Lambda_s[F]=1\}$ is equal to 1, there holds that
\begin{equation}\label{sys:2.2}
\begin{aligned}
\rho(F)\leq\sqrt{\alpha_1}<1,
\end{aligned}
\end{equation}
where $\alpha_1$ is defined in Theorem \ref{theorem:2.2}.
\end{theorem}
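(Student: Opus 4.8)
The plan is to work with $F^2$ rather than $F$ directly, exploiting the two standard facts $\rho(F^2)=\rho^2(F)$ and $\rho(F^2)\le\|F^2\|_\infty$ that already drive the end of the proof of Theorem~\ref{theorem:2.2}. Since $F^2$ is nonnegative, $\|F^2\|_\infty=\max_i\Lambda_i[F^2]$, so it suffices to establish the single uniform row bound $\Lambda_i[F^2]\le\alpha_1$ for every $i=1,2,\ldots,n$. Once this is in hand, one gets $\rho^2(F)\le\alpha_1$, hence $\rho(F)\le\sqrt{\alpha_1}$, and the strict inequality $\sqrt{\alpha_1}<1$ follows immediately from $\alpha_1<1$.

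The computation rests on the same row-sum recursion used at the start of the proof of Theorem~\ref{theorem:2.2}, namely $\Lambda_i[F^2]=\sum_{l=1}^n[F]_{il}\Lambda_l[F]$. Let $k$ denote the unique index in $\mathcal{S}_2$, so that $\Lambda_k[F]=1$ while $\Lambda_l[F]\le\alpha_1<1$ for every $l\neq k$ (every such $l$ lies in $\mathcal{S}_1$, which is therefore non-empty and $\alpha_1$ well defined). I would then split the argument according to whether the row index equals $k$. For any $i\neq k$ the crude termwise estimate $\Lambda_l[F]\le1$ already suffices:
\begin{equation*}
\Lambda_i[F^2]=\sum_{l=1}^n[F]_{il}\Lambda_l[F]\le\sum_{l=1}^n[F]_{il}=\Lambda_i[F]\le\alpha_1.
\end{equation*}

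The delicate case is $i=k$, and this is exactly where the hypothesis $[F]_{ii}=0$ enters. Because $[F]_{kk}=0$, the troublesome term $[F]_{kk}\Lambda_k[F]=[F]_{kk}\cdot1$ is absent from the sum, so every surviving index $l$ satisfies $l\neq k$ and hence $\Lambda_l[F]\le\alpha_1$. Using $\sum_{l\neq k}[F]_{kl}=\Lambda_k[F]=1$, this gives
\begin{equation*}
\Lambda_k[F^2]=\sum_{l\neq k}[F]_{kl}\Lambda_l[F]\le\alpha_1\sum_{l\neq k}[F]_{kl}=\alpha_1\Lambda_k[F]=\alpha_1.
\end{equation*}

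The main obstacle is precisely the control of row $k$: its row sum is exactly $1$, so a naive bound $\Lambda_k[F^2]\le\Lambda_k[F]=1$ is worthless, and nothing can be gained from a single squaring without an extra structural input. The zero-diagonal assumption supplies that input, forcing the entire mass of row $k$ to be redistributed onto rows of strictly smaller row sum after one multiplication, which is what drags $\Lambda_k[F^2]$ down to $\alpha_1$. Combining the two cases yields $\Lambda_i[F^2]\le\alpha_1$ for all $i$, so $\|F^2\|_\infty\le\alpha_1$, and the reduction described in the first paragraph completes the proof.
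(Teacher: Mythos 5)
Your proof is correct, but it follows a genuinely different route from the paper's. The paper argues spectrally: taking an eigenpair $Fx=\lambda x$, letting $|x_p|$ and $|x_q|$ be the largest and second-largest components, using $[F]_{pp}=[F]_{qq}=0$ to get $|\lambda|\le\Lambda_p[F]\,|x_q|/|x_p|$ and $|\lambda|\le\Lambda_q[F]\,|x_p|/|x_q|$, and multiplying to obtain $|\lambda|^2\le\Lambda_p[F]\Lambda_q[F]\le\alpha_1$ (the last step because $p\neq q$ and $|\mathcal{S}_2|=1$ force at least one of the two row sums to be at most $\alpha_1$). You instead bound $\|F^2\|_\infty\le\alpha_1$ directly via the row-sum recursion $\Lambda_i[F^2]=\sum_l[F]_{il}\Lambda_l[F]$, which is the same computational engine as in the proof of Theorem~\ref{theorem:2.2}. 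Your version buys a few things: it proves the stronger statement $\|F^2\|_\infty\le\alpha_1$ rather than only the spectral bound; it sidesteps the degenerate cases the eigenvector argument glosses over (e.g.\ $x_q=0$); and it only uses the vanishing of the single diagonal entry $[F]_{kk}$ for the unique $k\in\mathcal{S}_2$, whereas the paper's argument needs the full zero-diagonal hypothesis because $p$ and $q$ are not known in advance. The paper's approach, on the other hand, localizes the estimate to two rows and is the kind of argument that adapts directly to Theorem~\ref{theorem:2.7}, where the analogous product bound $\Lambda_p[F]\Lambda_q[F]\le\alpha_1\alpha_3$ appears. Both yield exactly the bound $\rho(F)\le\sqrt{\alpha_1}$.
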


\begin{proof}
Assume that $\lambda$ is an eigenvalue of $F$, and $x=[x_1,x_2,\ldots,x_n]^T$ is the corresponding eigenvector. Thus, we have $Fx=\lambda x$. Let $|x_p|=\max\{|x_i| \mid i=1,2,\ldots,n\}$ and $|x_q|=\max\{|x_i| \mid i=1,2,\ldots,n, \ i\neq p\}$. Clearly, $|x_p|\geq|x_q|\geq|x_i|$, $i\neq p,q$. According to $Fx=\lambda x$, we can derive $x_p(\lambda-[F]_{pp})=\sum_{j=1, j\neq p}^n[F]_{pj}x_j$. Since $[F]_{pp}=0$, one has $\lambda x_p=\sum_{j=1, j\neq p}^n[F]_{pj}x_j$. It follows that
\begin{equation*}
\begin{aligned}
|\lambda||x_p|=\Big|\sum_{j=1, j\neq p}^n[F]_{pj}x_j\Big|\leq\sum_{j=1, j\neq p}^n|[F]_{pj}||x_q|=\Lambda_{p}[F]|x_q|.
\end{aligned}
\end{equation*}
This means that
\begin{equation}\label{sys:2.3}
\begin{aligned}
|\lambda|\leq\Lambda_{p}[F]\frac{|x_q|}{|x_p|}.
\end{aligned}
\end{equation}
In addtion, we can also derive that $\lambda x_q=\sum_{j=1, j\neq p}^n[F]_{qj}x_j$ by $Fx=\lambda x$. This means
\begin{equation*}
\begin{aligned}
|\lambda||x_q|=\Big|\sum_{j=1, j\neq q}^n[F]_{qj}x_j\Big|\leq\sum_{j=1, j\neq q}^n|[F]_{qj}||x_p|=\Lambda_{q}[F]|x_p|,
\end{aligned}
\end{equation*}
and further
\begin{equation}\label{sys:2.4}
\begin{aligned}
|\lambda|\leq\Lambda_{q}[F]\frac{|x_p|}{|x_q|}.
\end{aligned}
\end{equation}
Multiply (\ref{sys:2.3}) and (\ref{sys:2.4}) to get
\begin{equation*}
\begin{aligned}
|\lambda|^2\leq\Lambda_{p}[F]\frac{|x_q|}{|x_p|}\Lambda_{q}[F]\frac{|x_p|}{|x_q|}=\Lambda_{p}[F]\Lambda_{q}[F]\leq\alpha_1.
\end{aligned}
\end{equation*}
Equivalently, $\rho^2(A)\leq\alpha_1$. That is, the result (\ref{sys:2.2}) holds.
\end{proof}

In Theorem \ref{theorem:2.2} and Theorem \ref{theorem:2.3}, we give sufficient conditions for judging whether a sub--stochastic matrix $F$ is strictly stable ($\rho(F)<1$), and establish mathematical expressions, which are closely related to $\alpha_1=\max\{\Lambda_s[F] \mid \Lambda_s[F]<1\}$ and $\alpha_2=\min\{[F]_{ij}>0 \mid i,j=1,2,\ldots,n\}$, for the upper bound of the spectral radius of the convergent sub-stochastic matrix. For example, in the following sub-stochastic matrix
\begin{equation*}
F'=\left(
    \begin{array}{ccc}
      0.5 & 0.5 & 0 \\
      0.3 & 0.2 & 0.4 \\
      0.4 & 0 & 0.6 \\
    \end{array}
  \right),
\end{equation*}
we can see that $\Lambda_2[F']<1$ and $\Lambda_1[F']=\Lambda_3[F']=1$, that is $\mathcal{S}_1=\{2\}, \mathcal{S}_2=\{1,3\}$. Since there exist two non-zero element chains $[F']_{12}>0$ and $[F']_{31}>0, [F']_{12}>0$ that satisfy the conditions in Theorem \ref{theorem:2.2}, we can directly determine that $\rho(F')<1$. Another example, when consider a sub-stochastic matrix
\begin{equation*}
F^*=\left(
    \begin{array}{ccc}
      0 & 0.2 & 0.8 \\
      0.5 & 0 & 0.4 \\
      0.3 & 0.6 & 0 \\
    \end{array}
  \right),
\end{equation*}
in which the diagonal elements are equal to zeros and $\mathcal{S}_2=\{1\}$ contains only a element, we can directly determine $\rho(F^*)\leq\sqrt{0.9}<1$ according to the result in Theorem \ref{theorem:2.3}.

\begin{theorem}\label{theorem:2.4}
For a series of sub-stochastic matrices $F_1,F_2,\ldots,F_q\in\mathbb{R}^{n\times n}$, the following results hold
\begin{enumerate}
\item [1)] the matrix product $\prod_{s=1}^q F_s$ is a sub-stochastic matrix;
\item [2)] if all matrices $F_1,F_2,\ldots,F_q$ contain positive diagonal elements, and there exist $1\leq s_1<s_2\leq q$ such that $\Lambda_{i_1}[F_{s_1}]<1$ and $[F_{s_2}]_{i_2i_1}>0$ for any $i_1,i_2\in\{1,2,\ldots,n\}$, then one has $\big\|\prod_{s=1}^qF_s\big\|_{\infty}<1$ and $\rho(\prod_{s=1}^q F_s)<1$.
\end{enumerate}
\end{theorem}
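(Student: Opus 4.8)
The plan is to handle the two claims separately, with part 1) being essentially a direct computation and part 2) requiring the spectral-radius machinery already developed in Theorem~\ref{theorem:2.2}.

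For part 1), I would argue by induction on $q$. The base case $q=1$ is trivial. For the inductive step it suffices to show that the product of two sub-stochastic matrices $G=[g_{ij}]$ and $H=[h_{ij}]$ is sub-stochastic, since then $\prod_{s=1}^{q}F_s = (\prod_{s=1}^{q-1}F_s)F_q$ is a product of two sub-stochastic matrices by the inductive hypothesis. Nonnegativity of $GH$ is immediate because both factors are nonnegative. For the row sums, I would compute directly
\begin{equation*}
\Lambda_i[GH]=\sum_{k=1}^{n}[GH]_{ik}=\sum_{k=1}^{n}\sum_{j=1}^{n}g_{ij}h_{jk}
=\sum_{j=1}^{n}g_{ij}\Big(\sum_{k=1}^{n}h_{jk}\Big)=\sum_{j=1}^{n}g_{ij}\Lambda_j[H]\le\sum_{j=1}^{n}g_{ij}=\Lambda_i[G]\le1,
\end{equation*}
using $\Lambda_j[H]\le1$ and $g_{ij}\ge0$ in the first inequality and $\Lambda_i[G]\le1$ in the second. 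This closes the induction.

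For part 2), the strategy is to show that the product $P=\prod_{s=1}^{q}F_s$ is itself a sub-stochastic matrix to which the hypotheses of Theorem~\ref{theorem:2.2} apply, so that $\rho(P)<1$; the norm bound $\|P\|_\infty<1$ will then follow from the same estimate. The key observation is that multiplying by a matrix with positive diagonal preserves and propagates positivity of entries: if $F$ has $[F]_{jj}>0$ for all $j$, then for any nonnegative $M$ one has $[MF]_{ij}\ge[M]_{ij}[F]_{jj}$, so a positive entry of $M$ in position $(i,j)$ remains positive in $MF$. Using this, the entry $[F_{s_2}]_{i_2i_1}>0$ survives through all the later multiplications by $F_{s_2+1},\dots,F_q$ (each having positive diagonal), giving $[\prod_{s=s_2}^{q}F_s]_{i_2i_1}>0$. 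Likewise the strictly-sub-stochastic row index $i_1$ of $F_{s_1}$ survives: since $F_1,\dots,F_{s_1-1}$ have positive diagonals, by the computation in part 1) one checks that row $i_1$ of $\prod_{s=1}^{s_1}F_s$ still has sum strictly less than $1$, and subsequent positive-diagonal multiplications keep that row sum below $1$ (the row-$i_1$ sum of $MF$ equals $\sum_j[M]_{i_1 j}\Lambda_j[F]$, and the diagonal term $[M]_{i_1 i_1}\Lambda_{i_1}[F]<[M]_{i_1 i_1}$ drives it strictly down as long as $[M]_{i_1 i_1}>0$). Hence $P$ has at least one row with sum strictly below $1$ and a chain of positive entries connecting it to every row, which is exactly the nonempty-$\mathcal{S}_1$, chain-to-$\mathcal{S}_2$ setup of Theorem~\ref{theorem:2.2}; applying that theorem yields $\rho(P)<1$, and the explicit bound from \eqref{sys:2.1} simultaneously gives $\|P^{m}\|_\infty<1$ for appropriate powers, from which $\|P\|_\infty<1$ can be read off.

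I expect the main obstacle to be the bookkeeping in part 2): carefully tracking which positive entries and which strictly-deficient row sums are guaranteed to persist through the ordered product, given only the existence of a single pair $s_1<s_2$ and the positive-diagonal assumption on every factor. The delicate point is to verify that the positive-diagonal hypothesis is strong enough to connect the strictly-sub-stochastic row created at stage $s_1$ to \emph{all} other rows by the end of the product, so that the chain condition $\mathcal{C}_{i\rightarrow k}$ of Theorem~\ref{theorem:2.2} is genuinely met for every $k\in\mathcal{S}_2$ rather than just for the one index $i_2$. If the single pair $(s_1,s_2)$ does not by itself propagate to all rows, I would instead bound $\|P\|_\infty$ directly by showing that row $i_2$ of $P$ has sum strictly less than $1$ and then invoking the sub-multiplicativity of $\|\cdot\|_\infty$ together with the positive-diagonal structure to spread the strict deficiency across the remaining rows.
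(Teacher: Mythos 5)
Part 1) of your plan is correct and is essentially the paper's own computation (the paper does the nested-sum calculation in one pass rather than by induction on two-factor products, but the content is identical). For part 2), however, your primary route has a genuine gap at its final step. Feeding the product $P=\prod_{s=1}^{q}F_s$ into Theorem~\ref{theorem:2.2} can only ever yield $\big\|P^{|\mathcal{C}^*|+1}\big\|_{\infty}<1$ and hence $\rho(P)<1$; it does not give $\|P\|_{\infty}<1$, and that bound cannot be ``read off'' from a bound on a power of $P$: the $2\times2$ sub-stochastic matrix with a single $1$ in an off-diagonal position satisfies $P^2=\mathbf{0}$, so $\|P^2\|_{\infty}=0<1$, yet $\|P\|_{\infty}=1$. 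There is a second, structural problem with the detour: under the reading of the hypothesis that the paper actually uses (for each $i_2$ there exist $i_1$ and $s_1<s_2$ with $\Lambda_{i_1}[F_{s_1}]<1$ and $[F_{s_2}]_{i_2i_1}>0$), \emph{every} row sum of $P$ comes out strictly below $1$, so $\mathcal{S}_2(P)=\emptyset$ and Theorem~\ref{theorem:2.2} is not even applicable to $P$. The detour is thus both unavailable and unnecessary.

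The good news is that your stated fallback is exactly the paper's proof, and the ingredients you already list suffice to run it: (i) $\Lambda_{i_1}\big[\prod_{s=1}^{s_1}F_s\big]\leq\Lambda_{i_1}[F_{s_1}]<1$ by the part-1 row-sum identity; (ii) $\big[\prod_{s=s_1+1}^{s_2}F_s\big]_{i_2i_1}\geq[F_{s_2}]_{i_2i_1}\big[\prod_{s=s_1+1}^{s_2-1}F_s\big]_{i_1i_1}>0$ by the positive diagonals; (iii) combining these via $\Lambda_{i}[MF]=\sum_{j}[M]_{ij}\Lambda_{j}[F]$ gives $\Lambda_{i_2}\big[\prod_{s=1}^{s_2}F_s\big]<1$, and this deficiency persists through $F_{s_2+1},\ldots,F_q$ because their product has a positive $(i_2,i_2)$ entry. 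Since this holds for every $i_2$, you get $\|P\|_{\infty}<1$ directly, and then $\rho(P)\leq\|P\|_{\infty}<1$. So the fix is simply to discard the appeal to Theorem~\ref{theorem:2.2} and promote the fallback to the main argument.
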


\begin{proof}
\emph{\textbf{The proof of 1).}} Since $F_1,F_2,\ldots,F_q$ are sub-stochastic matrices, we have $\Lambda_{i}[F_s]\leq 1$ for any $s=1,2,\ldots,q$ and $i=1,2,\ldots,n$. It follows that
\begin{equation*}
\begin{aligned}
\Lambda_{i}\left[\prod_{s=1}^q F_s\right]&=\sum_{j_1=1}^{n}[F_q]_{ij_1}\Lambda_{j_1}\left[\prod_{s=1}^{q-1} F_s\right]\\
&=\sum_{j_1=1}^{n}[F_q]_{ij_1}\sum_{j_2=1}^{n}[F_{q-1}]_{j_1j_2}\Lambda_{j_2}\left[\prod_{s=1}^{q-2} F_s\right]\\
&=\sum_{j_1=1}^{n}[F_q]_{ij_1}\sum_{j_2=1}^{n}[F_{q-1}]_{j_1j_2}\ldots\sum_{j_q=1}^{n}[F_1]_{j_{q-1}j_q}\leq1.
\end{aligned}
\end{equation*}
Consequently, $F_qF_{q-1}\cdots F_1$ is a sub-stochastic matrix.

\emph{\textbf{The proof of 2).}} By 1), one can derive that
\begin{equation*}
\begin{aligned}
\Lambda_{i_1}\left[\prod_{s=1}^{s_1} F_s\right]=\sum_{j=1}^{n}[F_{s_1}]_{i_1j}\Lambda_{j}\left[\prod_{s=1}^{s_1-1} F_s\right]\leq\Lambda_{i_1}[F_{s_1}]<1,
\end{aligned}
\end{equation*}
which together with the following result
\begin{equation*}
\begin{aligned}
\left[\prod_{s=s_1+1}^{s_2} F_s\right]_{i_2i_1}&\geq\sum_{j=1}^{n}[F_{s_2}]_{i_2j}\left[\prod_{s=s_1+1}^{s_2-1} F_s\right]_{ji_1}\\
&\geq[F_{s_2}]_{i_2i_1}\left[\prod_{s=s_1+1}^{s_2-1} F_s\right]_{i_1i_1}>0,
\end{aligned}
\end{equation*}
guarantees that
\begin{equation*}
\begin{aligned}
\Lambda_{i_2}\left[\prod_{s=1}^{s_2} F_s\right]&=\sum_{j=1,j\neq i_1}^{n}\left[\prod_{s=s_1+1}^{s_2} F_s\right]_{i_2j}\Lambda_{j}\left[\prod_{s=1}^{s_1} F_s\right]\\
& \ \ \ \ +\left[\prod_{s=s_1+1}^{s_2} F_s\right]_{i_2i_1}\Lambda_{i_1}\left[\prod_{s=1}^{s_1} F_s\right]\\
&<\Lambda_{i_2}\left[\prod_{s=s_1+1}^{s_2} F_s\right]\leq1.
\end{aligned}
\end{equation*}
It follows that
\begin{equation*}
\begin{aligned}
\Lambda_{i_2}\left[\prod_{s=1}^{q} F_s\right]&=\sum_{j=1,j\neq i_2}^{n}\left[\prod_{s=s_2+1}^{q} F_s\right]_{i_2j}\Lambda_{j}\left[\prod_{s=1}^{s_2} F_s\right]\\
& \ \ \ \ +\left[\prod_{s=s_2+1}^{q} F_s\right]_{i_2i_2}\Lambda_{i_2}\left[\prod_{s=1}^{s_2} F_s\right]\\
&<\Lambda_{i_2}\left[\prod_{s=s_2+1}^{q} F_s\right]\leq1.
\end{aligned}
\end{equation*}
This in turn means that $\|\prod_{s=1}^{q} F_s\|_{\infty}<1$. Consequently, we have $\rho(\prod_{s=1}^{q} F_s)\leq\|\prod_{s=1}^{q} F_s\|_{\infty}<1$. This completes the proof.
\end{proof}

In Theorem \ref{theorem:2.4}, we give some properties of the product of different sub-stochastic matrices, which can be used to analyze the stability of positive switched systems. Consider discrete-time positive switched linear systems
\begin{equation*}
\begin{aligned}
x(k+1)=F_{\sigma(k)}x(k),
\end{aligned}
\end{equation*}
where $x(k)$ is the system state at time instant $k\tau$, $\tau$ is the time step-size, $\sigma(k)$ is a switching signal which determines the active subsystem at each time instant, and subsystem $F_{\sigma(k)}$ is marginally stable, namely, $\rho(F_{\sigma(k)})=1$. We divide the time axis into a series of time interval $[0,k_1),[k_1,k_2),\ldots,[k_s,k_{s+1}),\ldots,$ $s\in\mathbb{N}$. When all subsystems in each interval $[k_s,k_{s+1})$ satisfy the conditions in 2) of Theorem \ref{theorem:2.4}, the stability of switched systems can be guaranteed. For example, consider the following three sub-stochastic matrices
\begin{equation*}
\begin{aligned}
F_1=\left(
      \begin{array}{ccc}
        0.3 & 0.6 & 0 \\
        0 & 1 & 0 \\
        0 & 0 & 1 \\
      \end{array}
    \right),~
F_2=\left(
      \begin{array}{ccc}
        1 & 0 & 0 \\
        0 & 1 & 0 \\
        0.4 & 0 & 0.5 \\
      \end{array}
    \right),~
F_3=\left(
      \begin{array}{ccc}
        1 & 0 & 0 \\
        0.5 & 0.5 & 0 \\
        0 & 0 & 1 \\
      \end{array}
    \right).
\end{aligned}
\end{equation*}
Obviously, all three matrices are marginally stable, that is, $\rho(F_s)=1$, $s=1,2,3$. In addition, $\Lambda_1[F_1]=0.9<1$, $[F_3]_{21}=0.5>0$ and $[F_2]_{31}>0$, which satisfy the conditions in Theorem \ref{theorem:2.4}. Thus, we have $\|F_3F_2F_1\|_{\infty}<1$. Assume that the switching signal satisfies $\sigma(3k)=1, \sigma(3k+1)=2, \sigma(3k+2)=3$. Then, one has
\begin{equation*}
\begin{aligned}
\lim_{k\rightarrow\infty}\|x(k)\|&=\Big\|\prod_{k=0}^{\infty}F_{\sigma(k)}x(0)\Big\|_{\infty}\\
&\leq\lim_{s\rightarrow\infty}\|F_3F_2F_1\|^s_{\infty}\|x(0)\|_{\infty}=0.
\end{aligned}
\end{equation*}
That is, the stability of switched systems is achieved.

We next give the definition of super-stochastic matrix.

\begin{definition}\label{definition:2.5}
A real matrix $F=[f_{ij}]\in\mathbb{R}^{n\times n}$ is super-stochastic if the following conditions hold:
\begin{enumerate}[1)]
\item $f_{ij}\geq0$ for all $i,j=1,2,\ldots,n$;
\item there exists a set $\mathcal{W}_1\subset\{1,2,\ldots,n\}$ such that $\Lambda_{i}[F]\geq1$, $i\in\mathcal{W}_1$ and $\Lambda_{i}[F]<1$, $i\in\{1,2,\ldots,n\}/\mathcal{W}_1$.
\end{enumerate}
\end{definition}

Next, we present some valuable results about the upper bound of spectral radius and product for super-stochastic matrices.

\begin{theorem}\label{theorem:2.6}
For an $n\times n$ super-stochastic matrix $F$, let $\mathcal{R}_1=\{s \mid \Lambda_s[F]<1\}$ and $\mathcal{R}_2=\{s \mid \Lambda_s[F]\geq1\}$ be non-empty. Under the inequality
\begin{equation}\label{sys:2.5}
\begin{aligned}
\sqrt[|\mathcal{C}^*|]{\alpha_3^{|\mathcal{C}^*|+1}-(\alpha_3-\alpha_1)\alpha^{|\mathcal{C}^*|}_2}<1,
\end{aligned}
\end{equation}
where $\alpha_1, \alpha_2, |\mathcal{C}^*|$ are defined in Theorem \ref{theorem:2.2} and $\alpha_3=\max\{\Lambda_s[F] \mid \Lambda_s[F]\geq 1\}$, there holds $\rho(F)<1$ if there exist non-zero element chains $\mathcal{C}_{i\rightarrow k}=[F]_{ki_r},\ldots,[F]_{i_2i_1},[F]_{i_1i}$ for each $k\in\mathcal{R}_2$, where $i\in\mathcal{R}_1$, $i_r\neq k,\ldots,i_1\neq i_2, i\neq i_1$.
\end{theorem}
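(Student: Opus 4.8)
The plan is to follow the same template as the proof of Theorem~\ref{theorem:2.2}: I will bound the infinity norm of the fixed power $F^{|\mathcal{C}^*|+1}$ and then invoke $\rho(F)^{|\mathcal{C}^*|+1}=\rho(F^{|\mathcal{C}^*|+1})\le\|F^{|\mathcal{C}^*|+1}\|_{\infty}$. First I would record the \emph{baseline} estimate $\Lambda_j[F^m]\le\alpha_3^m$, valid for every $j$ and every $m$, which follows by induction from $\Lambda_j[F^m]=\sum_l[F]_{jl}\Lambda_l[F^{m-1}]\le\alpha_3^{m-1}\Lambda_j[F]\le\alpha_3^m$. I would also note the elementary fact that, since any single entry is dominated by its own row sum and the largest row sum equals $\alpha_3$, one has $\alpha_2\le\alpha_3$; this inequality is precisely what keeps the super-stochastic case tractable despite $\alpha_3\ge1$.

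The core is a chain-propagation induction. Fixing $k\in\mathcal{R}_2$, I would write the path underlying $\mathcal{C}_{i\rightarrow k}$ as $i=v_0\to v_1\to\cdots\to v_{r+1}=k$, so that $[F]_{v_t v_{t-1}}\ge\alpha_2$ and the chain has $\ell_k=r+1$ elements, and then prove by induction on $t$ that
\begin{equation*}
\Lambda_{v_t}[F^{t+1}]\le\alpha_3^{t+1}-(\alpha_3-\alpha_1)\alpha_2^{t}.
\end{equation*}
The base case $t=0$ is exactly $\Lambda_i[F]\le\alpha_1=\alpha_3-(\alpha_3-\alpha_1)$. For the inductive step I split $\Lambda_{v_t}[F^{t+1}]=\sum_{j\ne v_{t-1}}[F]_{v_tj}\Lambda_j[F^t]+[F]_{v_tv_{t-1}}\Lambda_{v_{t-1}}[F^t]$, bound the generic terms by the baseline $\alpha_3^t$ and the distinguished term by the inductive hypothesis, and collect using $\Lambda_{v_t}[F]\le\alpha_3$ together with $[F]_{v_tv_{t-1}}\ge\alpha_2$; the deficit $\alpha_3-\alpha_1$ seeded at the source is thereby transported along the chain while being damped by one factor $\alpha_2$ at each step. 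At $t=r+1$ this yields $\Lambda_k[F^{\ell_k+1}]\le\alpha_3^{\ell_k+1}-(\alpha_3-\alpha_1)\alpha_2^{\ell_k}$.

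The step I expect to be the main obstacle is upgrading this chain-length estimate, which lives at the power $\ell_k+1$, to a single common power $|\mathcal{C}^*|+1$ for all rows simultaneously, because in the super-stochastic regime multiplying by further factors of $F$ inflates row sums by up to $\alpha_3\ge1$ rather than leaving them bounded by $1$ as in Theorem~\ref{theorem:2.2}. I would handle this by factoring $F^{|\mathcal{C}^*|+1}=F^{\ell_k+1}F^{|\mathcal{C}^*|-\ell_k}$ and writing $\Lambda_k[F^{|\mathcal{C}^*|+1}]=\sum_j[F^{\ell_k+1}]_{kj}\Lambda_j[F^{|\mathcal{C}^*|-\ell_k}]\le\alpha_3^{|\mathcal{C}^*|-\ell_k}\Lambda_k[F^{\ell_k+1}]$, and then invoking $\alpha_2\le\alpha_3$ to absorb the surplus powers of $\alpha_3$ into the damping term, so that $\alpha_3^{|\mathcal{C}^*|-\ell_k}\alpha_2^{\ell_k}\ge\alpha_2^{|\mathcal{C}^*|}$ and hence $\Lambda_k[F^{|\mathcal{C}^*|+1}]\le\alpha_3^{|\mathcal{C}^*|+1}-(\alpha_3-\alpha_1)\alpha_2^{|\mathcal{C}^*|}$.

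For the remaining rows $i\in\mathcal{R}_1$ I would argue directly from $\Lambda_i[F^{|\mathcal{C}^*|+1}]\le\alpha_3^{|\mathcal{C}^*|}\Lambda_i[F]\le\alpha_3^{|\mathcal{C}^*|}\alpha_1$, which is again $\le\alpha_3^{|\mathcal{C}^*|+1}-(\alpha_3-\alpha_1)\alpha_2^{|\mathcal{C}^*|}$ by $\alpha_2\le\alpha_3$. Combining the two cases gives $\|F^{|\mathcal{C}^*|+1}\|_{\infty}\le\alpha_3^{|\mathcal{C}^*|+1}-(\alpha_3-\alpha_1)\alpha_2^{|\mathcal{C}^*|}$; since hypothesis~(\ref{sys:2.5}) is exactly the statement that this quantity is $<1$, the bound $\rho(F)^{|\mathcal{C}^*|+1}\le\|F^{|\mathcal{C}^*|+1}\|_{\infty}<1$ yields $\rho(F)<1$.
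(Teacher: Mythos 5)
Your proposal is correct and follows essentially the same route as the paper, which itself just adapts the row-sum propagation argument of Theorem~\ref{theorem:2.2} to obtain $\Lambda_{k}[F^{r+2}]\leq\alpha_3^{r+2}-(\alpha_3-\alpha_1)\alpha^{r+1}_2$ and then passes to the uniform power $|\mathcal{C}^*|+1$ and the infinity-norm bound. Your treatment is in fact slightly more careful than the paper's: the step from the chain-length power $\ell_k+1$ to the common power $|\mathcal{C}^*|+1$, which the paper asserts in one line, genuinely needs the factorization $F^{|\mathcal{C}^*|+1}=F^{\ell_k+1}F^{|\mathcal{C}^*|-\ell_k}$ together with $\alpha_2\leq\alpha_3$ in the super-stochastic regime, and you supply exactly that.
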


\begin{proof}
Similar to the analysis of Theorem \ref{theorem:2.2}, it can be derived that
\begin{equation*}
\begin{aligned}
&\Lambda_{k}[F^{r+2}]\leq\alpha_3^{r+2}-(\alpha_3-\alpha_1)\alpha^{r+1}_2.
\end{aligned}
\end{equation*}
Since $|\mathcal{C}^*|\geq r+1$, we have
\begin{equation*}
\begin{aligned}
\Lambda_{k}[F^{|\mathcal{C}^*|+1}]\leq\alpha^{|\mathcal{C}^*|+1}_3-(\alpha_3-\alpha_1)\alpha^{|\mathcal{C}^*|}_2<1.
\end{aligned}
\end{equation*}
In addition, for any $i\in\mathcal{R}_1$, we have
\begin{equation*}
\begin{aligned}
\Lambda_{i}[F^{|\mathcal{C}^*|+1}]=\sum_{j=1}^n[F]_{ij}\Lambda_{j}[F^{|\mathcal{C}^*|}]
\leq\alpha^{|\mathcal{C}^*|+1}_3-(\alpha_3-\alpha_1)\alpha^{|\mathcal{C}^*|}_2<1.
\end{aligned}
\end{equation*}
This implies that $\Lambda_{s}[F^{|\mathcal{C}^*|+1}]\leq\alpha^{|\mathcal{C}^*|+1}_3-(\alpha_3-\alpha_1)\alpha^{|\mathcal{C}^*|}_2$ for any $s=1,2,\ldots,n$. Therefore,
\begin{equation*}
\begin{aligned}
\big\|F^{|\mathcal{C}^*|+1}\big\|_{\infty}\leq\alpha^{|\mathcal{C}^*|+1}_3-(\alpha_3-\alpha_1)\alpha^{|\mathcal{C}^*|}_2<1.
\end{aligned}
\end{equation*}
Which in turn means that
\begin{equation*}
\begin{aligned}
\rho(F)\leq\sqrt[|\mathcal{C}^*|]{\alpha_3^{|\mathcal{C}^*|+1}-(\alpha_3-\alpha_1)\alpha^{|\mathcal{C}^*|}_2}<1.
\end{aligned}
\end{equation*}
This proof is completed.
\end{proof}

\begin{theorem}\label{theorem:2.7}
Let $F$ be an $n\times n$ super-stochastic matrix in which $[F]_{ii}=0$, $i=1,2,\ldots,n$. If the number of elements in the set $\mathcal{R}_2$ is equal to 1, there holds that
\begin{equation}
\begin{aligned}
\rho(F)\leq\sqrt{\alpha_1\alpha_3},
\end{aligned}
\end{equation}
where $\alpha_1$ is defined in Theorem \ref{theorem:2.2}.
\end{theorem}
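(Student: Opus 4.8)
The plan is to mirror the eigenvalue--eigenvector argument used for Theorem~\ref{theorem:2.3}, adjusting only the final bound on the product of two row sums so as to accommodate the single row whose sum may exceed $1$. Let $\lambda$ be any eigenvalue of $F$ with eigenvector $x=[x_1,\ldots,x_n]^T\neq 0$, so that $Fx=\lambda x$. I would choose $p$ with $|x_p|=\max\{|x_i|\mid i=1,\ldots,n\}$ and $q\neq p$ with $|x_q|=\max\{|x_i|\mid i\neq p\}$, so that $|x_p|\geq|x_q|\geq|x_i|$ for all $i\neq p,q$. The degenerate cases can be dismissed at once: if $\lambda=0$ the claim is trivial, and if $|x_q|=0$ then $x$ is a multiple of the $p$th unit vector and the vanishing diagonal $[F]_{pp}=0$ forces $\lambda=0$; hence I may assume $|x_p|\geq|x_q|>0$.

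First I would exploit the zero diagonal exactly as in Theorem~\ref{theorem:2.3}. Reading off the $p$th and $q$th coordinates of $Fx=\lambda x$ and using $[F]_{pp}=[F]_{qq}=0$ gives $\lambda x_p=\sum_{j\neq p}[F]_{pj}x_j$ and $\lambda x_q=\sum_{j\neq q}[F]_{qj}x_j$. Bounding $|x_j|\leq|x_q|$ for $j\neq p$ in the first identity and $|x_j|\leq|x_p|$ for $j\neq q$ in the second yields the two estimates
\begin{equation*}
|\lambda|\leq\Lambda_p[F]\frac{|x_q|}{|x_p|},\qquad |\lambda|\leq\Lambda_q[F]\frac{|x_p|}{|x_q|}.
\end{equation*}
Multiplying these eliminates the eigenvector ratio and leaves $|\lambda|^2\leq\Lambda_p[F]\,\Lambda_q[F]$.

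The decisive step, and the only place the hypothesis $|\mathcal{R}_2|=1$ enters, is the estimate $\Lambda_p[F]\,\Lambda_q[F]\leq\alpha_1\alpha_3$. Because $p\neq q$ and the set $\mathcal{R}_2=\{s\mid\Lambda_s[F]\geq1\}$ contains exactly one index, at most one of the two rows $p,q$ can have row sum $\geq1$; that row contributes a factor $\leq\alpha_3$ by the definition of $\alpha_3$, while the other has row sum $<1$ and hence contributes a factor $\leq\alpha_1$ by the definition of $\alpha_1$. Thus $\Lambda_p[F]\,\Lambda_q[F]\leq\alpha_1\alpha_3$, so $|\lambda|^2\leq\alpha_1\alpha_3$. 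Since $\lambda$ was arbitrary, taking the maximum modulus over the spectrum gives $\rho(F)\leq\sqrt{\alpha_1\alpha_3}$.

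I expect the main obstacle to be precisely the product bound in the last paragraph: one must argue carefully that the two \emph{distinct} indices $p$ and $q$ cannot both lie in $\mathcal{R}_2$, which is exactly what the single-element assumption on $\mathcal{R}_2$ supplies. This is the structural difference from the sub-stochastic Theorem~\ref{theorem:2.3}, where the corresponding factor was bounded by $1$ rather than by $\alpha_3$. The remaining manipulations are routine once the degenerate cases (namely $\lambda=0$, or $x$ supported on a single coordinate) have been disposed of.
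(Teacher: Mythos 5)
Your proof is correct and is essentially the argument the paper intends: the paper omits the proof of Theorem~\ref{theorem:2.7} as ``similar to Theorem~\ref{theorem:2.3},'' and your write-up is exactly that adaptation, with the only new ingredient being the observation that the distinct indices $p$ and $q$ cannot both lie in $\mathcal{R}_2$, so $\Lambda_p[F]\Lambda_q[F]\leq\alpha_1\alpha_3$. Your explicit treatment of the degenerate cases ($\lambda=0$ or $|x_q|=0$) is a small improvement over the paper's version, which glosses over them.
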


\begin{proof}
This proof is similar to Theorem \ref{theorem:2.3}, so it is omitted here.
\end{proof}

In Theorem \ref{theorem:2.6} and Theorem \ref{theorem:2.7}, we also show the mathematical formulas for determining the strict stability of a super-stochastic matrix. For example, for a super-stochastic matrix
\begin{equation*}
\tilde{F}=\left(
    \begin{array}{ccc}
      0.6 & 0 & 0 \\
      0.55 & 0 & 0.5 \\
      0.5 & 0.55 & 0 \\
    \end{array}
  \right),
\end{equation*}
in which $\alpha_1=0.6$, $\alpha_2=0.5$, $\alpha_3=1.05$ and $|\mathcal{C}^*|=1$, then it can be derived according to the result in Theorem \ref{theorem:2.6} that $\rho(\tilde{F})=0.8275<1$. As another example, in the following super-stochastic matrix
\begin{equation*}
\bar{F}=\left(
    \begin{array}{ccc}
      0 & 0.5 & 0.7 \\
      0.4 & 0 & 0.4 \\
      0.4 & 0.4 & 0 \\
    \end{array}
  \right),
\end{equation*}
all the diagonal elements are equal to zeros and $\mathcal{R}_2=\{1\}$ contains only an element. Thus, by Theorem \ref{theorem:2.7}, we have  $\rho(\bar{F})\leq\sqrt{1.2\times 0.8}=\sqrt{0.96}<1$.

\begin{theorem}\label{theorem:2.8}
Consider a series of $n\times n$ super-stochastic matrices $F_1,F_2,\ldots,F_q$ with positive diagonal elements. Denote
\begin{equation*}
\begin{aligned}
&g=\max\{\Lambda_{i}[F_s]\mid \Lambda_{i}[F_s]>1, \ s=1,2,\ldots,q, \ i=1,2,\ldots,n\},\\
&c=\max\{\Lambda_{i}[F_s]\mid \Lambda_{i}[F_s]<1, \ s=1,2,\ldots,q, \ i=1,2,\ldots,n\},\\
&\varphi=\min\{[F_{s}]_{ij} \mid s=1,2,\ldots,q, i,j=1,2,\ldots,n\}.
\end{aligned}
\end{equation*}
For each $i_2\in\{1,2,\ldots,n\}$, if there exist $1\leq s_1<s_2\leq q$ such that $\Lambda_{i_1}[F_{s_1}]<1$, $[F_{s_2}]_{i_2i_1}>0$, and the following condition holds:
\begin{equation}\label{sys:2.7}
\begin{aligned}
&g^q-(g-c)\varphi^{q-1}<1,
\end{aligned}
\end{equation}
then one has $\Lambda_{i_2}\big[\prod_{s=1}^{q} F_s\big]<1$ and $\rho(\prod_{s=1}^{q} F_s)<1$.
\end{theorem}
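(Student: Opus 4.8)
The plan is to prove the per-row bound $\Lambda_{i_2}\big[\prod_{s=1}^q F_s\big]<1$ for an arbitrary fixed $i_2$; since the product of nonnegative matrices is nonnegative and $i_2$ is arbitrary, this immediately yields $\big\|\prod_{s=1}^q F_s\big\|_\infty<1$ and hence $\rho\big(\prod_{s=1}^q F_s\big)\le\big\|\prod_{s=1}^q F_s\big\|_\infty<1$. I keep the convention $\prod_{s=1}^m F_s=F_m F_{m-1}\cdots F_1$ from the proof of Theorem~\ref{theorem:2.4} and use throughout the identity $\Lambda_i[MN]=\sum_j[M]_{ij}\Lambda_j[N]$, valid for nonnegative $M$. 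Two preliminary facts are needed. Every row sum of every $F_s$ is at most $g$ (rows that exceed $1$ are $\le g$ by definition of $g$, while the others are $\le1\le g$), so by induction every $m$-fold partial product has all row sums bounded by $g^m$. Moreover the hypothesis forces $\varphi<1\le g$: for the index $i_1$ supplied with $i_2$ the positive diagonal gives $0<[F_{s_1}]_{i_1i_1}\le\Lambda_{i_1}[F_{s_1}]<1$, so $\varphi\le[F_{s_1}]_{i_1i_1}<1$, while $g\ge1$ because $g$ is a maximum of values that are at least $1$.

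The core is a reusable ``deficit-propagation'' step that I would state first. If a nonnegative matrix $M$ has all row sums $\le g^m$ while one distinguished row $t$ satisfies $\Lambda_t[M]\le g^m-\delta$, and if $N$ has all row sums $\le g$ with $[N]_{ut}\ge\beta$, then splitting off the $j=t$ term gives
\[
\Lambda_u[NM]=\sum_j[N]_{uj}\Lambda_j[M]\le[N]_{ut}(g^m-\delta)+(\Lambda_u[N]-[N]_{ut})g^m=\Lambda_u[N]g^m-[N]_{ut}\delta\le g^{m+1}-\beta\delta .
\]
Thus row $u$ of $NM$ inherits a deficit of at least $\beta\delta$ below the fresh uniform bound $g^{m+1}$. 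Because each $F_s$ carries a positive diagonal, one has $[F_s]_{\ell\ell}\ge\varphi$, which is exactly what lets me apply this step through one factor at a time while keeping the deficit pinned to the same row index.

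I would then chain the step along the whole product. The starting point is $\Lambda_{i_1}\big[\prod_{s=1}^{s_1}F_s\big]\le\Lambda_{i_1}[F_{s_1}]\,g^{s_1-1}\le c\,g^{s_1-1}$, that is, a deficit $(g-c)g^{s_1-1}$ below $g^{s_1}$. I then push this deficit forward: through $F_{s_1+1},\dots,F_{s_2-1}$ staying on row $i_1$ (each time $\beta=[F_s]_{i_1i_1}\ge\varphi$); through $F_{s_2}$ moving to row $i_2$ (here $\beta=[F_{s_2}]_{i_2i_1}\ge\varphi$); and through $F_{s_2+1},\dots,F_q$ staying on row $i_2$ (again $\beta\ge\varphi$). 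The $q-s_1$ applications give an accumulated deficit multiplier $\varphi^{\,q-s_1}$, so
\[
\Lambda_{i_2}\Big[\prod_{s=1}^q F_s\Big]\le g^q-(g-c)\,\varphi^{\,q-s_1}\,g^{\,s_1-1}.
\]
Finally, since $\varphi<1\le g$ and $s_1-1\ge0$, we have $g^{s_1-1}\ge1\ge\varphi^{s_1-1}$, hence $\varphi^{q-s_1}g^{s_1-1}\ge\varphi^{q-1}$, and therefore $\Lambda_{i_2}\big[\prod_{s=1}^q F_s\big]\le g^q-(g-c)\varphi^{q-1}<1$ by condition~(\ref{sys:2.7}).

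The step I expect to be the main obstacle is the bookkeeping that keeps the accumulated deficit attached to the correct row as the active index migrates from $i_1$ to $i_2$ and then remains at $i_2$; this is precisely where the positive-diagonal hypothesis is indispensable, since the lower bounds $[F_s]_{\ell\ell}\ge\varphi$ prevent the deficit from being diluted to zero when passing through an intermediate factor. A secondary point requiring care is the last comparison, namely that the $s_1$-dependent estimate $g^q-(g-c)\varphi^{q-s_1}g^{s_1-1}$ dominates the stated uniform bound $g^q-(g-c)\varphi^{q-1}$ whatever the value of $s_1$; this uses $g\ge1$ and $\varphi<1$ together and is what makes (\ref{sys:2.7}) the correct hypothesis. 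The degenerate cases $s_1=1$, $s_2=s_1+1$, or $s_2=q$ correspond to empty products equal to $I_n$ and are automatically consistent with the exponent bookkeeping, so they need no separate treatment.
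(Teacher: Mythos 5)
Your proposal is correct and follows essentially the same route as the paper's own proof: both track the row-sum deficit $(g-c)g^{s_1-1}$ created at row $i_1$ by the factor $F_{s_1}$, propagate it along the diagonal entries (bounded below by $\varphi$) until $F_{s_2}$ transfers it to row $i_2$, and conclude with the same comparison $g^q-(g-c)\varphi^{q-s_1}g^{s_1-1}\le g^q-(g-c)\varphi^{q-1}<1$. The only cosmetic difference is that you apply the deficit-propagation step one factor at a time, whereas the paper bounds the intermediate blocks $\prod_{s=s_1+1}^{s_2-1}F_s$ and $\prod_{s=s_2+1}^{q}F_s$ in one shot via $\big[\prod F_s\big]_{\ell\ell}\ge\varphi^{\#\text{factors}}$; these are equivalent.
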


\begin{proof}
According to the given conditions, we have $\Lambda_{i}\big[\prod_{s=1}^{s_1-1}F_{s}\big]\leq g^{s_1-1}$ for any $i=1,2,\ldots,n$. This in turn leads to the following fact
\begin{equation}\label{sys:2.8}
\begin{aligned}
\Lambda_{i_1}\left[\prod_{s=1}^{s_1} F_s\right]&=\sum_{j=1}^{n}[F_{s_1}]_{i_1j}\Lambda_{j}\left[\prod_{s=1}^{s_1-1} F_s\right]\\
&\leq\Lambda_{i_1}[F_{s_1}] g^{s_1-1}\leq c g^{s_1-1},\\
\Lambda_{i}\left[\prod_{s=1}^{s_1} F_s\right]&\leq g^{s_1}, \ i\neq i_1.
\end{aligned}
\end{equation}
In addition, we have for the matrix $\prod_{s=s_1+1}^{s_2-1} F_s$ that
\begin{align}\label{sys:2.9}
\left[\prod_{s=s_1+1}^{s_2-1} F_s\right]_{i_1i_1}\geq \varphi^{s_2-s_1-1}, \ \Lambda_{i_1}\left[\prod_{s=s_1+1}^{s_2-1} F_s\right]\leq g^{s_2-s_1-1}.
\end{align}
Combining (\ref{sys:2.8}) and (\ref{sys:2.9}), we can deduce
\begin{equation*}
\begin{aligned}
\Lambda_{i_1}\left[\prod_{s=1}^{s_2-1} F_s\right]&=\sum_{j=1,j\neq i_1}^{n}\left[\prod_{s=s_1+1}^{s_2-1} F_s\right]_{i_1j}\Lambda_{j}\left[\prod_{s=1}^{s_1} F_s\right]\\
& \ \ \ \ +\left[\prod_{s=s_1+1}^{s_2-1} F_s\right]_{i_1i_1}\Lambda_{i_1}\left[\prod_{s=1}^{s_1} F_s\right]\\
&\leq g^{s_2-1}-(g-c)\varphi^{s_2-s_1-1}g^{s_1-1},\\
\Lambda_{i}\left[\prod_{s=1}^{s_2-1} F_s\right]&\leq g^{s_2-1}, \ i\neq i_1.
\end{aligned}
\end{equation*}
Furthermore,
\begin{equation*}
\begin{aligned}
\Lambda_{i_2}\left[\prod_{s=1}^{s_2} F_s\right]&=\sum_{j=1,j\neq i_2}^{n}\big[F_{s_2}\big]_{i_2j}\Lambda_{j}\left[\prod_{s=1}^{s_2-1} F_s\right]\\
& \ \ \ \ +\big[F_{s_2}\big]_{i_2i_1}\Lambda_{i_1}\left[\prod_{s=1}^{s_2-1} F_s\right]\\
&\leq g^{s_2}-(g-c)\varphi^{s_2-s_1}g^{s_1-1},\\
\Lambda_{i}\left[\prod_{s=1}^{s_2} F_s\right]&\leq g^{s_2}, \ i\neq i_2,
\end{aligned}
\end{equation*}
which together with the facts
\begin{align*}
\left[\prod_{s=s_2+1}^{q} F_s\right]_{i_2i_2}&\geq \varphi^{q-s_2}, \ \Lambda_{i_2}\left[\prod_{s=s_2+1}^{q} F_s\right]\leq g^{q-s_2},
\end{align*}
guarantees that
\begin{equation*}
\begin{aligned}
\Lambda_{i_2}\left[\prod_{s=1}^{q} F_s\right]&=\sum_{j=1,j\neq i_2}^{n}\left[\prod_{s=s_2+1}^{q} F_s\right]_{i_2j}\Lambda_{j}\left[\prod_{s=1}^{s_2} F_s\right]\\
& \ \ \ \ +\left[\prod_{s=s_2+1}^{q} F_s\right]_{i_2i_2}\Lambda_{i_2}\left[\prod_{s=1}^{s_2} F_s\right]\\
&\leq g^{q}-(g-c)\varphi^{q-s_1}g^{s_1-1}\\
&\leq g^{q}-(g-c)\varphi^{q-1}<1.
\end{aligned}
\end{equation*}
This implies that $\big\|\prod_{s=1}^{q} F_s\big\|_{\infty}<1$, and further $\rho(\prod_{s=1}^{q} F_s)\leq\|\prod_{s=1}^{q} F_s\|_{\infty}<1$. The proof is completed.
\end{proof}

In Theorem \ref{theorem:2.8}, we show some product properties of super-stochastic matrices. It is noteworthy that even if the subsystem $F_{\sigma(k)}$ is strictly unstable, that is $\rho(F_{\sigma(k)})>1$, the stability of switched systems can be guaranteed through the result of Theorem \ref{theorem:2.8}. For example, given the following three super-stochastic matrices
\begin{equation*}
\begin{aligned}
F_1=\left(
      \begin{array}{ccc}
        0 & 0.5 & 0 \\
        0 & 1.02 & 0 \\
        0.5 & 0 & 0.5 \\
      \end{array}
    \right),
F_2=\left(
      \begin{array}{ccc}
        0.5 & 0.52 & 0 \\
        0.5 & 0 & 0.5 \\
        0 & 0 & 1.02 \\
      \end{array}
    \right),
F_3=\left(
      \begin{array}{ccc}
        1.02 & 0 & 0 \\
        0.5 & 0.5 & 0 \\
        0 & 0.5 & 0.5 \\
      \end{array}
    \right).
\end{aligned}
\end{equation*}
It can be seen that $\rho(F_s)=1.02>1$, $s=1,2,3$, namely, all matrices are strictly unstable. However, because these three super-stochastic  matrices satisfy the conditions of Theorem \ref{theorem:2.8}, we have $\rho(F_3F_2F_1)\leq\|F_3F_2F_1\|_{\infty}<1$. When the switching signal satisfies $\sigma(3k)=1, \sigma(3k+1)=2, \sigma(3k+2)=3$, the stability of switched systems is achieved.

\begin{remark}\label{remark:2.9}
In the existing literature \cite{Zheng2018Stability} on the stability of positive switched systems, subsystems are required to be stable
and marginally stable, that is, $\rho(F_{\sigma(k)})$ must be less than or equal to 1. Compared with the work \cite{Zheng2018Stability}, the results given in Theorems \ref{theorem:2.8} can still guarantee the stability of the system even if all subsystems satisfying the given conditions are strictly unstable, namely, $\rho(F_{\sigma(k)})>1$. In other words, the conclusions about sub-stochastic matrix and super-stochastic matrix provide a reference for extending the existing study of the stability of positive switched systems.
\end{remark}

\subsection{Signed digraph}

Consider a multi-agent network composed of a leader (denoted by $v_{0}$) and $n$ followers (denoted by $v_{1},v_{2},\ldots,v_{n}$). The leader is an agent that does not receive information from other agents, and the followers update the states by receiving information from the neighbor agents. The information interactions among the agents are described by a signed digraph $\mathscr{G}=(\mathscr{E},\mathscr{V},\mathscr{A})$. An edge that starts at $v_j$ and ends at $v_i$ in $\mathscr{E}$ is defined as $(v_{j},v_{i})$. Edge $( v_{j}, v_{i})\in \mathscr{E}$ if and only if the information of follower $ v_{j}$ can be detected by follower $ v_{i}$. The set of the neighbors of follower $v_i$ is denoted by $\mathscr{N}_i=\{v_j \mid (v_j,v_i)\in\mathscr{E}\}$. The elements in the adjacency matrix $\mathscr{A}=[a_{ij}]$ satisfy: $a_{ij}\neq0\Leftrightarrow(v_j,v_i)\in\mathscr{E}$; otherwise, $a_{ij}=0$. Furthermore, $a_{ij}>0$ ($a_{ij}<0$) means that follower $v_i$ can receive follower $v_j$'s cooperative (competitive) information. Let $a_{i0}$ describe the communication from leader $v_{0}$ to follower $v_i$, where $a_{i0}>0$ ($a_{i0}<0$) if follower $v_{i}$ can detect the leader's cooperative (competitive) information, and $a_{i0}=0$ means that follower $v_{i}$ cannot receive the information of the leader. Considering the leader, we construct a new digraph $\tilde{\mathscr{G}}=(\tilde{\mathscr{E}},\tilde{\mathscr{V}})$ which consists of digraph $\mathscr{G}$, vertex $ v_0$ and those edges that start at the leader and end at the followers. A directed path that starts at $ v_{i_0}$ and ends at $ v_{i_r}$ in digraph $\tilde{\mathscr {G}}$ is denoted by $v_{i_0}\rightarrow v_{i_1}\rightarrow\cdots\rightarrow v_{i_r}$, where $ v_{i_0}, v_{i_1}, \ldots , v_{i_r}\in \tilde{\mathscr{V}}$ are all distinct. Let $d(v_{i},v_{j})$ denote the directed distance from $v_{i}$ to $v_{j}$, which is the number of edges in the shortest path from $v_{i}$ to $v_{j}$. A digraph $\bar{\mathscr{G}}=\{\bar{\mathscr{E}},\bar{\mathscr{V}}\}$ is called a spanning subgraph of digraph $\tilde{\mathscr{G}}$ if $\bar{\mathscr{V}}=\tilde{\mathscr{V}}$ and $\bar{\mathscr{E}}\subseteq\tilde{\mathscr{E}}$. For subsequent use, we denote
\begin{equation*}
\begin{aligned}
&\mathscr{D}=\diag\big\{\sum_{j=1}^{n}|a_{1j}|,\sum_{j=1}^{n}|a_{2j}|,\ldots,\sum_{j=1}^{n}|a_{nj}|\big\},\\
&\mathscr{B}=\diag\big\{|a_{10}|,|a_{20}|,\ldots,|a_{n0}|\big\}.
\end{aligned}
\end{equation*}

For realizing asynchronous bipartite tracking, the following conditions based on the communication topology are required:
\begin{enumerate}
\item [\textbf{C1}.] The signed digraph $\tilde{\mathscr{G}}$ is structurally balanced, that is, all vertices are divided into two subsets $\mathscr{V}_1$ and $\mathscr{V}_2$ such that $\mathscr{V}_1\cup\mathscr{V}_2=\mathscr{\tilde{V}}$, $\mathscr{V}_1\cap\mathscr{V}_2=\emptyset$. Besides, $a_{ij}\geq0$ if vertices $v_i$ and $v_j$ belong to the same subset, and $a_{ij}\leq0$ if vertices $v_i$ and $v_j$ belong to different subsets.

\item [\textbf{C2}.] For each follower, digraph $\tilde{\mathscr{G}}$ contains a directed path that starts at the leader and ends at that follower.
\end{enumerate}

\begin{remark}\label{remark:2.10}
Condition \textbf{C1} is to ensure that the agents belonging to the same subset eventually reach the exact same state, while the agents belonging to different subsets eventually reach the states with same size and opposite signs. Condition \textbf{C2} is to guarantee that each follower can directly or indirectly detect the information from the leader. It is noteworthy that \textbf{C1} and \textbf{C2} are the basic conditions that the communication topology must satisfy in order to achieve bipartite tracking so far (e.g., see \cite{Ma2018Necessary,Wen2018Bipartite,Liu2018Robust,Yu2018Prescribed,Wu2019Bipartite,Gong2019Fixed}).
\end{remark}

With no loss of generality, it is assumed that $\mathscr{V}_1=\{v_0,v_1,\ldots,v_m\}$ and $\mathscr{V}_2=\{v_{m+1},\ldots,v_n\}$. According to the division of subsets $\mathscr{V}_1$ and $\mathscr{V}_2$, the adjacency matrix $\mathscr{A}$ of the digraph $\mathscr{G}$ is partitioned as
\begin{equation*}
\mathscr{A}=\left(
    \begin{array}{cc}
      \mathscr{A}_{11} & \mathscr{A}_{12} \\
      \mathscr{A}_{21} & \mathscr{A}_{22} \\
    \end{array}
  \right),
\end{equation*}
where $\mathscr{A}_{11}\geq0$, $\mathscr{A}_{22}\geq0$, $\mathscr{A}_{12}\leq0$ and $\mathscr{A}_{21}\leq0$.

\subsection{Composition of edge sets}

Let $\mathscr{E}_{1}$ and $\mathscr{E}_{2}$ be two directed edge sets that depend on the same vertex set $\tilde{\mathscr{V}}$. The composition of $\mathscr{E}_{1}$ and $\mathscr{E}_{2}$ is represented by $\mathscr{E}_{1}\circ\mathscr{E}_{2}$, which is a directed edge set and satisfies: $(i,j)\in\mathscr{E}_{1}\circ\mathscr{E}_{2}$ if for some $k$, $(i,k)\in\mathscr{E}_{1}$ and $(k,j)\in\mathscr{E}_{2}$. For a finite sequence of directed edge sets $\mathscr{E}_{1},\mathscr{E}_{2},\ldots,\mathscr{E}_{s}$ that depend on the same vertex set $\tilde{\mathscr{V}}$, let us agree to say that the composition $\mathscr{E}_{1}\circ\mathscr{E}_{2}\circ\cdots\circ\mathscr{E}_{s}$ associated with the set $\mathscr{V}^*\subseteq\tilde{\mathscr{V}}$ is rooted at the vertex $i_0\in\tilde{\mathscr{V}}$ if for any $i_s\in\mathscr{V}^*$, there exist some vertices $i_1,i_2,\ldots,i_{s-1}\in\tilde{\mathscr{V}}$ such that $(i_0,i_1)\in\mathscr{E}_{1},(i_1,i_2)\in\mathscr{E}_{2},\ldots, (i_{s-1},i_s)\in\mathscr{E}_{s}$.

\begin{theorem}\label{theorem:2.11}
Consider a group of digraphs $\mathscr{G}_1=\{\tilde{\mathscr{V}},\mathscr{E}_1\},\mathscr{G}_2=\{\tilde{\mathscr{V}},\mathscr{E}_2\},\ldots$, $\mathscr{G}_q=\{\tilde{\mathscr{V}},\mathscr{E}_q\}$ with the same vertex set $\tilde{\mathscr{V}}$. If there exists a set $\mathscr{V}^*\subseteq\tilde{\mathscr{V}}$ and a vertex $v_{0}\in\tilde{\mathscr{V}}$ that satisfy:
\begin{enumerate}
\item [1).] $(v_{0},v_{i_q})\in\mathscr{E}_1\cup\mathscr{E}_2\cup\ldots\cup\mathscr{E}_q$ for any $v_{i_q}\in\mathscr{V}^*$;

\item [2).] the vertices $v_{0},v_{i_q}$ have self-loops in all digraphs $\mathscr{G}_p$, $p=1,2,\ldots,q$;
\end{enumerate}
then the composition $\mathscr{E}_1\circ\mathscr{E}_2\circ\cdots\circ\mathscr{E}_q$ associated with the set $\mathscr{V}^*$ is rooted at the vertex $v_{0}$.
\end{theorem}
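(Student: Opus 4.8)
The plan is to verify the rootedness condition directly, by fixing an arbitrary target vertex in $\mathscr{V}^*$ and exhibiting an explicit walk that uses exactly one edge from each set $\mathscr{E}_1,\mathscr{E}_2,\ldots,\mathscr{E}_q$, taken in that order. The whole argument hinges on combining the two hypotheses in the obvious way: condition 1) supplies a single ``productive'' edge from $v_0$ to the target somewhere in the union $\mathscr{E}_1\cup\cdots\cup\mathscr{E}_q$, while condition 2) lets me pad that edge with self-loops on both sides so as to fill up the remaining positions of the composition without moving.

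Concretely, I would first fix any $v_{i_q}\in\mathscr{V}^*$ and use condition 1) to choose an index $p\in\{1,2,\ldots,q\}$ with $(v_0,v_{i_q})\in\mathscr{E}_p$. I then define the intermediate vertices by setting $j_1=j_2=\cdots=j_{p-1}=v_0$ and $j_p=j_{p+1}=\cdots=j_{q-1}=v_{i_q}$. With this choice, the self-loop hypothesis gives $(v_0,v_0)\in\mathscr{E}_1,\ldots,(v_0,v_0)\in\mathscr{E}_{p-1}$ at $v_0$ and $(v_{i_q},v_{i_q})\in\mathscr{E}_{p+1},\ldots,(v_{i_q},v_{i_q})\in\mathscr{E}_q$ at $v_{i_q}$, while the single productive edge $(v_0,v_{i_q})\in\mathscr{E}_p$ bridges the two. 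Reading these off in order yields $(v_0,j_1)\in\mathscr{E}_1$, $(j_1,j_2)\in\mathscr{E}_2$, $\ldots$, $(j_{q-1},v_{i_q})\in\mathscr{E}_q$, which is exactly the chain required by the definition of the composition $\mathscr{E}_1\circ\mathscr{E}_2\circ\cdots\circ\mathscr{E}_q$ being rooted at $v_0$ relative to $\mathscr{V}^*$. Since $v_{i_q}$ was arbitrary in $\mathscr{V}^*$, this establishes the claim.

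The only points that deserve care are the two boundary cases $p=1$ (no initial self-loops, so the productive edge occupies the first position) and $p=q$ (no trailing self-loops, so it occupies the last position); both are handled automatically by reading the degenerate index ranges as empty. I expect the main ``obstacle'' to be purely notational rather than mathematical: one must check that consecutive edges in the constructed chain share the correct head/tail vertex and that precisely one edge is drawn from each $\mathscr{E}_p$, so that the indices of the intermediate vertices $j_1,\ldots,j_{q-1}$ match those in the definition. There is no genuine difficulty here, since the self-loop hypothesis in condition 2) is tailored exactly to eliminate any constraint on the position $p$ at which the productive edge appears.
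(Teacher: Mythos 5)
Your proof is correct and rests on the same key idea as the paper's: use condition 1) to locate the index $p$ with $(v_0,v_{i_q})\in\mathscr{E}_p$, then pad on both sides with the self-loops $(v_0,v_0)$ and $(v_{i_q},v_{i_q})$ guaranteed by condition 2) to fill the remaining positions of the composition. The paper additionally splits off a separate Case I (where a chain through other intermediate vertices already happens to exist), but that case is logically subsumed by the self-loop padding argument, so your single uniform treatment is a slightly cleaner rendering of the same proof.
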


\begin{proof}\
The following two different cases are analyzed to obtain this result.

Case I: There exist vertices $v_{i_1},v_{i_2},\ldots,v_{i_{q-1}}\notin\{v_{0},v_{i_q}\}$ such that $(v_{0},v_{i_1})\in\mathscr{E}_1$, $(v_{i_1},v_{i_2})\in\mathscr{E}_2,\ldots,(v_{i_{q-1}},v_{i_q})\in\mathscr{E}_q$, where $\{v_{i_1},v_{i_2},\ldots,v_{i_{q-1}}\}$ may contain the same elements. Then it is easy to obtain that
\begin{equation*}
(v_{0},v_{i_q})\in\mathscr{E}_1\circ\mathscr{E}_2\circ\cdots\circ\mathscr{E}_q.
\end{equation*}
Obviously, this result still holds if all digraphs $\mathscr{G}_p$, $p=1,2,\ldots,q$ have self-loops on the vertices $v_{0}, v_{iq}$.

Case II: There are no vertices $v_{i_1},v_{i_2},\ldots,v_{i_{q-1}}\notin\{v_{0},v_{i_q}\}$ such that $(v_{0},v_{i_1})\in\mathscr{E}_1$, $(v_{i_1},v_{i_2})\in\mathscr{E}_2,\ldots,(v_{i_{q-1}},v_{i_q})\in\mathscr{E}_q$. Since $(v_{0},v_{i_q})\in\mathscr{E}_1\cup\mathscr{E}_2\cup\ldots\cup\mathscr{E}_q$, there exists a digraph $\mathscr{G}_{s}\in\{\mathscr{G}_1,\mathscr{G}_2,\ldots,\mathscr{G}_q\}$ such that $(v_{0},v_{i_q})\in\mathscr{E}_s$, which, together with the condition that vertices $v_{0},v_{i_q}$ have self-loops in digraphs $\mathscr{G}_p$, $p=1,2,\ldots,q$, guarantees
\begin{equation*}
(v_{0},v_{i_q})\in\mathscr{E}_1\circ\mathscr{E}_2\circ\cdots\circ\mathscr{E}_q,
\end{equation*}
where $(v_{0},v_{0})\in\mathscr{E}_1$, $(v_{0},v_{0})\in\mathscr{E}_2,\ldots,(v_{0},v_{0})\in\mathscr{E}_{s-1}$, $(v_{0},v_{i_q})\in\mathscr{E}_s$, $(v_{i_q},v_{i_q})\in\mathscr{E}_{s+1}$, $(v_{i_q},v_{i_q})\in\mathscr{E}_{s+2},\ldots,(v_{i_q},v_{i_q})\in\mathscr{E}_{q}$.

This completes the proof.
\end{proof}

\subsection{Asynchronous setting}\label{section:2.4}

Let $\mathscr{T}=\{0,\tau,\ldots,k\tau,\ldots\}$ be the set including all discrete-time instants, where $\tau>0$ is the fixed update step-size. The asynchronous setting considered in this paper means that the communication time instants of each follower, at which the follower communicates with its neighbors, are independent of the other followers' and can be unevenly distributed. Assume that the set of follower $v_i$'s communication time instants is $\{s^{i}_{k}\tau\}=\{s^{i}_{0}\tau,s^{i}_{1}\tau,\ldots,s^{i}_{k}\tau,\ldots\}$, which satisfies  $s^{i}_{0}\tau,s^{i}_{1}\tau,\ldots,s^{i}_{k}\tau\in\mathscr{T}$ and $0=s^{i}_{0}\tau<s^{i}_{1}\tau<\cdots<s^{i}_{k}\tau<\cdots$. It is further assumed for any $k\in\mathbb{N}$ and $i\in\{1,2,\ldots,n\}$, $\{s^{i}_{k}\tau\}$ satisfies the following condition:
\begin{eqnarray}\label{sys:2.10}
s^{i}_{k+1}-s^{i}_{k}\leq h,
\end{eqnarray}
where $h\in\mathbb{Z}_{+}$ is a constant.

\begin{remark}\label{remark:2.12}
Condition (\ref{sys:2.10}) is necessary for implementing the asynchronous bipartite tracking of MASs. Without condition (\ref{sys:2.10}), there may be an agent who cannot receive its neighbors' information all the time, which means that the asynchronous bipartite tracking may not be implemented.
\end{remark}

\section*{Part I}

Through the the introduction in Section \ref{section:2.2}, it can be seen that the results of sub-stochastic matrix and super-stochastic matrix can be applied to analyze the stability of positive switched systems. In order to show the wide application of these results, below we will use these results to study the dynamics phenomenon of bipartite tracking of MASs on signed networks with both cooperation and competition in detail. In Section \ref{section:3}-Section \ref{section:6}, we mainly analyze the bipartite tracking dynamics of first-order MASs, second-order MASs and general linear MASs under the asynchronous setting by using the properties of ISubSM and ISupSM.

\section{Bipartite tracking of asynchronous first-order MASs}\label{section:3}

We study the bipartite tracking issue of first-order MASs with the asynchronous setting based on ISubSM in this section. In the discrete-time setting, the leader is modeled by the following dynamics
\begin{eqnarray}\label{sys:3.1}
x_{0}(k+1)=x_{0}(k),
\end{eqnarray}
and the state update rule of each follower $v_i\in\mathscr{V}$ is given by
\begin{eqnarray}\label{sys:3.2}
x_{i}(k+1)=x_{i}(k)+\tau u_{i}(k),
\end{eqnarray}
where $x_{i}(k)\in \mathbb{R}^{p}$ represents agent $v_i$'s position at discrete-time instant $k\tau$. For each follower $v_i$, the following asynchronous distributed control input (control protocol) is designed:
\begin{eqnarray}\label{sys:3.3}
\left\{
\begin{aligned}
u_{i}(k)=&\psi\sum\limits_{v_j\in\mathscr{N}_i}|a_{ij}|\big[\sgn(a_{ij})x_{j}(k)-x_{i}(k)\big]\\
&+\psi|a_{i0}|\big[\sgn(a_{i0})x_{0}(k)-x_{i}(k)\big],\ {\rm if} \ k\tau\in\{s^{i}_{k}\tau\};\\
u_{i}(k)=&\, 0, \ {\rm if} \ k\tau\notin\{s^{i}_{k}\tau\},
\end{aligned}
\right.
\end{eqnarray}
where $\psi>0$ is a constant gain parameter. Observing from (\ref{sys:3.3}), follower $v_i$'s position is fixed during time intervals $[s^{i}_{k}\tau,s^{i}_{k+1}\tau)$, $k\in\mathbb{N}$, and its position changes only at time instants $s^{i}_{k}\tau$, $k\in\mathbb{N}$.

\begin{definition}\label{definition:3.1}
The bipartite tracking for systems (\ref{sys:3.1}) and (\ref{sys:3.2}) is said to be realized if the following conditions are satisfied:
\begin{eqnarray*}
\begin{split}
& \forall v_i\in\mathscr{V}_1,\ \lim_{k\rightarrow \infty}\left\|x_{i}(k)-x_{0}(k)\right\|=0,\\
&\forall v_i\in\mathscr{V}_{2},\ \lim_{k\rightarrow \infty}\left\|x_{i}(k)+x_{0}(k)\right\|=0.
\end{split}
\end{eqnarray*}
\end{definition}

As can be seen from protocol (\ref{sys:3.3}), the interactions among the agents are time-varying in the asynchronous setting and the variability of interactions is arbitrary and irregular. For describing the information interactions at different time instants, we construct some new signed digraphs below. Let a digraph $\mathscr{G}(k)=(\mathscr{V},\mathscr{E}(k),\mathscr{A}(k))\subseteq\mathscr{G}$ portray the information exchange among the followers at time instant $k\tau$. For any edge $(j,i)\in\mathscr{E}$, $(j,i)\in\mathscr{E}(k)$ if and only if $k\tau\in\{s^{i}_{k}\tau\}$, and otherwise $(j,i)\notin\mathscr{E}(k)$. The set of follower neighbors of agent $v_i$ at time instant $k\tau$ is denoted by $\mathscr{N}_i(k)=\{v_j \mid (v_j,v_i)\in\mathscr{E}(k)\}$. The weighted adjacency matrix $\mathscr{A}(k)=[a_{ij}(k)]$ satisfies: for any $i=1,2,\ldots,n$, if $k\tau\in\{s^{i}_{k}\tau\}$, then $a_{ij}(k)=a_{ij}$, $j=1,2,\ldots,n$; otherwise, $a_{ij}(k)=0$, $j=1,2,\ldots,n$. Based on the division of subsets $\mathscr{V}_1$ and $\mathscr{V}_2$, we partition the adjacency matrix $\mathscr{A}(k)$ as
\begin{equation*}
\mathscr{A}(k)=\left(
    \begin{array}{cc}
      \mathscr{A}_{11}(k) & \mathscr{A}_{12}(k) \\
      \mathscr{A}_{21}(k) & \mathscr{A}_{22}(k) \\
    \end{array}
  \right),
\end{equation*}
where $\mathscr{A}_{11}(k)\geq0$, $\mathscr{A}_{22}(k)\geq0$, $\mathscr{A}_{12}(k)\leq 0$ and $\mathscr{A}_{21}(k)\leq 0$. Let $a_{i0}(k)$ describe the communication from the leader to follower $v_i$ at time instant $k\tau$, where $a_{i0}(k)=a_{i0}$ if $k\tau\in\{s^{i}_{k}\tau\}$, and $a_{i0}(k)=0$ if $k\tau\notin\{s^{i}_{k}\tau\}$. Below we construct a new graph $\tilde{\mathscr{G}}(k)=(\tilde{\mathscr{E}}(k),\tilde{\mathscr{V}}(k))$ consisting of the graph $\mathscr{G}(k)$, vertex $ v_0$ and the directed edges that start at the leader and end at the followers at time instant $k\tau$. Denote
\begin{equation*}
\begin{aligned}
&\mathscr{D}(k)=\diag\Big\{\sum_{v_j\in\mathscr{N}_1(k)}|a_{1j}(k)|,\ldots,\sum_{v_j\in\mathscr{N}_n(k)}|a_{nj}(k)|\Big\},\\
&\mathscr{B}(k)=\diag\big\{|a_{10}(k)|,|a_{20}(k)|,\ldots,|a_{n0}(k)|\big\}.
\end{aligned}
\end{equation*}
Based on the structure of $\tilde{\mathscr{G}}(k)$, protocol (\ref{sys:3.3}) can be equivalently expressed as
\begin{equation}\label{sys:3.4}
\begin{aligned}
u_{i}(k)&=\psi\sum\limits_{v_j\in\mathscr{N}_i(k)}|a_{ij}(k)|\big[\sgn(a_{ij}(k))x_{j}(k)-x_{i}(k)\big]\\
& \ \ \ \ +\psi|a_{i0}(k)|\big[\sgn(a_{i0}(k))x_{0}(k)-x_{i}(k)\big].
\end{aligned}
\end{equation}

Before moving on, the following model transformations are introduced:
\begin{eqnarray}\label{sys:3.5}
\begin{aligned}
&e_{x1}(k)=\left[x^T_{1}(k)-x^T_{0}(k),\ldots,x^T_{m}(k)-x^T_{0}(k)\right]^T,\\
&e_{x2}(k)=\left[-x^T_{m+1}(k)\!-\!x^T_{0}(k),\ldots,\!-x^T_{n}(k)\!-\!x^T_{0}(k)\right]^T,\\
&e_{x}(k)=\left[e_{x1}^{T}(k),e_{x2}^{T}(k)\right]^{T}.
\end{aligned}
\end{eqnarray}
Using (\ref{sys:3.4}) and (\ref{sys:3.5}), systems (\ref{sys:3.1}) and (\ref{sys:3.2}) can be written as an error system with the following compact form:
\begin{eqnarray}\label{sys:3.6}
\begin{aligned}
e_{x}(k+1)=\big[M(k)\otimes I_{p}\big]e_{x}(k),
\end{aligned}
\end{eqnarray}
where $M(k)=I_n-\tau\psi\mathscr{D}(k)-\tau\psi\mathscr{B}(k)+\tau\psi|\mathscr{A}(k)|$.

Apparently, the implementation of asynchronous bipartite tracking for systems (\ref{sys:3.1}) and (\ref{sys:3.2}) is equivalent to the achievement of asymptotic stability of error system (\ref{sys:3.6}). That is, we need to prove that $\lim_{k\rightarrow\infty}\prod_{s=0}^kM(s)=\mathbf{0}$ in order to achieve the asynchronous bipartite tracking. If the parameter $\psi$ satisfies
\begin{eqnarray}\label{sys:3.7}
\psi<\frac{1}{\tau d_{M}},
\end{eqnarray}
where $d_M\!=\!\max\big\{\sum_{v_j\in\mathscr{N}_i}|a_{ij}|\!+\!|a_{i0}| \mid i\!=\!1,2,\ldots,n\big\}$, then $M(k)$, $k\in\mathbb{N}$ are sub-stochastic matrices in which the diagonal elements are positive and the row sums satisfy:
\begin{eqnarray*}
\left\{
\begin{split}
&\Lambda_{i}\big[M(k)\big]<1 \ {\rm if} \ (v_0,v_i)\in\tilde{\mathscr{E}}(k),\\
&\Lambda_{i}\big[M(k)\big]=1 \ {\rm if} \ (v_0,v_i)\notin\tilde{\mathscr{E}}(k),
\end{split}
\right.
\end{eqnarray*}
where $i\in\{1,2,\ldots,n\}$. Thus, the issue of asynchronous bipartite tracking is equivalently transformed into the product convergence issue of ISubSM.

\begin{lemma}\label{lemma:3.2}
Suppose that the inequality (\ref{sys:3.7}) holds. If the communication topology satisfies the conditions \textbf{C1} and \textbf{C2}, then for any $k\in\mathbb{N}$, there holds that
\begin{eqnarray}\label{sys:3.8}
\Big\|\prod_{s=k}^{k+Ph-1}M(s)\Big\|_{\infty}<1,
\end{eqnarray}
where $P=\max\{d( v_0, v_i)\mid i=1,2,\ldots,n\}$.
\end{lemma}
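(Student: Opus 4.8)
The plan is to reduce the bound to a statement about the individual row sums of the windowed product, and then to propagate a ``row-sum deficiency'' outward from the leader to every follower along directed paths, with the positive diagonal entries of the $M(s)$ serving as the vehicle of propagation. First I would record the structural facts already in place just above the statement: under (\ref{sys:3.7}) each $M(s)$ is sub-stochastic with strictly positive diagonal entries, and its row sums obey the dichotomy $\Lambda_i[M(s)]<1$ exactly when $(v_0,v_i)\in\tilde{\mathscr{E}}(s)$ and $\Lambda_i[M(s)]=1$ otherwise. By part 1) of Theorem~\ref{theorem:2.4} the product $\prod_{s=k}^{k+Ph-1}M(s)$ is again sub-stochastic, so its nonnegativity already gives $\big\|\prod_{s=k}^{k+Ph-1}M(s)\big\|_\infty\le 1$; hence it suffices to prove that \emph{every} one of its row sums is strictly less than $1$.

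Next I would fix an arbitrary follower $v_i$ and construct a time-ordered chain that realizes the propagation to $v_i$. By \textbf{C2} there is a directed path $v_0\to v_{p_1}\to\cdots\to v_{p_\ell}=v_i$ in $\tilde{\mathscr{G}}$ with $\ell=d(v_0,v_i)\le P$. I would partition the window $[k,k+Ph-1]$ into $P$ consecutive blocks of length $h$. Condition (\ref{sys:2.10}) guarantees that each follower updates at least once during any $h$ consecutive instants, so in the $m$-th block I may select an instant $t_m$ at which $v_{p_m}$ updates; these instants satisfy $k\le t_1<t_2<\cdots<t_\ell\le k+Ph-1$. At $t_1$ the relation $(v_0,v_{p_1})\in\tilde{\mathscr{E}}(t_1)$ forces $\Lambda_{p_1}[M(t_1)]<1$, while for each $m\ge 2$ the edge $(v_{p_{m-1}},v_{p_m})\in\tilde{\mathscr{E}}$ together with the update of $v_{p_m}$ at $t_m$ yields $[M(t_m)]_{p_m p_{m-1}}>0$. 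Block by block this is exactly the one-hop rooting supplied by Theorem~\ref{theorem:2.11}, with the positive diagonals playing the role of the self-loops; iterating it across the $P$ blocks is what makes the full composition of the windowed edge sets rooted at $v_0$ on the entire follower set.

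Finally I would carry out the propagation inductively, mirroring the proof of part 2) of Theorem~\ref{theorem:2.4}. Using that every diagonal entry of every $M(s)$ is positive, one shows first that the $p_1$-row sum of the partial product up to time $t_1$ is $<1$, and then that each positive off-diagonal entry $[M(t_m)]_{p_m p_{m-1}}>0$ transports this strict deficiency one further step along the path, so that the $p_m$-row sum of the partial product up to $t_m$ stays $<1$; the positive diagonals bridge the instants lying between consecutive $t_m$ and also carry the strict inequality from $t_\ell$ out to the end of the window. Since $v_i$ was arbitrary, all $n$ row sums of $\prod_{s=k}^{k+Ph-1}M(s)$ are strictly below $1$, which gives (\ref{sys:3.8}). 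I expect the main obstacle to lie in this last step: because the update order within each block is arbitrary, some care is required to keep the chosen instants strictly increasing and, above all, to verify that the strict inequality created at $t_1$ is genuinely \emph{preserved}---rather than merely bounded by $1$---as it passes through the intervening sub-stochastic factors, which is precisely the point where positivity of the diagonal entries is indispensable.
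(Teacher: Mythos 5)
Your proposal is correct and follows essentially the same route as the paper's proof: partition the window $[k,k+Ph-1]$ into $P$ blocks of length $h$, use (\ref{sys:2.10}) to pick an update instant for each vertex along the directed path from $v_0$ guaranteed by \textbf{C1}--\textbf{C2}, create the strict row-sum deficiency at the first such instant via $(v_0,v_{\delta_1})\in\tilde{\mathscr{E}}(\cdot)$, and propagate it along the path using the positive off-diagonal entries at the update times and the positive diagonal entries in between, exactly as in Theorem~\ref{theorem:2.4}. The only cosmetic difference is your passing appeal to Theorem~\ref{theorem:2.11}, which the paper reserves for the super-stochastic case and does not need here.
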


\begin{proof}\
Under the inequality (\ref{sys:3.7}), $M(k)$, $k\in\mathbb{N}$ are sub-stochastic matrices with positive diagonal elements. By the conditions \textbf{C1} and \textbf{C2}, there is a directed path $W_z=v_0\rightarrow v_{\delta_{1}}\rightarrow v_{\delta_{2}}\rightarrow\cdots\rightarrow v_{\delta_{z}}$ in $\tilde{\mathscr{G}}$ for each vertex $v_{\delta_{z}}\in\mathscr{V}$, where $\delta_{1},\delta_{2},\ldots,\delta_{z}\in\{1,2,\ldots,n\}$. Known from the condition in (\ref{sys:2.10}) that  $s^{i}_{k+1}-s^{i}_{k}\leq h$ for any $k\in\mathbb{N}$ and $i\in\{1,2,\ldots,n\}$, follower $v_{\delta_{1}}$ communicates with its neighbors at least once during the time interval $[k\tau,k\tau+h\tau)$. Assume that one of the $v_{\delta_{1}}$'s communication time instants during $[k\tau,k\tau+h\tau)$ is $k\tau+l_0\tau$, which satisfies $k\tau\leq k\tau+l_0\tau<k\tau+h\tau$. Then, we have $(v_0,v_{\delta_{1}})\in\tilde{\mathscr{E}}(k+l_0)$. It follows that $\Lambda_{\delta_{1}}[M(k+l_0)]<1$, and further
\begin{eqnarray}\label{sys:3.9}
\Lambda_{\delta_{1}}\left[\prod_{s=k}^{k+l_0}M(s)\right]
=\sum_{i=1}^n\big[M(k+l_0)\big]_{\delta_{1},i}\Lambda_i\left[\prod_{s=k}^{k+l_0-1}M(s)\right]<1.
\end{eqnarray}
Since $[M(k+h-1)]_{\delta_{1},\delta_{1}}>0$, we can obtain by using the result in Theorem \ref{theorem:2.4} that
\begin{eqnarray}\label{sys:3.10}
\Lambda_{\delta_{1}}\left[\prod_{s=k}^{k+h-1}M(s)\right]<1.
\end{eqnarray}

Below our purpose is to analyze $\Lambda_{\delta_{2}}\big[\prod_{s=k}^{k+2h-1}M(s)\big]$. It is known that follower $v_{\delta_{2}}$ communicates with its neighbors at least once during the time interval $[k\tau+h\tau,k\tau+2h\tau)$. Assume that one of the $v_{\delta_{2}}$'s communication instants during $[k\tau+h\tau,k\tau+2h\tau)$ is $k\tau+h\tau+l_1\tau$ that satisfies $k\tau+h\tau\leq k\tau+h\tau+l_1\tau<k\tau+2h\tau$. Then, we have $(v_{\delta_{1}},v_{\delta_{2}})\in\tilde{\mathscr{E}}(k+h+l_1)$. Equivalently, $[M(k+h+l_1)]_{\delta_{2}\delta_{1}}>0$. By combining this result with (\ref{sys:3.9}), it can be derived noting Theorem~\ref{theorem:2.4} that
\begin{eqnarray}\label{sys:3.11}
\Lambda_{\delta_{2}}\left[\prod_{s=k}^{k+2h-1}M(s)\right]<1.
\end{eqnarray}
Along this line of analysis, we can further get $\Lambda_{\delta_{z}}[\prod_{s=k}^{k+zh-1}M(s)]<1$. Taking note of this fact that $P\geq d(v_0, v_{u_z})=z$, where $z\in\mathbb{Z}_+$ is the length of $W_z$, we can also obtain according to Theorem~\ref{theorem:2.4} that
\begin{eqnarray}\label{sys:3.12}
\Lambda_{\delta_{z}}\left[\prod_{s=k}^{k+Ph-1}M(s)\right]<1.
\end{eqnarray}
This implies that
\begin{eqnarray}\label{sys:3.13}
\begin{aligned}
\Big\|\prod_{s=k}^{k+Ph-1}M(s)\Big\|_{\infty}=\max_{\delta_z=1,2,\ldots,n}\left\{\Lambda_{\delta_{z}}\left[\prod_{s=k}^{k+Ph-1}M(s)\right]\right\}<1.
\end{aligned}
\end{eqnarray}
The proof is completed.
\end{proof}

\begin{theorem}\label{theorem:3.3}
Suppose that the gain parameter $\psi$ satisfies the inequality (\ref{sys:3.7}). The asynchronous bipartite tracking for systems (\ref{sys:3.1}) and (\ref{sys:3.2}) with distributed protocol (\ref{sys:3.3}) can be achieved if and only if the communication topology meets the conditions \textbf{C1} and \textbf{C2}.
\end{theorem}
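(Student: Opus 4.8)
The plan is to prove the two directions of this iff statement separately, with the sufficiency direction leaning heavily on Lemma~\ref{lemma:3.2} and the necessity direction proceeding by contraposition. For \textbf{sufficiency}, I assume the topology satisfies \textbf{C1} and \textbf{C2} and must show $\lim_{k\rightarrow\infty}\prod_{s=0}^{k}M(s)=\mathbf{0}$, which is equivalent to asymptotic stability of the error system~(\ref{sys:3.6}) and hence to realizing bipartite tracking. The key observation is that Lemma~\ref{lemma:3.2} gives a uniform contraction over blocks of length $Ph$: for every $k$, $\big\|\prod_{s=k}^{k+Ph-1}M(s)\big\|_{\infty}<1$. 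The subtle point, which I expect to be the main obstacle, is that the bound in~(\ref{sys:3.8}) is strictly less than $1$ but may not be bounded \emph{away} from $1$ uniformly in $k$ \emph{a priori}; I need a single constant $\gamma<1$ with $\big\|\prod_{s=k}^{k+Ph-1}M(s)\big\|_{\infty}\leq\gamma$ for all $k$ in order to conclude geometric decay of the infinite product. To secure this, I would invoke the quantitative spectral/norm bounds from Theorem~\ref{theorem:2.4} (part 2), noting that each $M(s)$ is drawn from the \emph{finite} collection of sub-stochastic matrices determined by which followers are active at time $s\tau$ (there are at most $2^{n}$ such matrices, each with entries from the fixed finite set of weights $|a_{ij}|$, $|a_{i0}|$). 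Because only finitely many distinct length-$Ph$ products can occur, their infinity norms form a finite set of numbers each strictly below $1$, so their maximum $\gamma$ is a legitimate constant strictly less than $1$.

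With $\gamma<1$ in hand, I would partition the time axis into consecutive blocks $[0,Ph),[Ph,2Ph),\ldots$ and estimate
\begin{equation*}
\Big\|\prod_{s=0}^{tPh-1}M(s)\Big\|_{\infty}\leq\prod_{\ell=0}^{t-1}\Big\|\prod_{s=\ell Ph}^{(\ell+1)Ph-1}M(s)\Big\|_{\infty}\leq\gamma^{t},
\end{equation*}
using submultiplicativity of $\|\cdot\|_{\infty}$. For a general index $k=tPh+r$ with $0\leq r<Ph$, the remaining factors $\prod_{s=tPh}^{k-1}M(s)$ have infinity norm at most $1$ (each $M(s)$ is sub-stochastic, so by part~1 of Theorem~\ref{theorem:2.4} any finite product is sub-stochastic and thus $\|\cdot\|_{\infty}\leq1$), giving $\big\|\prod_{s=0}^{k}M(s)\big\|_{\infty}\leq\gamma^{t}\rightarrow0$ as $k\rightarrow\infty$. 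Since $\|e_x(k+1)\|\leq\big\|\prod_{s=0}^{k}M(s)\big\|_{\infty}\,\|e_x(0)\|$ (the Kronecker factor $I_p$ does not affect the relevant norm estimate), the error vanishes, establishing sufficiency via Definition~\ref{definition:3.1}.

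For \textbf{necessity}, I argue by contraposition: if either \textbf{C1} or \textbf{C2} fails, bipartite tracking cannot be achieved. If \textbf{C2} fails, some follower $v_i$ has no directed path from the leader in $\tilde{\mathscr{G}}$; then the corresponding rows of $M(k)$ always have row sum exactly $1$ (the leader-coupling term never activates along any chain reaching $v_i$), so the associated sub-block acts like a stochastic matrix on the isolated component and the product $\prod_s M(s)$ does not converge to $\mathbf{0}$ — one constructs an initial error supported on that component that never decays. If \textbf{C1} fails, i.e.\ $\tilde{\mathscr{G}}$ is not structurally balanced, then the sign-gauge transformation used implicitly in the error variables~(\ref{sys:3.5}) is inconsistent: there is a cycle whose sign product is negative, which prevents the simultaneous sign assignment needed for the two limits in Definition~\ref{definition:3.1}, and one shows the corresponding error cannot be driven to zero for generic initial conditions. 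I expect the \textbf{C1}-failure case to be the more delicate part of necessity, since it requires relating the graph-theoretic obstruction (an unbalanced cycle) to non-convergence of the matrix product rather than to a simple invariant subspace argument; here I would reduce to the standard characterization that structural balance is equivalent to the existence of a diagonal $\pm1$ gauge matrix symmetrizing the signs, and show that without it the residual error along the offending cycle is bounded below.
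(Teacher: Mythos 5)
Your proposal follows essentially the same route as the paper: sufficiency by partitioning the time axis into blocks of length $Ph$, applying Lemma~\ref{lemma:3.2} to each block, and using submultiplicativity of $\|\cdot\|_{\infty}$; necessity by contraposition, treating the failure of \textbf{C1} and \textbf{C2} separately. The one place you go beyond the paper is the uniformity argument — observing that each $M(s)$ is drawn from a finite family (indexed by the set of active followers), so only finitely many distinct length-$Ph$ products occur and their norms have a maximum $\gamma<1$; the paper simply writes $\lim_{\theta\to\infty}\prod_{s}\|M^*(s)\|_{\infty}=0$ without justifying that the block norms are bounded away from $1$, so your extra step closes a gap rather than introducing a detour. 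Your necessity sketch for the \textbf{C1}-failure case (via the gauge-matrix characterization of structural balance) is left at the level of a plan, but it is no less rigorous than the paper's own three-case discussion, which likewise argues informally.
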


\begin{proof}
\emph{\underline{Sufficiency}}:\
First of all, we divide the time axis into a series of time intervals $[\theta Ph\tau,\theta Ph\tau\!+\!Ph\tau)$, $\theta\in\mathbb{N}$. Let $M^*(\theta)=\prod_{s=\theta Ph}^{\theta Ph+Ph-1}M(s)$. Then it is known from Lemma~\ref{lemma:3.2} that $\|\Xi(\theta)\|_{\infty}<1$, $\theta\in\mathbb{N}$. Thus, we can derive from error system (\ref{sys:3.6}) that
\begingroup
\allowdisplaybreaks
\begin{eqnarray}\label{sys:3.14}
\begin{aligned}
\lim_{k\rightarrow\infty}\|e_x(k+1)\|_{\infty}&=\lim_{k\rightarrow\infty}\left\|\left[\prod_{s=0}^kM(s)\otimes I_p\right]e_x(0)\right\|_{\infty}\\
&=\lim_{\theta\rightarrow\infty}\left\|\left[\prod_{s=0}^\theta M^*(s)\otimes I_p\right]e_x(0)\right\|_{\infty}\\
&\leq\lim_{\theta\rightarrow\infty}\prod_{s=1}^{\theta}\|M^*(s)\|_{\infty}\|e_x(0)\|_{\infty}=0.
\end{aligned}
\end{eqnarray}
\endgroup
This obviously means that $\lim_{k\rightarrow \infty}\|x_{i}(k)-x_{0}(k)\|_{\infty}=0$ for any $i=1,\ldots, m$, and $\lim_{k\rightarrow \infty}\|-x_{i}(k)-x_{0}(k)\|_{\infty}=0$ for any $i=m+1,\ldots, n$. By Definition~\ref{definition:3.1}, the asynchronous bipartite tracking can be achieved.

\emph{\underline{Necessity}}:
The necessity is proved in two steps as follows.

1) \emph{Necessity of condition} \textbf{C1}: Suppose that the condition \textbf{C1} is not satisfied. Then we consider three different situations. (I) All weights of digraph $\tilde{\mathscr{G}}$ are nonnegative. Under the condition \textbf{C2}, all followers will gradually converge to the leader's state. By Definition~\ref{definition:3.1}, the asynchronous bipartite tracking cannot be achieved. (II) All weights of digraph $\tilde{\mathscr{G}}$ are non-positive. Under the condition \textbf{C2}, the followers who can directly detect the leader's cooperative information will reach the leader's state, and the followers who can indirectly detect the leader's information will not all converge to the state opposite to the leader. By Definition~\ref{definition:3.1}, the asynchronous bipartite tracking also cannot be achieved. (III) The directed graph $\tilde{\mathscr{G}}$ contains both positive and negative weights. Under the condition \textbf{C2}, there may exist two directed paths $W_1$ and $W_2$ from the leader to one of the followers, denoted by $v_i$, where all edges' weights in $W_1$ are positive while one edge of $W_2$ contains negative weight and the other edges contain positive weights. That is, the follower $v_i$ is both cooperative and competitive with the leader. If the intensity of cooperation is the same as the intensity of competition, then follower $v_i$ will not reach the leader's state or the state opposite to the leader. By Definition~\ref{definition:3.1}, the asynchronous bipartite tracking also cannot be achieved. Through the analysis of the above three cases, if the condition \textbf{C1} is not satisfied, then the asynchronous bipartite tracking also cannot be achieved even if the condition \textbf{C2} holds.

2) \emph{Necessity of condition} \textbf{C2}: Under the condition \textbf{C1}, if the condition \textbf{C2} is not satisfied, then at least one follower will not find any state information from the leader, namely, this follower's position is always independent of the leader' position at all discrete-time instants $k\tau$, $k\in\mathbb{N}$. Consequently, it is impossible to achieve the asynchronous bipartite tracking, which is obviously a contradiction.
\end{proof}

Next, we show the dynamics of bipartite tracking for asynchronous first-order MASs through a simulation.

\begin{figure}[t]
  \centering
  \subfigure{
    \includegraphics[width=1.5in]{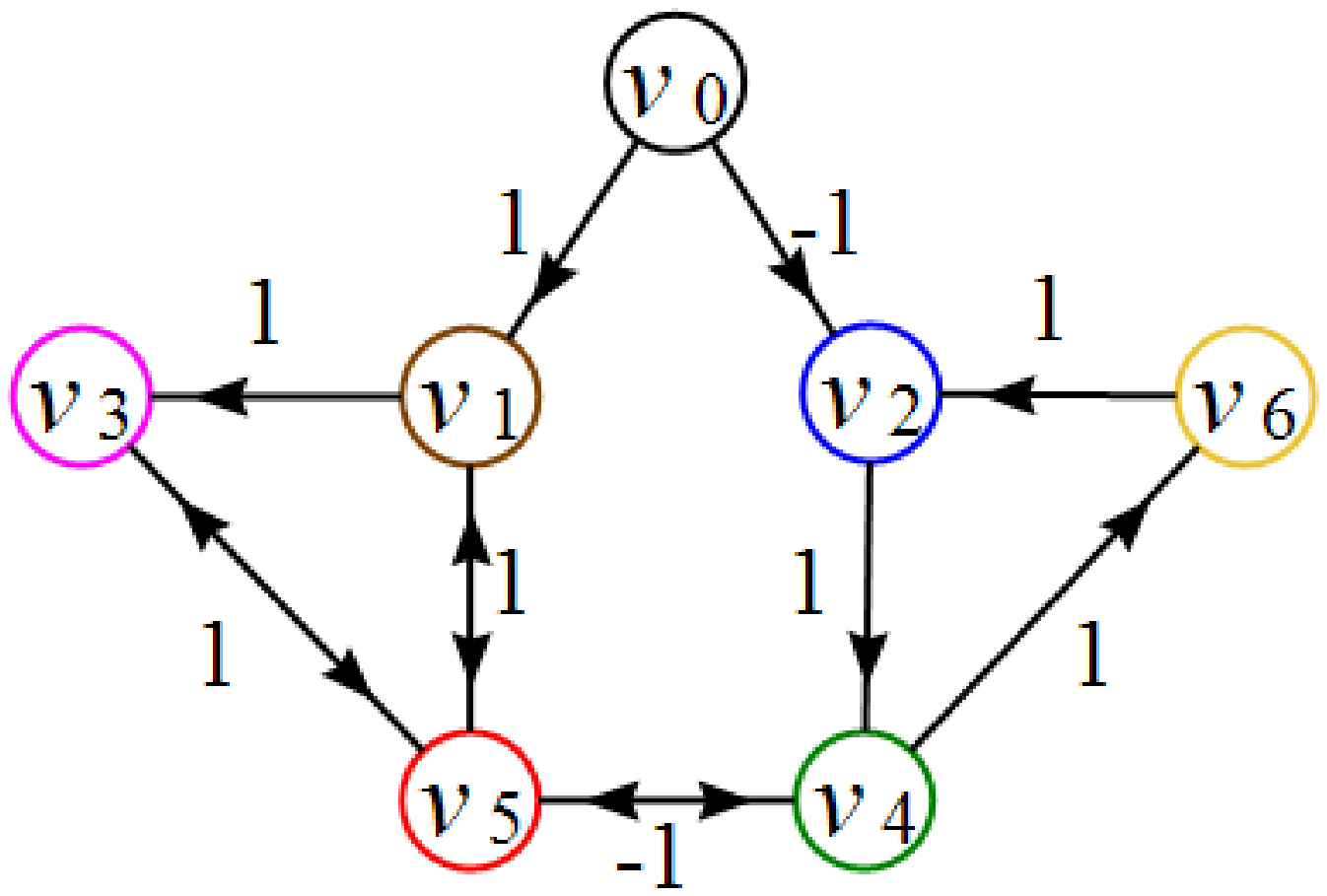}\label{fig1a}}
  \subfigure{
    \includegraphics[width=2.4in]{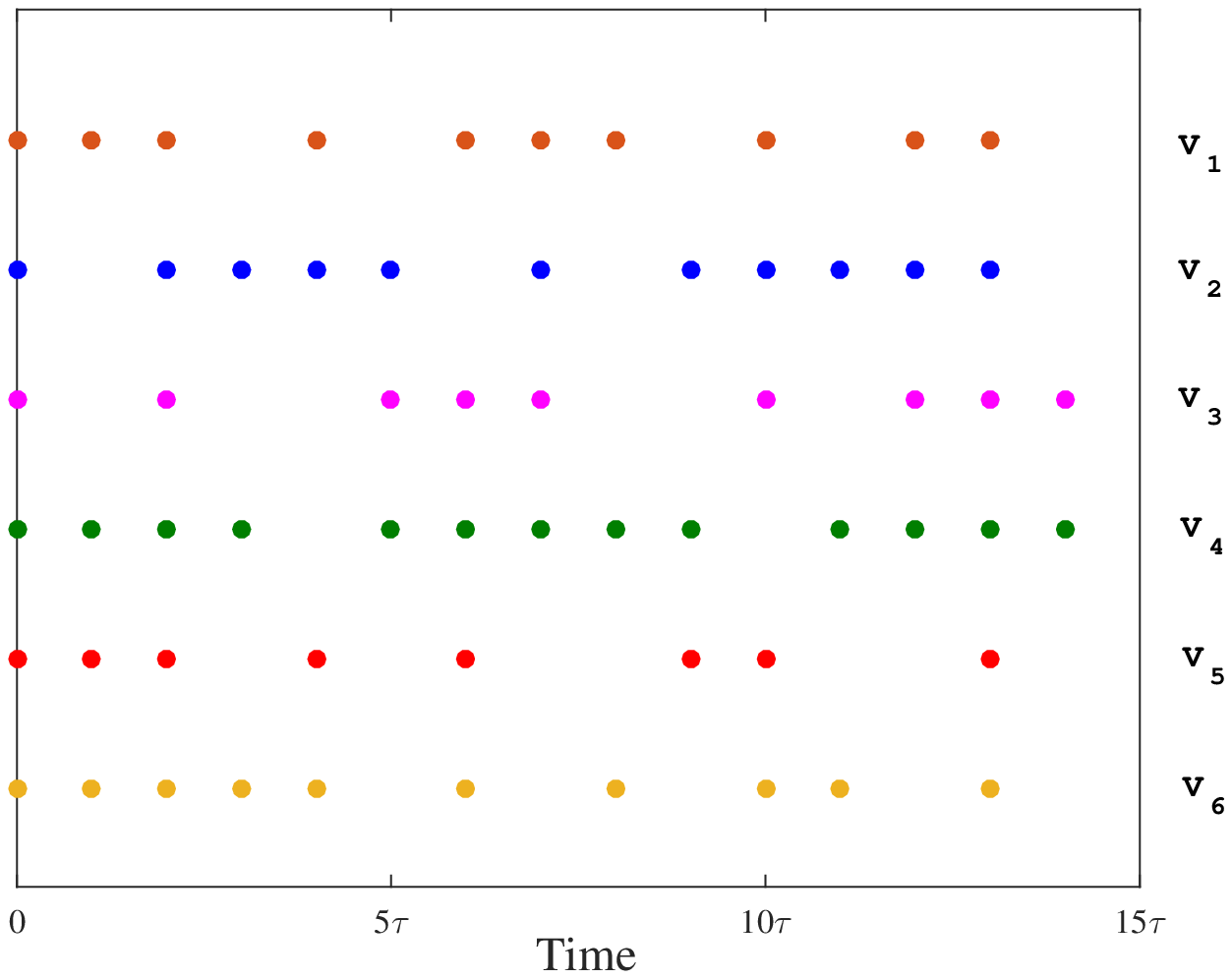}\label{fig1b}}
  \caption{Communication topology $\tilde{\mathscr{G}}_1$ and the asynchronous communication time instants of all the followers.}\label{fig1}
\end{figure}

\begin{figure}[t]
  \centering
    \includegraphics[width=2.4in]{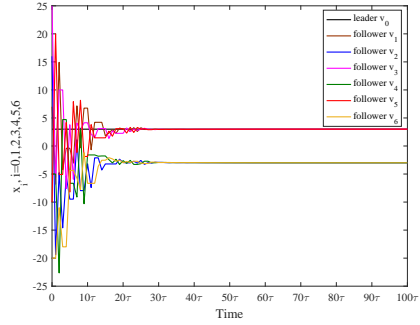}
  \caption{Position trajectories of first-order dynamic agents in Example~\ref{example:3.4}.}\label{fig2}
 \end{figure}

\begin{example}\label{example:3.4}
Consider a set of seven agents, including one leader (labelled by $v_0$) and six followers (labelled $v_1,v_2,\ldots,v_6$). The information exchange among the agents is described by a signed digraph $\tilde{\mathscr{G}}_1$ depicted in Fig.~\ref{fig1a}, from which we see $d_M=2$. Clearly, the digraph $\tilde{\mathscr{G}}_1$ satisfies the conditions \textbf{C1} and \textbf{C2}. Let $h=3$, and then the communication time instants of all followers are presented in Fig.~\ref{fig1b}. Choose $\tau=0.2$ and $\psi=2$ that satisfy the inequality (\ref{sys:3.7}). Finally, the agents' position trajectories are shown in Fig.~\ref{fig2}, from which it is seen that the asynchronous bipartite tracking for first-order MASs is realized.
\end{example}

\section{Bipartite tracking of asynchronous second-order MASs with a static leader}\label{section:4}

In this section, we first transform the asynchronous bipartite tracking issue of second-order MASs with a static leader into a product convergence issue of ISubSM, and then analyze the convergence based on the results of Theorems \ref{theorem:2.2} and \ref{theorem:2.3}.

Consider the discrete-time dynamic model for the leader
\begin{eqnarray}\label{sys:4.1}
x_{0}(k+1)=x_{0}(k),
\end{eqnarray}
and the dynamics for each follower $v_i\in\mathscr{V}$
\begin{eqnarray}\label{sys:4.2}
\begin{aligned}
&x_{i}(k+1)=x_{i}(k)+\tau\vartheta_{i}(k),\\
&\vartheta_{i}(k+1)=\vartheta_{i}(k)+\tau u_{i}(k),
\end{aligned}
\end{eqnarray}
where $\vartheta_{i}(k)\in \mathbb{R}^{p}$ represents agent $v_i$'s velocity at time instant $k\tau$. For each follower $v_i$, the following asynchronous distributed control input $u_{i}(k)$ is designed:
\begin{eqnarray}\label{sys:4.3}
\left\{
\begin{aligned}
u_{i}(k)=&-\gamma \vartheta_{i}(k)+\sum\limits_{v_j\in\mathscr{N}_i}|a_{ij}|\big[\sgn(a_{ij})x_{j}(k)-x_{i}(k)\big]\\
&\hspace{1.5cm}\!+|a_{i0}|\big[\sgn(a_{i0})x_{0}(k)-x_{i}(k)\big],\ {\rm if} \ k\tau\in\{s^{i}_{k}\tau\};\\
u_{i}(k)=&\, 0, \ {\rm if} \ k\tau\notin\{s^{i}_{k}\tau\}.
\end{aligned}
\right.
\end{eqnarray}
Observing (\ref{sys:4.3}), one finds that follower $v_i$ moves at a constant velocity during time intervals $[s^{i}_{k}\tau,s^{i}_{k+1}\tau)$, $k\in\mathbb{N}$, and changes the velocity only at time instants $s^{i}_{k}\tau$, $k\in\mathbb{N}$.

\begin{definition}\label{definition:4.1}
The bipartite tracking for systems (\ref{sys:4.1}) and (\ref{sys:4.2}) is said to be realized if the following conditions are satisfied:
\begin{eqnarray*}
\begin{split}
& \forall v_i\in\mathscr{V}_1,\ \lim_{k\rightarrow \infty}\left\|x_{i}(k)-x_{0}(k)\right\|=0,\\
&\forall v_i\in\mathscr{V}_{2},\ \lim_{k\rightarrow \infty}\left\|x_{i}(k)+x_{0}(k)\right\|=0,\\
&\forall v_i\in\mathscr{V},\ \lim_{k\rightarrow \infty}\left\|\vartheta_{i}(k)\right\|=0.
\end{split}
\end{eqnarray*}
\end{definition}

Based on the introduction of digraphs $\tilde{\mathscr{G}}(k)$, $k\in\mathbb{N}$ in the above section, control input (\ref{sys:4.3}) is equivalently representable as
\begin{equation}\label{sys:4.4}
\begin{aligned}
u_{i}(k)=-\gamma \vartheta_{i}(k)&+\sum\limits_{v_j\in\mathscr{N}_i(k)}|a_{ij}(k)|\big[\sgn(a_{ij}(k))x_{j}(k)-x_{i}(k)\big]\\
&+|a_{i0}(k)|\big[\sgn(a_{i0}(k))x_{0}(k)-x_{i}(k)\big].
\end{aligned}
\end{equation}

Denote
\begin{eqnarray}\label{sys:4.5}
\begin{aligned}
&\vartheta(k)=\left[\vartheta_1^{T}(k),\ldots,\vartheta_m^{T}(k),-\vartheta_{m+1}^{T}(k),\ldots,-\vartheta_n^{T}(k)\right]^{T},\\
&y(k)=\left[e_{x}^{T}(k), e_{x}^{T}(k)+\frac{2}{\gamma}\vartheta^T(k)\right]^T.\\
\end{aligned}
\end{eqnarray}
Using (\ref{sys:4.4}) and (\ref{sys:4.5}), systems (\ref{sys:4.1}) and (\ref{sys:4.2}) can be written as an error system
\begin{eqnarray}\label{sys:4.6}
\begin{aligned}
y(k+1)=\big[C(k)\otimes I_{p}\big]y(k),
\end{aligned}
\end{eqnarray}
where
\begin{eqnarray*}
    C(k)=\left(\begin{array}{@{\hspace{0.1em}}cc@{\hspace{0.1em}}}
                   I_{n}\!-\!\frac{\gamma\tau}{2}I_{n} & \frac{\gamma\tau}{2}I_{n} \\
                    \frac{\gamma\tau}{2}I_{n}\!-\!\frac{2\tau}{\gamma}\big(\mathscr{D}(k)\!+\!\mathscr{B}(k)\!-\!|\mathscr{A}(k)|\big) & I_{n}\!-\!\frac{\gamma\tau}{2}I_{n} \\
                  \end{array}
\right).
\end{eqnarray*}

Obviously, we need to prove that $\lim_{k\rightarrow\infty}\prod_{s=0}^kC(s)=\mathbf{0}$ for achieving the asynchronous bipartite tracking of second-order MASs with a static leader. Below we analyze the matrices $C(k)$, $k\in\mathbb{N}$. If the parameter $\gamma$ satisfies
\begin{eqnarray}\label{sys:4.7}
2\sqrt{d_M}\leq\gamma<\frac{2}{\tau},
\end{eqnarray}
where $d_M$ is defined in (\ref{sys:3.7}), then $C(k)$, $k\in\mathbb{N}$ are sub-stochastic matrices in which the diagonal elements are positive and the row sums satisfy:
\begin{eqnarray*}
\left\{
\begin{split}
&\Lambda_{i}\big[C(k)\big]=1,\\
&\Lambda_{i+n}\big[C(k)\big]<1 \ {\rm if} \ (v_0,v_i)\in\tilde{\mathscr{E}}(k),\\
&\Lambda_{i+n}\big[C(k)\big]=1 \ {\rm if} \ (v_0,v_i)\notin\tilde{\mathscr{E}}(k),
\end{split}
\right.
\end{eqnarray*}
where $i\in\{1,2,\ldots,n\}$. Thus, the asynchronous bipartite tracking issue is equivalently transformed into the product convergence issue of ISubSM.

\begin{lemma}\label{lemma:4.2}
Under the inequality (\ref{sys:4.7}), if the communication topology satisfies the conditions \textbf{C1} and \textbf{C2}, then for any $k\in\mathbb{N}$, there holds that
\begin{eqnarray}\label{sys:4.8}
\Big\|\prod_{s=k}^{k+2Ph-1}C(s)\Big\|_{\infty}<1.
\end{eqnarray}
\end{lemma}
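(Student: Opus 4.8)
The key observation is that the $2n \times 2n$ matrix $C(k)$ has a block structure whose diagonal blocks are $(1 - \frac{\gamma\tau}{2})I_n$ and whose off-diagonal blocks encode both the velocity coupling $\frac{\gamma\tau}{2}I_n$ and the graph Laplacian-type term $\frac{\gamma\tau}{2}I_n - \frac{2\tau}{\gamma}(\mathscr{D}(k)+\mathscr{B}(k)-|\mathscr{A}(k)|)$. Under the inequality (\ref{sys:4.7}), the row sums behave as stated: the top $n$ rows always sum to exactly $1$, while the bottom $n$ rows sum to strictly less than $1$ precisely when follower $v_i$ detects the leader at time $k\tau$, i.e. when $(v_0,v_i)\in\tilde{\mathscr{E}}(k)$. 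My plan is to mimic the proof of Lemma~\ref{lemma:3.2}, tracking how a row-sum strictly less than $1$ propagates through products of the $C(s)$ along a directed path from the leader, using part 2) of Theorem~\ref{theorem:2.4}.

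**Propagating the deficiency along a path.** First I would fix an arbitrary follower $v_{\delta_z}$ and invoke conditions \textbf{C1}, \textbf{C2} to obtain a directed path $W_z = v_0 \to v_{\delta_1}\to\cdots\to v_{\delta_z}$ in $\tilde{\mathscr{G}}$ of length $z \le P$. The asynchronous assumption (\ref{sys:2.10}) guarantees that within any window of length $h$, each follower communicates at least once, so within $[k\tau, k\tau+h\tau)$ there is an instant at which $(v_0,v_{\delta_1})\in\tilde{\mathscr{E}}$, forcing $\Lambda_{\delta_1+n}[C(\cdot)]<1$. The essential difference from the first-order case is the factor of $2$ in the window length $2Ph$: because the state and velocity are coupled, the deficiency in the velocity row $\delta_1+n$ must first be transferred to the corresponding position row $\delta_1$ before it can be relayed along the graph to the next vertex. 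The off-diagonal block $\frac{\gamma\tau}{2}I_n$ in the top-right of $C(k)$ is exactly the mechanism that feeds the velocity-row deficiency back into the position rows, and the diagonal self-loops (positive diagonal entries) let Theorem~\ref{theorem:2.4} preserve a strict row-sum deficiency across subsequent factors. I would argue that after at most two $h$-windows the deficiency reaches both rows $\delta_1$ and $\delta_1+n$, after which each additional edge of $W_z$ costs two further $h$-windows, giving a total bound of $2Ph$.

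**The inductive relay and the final bound.** The core of the argument is an induction along the path: assuming that after $2jh$ steps both row $\delta_j$ and row $\delta_j+n$ of the partial product $\prod_{s=k}^{k+2jh-1}C(s)$ have strict row-sum deficiency, I would use the edge $(v_{\delta_j},v_{\delta_{j+1}})\in\tilde{\mathscr{E}}$ — which appears as a positive entry in the lower-left Laplacian block of some $C(k+\cdots)$ within the next window — together with the velocity-to-position coupling to transfer the deficiency to rows $\delta_{j+1}$ and $\delta_{j+1}+n$ after two more windows. Each transfer is justified by applying part 2) of Theorem~\ref{theorem:2.4} to the relevant sub-product, exactly as inequalities (\ref{sys:3.9})--(\ref{sys:3.11}) do in Lemma~\ref{lemma:3.2}. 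After $z \le P$ relays, both rows $\delta_z$ and $\delta_z+n$ have row sum strictly below $1$ in the full product of length $2Ph$, and since $v_{\delta_z}$ was arbitrary this yields $\Lambda_i[\prod_{s=k}^{k+2Ph-1}C(s)]<1$ for every $i\in\{1,\ldots,2n\}$, hence the infinity-norm bound (\ref{sys:4.8}).

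**Main obstacle.** The hardest part is the bookkeeping of the two-step relay: unlike the first-order case where a single row carries the deficiency, here I must simultaneously control both the position row $\delta_j$ and the velocity row $\delta_j+n$ and verify that the coupling blocks guarantee the positivity of the relevant entries (in particular that $[C(k)]_{\delta_{j+1}+n,\,\delta_j}>0$ follows from $(v_{\delta_j},v_{\delta_{j+1}})\in\tilde{\mathscr{E}}(k)$, and that the top-right block transfers this to row $\delta_{j+1}$). Establishing that the diagonal entries of the partial products stay positive throughout — so that Theorem~\ref{theorem:2.4}'s hypothesis of positive diagonal elements remains available at each relay — is the delicate step that justifies the doubling of the window to $2Ph$.
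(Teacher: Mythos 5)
Your proposal is correct and follows essentially the same route as the paper: propagate the row-sum deficiency from the velocity row $\delta_1+n$ (created when $v_{\delta_1}$ hears the leader within the first $h$-window) to the position row via the positive entry $[C(\cdot)]_{\delta_1,\delta_1+n}=\frac{\gamma\tau}{2}$, then relay it along each edge of the path through the positive lower-left entry $[C(\cdot)]_{\delta_{j+1}+n,\delta_j}$, invoking part 2) of Theorem~\ref{theorem:2.4} at each of the $2$ windows per vertex to reach the $2Ph$ bound. Your identification of the two key positivity facts and the two-window-per-edge bookkeeping matches the paper's argument exactly.
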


\begin{proof}\
Since the inequality (\ref{sys:4.7}) holds, matrices $C(k)$, $k\in\mathbb{N}$ are sub-stochastic matrices with positive diagonal elements. Known from the conditions \textbf{C1} and \textbf{C2}, there is a directed path $W_z=v_0\rightarrow v_{\delta_{1}}\rightarrow v_{\delta_{2}}\rightarrow\cdots\rightarrow v_{\delta_{z}}$ in $\tilde{\mathscr{G}}$ for each vertex $v_{\delta_{z}}\in\mathscr{V}$, where $\delta_{1},\delta_{2},\ldots,\delta_{z}\in\{1,2,\ldots,n\}$. Assume that $k\tau+l_0\tau$ is a communication instant time of $v_{\delta_{1}}$ during $[k\tau,k\tau+h\tau)$, where $k\tau\leq k\tau+l_0\tau<k\tau+h\tau$. Then we have $(v_0,v_{\delta_{1}})\in\tilde{\mathscr{E}}(k+l_0)$. It follows that $\Lambda_{\delta_{1}+n}[C(k+l_0)]<1$, and further
\begin{eqnarray}\label{sys:4.9}
\begin{aligned}
\Lambda_{\delta_{1}+n}\left[\prod_{s=k}^{k+l_0}C(s)\right]
&=\sum_{j=1}^{2n}\big[C(k+l_0)\big]_{\delta_{1}+n,i}\Lambda_i\left[\prod_{s=k}^{k+l_0-1}C(s)\right]\\
&\leq\Lambda_{\delta_{1}+n}[C(k+l_0)]<1.
\end{aligned}
\end{eqnarray}
Based on (\ref{sys:4.9}) and the fact $[C(k+2h-1)]_{\delta_1,\delta_{1}+n}=\frac{\gamma\tau}{2}>0$, it can be derived based on the result in Theorem~\ref{theorem:2.4} that
\begin{eqnarray}\label{sys:4.10}
\Lambda_{\delta_{1}}\left[\prod_{s=k}^{k+2h-1}C(s)\right]<1, \ \Lambda_{\delta_{1}+n}\left[\prod_{s=k}^{k+2h-1}C(s)\right]<1.
\end{eqnarray}

Consider the edge $(v_{\delta_{1}},v_{\delta_{2}})$. Assume that $k\tau+2h\tau+l_1\tau$ is a communication instant time of $v_{\delta_{2}}$ during the interval $[k\tau+2h\tau,k\tau+2h\tau+h\tau)$, where $k\tau+2h\tau\leq k\tau+2h\tau+l_1\tau<k\tau+2h\tau+h\tau$. Then, we have $(v_{\delta_{1}},v_{\delta_{2}})\in\mathscr{E}(k+2h+l_1)$, which leads to  $[C(k+2h+l_1)]_{\delta_2+n,\delta_1}=\frac{\gamma\tau}{2}-\frac{2\tau}{\gamma}a_{\delta_2\delta_1}>0$. Based on (\ref{sys:4.10}) and the condition $[C(k+2h+l_1)]_{\delta_2+n,\delta_1}>0$, one gets depending on the result of Theorem~\ref{theorem:2.4} that $\Lambda_{\delta_{2}}\big[\prod_{s=k}^{k+2h+l_1}C(s)\big]<1$ and $\Lambda_{\delta_{2}+n}\big[\prod_{s=k}^{k+2h+l_1}C(s)\big]<1$, and further
\begin{eqnarray}\label{sys:4.11}
\Lambda_{\delta_{2}}\left[\prod_{s=k}^{k+4h-1}C(s)\right]<1, \ \Lambda_{\delta_{2}+n}\left[\prod_{s=k}^{k+4h-1}C(s)\right]<1.
\end{eqnarray}

Following the analysis of (\ref{sys:4.10}) and (\ref{sys:4.11}), the following result can be derived
\begin{eqnarray}\label{sys:4.12}
\Lambda_{\delta_{z}}\big[\prod_{s=k}^{k+2zh-1}C(s)\big]<1, \ \Lambda_{\delta_{z}+n}\big[\prod_{s=k}^{k+2zh-1}C(s)\big]<1.
\end{eqnarray}
According to the fact $P\geq d(v_0, v_{u_z})=z$, we can further get
\begin{eqnarray}\label{sys:4.13}
\Lambda_{\delta_{z}}\left[\prod_{s=k}^{k+2Ph-1}C(s)\right]<1, \ \Lambda_{\delta_{z}+n}\left[\prod_{s=k}^{k+2Ph-1}C(s)\right]<1.
\end{eqnarray}
Because of the arbitrariness of $\delta_z$ in the set $\{1,2,\ldots,n\}$, we have
\begin{eqnarray}\label{sys:4.14}
\begin{aligned}
\Big\|\prod_{s=k}^{k+2Ph-1}\!\!\!\!C(s)\Big\|_{\infty}\!\!=\!\!\max_{\delta_z\!=\!1,2,\!\ldots\!,n}\left\{\!\!\Lambda_{\delta_{z}}\!\!
\left[\prod_{s=k}^{k+2Ph-1}\!\!\!\!C(s)\!\right]\!\!,\Lambda_{\delta_{z}+n}\!\!
\left[\prod_{s=k}^{k+2Ph-1}\!\!\!\!C(s)\!\right]\!\right\}\!<\!1.
\end{aligned}
\end{eqnarray}
This completes the proof.
\end{proof}

Through all the above preparations, below we show the major result for the asynchronous bipartite tracking of second-order MASs with a static leader.

\begin{theorem}\label{theorem:4.3}
Suppose that gain parameter $\gamma$ meets the inequality (\ref{sys:4.7}). The asynchronous bipartite tracking for systems (\ref{sys:4.1}) and (\ref{sys:4.2}) with distributed protocol (\ref{sys:4.3}) can be realized if and only if the communication topology satisfies \textbf{C1} and  \textbf{C2}.
\end{theorem}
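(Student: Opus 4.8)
The plan is to split the equivalence into sufficiency and necessity, exactly along the lines of the first-order result in Theorem~\ref{theorem:3.3}, letting Lemma~\ref{lemma:4.2} carry the analytic weight on the sufficiency side and treating necessity as an essentially topological argument.

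For \emph{sufficiency}, suppose \textbf{C1} and \textbf{C2} hold. I would partition the time axis into consecutive blocks $[\theta\cdot 2Ph\,\tau,\ (\theta+1)2Ph\,\tau)$, $\theta\in\mathbb{N}$, and set $C^*(\theta)=\prod_{s=\theta\cdot 2Ph}^{\theta\cdot 2Ph+2Ph-1}C(s)$. Lemma~\ref{lemma:4.2} gives $\|C^*(\theta)\|_{\infty}<1$ for every $\theta$. Since the underlying digraph is finite and each agent's update pattern inside a window of length $2Ph$ admits only finitely many configurations, there are only finitely many distinct matrices $C^*(\theta)$; hence there is a uniform constant $\beta<1$ with $\|C^*(\theta)\|_{\infty}\le\beta$ for all $\theta$. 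Iterating the error recursion (\ref{sys:4.6}) over blocks then yields
\begin{equation*}
\lim_{k\rightarrow\infty}\|y(k)\|_{\infty}\le\lim_{\theta\rightarrow\infty}\beta^{\theta}\,\|y(0)\|_{\infty}=0.
\end{equation*}
Finally I would decode $y(k)\to 0$ back into the original coordinates: by the definition (\ref{sys:4.5}) of $y(k)$ this forces $e_x(k)\to 0$ and $e_x(k)+\tfrac{2}{\gamma}\vartheta(k)\to 0$, and subtracting the two limits gives $\vartheta(k)\to 0$. Reading off the block structure of $e_x(k)$ together with the sign convention in $\vartheta(k)$ then reproduces exactly the three limits required by Definition~\ref{definition:4.1}.

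For \emph{necessity}, the argument mirrors the necessity part of Theorem~\ref{theorem:3.3} and concerns only the topology. If \textbf{C1} fails, I would split into the three cases (all weights of $\tilde{\mathscr{G}}$ nonnegative, all non-positive, or mixed) and in each case exhibit a follower converging neither to $x_0$ nor to $-x_0$, contradicting Definition~\ref{definition:4.1}; the mixed case, where a cooperative and a competitive path of equal intensity reach the same follower, is the delicate one. If \textbf{C2} fails, some follower never receives any information originating at the leader, so its position evolves independently of $x_0(k)$ and cannot meet the required limit. The only extra bookkeeping relative to the first-order case is the velocity condition $\vartheta_i(k)\to 0$, but since the static leader has zero velocity this is automatically consistent with the same counterexamples.

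The step I expect to be the main obstacle is the passage from ``$\|C^*(\theta)\|_{\infty}<1$ for each $\theta$'' to convergence of the infinite product: a per-block bound strictly below one does not by itself force the product to vanish, so the finiteness-of-configurations argument that produces the \emph{uniform} bound $\beta<1$ is the real crux. Everything afterward — decoding $y(k)\to 0$ into simultaneous position and velocity convergence, and running through the topological necessity cases — is routine once the template of Theorem~\ref{theorem:3.3} is in hand.
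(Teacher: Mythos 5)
Your proposal follows essentially the same route as the paper: the same block decomposition of the time axis into windows of length $2Ph$, Lemma~\ref{lemma:4.2} supplying $\|C^*(\theta)\|_{\infty}<1$ on each block, and the same three-case topological argument for necessity borrowed from Theorem~\ref{theorem:3.3}. The one place you go beyond the paper is the passage from a per-block bound to a \emph{uniform} bound $\beta<1$ via the finitely-many-configurations argument (at most $2^{n}$ possible matrices $C(s)$ per step, hence finitely many window products); the paper's proof silently takes this for granted when it writes $\lim_{\theta\rightarrow\infty}\prod_{s}\|C^*(s)\|_{\infty}=0$, so your explicit justification is correct and in fact tightens the published argument.
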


\begin{proof}
\emph{\underline{Sufficiency}}:\
We first divide the time axis into a series of time intervals $[2\theta Ph\tau,2\theta Ph\tau+2Ph\tau)$, $\theta\in\mathbb{N}$. Denote $C^*(\theta)=\prod_{s=2\theta Ph}^{2\theta Ph+2Ph-1}C(s)$. Then it is known from Lemma~\ref{lemma:4.2} that $\|C^*(\theta)\|_{\infty}<1$, $\theta\in\mathbb{N}$. Thus, we have
\begingroup
\allowdisplaybreaks
\begin{eqnarray}\label{sys:4.15}
\begin{aligned}
\lim_{k\rightarrow\infty}\|y(k+1)\|_{\infty}&=\lim_{k\rightarrow\infty}\left\|\left[\prod_{s=0}^kC(s)\otimes I_p\right]y(0)\right\|_{\infty}\\
&=\lim_{\theta\rightarrow\infty}\left\|\left[\prod_{s=0}^\theta C^*(s)\otimes I_p\right]y(0)\right\|_{\infty}\\
&\leq\lim_{\theta\rightarrow\infty}\prod_{s=0}^\theta\big\|C^*(\theta)\big\|_{\infty}\|y(0)\|_{\infty}=0.
\end{aligned}
\end{eqnarray}
\endgroup
This means that $\lim_{k\rightarrow\infty}e_x(k)=0$ and $\lim_{k\rightarrow\infty}\vartheta(k)=0$, namely,
$\lim_{k\rightarrow \infty}\|x_{i}(k)-x_{0}(k)\|_{\infty}=0$, $i=1,\ldots, m$; $\lim_{k\rightarrow \infty}\|x_{i}(k)+x_{0}(k)\|_{\infty}=0$, $i=m+1,\ldots,n$ and $\lim_{k\rightarrow \infty}\|\vartheta_{i}(k)\|_{\infty}=0$, $i=1,2,\ldots,n$. By Definition~\ref{definition:4.1}, the asynchronous bipartite tracking of second-order MASs with a static leader is achieved.

\emph{\underline{Necessity}}: The proof of the necessity is similar to the proof of Theorem~\ref{theorem:3.3}, so it is omitted here.
\end{proof}

\begin{remark}\label{remark:4.4}
In Theorems~\ref{theorem:3.3} and \ref{theorem:4.3}, the choices of parameters $\psi$ and $\gamma$ are related closely to the maximum degree $d_M$ of the vertices in the communication topology and the update step-size $\tau$. For any given topology, appropriate $\psi$, $\gamma$ and $\tau$ can always be found to ensure that the conditions (\ref{sys:3.7}) and (\ref{sys:4.7}) hold, and then the asynchronous bipartite tracking for first-order MASs and second-order MASs with a static leader can be realized.
\end{remark}

\begin{figure}[t]
  \centering
      \subfigure[position trajectories of second-order dynamic agents]{
    \includegraphics[width=2.4in]{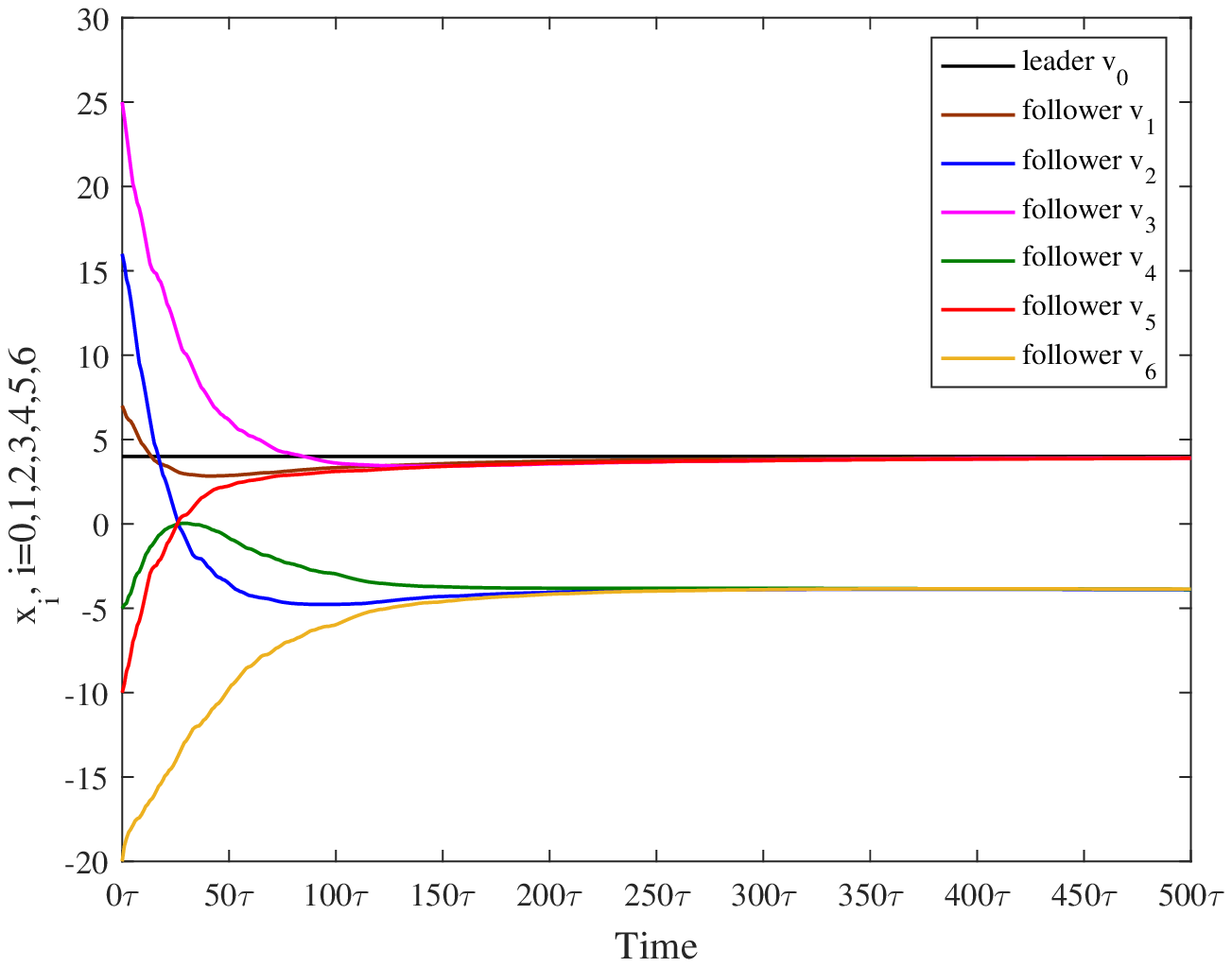}\label{fig3a}}
    \subfigure[velocity trajectories of second-order dynamic agents]{
    \includegraphics[width=2.4in]{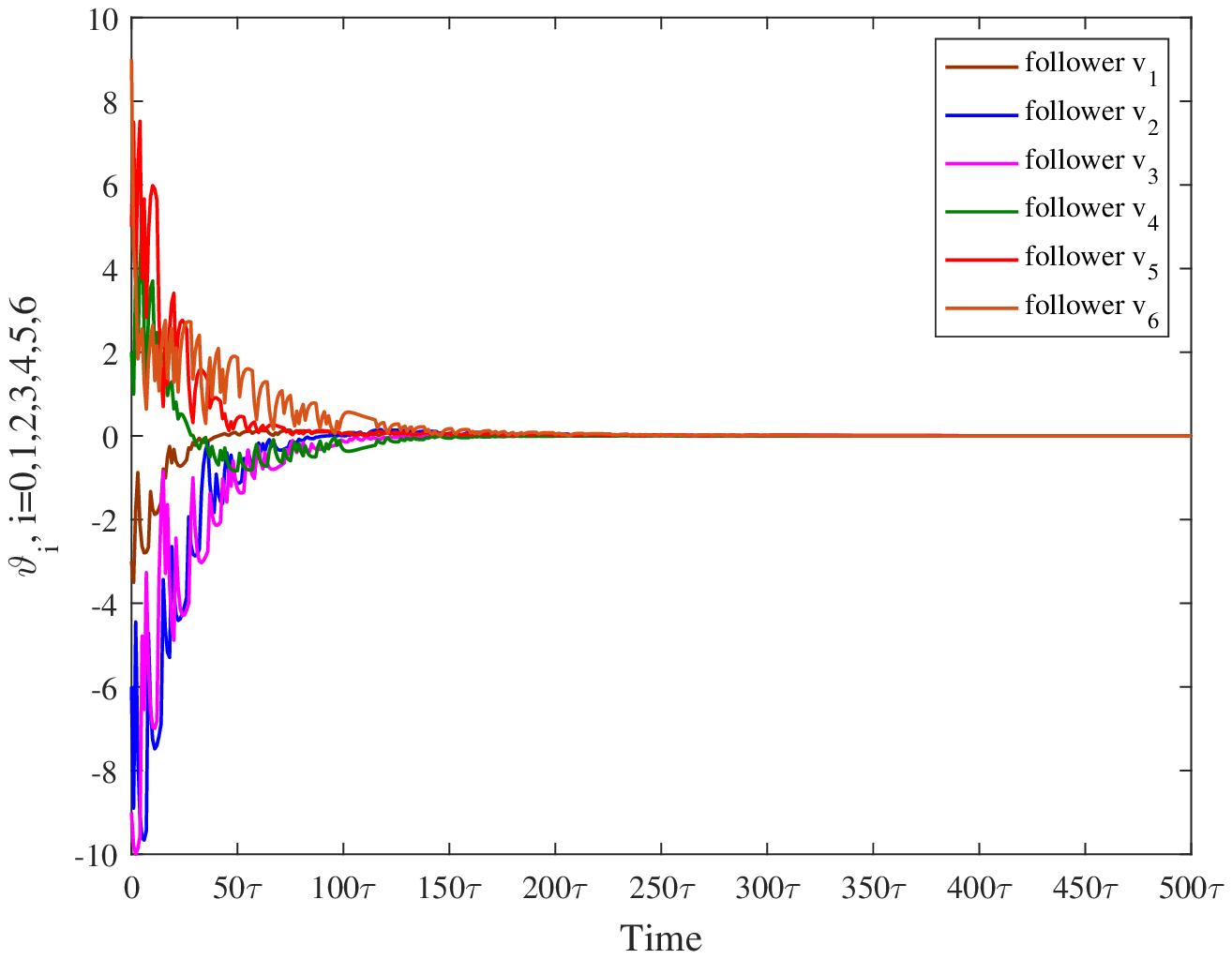}\label{fig3b}}
  \caption{State trajectories of all agents in Example~\ref{example:4.5}. }\label{fig3}
\end{figure}

\begin{example}\label{example:4.5}
Consider a group of agents interacting by the signed network $\tilde{\mathscr{G}}_1$ in Fig.~\ref{fig1a}. The communication time instants of all followers are presented in Fig.~\ref{fig1b}. Select $\tau=0.2$ and $\gamma=4$ that satisfy the inequality (\ref{sys:4.7}). Finally, the agents' position and velocity trajectories are shown in Fig.~\ref{fig3a} and Fig.~\ref{fig3b}, respectively, from which it can be observed that the asynchronous second-order bipartite tracking in MASs with a static leader is realized.
\end{example}

\section{Bipartite tracking of asynchronous second-order MASs with an active leader}\label{section:5}

In this section, we convert the asynchronous bipartite tracking issue of second-order MASs with an active leader to a convergence issue of ISupSM, and then discuss the convergence based on the results obtained in Section \ref{section:2.2}.

The agents in the system have the following estimates of the second-order information states:
\begin{eqnarray}\label{sys:5.1}
x_{0}(k+1)=x_{0}(k)+\tau\vartheta_0,
\end{eqnarray}
and
\begin{eqnarray}\label{sys:5.2}
\begin{aligned}
&x_{i}(k+1)=x_{i}(k)+\tau\vartheta_{i}(k),\\
&\vartheta_{i}(k+1)=\vartheta_{i}(k)+\tau u_{i}(k),
\end{aligned}
\end{eqnarray}
where $i=1,2,\ldots,n$, $v_0$ is the desired velocity of the leader and the asynchronous tracking protocol $u_{i}(k)$ is designed as
\begin{eqnarray}\label{sys:5.3}
\left\{
\begin{aligned}
u_{i}(k)=&\sum\limits_{v_j\in\mathscr{N}_i}|a_{ij}|\big[\sgn(a_{ij})x_{j}(k)-x_{i}(k)\big]\\
&+|a_{i0}|\big[\sgn(a_{i0})x_{0}(k)-x_{i}(k)\big]\\
&+\beta\sum\limits_{v_j\in\mathscr{N}_i}|a_{ij}|\big[\sgn(a_{ij})\vartheta_{j}(k)-\vartheta_{i}(k)\big]\\
&+\beta |a_{i0}|\big[\sgn(a_{i0})\vartheta_{0}\!-\!\vartheta_{i}(k)\big],\ \text{if} \ k\tau\in\{s^{i}_{k}\tau\};\\
u_{i}(k)=&0, \ \text{if} \ k\tau\notin\{s^{i}_{k}\tau\}.
\end{aligned}
\right.
\end{eqnarray}

\begin{definition}\label{definition:5.1}
The bipartite tracking for systems (\ref{sys:5.1}) and (\ref{sys:5.2}) is said to be realized if
\begin{eqnarray*}
\begin{aligned}
\forall v_i\in\mathscr{V}_1,\ & \lim_{k\rightarrow \infty}\left\|x_{i}(k)-x_{0}(k)\right\|=0,\
\lim_{k\rightarrow \infty}\left\|\vartheta_{i}(k)-\vartheta_{0}\right\|=0;\\
\forall v_i\in\mathscr{V}_{2},\ &\lim_{k\rightarrow \infty}\left\|x_{i}(k)+x_{0}(k)\right\|=0,\
\lim_{k\rightarrow \infty}\left\|\vartheta_{i}(k)+\vartheta_{0}\right\|=0.
\end{aligned}
\end{eqnarray*}
\end{definition}

Based on the structure of digraph $\tilde{\mathscr{G}}(k)$, which are described in Section~\ref{section:3}, the update of $u_{i}(k)$ reads as
\begin{eqnarray}\label{sys:5.4}
\begin{aligned}
u_{i}(k)=&\sum\limits_{v_j\in\mathscr{N}_i(k)}|a_{ij}(k)|\big[\sgn(a_{ij}(k))x_{j}(k)-x_{i}(k)\big]\\
&+|a_{i0}(k)|\big[\sgn(a_{i0}(k))x_{0}(k)-x_{i}(k)\big]\\
&+\beta\sum\limits_{v_j\in\mathscr{N}_i(k)}|a_{ij}(k)|\big[\sgn(a_{ij}(k))\vartheta_{j}(k)-\vartheta_{i}(k)\big]\\
&+\beta |a_{i0}(k)|\big[\sgn(a_{i0}(k))\vartheta_{0}-\vartheta_{i}(k)\big].
\end{aligned}
\end{eqnarray}

By constructing the following model transformations:
\begin{eqnarray}\label{sys:5.5}
\begin{aligned}
&e_{\vartheta1}(k)=\left[\vartheta^T_{1}(k)-\vartheta^T_{0},\ldots,\vartheta^T_{m}(k)-\vartheta^T_{0}\right]^T,\\
&e_{\vartheta2}(k)=\left[-\vartheta^T_{m+1}(k)-\vartheta^T_{0},\ldots,-\vartheta^T_{n}(k)-\vartheta^T_{0}\right]^T,\\
&e_{\vartheta}(k)=\left[e_{\vartheta1}^{T}(k),e_{\vartheta2}^{T}(k)\right]^{T}, \\
&\xi(k)=\big[e_{x}^{T}(k), \alpha e_{x}^{T}(k)+\alpha \beta e_{\vartheta}^{T}(k)\big]^{T},
\end{aligned}
\end{eqnarray}
where $\alpha>1$ is a constant, the update of $\xi(k)$ reads as
\begin{eqnarray}\label{sys:5.6}
\begin{aligned}
\xi(k+1)=[H(k)\otimes I_{p}]\xi(k),
\end{aligned}
\end{eqnarray}
where
\begin{eqnarray*}
    H(k)=\left(\begin{array}{c@{\hspace{0.4em}}c}
                   I_{n}-\frac{\tau}{\beta}I_{n} & \frac{\tau}{\alpha\beta}I_{n} \\
                    -\frac{\alpha\tau}{\beta}I_{n} & I_{n}+\frac{\tau}{\beta}I_{n}-\beta \tau \big(\mathscr{D}(k)+\mathscr{B}(k)-|\mathscr{A}(k)|\big) \\
                  \end{array}
\right).
\end{eqnarray*}
One can see that $H(k)$, $k\in\mathbb{N}$ are matrices with negative elements because of the existence of the sub-matrix $-\frac{\alpha\tau}{\beta}I_{n}$.

Apparently, the implementation of asynchronous bipartite tracking for systems (\ref{sys:5.1}) and (\ref{sys:5.2}) is equivalent to the achievement of the asymptotic stability of error system (\ref{sys:5.6}). Therefore, our task below is to prove $\lim_{k\rightarrow\infty}\big\|H(k)H(k-1)\cdots H(0)\big\|_{\infty}=0$. However, it is difficult to directly study this convergence using the nonnegative matrix theory since there are negative elements in $H(k)$. To solve this convergence issue, we construct a new matrix
\begin{eqnarray*}
    \Psi(k)=\left(\begin{array}{c@{\hspace{0.3em}}c}
                   I_{n}-\frac{\tau}{\beta}I_{n} & \frac{\tau}{\alpha\beta}I_{n} \\
                    \frac{\alpha\tau}{\beta}I_{n} & I_{n}+\frac{\tau}{\beta}I_{n}-\beta \tau \big(\mathscr{D}(k)+\mathscr{B}(k)-|\mathscr{A}(k)|\big) \\
                  \end{array}
\right)
\end{eqnarray*}
for each $H(k)$. Before proceeding, we first give the following lemma to present the properties of matrices $\Psi(k)$, $k\in\mathbb{N}$.

\begin{lemma}\label{lemma:5.2}
If the parameter $\beta$ satisfies:
\begin{eqnarray}\label{sys:5.7}
\sqrt{\frac{1+\alpha}{b_m}}<\beta\leq\frac{1}{\tau d_M},
\end{eqnarray}
where $b_m\!=\!\min\{|a_{i0}| \mid |a_{i0}|\!>\!0, i\!=\!1,2,\ldots,n\}$, then $\Psi(k)$, $k\in\mathbb{N}$ are super-stochastic in which the row sums satisfy:
\begin{eqnarray*}
\begin{aligned}
&\Lambda_{i}[\Psi(k)]=1-\frac{\tau}{\beta}+\frac{\tau}{\alpha\beta}<1;\\
&\Lambda_{n+i}[\Psi(k)]=1+\frac{\alpha \tau}{\beta}+\frac{\tau}{\beta}-\beta\tau |a_{i0}|<1, \ \text{if} \ |a_{i0}|>0;\\
&\Lambda_{n+i}[\Psi(k)]=1+\frac{\alpha \tau}{\beta}+\frac{\tau}{\beta}>1, \ \text{if} \ |a_{i0}|=0,
\end{aligned}
\end{eqnarray*}
where $i\in\{1,2,\ldots,n\}$.
\end{lemma}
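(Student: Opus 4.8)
The plan is to verify directly the two defining properties of a super-stochastic matrix from Definition~\ref{definition:2.5}: that every entry of $\Psi(k)$ is nonnegative, and that the row sums split into the stated pattern. The computation rests on one structural observation about the signed-Laplacian-type block $\mathscr{D}(k)+\mathscr{B}(k)-|\mathscr{A}(k)|$: because the $i$th diagonal entry of $\mathscr{D}(k)$ equals $\sum_{j}|a_{ij}(k)|$, the $i$th diagonal entry of $\mathscr{B}(k)$ equals $|a_{i0}(k)|$, the off-diagonal entries of $|\mathscr{A}(k)|$ carry the weights $|a_{ij}(k)|$, and $a_{ii}(k)=0$, its $i$th row sums to exactly $|a_{i0}(k)|$ while its $i$th diagonal entry equals $\sum_{j}|a_{ij}(k)|+|a_{i0}(k)|$. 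Everything else is arithmetic on the four scalar-multiple-of-identity blocks.

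First I would establish nonnegativity block by block. The blocks $\frac{\tau}{\alpha\beta}I_n$ and $\frac{\alpha\tau}{\beta}I_n$ are manifestly nonnegative, and in the $(2,2)$ block the off-diagonal entries equal $\beta\tau|a_{ij}(k)|\ge0$. For the $(2,2)$ diagonal, the observation above gives the value $1+\frac{\tau}{\beta}-\beta\tau\big(\sum_{j}|a_{ij}(k)|+|a_{i0}(k)|\big)\ge 1+\frac{\tau}{\beta}-\beta\tau d_M\ge\frac{\tau}{\beta}>0$, where the last step uses the upper bound $\beta\le\frac{1}{\tau d_M}$ together with $\sum_{j}|a_{ij}(k)|+|a_{i0}(k)|\le d_M$. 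The only block that is not automatic is $(1,1)=\big(1-\frac{\tau}{\beta}\big)I_n$, which forces $\beta\ge\tau$; I would extract this from the compatibility of the two displayed bounds, since nonemptiness of the admissible interval requires $\sqrt{(1+\alpha)/b_m}<1/(\tau d_M)$, and combined with $b_m\le d_M<(1+\alpha)d_M$ this yields $\tau<\sqrt{(1+\alpha)/b_m}<\beta$, so $1-\frac{\tau}{\beta}>0$.

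Next I would compute the two families of row sums. The contribution of the top block row is $1-\frac{\tau}{\beta}+\frac{\tau}{\alpha\beta}$, which is strictly below $1$ precisely because $\alpha>1$. The bottom block row gives $1+\frac{\alpha\tau}{\beta}+\frac{\tau}{\beta}-\beta\tau|a_{i0}(k)|$. When the leader is not an active neighbor of $i$, i.e. $|a_{i0}(k)|=0$, this is visibly $>1$; when $|a_{i0}(k)|=|a_{i0}|\ge b_m>0$, the requirement that it fall below $1$ rearranges to $\beta^2>(1+\alpha)/|a_{i0}|$, which is exactly what the lower bound delivers, since $\beta^2>(1+\alpha)/b_m\ge(1+\alpha)/|a_{i0}|$. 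Taking $\mathcal{W}_1=\{\,n+i\mid |a_{i0}(k)|=0\,\}$ then realizes the decomposition demanded by Definition~\ref{definition:2.5}, completing the verification.

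The genuinely delicate points are few but worth isolating. The main one is that the $(1,1)$-block nonnegativity condition $\beta\ge\tau$ is \emph{not} one of the two displayed inequalities and has to be deduced from their joint consistency; missing it would leave $\Psi(k)$ outside the nonnegative cone and void the super-stochastic structure. A second point is purely notational: the row-sum formulas are naturally written with the time-varying weight $|a_{i0}(k)|$, so the statement's use of $|a_{i0}|$ should be read as ``$|a_{i0}(k)|$ in the case that follower $v_i$ actually communicates with the leader at step $k$'', while otherwise $|a_{i0}(k)|=0$ places row $n+i$ in $\mathcal{W}_1$. Everything else is the bookkeeping enabled by the row-sum identity for $\mathscr{D}(k)+\mathscr{B}(k)-|\mathscr{A}(k)|$.
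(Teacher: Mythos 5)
Your proposal is correct and follows essentially the same route as the paper's proof: both deduce $\beta>\tau$ from the joint consistency of the two bounds in (\ref{sys:5.7}) (so that the $(1,1)$ block is nonnegative), use $\beta\leq\frac{1}{\tau d_M}$ for nonnegativity of the $(2,2)$ block's diagonal, and obtain the three row-sum cases from $\alpha>1$ and $\beta^2>\frac{1+\alpha}{b_m}\geq\frac{1+\alpha}{|a_{i0}|}$. Your version is simply more explicit about the row-sum identity for $\mathscr{D}(k)+\mathscr{B}(k)-|\mathscr{A}(k)|$ and about the $|a_{i0}(k)|$ versus $|a_{i0}|$ notation, which the paper leaves implicit.
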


\begin{proof}
From condition (\ref{sys:5.7}), 
we can obtain that $\tau<\frac{1}{d_M}\sqrt{\frac{b_m}{1+\alpha}}$. Then
\begin{align*}
\beta>\sqrt{\frac{1+\alpha}{b_m}}>\frac{1}{\sqrt{d_M}}>\frac{1}{\sqrt{d_M(1+\alpha)}}
=\frac{1}{d_M}\sqrt{\frac{d_M}{1+\alpha}}\geq\frac{1}{d_M}\sqrt{\frac{b_m}{1+\alpha}}>\tau.
\end{align*}
It follows that $I_{n}-\frac{\tau}{\beta}I_{n}$ is a nonnegative matrix in which the diagonal elements are positive. Under the condition $\frac{1}{\sqrt{d_M}}<\sqrt{\frac{1+\alpha}{b_m}}<\beta\leq\frac{1}{\tau d_M}$, it can be obtained that $I_{n}+\frac{\tau}{\beta}I_{n}-\beta \tau(\mathscr{D}(k)+\mathscr{B}(k)-|\mathscr{A}(k)|)$ is a super-stochastic matrix that contains positive diagonal elements. Besides, $\Lambda_{i}[\Psi(k)]=1-\frac{\tau}{\beta}+\frac{\tau}{\alpha\beta}<1$, $\Lambda_{n+i}[\Psi(k)]=1+\frac{\alpha \tau}{\beta}+\frac{\tau}{\beta}-\beta\tau |a_{i0}|<1$ if $|a_{i0}|>0$, and $\Lambda_{n+i}[\Psi(k)]=1+\frac{\alpha \tau}{\beta}+\frac{\tau}{\beta}>1$ if $|a_{i0}|=0$, where $i\in\{1,2,\ldots,n\}$. Therefore, the results hold.
\end{proof}

Under condition (\ref{sys:5.7}), $\Psi(k)$, $k\in\mathbb{N}$ are super-stochastic matrices, which means that $|H(k)|=\Psi(k)$. Then one gets
\begin{eqnarray}\label{sys:5.8}
\lim_{k\rightarrow\infty}\Big\|\prod_{s=0}^kH(s)\Big\|_{\infty}\leq\lim_{k\rightarrow\infty}\Big\|\prod_{s=0}^k\Psi(s)\Big\|_{\infty}
\end{eqnarray}
according to the properties of infinity norm. Thus, the achievement of asynchronous bipartite tracking is equivalent to having $\lim_{k\rightarrow\infty}\|\prod_{s=0}^k\Psi(s)\|_{\infty}=0$. To solve the product convergence issue of ISupSM, we first construct the following matrix
\begin{eqnarray}\label{sys:5.9}
\check{\mathscr{A}}(k)=\left(
        \begin{array}{cc}
          1 & 0_{1\times2n} \\
          q(k) & \Psi(k) \\
        \end{array}
      \right)
\end{eqnarray}
for each matrix $\Psi(k)$, where $q(k)=[q_1(k),q_2(k),\ldots,q_{2n}(k)]^T$ that satisfies: $q_j(k)=0$ if $\Lambda_j[\Psi(k)]\geq1$, and $q_j(k)=1-\Lambda_j[\Psi(k)]$ if $\Lambda_j[\Psi(k)]<1$. Then we construct a digraph $\check{\mathscr{G}}(k)=(\check{\mathscr{V}},\check{\mathscr{E}}(k),\check{\mathscr{A}}(k))$ with $\check{\mathscr{A}}(k)$ being the adjacency matrix, where $\check{\mathscr{V}}=\{0,1,\ldots,2n\}$ in which the sequential elements are the rows' indexes of matrix $\check{\mathscr{A}}(k)$. According to the structure of digraph $\check{\mathscr{G}}(k)$, we can draw the following conclusions:
\begin{enumerate}
\item [\emph{B1}.] $(0,u+n)\in\check{\mathscr{E}}(k)$ if $( v_0, v_u)\in\tilde{\mathscr{E}}(k)$, where $u\in\{1,2,$ $\ldots,n\}$;

\item [\emph{B2}.] $\big(u+n,u\big)\in\check{\mathscr{E}}(k)$, $u=1,2,\ldots,n$;

\item [\emph{B3}.] There exists a directed path from $u_1+n$ to $u_2+n$ in $\check{\mathscr{G}}(k)$ if a directed path from $v_{u_{1}}$ to $v_{u_{2}}$ exists in $\tilde{\mathscr{G}}(k)$, where $u_{1},u_{2}\in\{1,2,\ldots,n\}$;

\item [\emph{B4}.] Digraph $\check{\mathscr{G}}(k)$ has self-loops on all vertices.
\end{enumerate}

The following two lemmas play crucial roles for deriving the subsequent results.

\begin{lemma}\label{lemma:5.3}
Suppose that the inequality (\ref{sys:5.7}) holds. If the communication topology satisfies the conditions \textbf{C1} and \textbf{C2}, then for any $k\in\mathbb{N}$, the composition $\check{\mathscr{E}}(k)\circ\check{\mathscr{E}}(k+1)\circ\cdots\circ\check{\mathscr{E}}(k+Ph-1)$ associated with the set $\{n+1,n+2,\ldots,2n\}$ is rooted at the vertex $0$, where $P$ is defined in (\ref{sys:3.8}).
\end{lemma}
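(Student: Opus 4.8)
The plan is to verify the rooting condition directly from its definition: for an arbitrary target vertex $n+u$ with $u\in\{1,2,\ldots,n\}$, I must exhibit vertices $i_1,i_2,\ldots,i_{Ph-1}$ such that $(0,i_1)\in\check{\mathscr{E}}(k)$, $(i_1,i_2)\in\check{\mathscr{E}}(k+1),\ldots,(i_{Ph-1},n+u)\in\check{\mathscr{E}}(k+Ph-1)$. Since $u$ is arbitrary, this establishes the claim for the whole set $\{n+1,\ldots,2n\}$. The backbone of the construction is a shortest directed path in $\tilde{\mathscr{G}}$: by conditions \textbf{C1}, \textbf{C2} and the definition of $P$, there is a path $v_0\rightarrow v_{\delta_1}\rightarrow\cdots\rightarrow v_{\delta_z}=v_u$ with $z=d(v_0,v_u)\leq P$.

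Next I would split the index window $[k,k+Ph)$ into $P$ consecutive sub-windows of length $h$, namely $[k+(j-1)h,\,k+jh)$ for $j=1,\ldots,P$, and realize the $j$-th hop of the path inside the $j$-th sub-window. For $j=1$, the asynchronous condition \eqref{sys:2.10} forces $v_{\delta_1}$ to communicate at least once in $[k,k+h)$, say at $k+l_0$ with $0\le l_0<h$; since $v_{\delta_1}$ is a neighbour of the leader we obtain $(v_0,v_{\delta_1})\in\tilde{\mathscr{E}}(k+l_0)$, so by \emph{B1} (whose positivity is supplied by Lemma~\ref{lemma:5.2}) the edge $(0,n+\delta_1)\in\check{\mathscr{E}}(k+l_0)$. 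For $2\le j\le z$, condition \eqref{sys:2.10} again forces $v_{\delta_j}$ to communicate at some instant $k+(j-1)h+l_{j-1}$ inside its sub-window, which makes $(v_{\delta_{j-1}},v_{\delta_j})\in\tilde{\mathscr{E}}$ at that instant and hence, by \emph{B3}, yields the edge $(n+\delta_{j-1},n+\delta_j)\in\check{\mathscr{E}}(k+(j-1)h+l_{j-1})$.

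The remaining time slots are absorbed by self-loops. Using \emph{B4}, I would insert $(0,0)$ before the first hop, $(n+\delta_{j-1},n+\delta_{j-1})$ while waiting for the hop inside sub-window $j$, $(n+\delta_j,n+\delta_j)$ after it, and finally $(n+\delta_z,n+\delta_z)=(n+u,n+u)$ throughout the last $P-z$ sub-windows. Concatenating these $Ph$ edges produces the required chain $0\rightarrow n+\delta_1\rightarrow\cdots\rightarrow n+\delta_z=n+u$ spread across $\check{\mathscr{E}}(k),\ldots,\check{\mathscr{E}}(k+Ph-1)$, so the composition is rooted at $0$. Alternatively, one can argue inductively over the $P$ sub-windows, applying Theorem~\ref{theorem:2.11} to each single hop together with the self-loops of \emph{B4}.

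The part requiring the most care is the timing bookkeeping: one must check that each hop can indeed be scheduled inside its own length-$h$ sub-window so that the hops occur in the correct order and never overlap, and that the self-loops pad the chain to exactly $Ph$ edges. This is precisely where the uniform bound $h$ on inter-communication gaps in \eqref{sys:2.10} and the availability of self-loops in \emph{B4} are essential; the only other delicate point is the very first hop, where the strict inequality $\Lambda_{n+i}[\Psi(k)]<1$ of Lemma~\ref{lemma:5.2} is what guarantees the entry $q_{n+\delta_1}(k)>0$, and therefore the existence of the leader edge $(0,n+\delta_1)$.
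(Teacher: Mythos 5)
Your proposal is correct and follows essentially the same route as the paper's proof: decompose a directed path $v_0\rightarrow v_{\delta_1}\rightarrow\cdots\rightarrow v_{\delta_z}$ guaranteed by \textbf{C1}--\textbf{C2} into hops, schedule the $j$-th hop inside the $j$-th length-$h$ sub-window using the bound (\ref{sys:2.10}), lift each hop to $\check{\mathscr{G}}$ via \emph{B1}/\emph{B3}, and pad with the self-loops of \emph{B4} (Theorem~\ref{theorem:2.11}) to fill all $Ph$ slots including the final $P-z$ sub-windows. Your closing observation that the leader edge $(0,n+\delta_1)$ exists precisely because Lemma~\ref{lemma:5.2} gives $\Lambda_{n+\delta_1}[\Psi(k+l_0)]<1$, hence $q_{n+\delta_1}(k+l_0)>0$, is exactly the justification underlying \emph{B1}.
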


\begin{proof}
From the conditions \textbf{C1} and \textbf{C2}, there is a directed path $W_{z}=v_0\rightarrow v_{\delta_{1}}\rightarrow v_{\delta_{2}}\rightarrow\cdots\rightarrow v_{\delta_{z}}$ in $\tilde{\mathscr{G}}$ for each vertex $v_{\delta_{z}}$, where $ v_{\delta_{1}}, v_{\delta_{2}},\ldots, v_{\delta_{z}}\in\mathscr{V}$. First, we discuss the edge $( v_0, v_{\delta_{1}})$. Assume that $k\tau+l_{0}\tau$ is a communication time instant of $v_{\delta_{1}}$ during $[k\tau,k\tau+h\tau)$, where $k\tau\leq k\tau+l_{0}\tau< k\tau+h\tau$. By \emph{B1}, we can deduce that $( v_0, v_{\delta_{1}})\in\tilde{\mathscr{E}}(k+l_{0})\Rightarrow(0,\delta_{1}+n)\in\check{\mathscr{E}}(k+l_{0})$. Based on this result and the fact that vertices 0, $\delta_{1}+n$ have self-loops in $\check{\mathscr{G}}(q)$, $q\in\mathbb{N}$, it can be obtained from Theorem~\ref{theorem:2.11} that
\begin{eqnarray}\label{sys:5.10}
(0,\delta_{1}+n)\in\check{\mathscr{E}}(k)\circ\cdots\circ\check{\mathscr{E}}(k+h-1),
\end{eqnarray}
where $(0,0)\in\check{\mathscr{E}}(k),\cdots,(0,0)\in\check{\mathscr{E}}(k+l_{0}-1),(0,\delta_{1}+n)\in\check{\mathscr{E}}(k+l_{0}), (\delta_{1}+n,\delta_{1}+n)\in\check{\mathscr{E}}(k+l_{0}+1),\ldots,(\delta_{1}+n,\delta_{1}+n)\in\check{\mathscr{E}}(k+h-1)$.

Below we discuss the edges $( v_{\delta_{y}}, v_{\delta_{y+1}})$, $y=1,2,\ldots,z-1$. Because each follower communicates with its neighbors at least once in arbitrary time interval $[k\tau,k\tau+h\tau)$ with the length $h\tau$, there exists an instant $k\tau+yh\tau+l_y\tau$ such that $k\tau+yh\tau\leq k\tau+yh\tau+l_y\tau< k\tau+(y+1)h\tau$, and $( v_{\delta_{y}}, v_{\delta_{y+1}})\in\tilde{\mathscr{E}}(k+yh+l_y)$. By \emph{B3}, we have $(\delta_y+n,\delta_{y+1}+n)\in\check{\mathscr{E}}(k+yh+l_y)$. By means of this result and the fact that vertices $\delta_y+n$, $\delta_{y+1}+n$ have self-loops in $\check{\mathscr{G}}(q)$, $q\in\mathbb{N}$, it follows that
\begin{eqnarray}\label{sys:5.11}
(\delta_{y}+n,\delta_{y+1}+n)\in\check{\mathscr{E}}(k+yh)\circ\cdots \circ\check{\mathscr{E}}(k+yh+h-1),
\end{eqnarray}
in which $(\delta_{y}+n,\delta_{y}+n)\in\check{\mathscr{E}}(k+yh),\ldots,
(\delta_{y}+n,\delta_{y}+n)\in\check{\mathscr{E}}(k+yh+l_y-1),
(\delta_{y}+n,\delta_{y+1}+n)\in\check{\mathscr{E}}(k+yh+l_y),
(\delta_{y+1}+n,\delta_{y+1}+n)\in\check{\mathscr{E}}(k+yh+l_y+1),\ldots,
(\delta_{y+1}+n,\delta_{y+1}+n)\in\check{\mathscr{E}}(k+yh+h-1)$.

Based on (\ref{sys:5.10}) and (\ref{sys:5.11}), we can deduce that
\begin{eqnarray}\label{sys:5.12}
\begin{aligned}
&(0,\delta_{1}+n)\in\check{\mathscr{E}}(k)\circ\cdots \circ\check{\mathscr{E}}(k+h-1),\\
&(\delta_{1}+n,\delta_{2}+n)\in\check{\mathscr{E}}(k+h)\circ\cdots \circ\check{\mathscr{E}}(k+2h-1),\\
& \ \ \ \ \ \ \ \ \vdots\\
&(\delta_{z-1}+n,\delta_{z}+n)\in\check{\mathscr{E}}(k+zh-h)\circ\cdots\circ\check{\mathscr{E}}(k+zh-1).
\end{aligned}
\end{eqnarray}
This further results in
\begin{eqnarray}\label{sys:5.13}
\begin{aligned}
(0,\delta_{z}+n)\in\check{\mathscr{E}}(k)\circ\cdots\circ\check{\mathscr{E}}(k+zh-1).
\end{aligned}
\end{eqnarray}
Since $P$ is the farthest distance from the leader to the followers, we have $P\geq d(v_0, v_{u_z})=z$. According to \emph{B4}, the vertex $\delta_{z}+n$ has a self-loop in $\check{\mathscr{G}}(q), q\in\mathbb{N}$, which leads to
\begin{eqnarray}\label{sys:5.14}
\begin{aligned}
(\delta_{z}+n,\delta_{z}+n)\in\check{\mathscr{E}}(k+zh)\circ\cdots\circ\check{\mathscr{E}}(k+Ph-1).
\end{aligned}
\end{eqnarray}
Combine (\ref{sys:5.13}) and (\ref{sys:5.14}), one has
\begin{eqnarray}\label{sys:5.15}
\begin{aligned}
(0,\delta_{z}+n)\in\check{\mathscr{E}}(k)\circ\cdots\circ\check{\mathscr{E}}(k+Ph-1).
\end{aligned}
\end{eqnarray}
Because of the arbitrary selectivity of $\delta_{z}+n$ in the set $\{n+1,n+2,\ldots,2n\}$, we get that the composition $\check{\mathscr{E}}(k)\circ\check{\mathscr{E}}(k+1)\circ\cdots\circ\check{\mathscr{E}}(k+Ph-1)$ associated with the set $\{n+1,n+2,\ldots,2n\}$ is rooted at the vertex $0$. The proof is completed.
\end{proof}

\begin{lemma}\label{lemma:5.4}
Under the inequality (\ref{sys:5.7}), the result
\begin{eqnarray}\label{sys:5.16}
\begin{aligned}
\Big\|\prod_{s=k}^{k+Ph-1}\Psi(s)\Big\|_{\infty}<1
\end{aligned}
\end{eqnarray}
holds for any $k\in\mathbb{N}$ if the following conditions are met:
\begin{enumerate}[\!\!\!i)]
\item The composition $\check{\mathscr{E}}(k)\circ\check{\mathscr{E}}(k+1)\circ\cdots\circ\check{\mathscr{E}}(k+Ph-1)$ associated with the set $\{n+1,n+2,\ldots,2n\}$ is rooted at the vertex $0$;

\item The gain parameter $\beta$ satisfies
\begin{eqnarray}\label{sys:5.17}
\begin{aligned}
\beta>\frac{\alpha^{\frac{Ph}{Ph-1}}-1}{\tau b_m\varphi^{Ph-1}},
\end{aligned}
\end{eqnarray}
where $\alpha>(1+\frac{\tau}{\beta}+\frac{\alpha \tau}{\beta})^{Ph-1}$ and $\varphi=\min\big\{[\Phi^*]_{ij}\mid [\Phi^*]_{ij}>0, \ i,j=1,2,\ldots,n\big\}$ with $\Phi^*=I_{n}+\frac{\tau}{\beta}I_{n}-\beta \tau(\mathscr{D}+\mathscr{B}-|\mathscr{A}|)$.
\end{enumerate}
\end{lemma}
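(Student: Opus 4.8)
The plan is to reduce the claim to a row-sum estimate on the product $\Phi:=\prod_{s=k}^{k+Ph-1}\Psi(s)=\Psi(k+Ph-1)\cdots\Psi(k)$ and then run a multiplicative tracking argument of exactly the type used in Theorem~\ref{theorem:2.8}, but driven by the rooted composition of hypothesis~i) in place of a single index pair $(s_1,s_2)$. By Lemma~\ref{lemma:5.2} and (\ref{sys:5.7}), each $\Psi(s)$ is super-stochastic with positive diagonal entries; I would write $g=1+\tfrac{\tau}{\beta}+\tfrac{\alpha\tau}{\beta}$ for the largest possible row sum and record that every leader-connected bottom row $n+i$ has row sum at least $\beta\tau b_m$ below $g$. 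Since $\|\Phi\|_\infty<1$ is equivalent to $\Lambda_a[\Phi]<1$ for every $a\in\{1,\dots,2n\}$, it suffices to bound each row sum of $\Phi$.

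First I would convert hypothesis~i) into a chain of positive matrix entries. For a fixed target $n+i\in\{n+1,\dots,2n\}$, the rootedness yields a walk $0=w_0\to w_1\to\cdots\to w_{Ph}=n+i$ with $(w_{t-1},w_t)\in\check{\mathscr E}(k+t-1)$. The only outgoing edges of the absorbing vertex $0$ in $\check{\mathscr G}$ come from the column $q$, so the first hop forces $q_{w_1}>0$, i.e.\ $\Lambda_{w_1}[\Psi(\cdot)]<1$; by the construction of $q$ together with property \emph{B1}, this $w_1$ is a leader-connected bottom index whose row sum is at least $\beta\tau b_m$ below $g$. Each later hop lies in the bottom-right block (properties \emph{B2},\emph{B3}), so it corresponds to an entry $[\Psi(k+t-1)]_{w_t w_{t-1}}\ge\varphi$, while the positive diagonals supply the self-loops used to pad the walk out to length $Ph$.

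Next I would track the row sums along this chain. Setting $G_t=\prod_{s=k}^{k+t-1}\Psi(s)$ and using $\Lambda_a[G_t]=\sum_b[\Psi(k+t-1)]_{ab}\Lambda_b[G_{t-1}]$, a generic row inflates by a factor at most $g$ per step, whereas the single deficient row $w_1$ injects a shortfall that is carried forward by the guaranteed entries $\ge\varphi$. Exactly as in the proof of Theorem~\ref{theorem:2.8}, this yields $\Lambda_{n+i}[\Phi]\le g^{Ph}-\beta\tau b_m\,\varphi^{Ph-1}$. Condition (\ref{sys:5.17}) is then engineered to close the estimate: the hypothesis $\alpha>g^{Ph-1}$ gives $\alpha^{Ph/(Ph-1)}>g^{Ph}$, and rearranging (\ref{sys:5.17}) gives $\alpha^{Ph/(Ph-1)}-\beta\tau b_m\varphi^{Ph-1}<1$, whence $\Lambda_{n+i}[\Phi]<1$.

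Finally, the top indices $\{1,\dots,n\}$ are automatically rooted to $0$ since their row sums are always below $1$ (so $q_i>0$), and each couples to its bottom counterpart through $[\Psi]_{i,n+i}>0$; bounding $\Lambda_i[\Phi]<1$ should follow from the reduction already obtained for the velocity block together with the contracting top self-loops $1-\tfrac{\tau}{\beta}<1$. I expect the main obstacle to be exactly this coordination, together with the index-and-timing bookkeeping that aligns the graph walk with the matrix chain: the deficit enters at an interior instant (the walk is padded by self-loops before and after the active hop), the active chain may be strictly shorter than $Ph$, and one must check that the shortfall injected into the bottom block reaches and then \emph{persists} into the top block within a single window of length $Ph$. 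Confirming that $Ph$ and $Ph-1$ are the correct exponents for the longest admissible chain---so that (\ref{sys:5.17}) indeed forces every one of the $2n$ row sums below $1$---is the crux of the argument.
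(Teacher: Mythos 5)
Your outline follows the paper's proof of Lemma~\ref{lemma:5.4} essentially step for step for the bottom block: the reduction to row sums, the extraction from hypothesis~i) of a chain of positive entries with a single deficient row $w_1$ satisfying $\Lambda_{w_1}[\Psi(\cdot)]\le g-\beta\tau b_m$, the resulting bound $\Lambda_{n+i}\big[\prod_{s=k}^{k+Ph-1}\Psi(s)\big]\le g^{Ph}-\beta\tau b_m\varphi^{Ph-1}$, and the closing step that combines $\alpha>g^{Ph-1}$ with (\ref{sys:5.17}) to force this below $1$ --- all of which is exactly the paper's computation in (\ref{sys:5.18})--(\ref{sys:5.23}). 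The one place you deviate is the part you yourself flag as the crux, namely the top rows $i\in\{1,\ldots,n\}$, and there the mechanism you propose (routing the deficit injected into the bottom block up into the top block through $[\Psi]_{i,n+i}>0$, so that the shortfall ``reaches and persists into the top block'') is not what the paper does and is not needed. The paper's top-row argument uses no deficit at all: since $\Lambda_i[\Psi(s)]=1-\tfrac{\tau}{\beta}+\tfrac{\tau}{\alpha\beta}$ with the entire coupling mass $\tfrac{\tau}{\alpha\beta}$ sitting in the top-right block, one shows directly (as in (\ref{sys:5.24})--(\ref{sys:5.26})) that $\Lambda_i\big[\prod_{s=k}^{k+t}\Psi(s)\big]\le 1-\tfrac{\tau}{\beta}+\tfrac{\tau}{\alpha\beta}g^{t}$; the only danger is the $g^{t}$ growth of the bottom row sums feeding into the top rows, and the hypothesis $\alpha>g^{Ph-1}$ is precisely what damps that growth back below $\tfrac{\tau}{\beta}$ over a window of length $Ph$. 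So the coordination problem you anticipate dissolves --- the rooted chain and the deficit are needed only for rows $n+1,\ldots,2n$, while rows $1,\ldots,n$ are handled by this separate, purely multiplicative estimate --- and with that substitution your argument is complete and coincides with the paper's.
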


\begin{proof}
Firstly, we discuss the sum of the $j$th row of $\prod_{s=k}^{k+Ph-1}\Psi(s)$, where $j\in\{n+1,n+2,\ldots,2n\}$. Since the composition $\check{\mathscr{E}}(k)\circ\check{\mathscr{E}}(k+1)\circ\cdots\circ\check{\mathscr{E}}(k+Ph-1)$ associated with the set $\{n+1,n+2,\ldots,2n\}$ is rooted at the vertex $0$, there exist vertices $j_1,j_2,\ldots,j_{Ph}\in\{n+1,n+2,\ldots,2n\}$ such that $(0,j_{1})\in\check{\mathscr{E}}(k), (j_{1},j_{2})\in\check{\mathscr{E}}(k+1),\ldots,(j_{Ph-1},j_{Ph})\in\check{\mathscr{E}}(k+Ph-1)$, where $j_{Ph}=j$. With no loss of generality, assume that $j_{f}\neq0$ and $j_{f-1}=j_{f-2}=\cdots=j_{1}=0$, where $f\in\{1,2,\ldots,Ph-1\}$. Then, we have
\begin{eqnarray}\label{sys:5.18}
\begin{aligned}
\Lambda_{j_{f}}\big[\Psi(k+f-1)\big]\leq1+\frac{\tau}{\beta}+\frac{\alpha \tau}{\beta}-\beta \tau b_m<1,
\end{aligned}
\end{eqnarray}
and
\begin{eqnarray}\label{sys:5.19}
\begin{aligned}
&\Lambda_{i}\big[\Psi(k+f-1)\big]\leq 1+\frac{\tau}{\beta}+\frac{\alpha \tau}{\beta},\\
&\Lambda_{i}\big[\Psi(k+f-2)\big]\leq 1+\frac{\tau}{\beta}+\frac{\alpha \tau}{\beta},\\
& \ \vdots\\
&\Lambda_{i}\big[\Psi(k)\big]\leq 1+\frac{\tau}{\beta}+\frac{\alpha \tau}{\beta}, \ i=n+1,\ldots,2n.
\end{aligned}
\end{eqnarray}
For convenience, let $c=1+\frac{\tau}{\beta}+\frac{\alpha \tau}{\beta}-\beta\tau b_m$ and $g=1+\frac{\tau}{\beta}+\frac{\alpha \tau}{\beta}$. It can be seen that $c<1$ and $g>1$ under condition (\ref{sys:5.7}). Thus, according to Theorem~\ref{theorem:2.8}, the row sum of the $j_{f+1}$th row of matrix $\prod_{s=k+f-1}^{k+f}\Psi(s)$ satisfies:
\begin{eqnarray}\label{sys:5.20}
\begin{aligned}
\Lambda_{j_{f+1}}\left[\prod_{s=k+f-1}^{k+f}\Psi(s)\right]\leq g^{2}-(g-c)\varphi,
\end{aligned}
\end{eqnarray}
and further we arrive at
\begin{eqnarray}\label{sys:5.21}
\begin{aligned}
\Lambda_{j_{f+2}}\left[\prod_{s=k+f-1}^{k+f+1}\Psi(s)\right]\leq g^{3}-(g-c)\varphi^2.
\end{aligned}
\end{eqnarray}
In the same way, it can be obtained that
\begin{eqnarray}\label{sys:5.22}
\begin{aligned}
\Lambda_{j_{Ph}}\left[\prod_{s=k+f-1}^{k+Ph-1}\Psi(s)\right]\leq g^{Ph-f+1}-(g-c)\varphi^{Ph-f}.
\end{aligned}
\end{eqnarray}
It thus follows that
\begin{eqnarray}\label{sys:5.23}
\begin{aligned}
\Lambda_{j_{Ph}}\left[\prod_{s=k}^{k+Ph-1}\Psi(s)\right]&=\!\sum_{i=1}^{2n}\left[\prod_{s=k+f-1}^{k+Ph-1}\Psi(s)\right]_{j_{Ph}i}
\Lambda_{i}\left[\prod_{s=k}^{k+f-2}\Psi(s)\right]\\
&\leq g^{f-1}\Lambda_{j_{Ph}}\left[\prod_{s=k+f-1}^{k+Ph-1}\Psi(s)\right]\\
&\leq g^{Ph}-(g-c)\varphi^{Ph-1}<1.
\end{aligned}
\end{eqnarray}
This implies that $\Lambda_{j}\big[\prod_{s=k}^{k+Ph-1}\Psi(s)\big]<1$ for any $j\in\{n+1,n+2,\ldots,2n\}$.

Next, we consider the sum of the $s$th row of $\prod_{s=k}^{k+Ph-1}\Psi(s)$, where $s\!\in\!\{1,2,\ldots,n\}$. According to the given condition $\alpha>(1+\frac{\tau}{\beta}+\frac{\alpha \tau}{\beta})^{Ph-1}=g^{Ph-1}$ and the fact $g>1$,
we can derive that $\alpha>g$, $\alpha>g^2$, $\ldots$, $\alpha>g^{Ph-1}$. Based on these results, we have $1-\frac{\tau}{\beta}+\frac{ \tau}{\alpha\beta}g<1$, $1-\frac{\tau}{\beta}+\frac{ \tau}{\alpha\beta}g^2<1$, $\ldots$, $1-\frac{\tau}{\beta}+\frac{ \tau}{\alpha\beta}g^{Ph-1}<1$. It thus follows that
\begin{eqnarray}\label{sys:5.24}
\begin{aligned}
\Lambda_{s}\left[\prod_{s=k}^{k+1}\Psi(s)\right]&=\sum_{i=1}^{n}\big[\Psi(k+1)\big]_{si}\Lambda_{i}\big[\Psi(k)\big]
+\sum_{i=n+1}^{2n}\big[\Psi(k+1)\big]_{si}\Lambda_{i}\big[\Psi(k)\big] \\
&\leq \sum_{i=1}^{n}\big[\Psi(k+1)\big]_{si}+\sum_{i=n+1}^{2n}\big[\Psi(k+1)\big]_{si}g \\
&\leq 1-\frac{\tau}{\beta}+\frac{ \tau}{\alpha\beta}g<1.
\end{aligned}
\end{eqnarray}
Furthermore,
\begin{eqnarray}\label{sys:5.25}
\begin{aligned}
\Lambda_{s}\left[\prod_{s=k}^{k+2}\Psi(s)\right]&=\sum_{i=1}^{n}\big[\Psi(k\!+\!2)\big]_{si}\Lambda_{i}\left[\prod_{s=k}^{k+1}\Psi(s)\right]
+\!\sum_{i=n+1}^{2n}\big[\Psi(k\!+\!2)\big]_{si}\Lambda_{i}\left[\prod_{s=k}^{k+1}\Psi(s)\right] \\
&\leq \sum_{i=1}^{n}\big[\Psi(k\!+\!2)\big]_{si}\!+\!\sum_{i=n+1}^{2n}\big[\Psi(k\!+\!2)\big]_{si}\sum_{i_1=1}^{2n}\big[\Psi(k\!+\!1)\big]_{ii_1}g\\
&\leq \sum_{i=1}^{n}\big[\Psi(k\!+\!2)\big]_{si}\!+\!\sum_{i=n+1}^{2n}\big[\Psi(k\!+\!2)\big]_{si}g^2\\
&\leq 1-\frac{\tau}{\beta}+\frac{ \tau}{\alpha\beta}g^2<1.
\end{aligned}
\end{eqnarray}
In the same way, one can obtain that
\begin{align}\label{sys:5.26}
\Lambda_{s}\left[\prod_{s=k}^{k+Ph-1}\Psi(s)\right]\leq 1\!-\!\frac{\tau}{\beta}\!+\!\frac{\tau}{\alpha\beta}g^{Ph-1}<1.
\end{align}
This implies that $\Lambda_{s}\big[\prod_{s=k}^{k+Ph-1}\Psi(s)\big]<1$ for any $s\in\{1,2,\ldots,n\}$.

Summarizing the above analysis, we arrive at $\big\|\prod_{s=k}^{k+Ph-1}\Psi(s)\big\|_{\infty}<1$ for any $k\in\mathbb{N}$. This completes the proof.
\end{proof}

With all the above preparations, we are now ready to give a major result for the asynchronous bipartite tracking of MASs with an active leader.

\begin{theorem}\label{theorem:5.5}
Suppose that the gain parameter $\beta$ satisfies conditions (\ref{sys:5.7}) and (\ref{sys:5.8}). The asynchronous bipartite tracking for systems (\ref{sys:5.1}) and (\ref{sys:5.2}) with distributed protocol (\ref{sys:5.3}) can be implemented if and only if the topology graph meets the conditions \textbf{C1} and \textbf{C2}.
\end{theorem}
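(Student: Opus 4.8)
The plan is to follow the template already established for Theorems~\ref{theorem:3.3} and \ref{theorem:4.3}: reduce the bipartite tracking to the product convergence $\lim_{k\to\infty}\big\|\prod_{s=0}^k H(s)\big\|_{\infty}=0$ for the coefficient matrices of the error system~(\ref{sys:5.6}), and then dominate this product by the super-stochastic matrices $\Psi(k)$. For \emph{sufficiency}, I would first invoke Lemma~\ref{lemma:5.2}: under~(\ref{sys:5.7}) each $\Psi(k)=|H(k)|$ is super-stochastic, so the majorization~(\ref{sys:5.8}) applies and it suffices to show $\lim_{k\to\infty}\big\|\prod_{s=0}^k\Psi(s)\big\|_{\infty}=0$. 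Under \textbf{C1} and \textbf{C2}, Lemma~\ref{lemma:5.3} guarantees that the composition $\check{\mathscr{E}}(k)\circ\cdots\circ\check{\mathscr{E}}(k+Ph-1)$ associated with $\{n+1,\ldots,2n\}$ is rooted at vertex $0$; feeding this rootedness together with the gain condition~(\ref{sys:5.17}) of Lemma~\ref{lemma:5.4}(ii) into Lemma~\ref{lemma:5.4} then yields the block contraction $\big\|\prod_{s=k}^{k+Ph-1}\Psi(s)\big\|_{\infty}<1$ for every $k\in\mathbb{N}$.

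Next I would partition the time axis into consecutive blocks of length $Ph$, setting $\Psi^*(\theta)=\prod_{s=\theta Ph}^{\theta Ph+Ph-1}\Psi(s)$, so that $\|\Psi^*(\theta)\|_{\infty}<1$ for each $\theta\in\mathbb{N}$. Because the bounded-interval condition~(\ref{sys:2.10}) forces each $\mathscr{A}(k)$, and hence each $\Psi(k)$ and each block product $\Psi^*(\theta)$, to take values in a \emph{finite} set, the norms $\|\Psi^*(\theta)\|_{\infty}$ admit a common upper bound $\kappa<1$. Submultiplicativity of the infinity norm then gives $\big\|\prod_{s=0}^k\Psi(s)\big\|_{\infty}\le\kappa^{\lfloor k/Ph\rfloor}\to0$, whence $\big\|\prod_{s=0}^kH(s)\big\|_{\infty}\to0$ by~(\ref{sys:5.8}) and therefore $\xi(k)\to\mathbf{0}$. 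Reading off the two blocks of $\xi(k)=[e_x^T(k),\,\alpha e_x^T(k)+\alpha\beta e_\vartheta^T(k)]^T$ shows $e_x(k)\to\mathbf{0}$ and then, since $\alpha,\beta>0$, also $e_\vartheta(k)\to\mathbf{0}$; unpacking the definitions of $e_x$ and $e_\vartheta$ in~(\ref{sys:3.5}) and~(\ref{sys:5.5}) recovers exactly the four limits in Definition~\ref{definition:5.1}, so the asynchronous bipartite tracking is achieved.

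For \emph{necessity} I would argue as in the necessity part of Theorem~\ref{theorem:3.3}. If \textbf{C1} fails, one splits into the three structural cases (all-nonnegative, all-nonpositive, and mixed weights producing a follower simultaneously linked cooperatively and competitively to the leader) and checks that in each case some follower fails to approach either $x_0$ or $-x_0$ (and the corresponding $\pm\vartheta_0$), contradicting Definition~\ref{definition:5.1}. If \textbf{C2} fails, some follower receives no leader information along any directed path, so its trajectory is independent of the leader and cannot track $\pm x_0$ or $\pm\vartheta_0$; hence both conditions are necessary.

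The main obstacle is conceptual rather than computational: the coefficient matrix $H(k)$ is genuinely sign-indefinite because of the block $-\tfrac{\alpha\tau}{\beta}I_{n}$, so the nonnegative-matrix machinery cannot be applied to $H(k)$ directly. The device of passing to the super-stochastic majorant $\Psi(k)=|H(k)|$ and the auxiliary digraph $\check{\mathscr{G}}(k)$ is precisely what makes Lemmas~\ref{lemma:5.3}--\ref{lemma:5.4} available. Within the theorem itself, the only step beyond bookkeeping is the finiteness argument that upgrades the per-block estimates $\|\Psi^*(\theta)\|_{\infty}<1$ to a \emph{uniform} $\kappa<1$; without this uniformity the infinite product need not converge to $\mathbf{0}$, and the extraction of $e_\vartheta(k)\to\mathbf{0}$ from $\xi(k)\to\mathbf{0}$ (which relies on $\alpha,\beta>0$) is the other place where the specific structure of the augmented vector must be used carefully.
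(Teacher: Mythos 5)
Your proposal is correct and follows essentially the same route as the paper: reduce to the error system~(\ref{sys:5.6}), majorize $H(k)$ by the super-stochastic $\Psi(k)$ via Lemma~\ref{lemma:5.2}, apply Lemmas~\ref{lemma:5.3} and~\ref{lemma:5.4} to get $\|\Psi^*(\theta)\|_{\infty}<1$ on blocks of length $Ph$, and conclude by telescoping the product, with necessity argued exactly as in Theorem~\ref{theorem:3.3}. Your explicit finiteness argument upgrading the per-block bounds to a uniform $\kappa<1$ is a small but genuine refinement: the paper writes $\lim_{\theta\to\infty}\prod_{s=0}^{\theta}\|\Psi^*(s)\|_{\infty}=0$ without justifying why infinitely many factors each below $1$ force the product to vanish, and your observation that the $\Psi(k)$ range over a finite set supplies the missing uniformity.
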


\begin{proof}\emph{\underline{Sufficiency}}:
We first divide the time axis into a series of time intervals $[\theta Ph\tau,\theta Ph\tau+Ph\tau)$, $\theta\in\mathbb{N}$. Let
\begin{eqnarray}\label{sys:5.27}
\begin{aligned}
H^*(\theta)=\prod_{s=\theta Ph}^{\theta Ph+Ph-1}H(s), \ \Psi^*(\theta)=\prod_{s=\theta Ph}^{\theta Ph+Ph-1}\Psi(s).
\end{aligned}
\end{eqnarray}
Then by Lemma~\ref{lemma:5.4}, we have $\big\|\Psi^*(\theta)\|_{\infty}<1$, $\theta\in\mathbb{N}$ under conditions (\ref{sys:5.7}) and (\ref{sys:5.8}). Equivalently, from system (\ref{sys:5.6}) it can be obtained that
\begin{eqnarray}\label{sys:5.28}
\begin{aligned}
\lim_{k\rightarrow\infty}\|\xi(k+1)\|_{\infty}
&=\lim_{k\rightarrow\infty}\left\|\left[\prod_{s=0}^kH(s)\otimes I_{p}\right]\xi(0)\right\|_{\infty}\\
&=\lim_{\theta\rightarrow\infty}\left\|\left[\prod_{s=0}^\theta H^*(s)\otimes I_{p}\right]\xi(0)\right\|_{\infty}\\
&\leq\lim_{\theta\rightarrow\infty}\prod_{s=0}^\theta\big\|H^*(s)\big\|_{\infty}\big\|\xi(0)\big\|_{\infty}\\
&\leq\lim_{\theta\rightarrow\infty}\prod_{s=0}^\theta\big\|\Psi^*(s)\big\|_{\infty}\big\|\xi(0)\big\|_{\infty}=0,
\end{aligned}
\end{eqnarray}
which implies that $\lim_{k\rightarrow\infty}e_{x}(k)=0$ and $\lim_{k\rightarrow\infty}e_{\vartheta}(k)=0$. Therefore, we have $\lim_{k\rightarrow \infty}x_{i}(k)=x_{0}(k)$, $\lim_{k\rightarrow \infty}\vartheta_{i}(k)=\vartheta_{0}$, $i=1,2,\ldots,m$ and
$\lim_{k\rightarrow\infty}x_{i}(k)=-x_{0}(k)$, $\lim_{k\rightarrow \infty}\vartheta_{i}(k)=-\vartheta_{0}$, $i=m+1,m+2,\ldots,n$. By Definition~\ref{definition:5.1}, the asynchronous bipartite tracking with an active leader can be achieved.

\emph{\underline{Necessity}}: The proof of the necessity is similar to the proof of Theorem~\ref{theorem:3.3}, so it is omitted here.
\end{proof}

\begin{remark}\label{remark:5.6}
In Theorem~\ref{theorem:5.5}, the choice of the parameter $\beta$ is related closely to the edge weighs of communication topology, the maximum distance from the leader to the followers, the upper bound of the length of asynchronous update intervals and the update step-size $\tau$. For any given communication topology and the designed asynchronous communication rules, $d_M$, $b_m$, $P$, $\varphi$ and $h$ are determined. Then we can always find a proper constant $\alpha$ in the model transformations and the update step-size $\tau$ to ensure that the parameter $\beta$ is existent under conditions (\ref{sys:5.7}) and (\ref{sys:5.8}).
\end{remark}

\begin{figure}[t]
  \centering
    \includegraphics[width=1.7in]{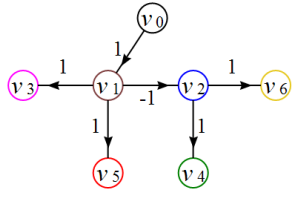}
  \caption{Communication topology $\tilde{\mathscr{G}}_2$.}\label{fig4}
 \end{figure}

 \begin{figure}[t]
  \centering
    \subfigure[position trajectories of the agents]{
    \includegraphics[width=2.4in]{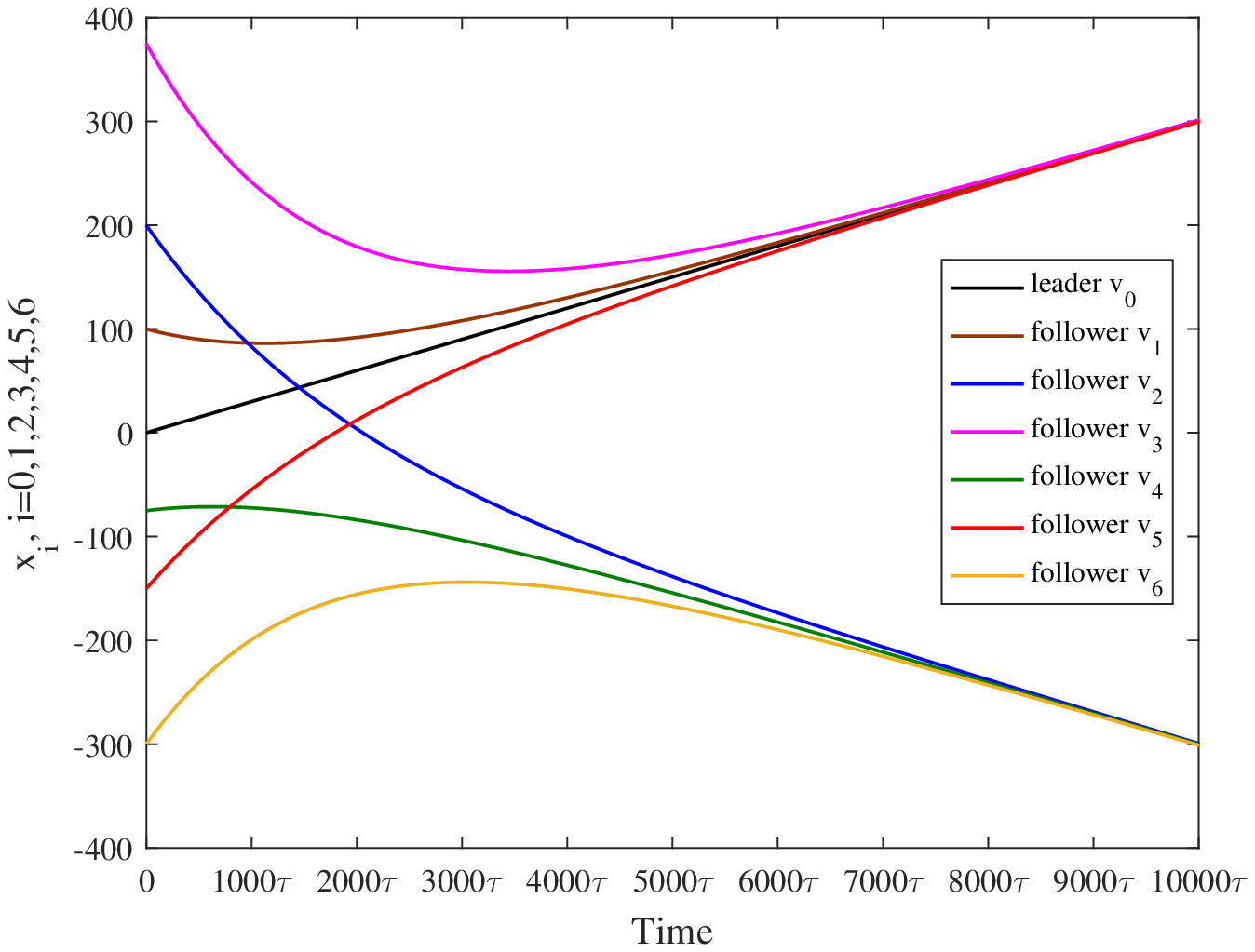}\label{fig5a}}
      \subfigure[velocity trajectories of the agents]{
    \includegraphics[width=2.4in]{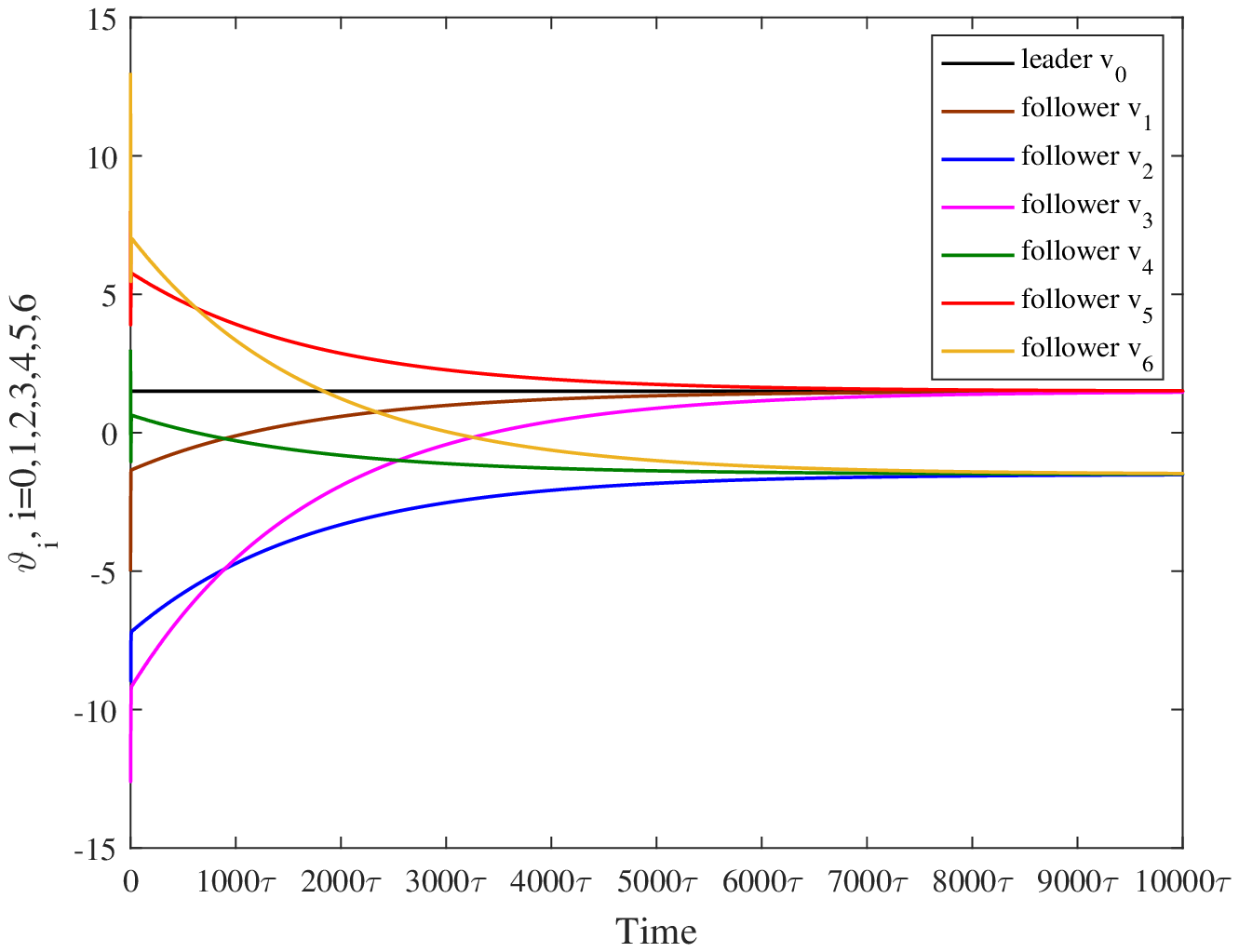}\label{fig5b}}
  \caption{State trajectories of all agents in Example~\ref{example:5.7}. }\label{fig5}
\end{figure}

\begin{example}\label{example:5.7}
Consider the asynchronous bipartite tracking control for second-order MASs with an active leader. The information exchange among one leader (labelled by $v_0$) and six followers (labelled $v_1,v_2,\ldots,v_6$) is described by a signed digraph $\tilde{\mathscr{G}}_2$ depicted in Fig.~\ref{fig4}, which results in $d_M=1$, $b_m=1$ and $P=3$. It is clear that the digraph $\tilde{\mathscr{G}}_2$ satisfies the conditions \textbf{C1} and \textbf{C2}. The update time instants of all followers meet the condition $t_{k+1}^i-t_{k}^i\leq h=2$ for any $k\in\mathbb{N}$ and $i=1,2,\ldots,n$. Choose $\tau=0.02$, $\beta=35$ and $\alpha=1.01$ that satisfy the inequalities (\ref{sys:5.7}) and (\ref{sys:5.8}). Finally, the agents' position and velocity trajectories are displayed in Fig.~\ref{fig5a} and Fig.~\ref{fig5b}, respectively, which indicate that the asynchronous bipartite tracking for second-order MASs with an active leader is realized.
\end{example}

\section{Bipartite tracking of asynchronous general linear MASs}\label{section:6}

In this section, the bipartite tracking behavior of general linear MASs is analyzed. In the discrete-time setting, the state evolution of agents can be expressed as
\begin{eqnarray}\label{sys:6.1}
\begin{aligned}
&x_{0}(k+1)=Ax_{0}(k),\\
&x_{i}(k+1)=Ax_{i}(k)+Bu_{i}(k), \ i=1,\ldots,n, \\
\end{aligned}
\end{eqnarray}
where $x_{i}(k)=[x^{(1)}_{i}(k),x^{(2)}_{i}(k),\cdots,x^{(p)}_{i}(k)]^T\in\mathbb{R}^{p}$ is the state of agent $v_i$ at time instant $k\tau$, $A\in\mathbb{R}^{p\times p}$ and $B\in\mathbb{R}^{p\times q}$ denote the system matrix and input matrix, respectively. The asynchronous control protocol $u_{i}(k)$ can be designed as
\begin{equation}\label{sys:6.2}
\left\{
\begin{aligned}
u_{i}(k)=&\, K\sum_{v_j\in\mathscr{N}_i(k)}|a_{ij}|\left[\sgn(a_{ij})x_{j}(k)-x_{i}(k)\right]\\
&+K|b_i|\left[\sgn(b_i)x_{0}(k)-x_{i}(k)\right], \ {\rm if} \ k\tau\in\{s^{i}_{k}\tau\};\\
u_{i}(k)=&\, 0, \ {\rm if} \ k\tau\notin\{s^{i}_{k}\tau\},
\end{aligned}
\right.
\end{equation}
where $K$ is the gain matrix.

By substituting (\ref{sys:6.2}) into system (\ref{sys:6.1}), the following error system can be obtained
\begin{eqnarray}\label{sys:6.3}
\begin{aligned}
e(k+1)=\big[I_n\otimes A-(\mathscr{D}+\mathscr{B}-|\mathscr{A}|)\otimes BK\big]e(k).
\end{aligned}
\end{eqnarray}

Before proceeding, the following lemma need to be introduced.

\begin{lemma}[\cite{Shi2019Containment}]\label{lemma:6.1}\
Consider a real matrix $S\in \mathbb{R}^{n\times n}$ in which the spectral radius is expressed as $\rho$. There exists $z\geq0$ so that  $\|S^k\|_{\infty}\leq z k^{n-1}\rho^k$ for every $k\geq n$.
\end{lemma}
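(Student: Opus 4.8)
The plan is to reduce everything to the Jordan canonical form of $S$. Write $S=PJP^{-1}$, where $J$ is block diagonal with Jordan blocks $J_1,\ldots,J_s$ of sizes $m_1,\ldots,m_s$ (so $\sum_{\ell}m_\ell=n$ and each $m_\ell\leq n$), and where $J_\ell$ is associated with an eigenvalue $\lambda_\ell$ satisfying $|\lambda_\ell|\leq\rho$ by the definition of the spectral radius. Then $S^k=PJ^kP^{-1}$, and by submultiplicativity of the infinity norm,
\begin{equation*}
\|S^k\|_\infty\leq\|P\|_\infty\,\|J^k\|_\infty\,\|P^{-1}\|_\infty.
\end{equation*}
Since $J$ is block diagonal, $\|J^k\|_\infty=\max_{\ell}\|J_\ell^k\|_\infty$, so it suffices to bound the power of a single Jordan block and absorb $\|P\|_\infty\|P^{-1}\|_\infty$ into the final constant $z$.

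For each block I would write $J_\ell=\lambda_\ell I_{m_\ell}+N_\ell$, where $N_\ell$ is the nilpotent superdiagonal shift with $N_\ell^{m_\ell}=\mathbf{0}$. Because $\lambda_\ell I$ and $N_\ell$ commute, the binomial theorem gives
\begin{equation*}
J_\ell^k=\sum_{r=0}^{m_\ell-1}\binom{k}{r}\lambda_\ell^{k-r}N_\ell^{r},
\end{equation*}
so the nonzero entries of $J_\ell^k$ are exactly the numbers $\binom{k}{r}\lambda_\ell^{k-r}$ lying on the $r$th superdiagonal, $0\leq r\leq m_\ell-1$. Each such entry is bounded in modulus by $\binom{k}{r}\rho^{k-r}$, and I would then invoke the elementary estimate $\binom{k}{r}\leq k^{r}\leq k^{n-1}$, valid for $0\leq r\leq m_\ell-1\leq n-1$ and $k\geq n\geq1$.

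It remains to control the factor $\rho^{k-r}=\rho^k\rho^{-r}$. If $\rho=0$ then every $\lambda_\ell=0$, so $S$ is nilpotent and $S^k=\mathbf{0}$ for $k\geq n$, whence the claim holds with any $z\geq0$. If $\rho>0$, then for $0\leq r\leq n-1$ the quantity $\rho^{-r}$ is bounded by the constant $\kappa:=\max\{1,\rho^{-(n-1)}\}$, so each entry of $J_\ell^k$ is at most $\kappa\,k^{n-1}\rho^k$ in modulus. Since each row of $J_\ell^k$ has at most $m_\ell\leq n$ nonzero entries, $\|J_\ell^k\|_\infty\leq n\kappa\,k^{n-1}\rho^k$, and therefore
\begin{equation*}
\|S^k\|_\infty\leq\big(\|P\|_\infty\|P^{-1}\|_\infty\,n\,\kappa\big)\,k^{n-1}\rho^k,
\end{equation*}
which is the desired inequality with $z=\|P\|_\infty\|P^{-1}\|_\infty\,n\,\kappa$.

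The only genuinely delicate point is the bookkeeping on the factor $\rho^{k-r}$ against the target $\rho^k$: when $\rho<1$ the terms with $r>0$ are inflated by $\rho^{-r}$, so one must verify that this inflation is a bounded constant over the finitely many admissible exponents $r\in\{0,\ldots,n-1\}$ rather than something growing with $k$. Everything else — the Jordan reduction, the binomial expansion of a single block, and the crude $\binom{k}{r}\leq k^{n-1}$ bound — is routine, and the degenerate nilpotent case $\rho=0$ must be flagged separately, since there the stated bound holds only as the vacuous equality $0=0$.
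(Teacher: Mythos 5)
Your proof is correct. Note that the paper itself offers no proof of this statement: Lemma~\ref{lemma:6.1} is imported verbatim from the cited reference \cite{Shi2019Containment}, so there is no in-paper argument to compare against. Your Jordan-form derivation is the standard and complete one: the reduction $\|S^k\|_{\infty}\leq\|P\|_{\infty}\|J^k\|_{\infty}\|P^{-1}\|_{\infty}$, the binomial expansion of each block, and the bound $\binom{k}{r}\leq k^{n-1}$ are all sound, and you correctly identify and handle the two points where a careless write-up would go wrong — namely that the factor $\rho^{-r}$ must be bounded uniformly over the finitely many exponents $r\in\{0,\ldots,n-1\}$ (giving the constant $\kappa=\max\{1,\rho^{-(n-1)}\}$ rather than something $k$-dependent), and that the case $\rho=0$ must be treated separately, where the hypothesis $k\geq n$ is genuinely needed so that the nilpotent $S$ satisfies $S^k=\mathbf{0}$. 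The only cosmetic caveat is that the Jordan basis $P$ may be complex even though $S$ is real, but submultiplicativity of the infinity norm over complex matrices makes the constant $z=\|P\|_{\infty}\|P^{-1}\|_{\infty}\,n\,\kappa$ perfectly legitimate.
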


In the next Theorem \ref{theorem:6.1}, we examine the stability of error system (\ref{sys:6.3}) in detail and present a sufficient condition.

\begin{theorem}\label{theorem:6.1}
Let matrix $B$ be of full row rank. If the communication topology $\tilde{\mathscr{G}}$ meets the conditions \textbf{C1} and \textbf{C2}, then there is a feedback matrix $K\!=\!\psi^* B^T(BB^T)^{-1}A$ with
\begin{align}\label{sys:6.4}
\psi^*<\frac{1}{d_{M}}
\end{align}
such that the bipartite tracking for system (\ref{sys:6.1}) can be realized, where the system matrix $A$ is allowed to be strictly unstable and its spectral radius satisfies:
\begin{eqnarray}\label{sys:6.5}
\begin{aligned}
\rho(A)<\frac{1}{\sqrt[Ph]{1-(1-\zeta)\kappa^{Ph-1}}},
\end{aligned}
\end{eqnarray}
in which
\begin{align*}
\zeta&=\max\big\{\Lambda_i\big[Q\big]\mid \Lambda_i\big[Q\big]<1,\, i=1,\ldots,n\big\},\\
\kappa&=\min\big\{\big[Q\big]_{ij} \mid \big[Q\big]_{ij}>0, i,j=1,\ldots,n, \ i\neq j\big\}.
\end{align*}
where $Q=I_n\!-\!\psi^*\mathscr{D}\!-\!\psi^*\mathscr{B}\!+\!\psi^*|\mathscr{A}|$.
\end{theorem}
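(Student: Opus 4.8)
The plan is to use the full row rank of $B$ to collapse the error dynamics into a Kronecker product of a contracting sub-stochastic factor with powers of $A$, and then to balance the geometric contraction of the sub-stochastic part against the (at worst) polynomial-times-geometric growth of $\|A^k\|_\infty$. First I would note that since $B$ has full row rank, $BB^T$ is invertible, so the prescribed gain gives $BK=\psi^*BB^T(BB^T)^{-1}A=\psi^*A$. Substituting this into the error system (\ref{sys:6.3}) with the asynchronous adjacency data collapses it to
\begin{equation*}
e(k+1)=\big[Q(k)\otimes A\big]e(k),\qquad Q(k)=I_n-\psi^*\mathscr{D}(k)-\psi^*\mathscr{B}(k)+\psi^*|\mathscr{A}(k)|.
\end{equation*}
Under (\ref{sys:6.4}) each $Q(k)$ is exactly of the type of the matrix $M(k)$ in Section~\ref{section:3}: it is sub-stochastic with strictly positive diagonal, its $i$th row sum equals $1-\psi^*|a_{i0}(k)|$ (so $\Lambda_i[Q(k)]<1$ precisely when follower $v_i$ detects the leader at time $k\tau$), and each nonzero off-diagonal entry is $\psi^*|a_{ij}|\ge\kappa$. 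Thus the constant matrix $Q$ supplies the worst-case constants $\zeta$ and $\kappa$ for the whole family $\{Q(k)\}$.

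Next I would telescope the Kronecker product. Because $(Q_1\otimes A)(Q_2\otimes A)=(Q_1Q_2)\otimes A^2$, iterating gives $e(k)=\big[(\prod_{s=0}^{k-1}Q(s))\otimes A^k\big]e(0)$, and the infinity norm factorises through $\|X\otimes Y\|_\infty=\|X\|_\infty\|Y\|_\infty$, so that
\begin{equation*}
\|e(k)\|_\infty\le\Big\|\prod_{s=0}^{k-1}Q(s)\Big\|_\infty\,\|A^k\|_\infty\,\|e(0)\|_\infty.
\end{equation*}
For the first factor I would run the path-reaching argument of Lemma~\ref{lemma:3.2} on each window of length $Ph$, but keep track of the explicit decrement as in the proof of Theorem~\ref{theorem:2.2}: following a directed path from the leader to the most distant follower, each of the at most $Ph-1$ productive multiplications inserts a factor of at least $\kappa$, while the initial row with sum below $1$ contributes the slack $1-\zeta$. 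Since $\kappa\in(0,1)$, the conservative (window-uniform) estimate is
\begin{equation*}
\Big\|\prod_{s=k}^{k+Ph-1}Q(s)\Big\|_\infty\le 1-(1-\zeta)\kappa^{\,Ph-1}=:\mu<1.
\end{equation*}
For the second factor I would invoke Lemma~\ref{lemma:6.1} with $A\in\mathbb{R}^{p\times p}$, which yields $\|A^k\|_\infty\le z\,k^{p-1}\rho(A)^k$ for every $k\ge p$, even when $A$ is strictly unstable.

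Finally I would combine the two bounds over $m$ consecutive windows, i.e.\ at $k=mPh$. Submultiplicativity of $\|\cdot\|_\infty$ gives $\|\prod_{s=0}^{mPh-1}Q(s)\|_\infty\le\mu^m$, hence
\begin{equation*}
\|e(mPh)\|_\infty\le z\,(mPh)^{p-1}\big(\mu\,\rho(A)^{Ph}\big)^m\|e(0)\|_\infty.
\end{equation*}
Condition (\ref{sys:6.5}) is exactly $\rho(A)^{Ph}<1/\mu$, i.e.\ $\mu\,\rho(A)^{Ph}<1$, so the geometric factor dominates the polynomial $(mPh)^{p-1}$ and $\|e(mPh)\|_\infty\to0$; the finitely many intermediate instants are absorbed by one further block of at most $Ph$ factors $Q(s)\otimes A$, each of bounded norm. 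Therefore $e(k)\to0$, which by the structurally balanced partition means the followers in $\mathscr{V}_1$ track $x_0(k)$ and those in $\mathscr{V}_2$ track $-x_0(k)$, i.e.\ bipartite tracking is realised. I expect the main obstacle to be the quantitative window estimate: one must verify that the asynchronous reaching argument of Lemma~\ref{lemma:3.2}, sharpened with the decrement bookkeeping of Theorem~\ref{theorem:2.2}, lands exactly on $\mu=1-(1-\zeta)\kappa^{Ph-1}$ so that it matches the threshold in (\ref{sys:6.5}); care is also needed to confirm that the polynomial factor $k^{p-1}$ arising from a possibly non-diagonalisable, unstable $A$ is genuinely dominated by the strict geometric decay.
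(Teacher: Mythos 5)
Your proposal is correct and follows essentially the same route as the paper's proof: reduce to $e(k+1)=[Q(k)\otimes A]e(k)$ via $BK=\psi^*A$, bound each $Ph$-window product of the sub-stochastic factors by $1-(1-\zeta)\kappa^{Ph-1}$ using the Lemma~\ref{lemma:3.2} reaching argument with the Theorem~\ref{theorem:2.2} decrement bookkeeping, control $\|A^k\|_\infty$ by Lemma~\ref{lemma:6.1}, and conclude from $\big(1-(1-\zeta)\kappa^{Ph-1}\big)\rho^{Ph}(A)<1$ that the geometric decay dominates the polynomial factor. Your explicit treatment of the intermediate instants between window boundaries is a minor tidiness improvement over the paper's presentation, not a different method.
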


\begin{proof}
Since matrix $B$ is of full row rank and $K=\psi^* B^T(BB^T)^{-1}A$, system (\ref{sys:6.3}) can be equivalently reexpressed as
\begin{eqnarray}\label{sys:6.6}
\begin{aligned}
e(k+1)=\big[Q(k)\otimes A\big]e(k),
\end{aligned}
\end{eqnarray}
where $Q(k)=I_n\!-\!\psi^*\mathscr{D}(k)\!-\!\psi^*\mathscr{B}(k)\!+\!\psi^*|\mathscr{A}(k)|$ is a sub-stochastic matrix with positive diagonal elements. Apparently, the bipartite tracking for system (\ref{sys:6.1}) can be implemented if error system (\ref{sys:6.6}) has asymptotic stability. Divide the time axis into ordered intervals $[\theta Ph\tau,\theta Ph\tau+Ph\tau)$, $\theta\in\mathbb{N}$. Let $Q^*(\theta)=\prod_{s=\theta Ph}^{\theta Ph+Ph-1}Q(s)$. Similar to the analysis in Lemma~\ref{lemma:3.2}, we can deduce that $\|Q^*(\theta)\|_{\infty}\leq1-(1-\zeta)\kappa^{Ph-1}<1$.
Thus, we have
\begin{eqnarray}\label{sys:6.7}
\begin{aligned}
\lim_{k\rightarrow\infty}\!\left\|\prod_{s=0}^{k}Q(s)\right\|_{\infty}&\leq\lim_{\theta\rightarrow\infty}\prod_{s=0}^{\theta}
\left\|Q^*(s)\right\|_{\infty}\\
&\leq\lim_{\theta\rightarrow\infty}\big[1-(1-\zeta)\kappa^{Ph-1}\big]^\theta.
\end{aligned}
\end{eqnarray}

Now we proceed to prove the following equality firstly for guaranteeing the stability of error system (\ref{sys:6.6}):
\begin{eqnarray}\label{sys:6.8}
\begin{aligned}
\lim_{k\rightarrow\infty}\left\|\prod_{s=0}^{k}Q(s)\otimes A^k\right\|_{\infty}=0.
\end{aligned}
\end{eqnarray}
By Lemma~\ref{lemma:6.1}, one knows that there exists $z>0$ such that $\|A^k\|_{\infty}\leq z k^{p-1}\rho^{k}(A)$. Thus,
\begin{eqnarray}\label{sys:6.9}
\begin{aligned}
\lim_{k\rightarrow\infty}\left\|A^k\right\|_{\infty}=\lim_{\theta\rightarrow\infty}\left\|(A^{Ph})^\theta\right\|_{\infty}=\lim_{\theta\rightarrow\infty}z (Ph\theta)^{ph-1}\rho^{Ph\theta}(A).
\end{aligned}
\end{eqnarray}
Based on (\ref{sys:6.7}) and (\ref{sys:6.9}), to prove (\ref{sys:6.8}), it suffices to prove that
\begin{eqnarray}\label{sys:6.10}
\begin{aligned}
\lim_{c\rightarrow\infty}z (Ph\theta)^{ph-1}\big[\big(1-(1-\zeta)\kappa^{Ph-1}\big)\rho^{Ph}(A)\big]^\theta=0.
\end{aligned}
\end{eqnarray}
By (\ref{sys:6.4}), we get
\begin{align*}
0<\big(1-(1-\zeta)\kappa^{Ph-1}\big)\rho^{Ph}(A)<1.
\end{align*}
According to the fact that the exponential decay dominates the polynomial increase, the expression (\ref{sys:6.8}) holds. This implies that the bipartite tracking can be achieved.
\end{proof}

\begin{remark}\label{remark:6.3}
In the general linear MASs considered in this section, the input matrix $B$ needs to be row full rank in order to ensure that the whole system can still converge when the system matrix $A$ is strictly unstable. This means that the general linear MASs in this section does not contain the second-order MASs in Section \ref{section:4} and Section \ref{section:5} as special cases, in which $B=[0,1]^T$ is not row full rank. Therefore, second-order MASs and general linear MASs need to be discussed respectively for the integrity of study content.
\end{remark}

\begin{figure}[t]
  \centering
    \subfigure[position trajectories of the agents]{
    \includegraphics[width=2.4in]{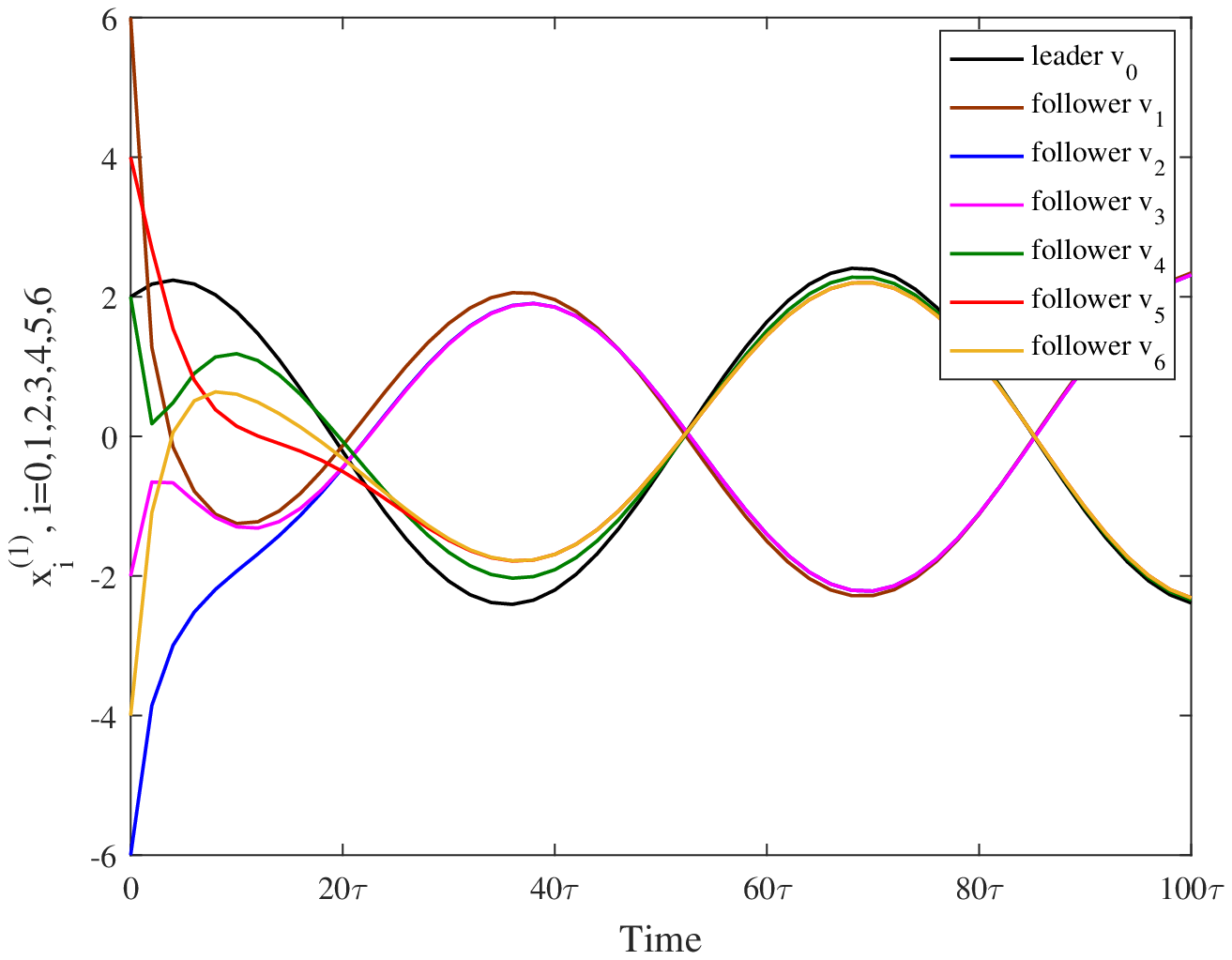}\label{fig6a}}
      \subfigure[velocity trajectories of the agents]{
    \includegraphics[width=2.4in]{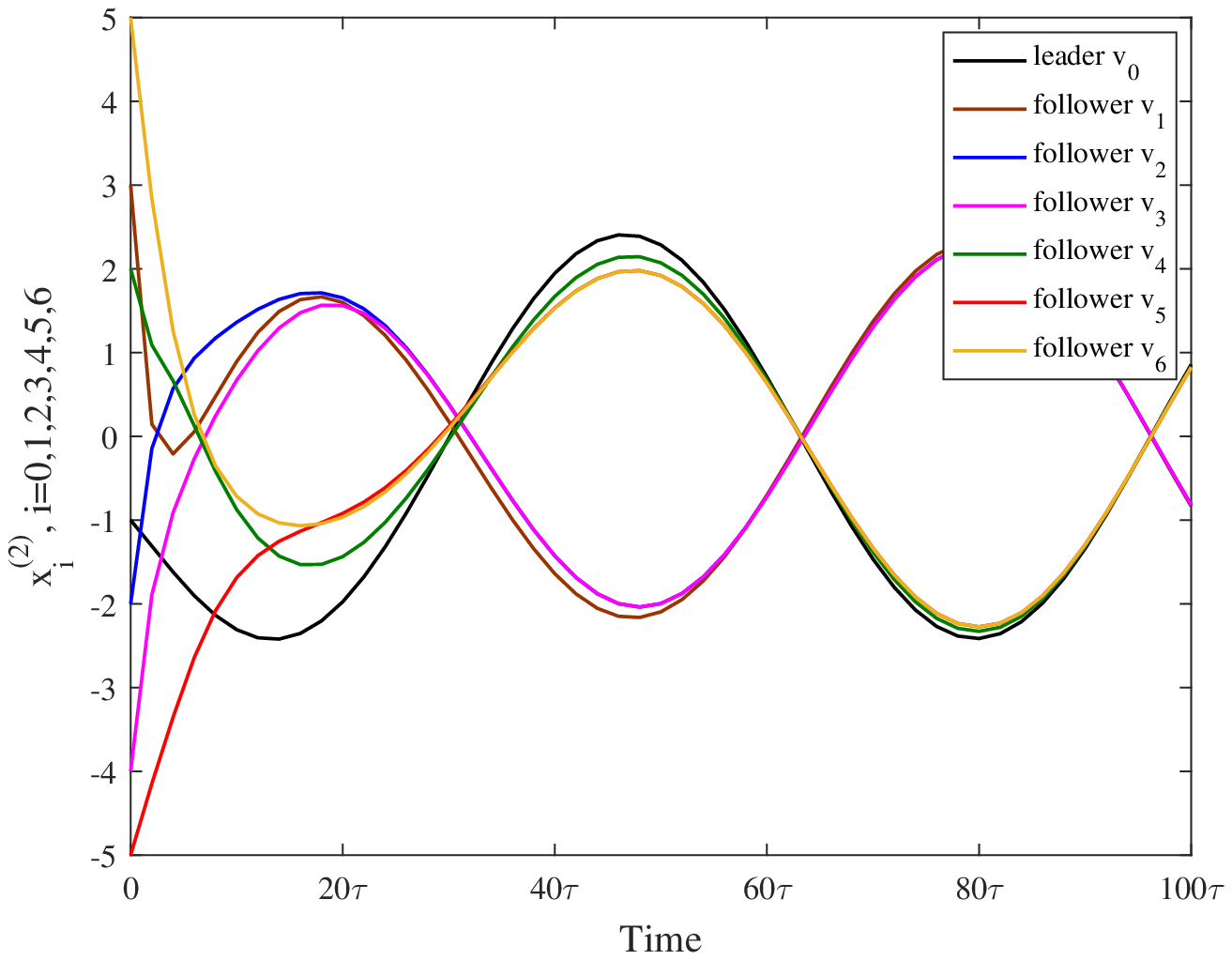}\label{fig6b}}
    \subfigure[acceleration trajectories of the agents]{
    \includegraphics[width=2.4in]{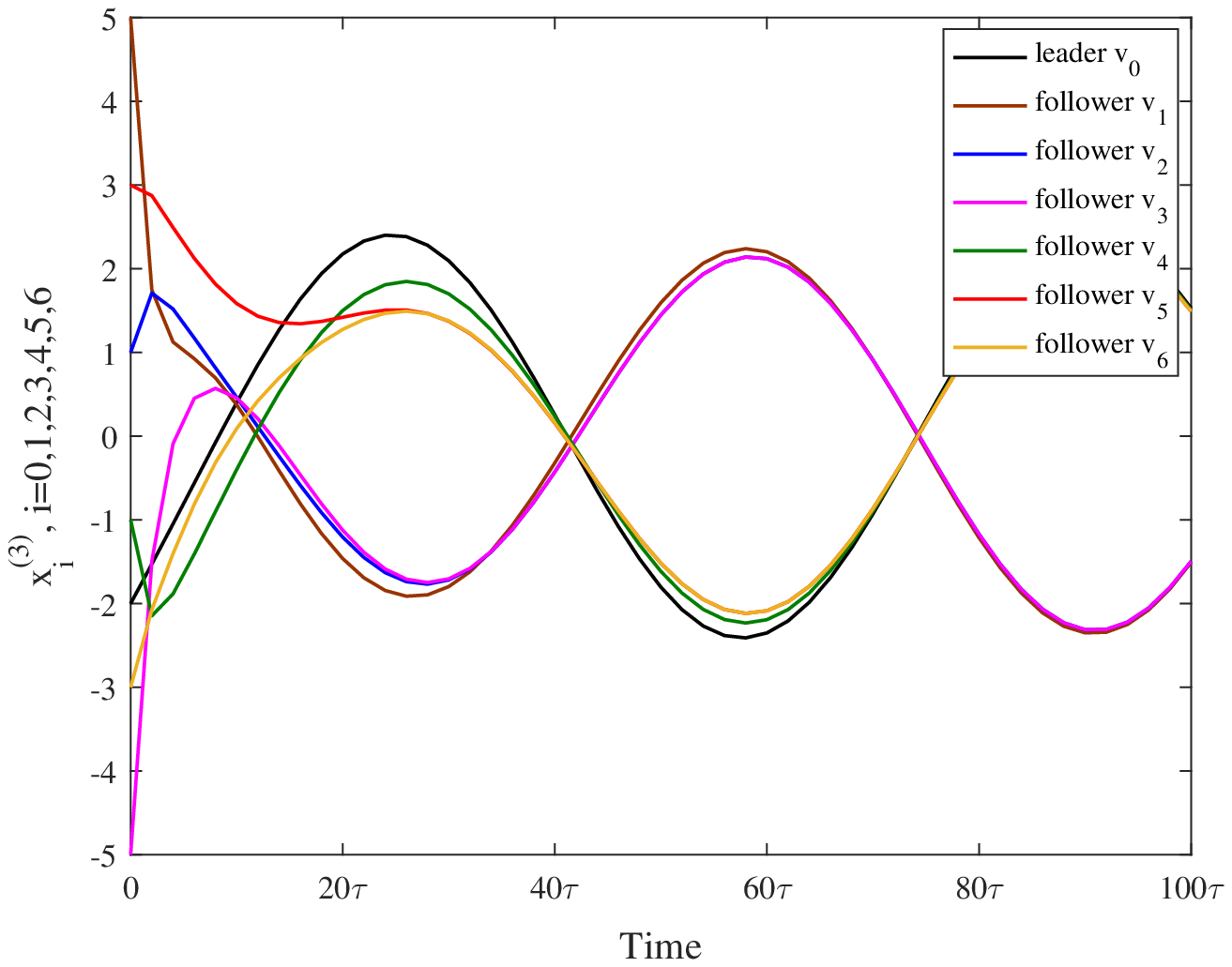}\label{fig6c}}
  \caption{State trajectories of all agents in Example \ref{example:6.4}. }\label{fig6}
\end{figure}

\begin{example}\label{example:6.4}
Consider a third-order multi-agent network on the communication topology $\tilde{\mathscr{G}}_2$ shown in Fig.~\ref{fig4}. The longest distance from the leader to the followers is $P=2$. The asynchronous communication is presented in Fig.~\ref{fig1b}, where $h=3$. Let $\psi^*=\frac{1}{3}$ that satisfies (\ref{sys:6.5}), then we have $\zeta=\frac{2}{3}$ and $\kappa=\frac{1}{3}$. The input matrix $B$ and the system matrix $A$ are expressed respectively by
\begin{align*}
B\!=\!\!\left(
    \begin{array}{@{\hspace{0em}}c@{\hspace{0.6em}}c@{\hspace{0.6em}}c@{\hspace{0.6em}}c@{\hspace{0em}}}
      1 & 0 & 0 & 1 \\
      0 & 1 & 0 & 0 \\
      0 & 0 & 1 & 0 \\
    \end{array}
    \right)\!,\
A\!=\!\!\left(
    \begin{array}{@{\hspace{0em}}r@{\hspace{0.6em}}r@{\hspace{0.6em}}r@{\hspace{0em}}}
       0.8730  &       0 &  -0.2182\\
   -0.2182 &   0.8730   &      0\\
         0  & -0.2182  &  0.8730\\
    \end{array}
    \right)\!.
\end{align*}
Obviously, $B$ is of full row rank and $A$ is strictly unstable. Besides, the spectral radius of $A$ satisfies
\begin{align*}
\rho(A)=1.0001<1.0002=\frac{1}{\sqrt[Ph]{1-(1-\zeta)\kappa^{Ph-1}}}.
\end{align*}
Finally, the agents' state trajectories are exhibited in Fig.~\ref{fig6}, from which we can observe that the bipartite tracking is realized.
\end{example}

\begin{remark}\label{remark:6.5}
In the existing literature, researchers have mainly focused on the investigations of bipartite consensus \cite{Meng2016Interval,Zhao2017Adaptive,Guo2018Bipartite,Zhang2016Bipartite,Qin2016On} and bipartite tracking \cite{Wen2018Bipartite,Ma2018Necessary,Liu2018Robust,Yu2018Prescribed} under the synchronous setting. In general, the synchronous setting is a very special case of the asynchronous setting. Based on this consideration, this paper studies the asynchronous bipartite tracking issue of discrete-time MASs for the first time, and establishes some necessary and sufficient conditions. Therefore, the results obtained in this paper are very important extensions to the study of bipartite consensus and bipartite tracking in MASs.
\end{remark}

\begin{remark}\label{remark:6.6}
In this paper, based on some suitable model transformations, the asynchronous bipartite tracking issues are converted into the convergence issues of ISubSM and ISupSM. According to the definitions of sub-stochastic matrix and super-stochastic matrix, it is known that the existing methods used to study the product of row-stochastic matrices cannot be utilized to handle these two convergence issues. Due to this reason, some new techniques, including the properties of matrix product and the compositions of directed edge sets, are developed to study these two convergence issues in detail in this paper.
\end{remark}

\section*{Part II}

In Section \ref{section:7}-Section \ref{section:13} below, the product properties of ISubSM and ISupSM will be utilized to establish the algebraic conditions of realizing bipartite tracing or bipartite bounded tracking in different practical environments, such as time delays, switching topologies, random networks, lossy links, matrix disturbance, external noise disturbance, and a leader of unmeasurable velocity and acceleration.

\section{Bipartite tracking of second-order MASs with time delays}\label{section:7}

In general, unmodelled delay effects in the feedback mechanism may destabilize the original stable network system. In multi-agent networks, due to the mobility of network agents, the congestion of communication channels and the physical characteristics of the information transmitted by the media, time delays may naturally occur. Based on this consideration, we discuss the bipartite tracking problem of MASs with time delays through the product convergence of ISubSM in this paper.

Consider second-order MASs (\ref{sys:4.1}) and (\ref{sys:4.2}). The distributed control input $u_{i}(k)$ with time delays is given by
\begin{eqnarray}\label{sys:7.1}
\begin{aligned}
u_{i}(k)=-\gamma \vartheta_{i}(k)&+\sum\limits_{v_j\in\mathscr{N}_i}|a_{ij}|\big[\sgn(a_{ij})x_{j}\big(k-\sigma_{ij}(k)\big)-x_{i}(k)\big]\\
&+|a_{i0}|\big[\sgn(a_{i0})x_{0}\big(k-\sigma_{i0}(k)\big)-x_{i}(k)\big],
\end{aligned}
\end{eqnarray}
where $\gamma>0$ is a gain parameter, $\sigma_{ij}(k)\leq\sigma_{max}$ is the delay when information is transmitted from agent $v_j$ to agent $v_i$ at time instant $k\tau$, and $\sigma_{max}\in\mathbb{N}$ is the upper bound of time delays.

Using protocol (\ref{sys:7.1}), systems (\ref{sys:4.1}) and (\ref{sys:4.2}) can be written as the following error form:
\begin{eqnarray}\label{sys:7.2}
\begin{aligned}
y(k+1)=\big[D^*\otimes I_{p}\big]y(k)+\sum_{s=0}^{\sigma_{max}}D_s(k)y(k-s),
\end{aligned}
\end{eqnarray}
where $y(k)$ is defined in (\ref{sys:4.5}), and
\begin{eqnarray*}
   & D^*=\left(\begin{array}{@{\hspace{0.1em}}cc@{\hspace{0.1em}}}
                   I_{n}\!-\!\frac{\gamma\tau}{2}I_{n} & \frac{\gamma\tau}{2}I_{n} \\
                    \frac{\gamma\tau}{2}I_{n}\!-\!\frac{2\tau}{\gamma}\big(\mathscr{D}(k)\!+\!\mathscr{B}(k)\big) & I_{n}\!-\!\frac{\gamma\tau}{2}I_{n} \\
                  \end{array}
\right),\\
&D_s(k)=\left(\begin{array}{@{\hspace{0.1em}}cc@{\hspace{0.1em}}}
                   \mathbf{0} & \mathbf{0} \\
                    \frac{2\tau}{\gamma}|\mathscr{A}_s(k)| & \mathbf{0} \\
                  \end{array}
\right), \ s=0,1,\ldots,\sigma_{max},
\end{eqnarray*}
in which $\mathscr{A}_s(k)$, $s=0,1,\ldots,\sigma_{max}$ are nonnegative matrices satisfying: 1) if $(v_j,v_i)\in\tilde{\mathscr{E}}$ and $\sigma_{ij}(k)=s'\in\{0,1,\ldots,\sigma_{max}\}$, then $[\mathscr{A}_s'(k)]_{ij}=[\mathscr{A}]_{ij}$ and $[\mathscr{A}_s(k)]_{ij}=0$, $s\neq s'$; 2) if $(v_j,v_i)\notin\tilde{\mathscr{E}}$, then $[\mathscr{A}_s(k)]_{ij}=0$, $h=0,1,\ldots,\sigma_{max}$. According to the descriptions of the nonnegative matrices $\mathscr{A}_s(k)$, $s=0,1,\ldots,\sigma_{max}$, the following result holds
\begin{equation*}
\begin{aligned}
\sum_{s=0}^{\sigma_{max}}\mathscr{A}_s(k)=\mathscr{A}.
\end{aligned}
\end{equation*}
In order to analyze the stability of error system (\ref{sys:7.2}), we need further model transformation. Denote
\begin{eqnarray*}
\begin{aligned}
Y(k)=\big[Y^T(k),Y^T(k-1),\ldots,Y^T(k-\sigma_{max})\big]^T.
\end{aligned}
\end{eqnarray*}
Then (\ref{sys:7.2}) can be expressed as
\begin{equation}\label{sys:7.3}
\begin{aligned}
Y(k+1)=E(k)Y(k),
\end{aligned}
\end{equation}
where
\begin{equation*}
\begin{aligned}
E(k)=\left(
       \begin{array}{ccccc}
         D^*+D_0(k) & D_1(k) & \ldots & D_{\sigma_{max}-1}(k) & D_{\sigma_{max}}(k) \\
         I_{2n} & \mathbf{0} & \ldots & \mathbf{0} & \mathbf{0} \\
         \mathbf{0} & I_{2n} & \ldots & \mathbf{0} & \mathbf{0} \\
         \vdots & \vdots & \ddots & \vdots & \vdots \\
         \mathbf{0} & \mathbf{0} & \ldots & \mathbf{0} & \mathbf{0} \\
         \mathbf{0} & \mathbf{0} & \ldots & I_{2n} & \mathbf{0} \\
       \end{array}
     \right).
\end{aligned}
\end{equation*}

If the parameter $\gamma$ satisfies condition (\ref{sys:4.7}), then $E(k)$, $k\in\mathbb{N}$ are sub-stochastic matrices in which $[E(k)]_{ii}>0$, $i=1,2,\ldots,2n$ and the row sums satisfy
\begin{eqnarray*}
\left\{
\begin{aligned}
&\Lambda_{i}\big[E(k)\big]=1, \\
&\Lambda_{i+sn}\big[E(k)\big]=1, s=2,3,\ldots,2\sigma_{max}+1,\\
&\Lambda_{i+n}\big[E(k)\big]<1 \ {\rm if} \ (v_0,v_i)\in\tilde{\mathscr{E}},\\
&\Lambda_{i+n}\big[E(k)\big]=1,  \ {\rm if} \ (v_0,v_i)\notin\tilde{\mathscr{E}},
\end{aligned}
\right.
\end{eqnarray*}
where $i\in\{1,2,\ldots,n\}$. Thus, the bipartite tracking issue is equivalently transformed into the product convergence issue of ISubSM. In order to analyze the convergence of ISubSM, we construct the following matrix
\begin{eqnarray}\label{sys:7.4}
\hat{\mathscr{A}}(k)=\left(
        \begin{array}{cc}
          1 & \mathbf{0} \\
          z(k) & E(k) \\
        \end{array}
      \right),
\end{eqnarray}
for each matrix $E(k)$, where $\sigma^*=\sigma_{max}+1$ and $z(k)=[z_1(k),z_2(k),\ldots,z_{2\sigma^*n}(k)]^T$ in which $z_i(k)=1-\Lambda_i[E(k)]$. Obviously, $\hat{\mathscr{A}}(k)$ is a row stochastic matrix. Using $\hat{\mathscr{A}}(k)$ as the adjacent matrix, we construct a corresponding digraph $\hat{\mathscr{G}}(k)=\big(\hat{ \mathscr{V}},\hat{\mathscr{E}}(k),\hat{\mathscr{A}}(k)\big)$, in which $\hat{\mathscr{V}}=\{0,1,2,\cdots,2\sigma^*n\}$ whose elements are the indexes of the rows of $E(k)$ except the first one which is labeled as 0. Some results concerning the digraphs $\tilde{\mathscr{G}}$ and $\hat{\mathscr{G}}(k)$ are presented for later use.
\begin{enumerate}
\item[1)] $(0,u+n)\in\hat{\mathscr{E}}(k)$ if $(v_{0},v_{u})\in\tilde{\mathscr{E}}$;
\item[2)] $(u+n,u)\in\hat{\mathscr{E}}(k)$, $(u,u+n)\in\hat{\mathscr{E}}(k)$, $u=1,2,\ldots,n$;
\item[3)] $(u+(2s-2)n,u+2sn)\in\hat{\mathscr{E}}(k)$, $(u+(2s-1)n,u+(2s+1)n)\in\hat{\mathscr{E}}(k)$ for any $u=1,2,\ldots,n$ and $s=1,2,\ldots,\sigma^*-1$;
\item[4)] The nodes $0,1,\cdots,2n$ have self-loops in $\hat{\mathscr{G}}(k)$.
\end{enumerate}

Based on the construction of digraph $\hat{\mathscr{G}}(k)$, the following conclusion can be obtained.

\begin{lemma}\label{lemma:7.1}
Suppose that the parameter $\gamma$ satisfies condition (\ref{sys:4.7}). If the communication topology satisfies the conditions \textbf{C1} and \textbf{C2}, then for any $k\in\mathbb{N}$, the composition $\hat{\mathscr{E}}(k)\circ\hat{\mathscr{E}}(k+1)\circ\cdots\circ\hat{\mathscr{E}}(k+2P\sigma^*-1)$ associated with the set $\{1,2,\ldots,2\sigma^*n\}$ is rooted at the vertex $0$.
\end{lemma}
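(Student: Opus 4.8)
The plan is to treat this as the time-delay counterpart of Lemma~\ref{lemma:5.3}: for each target vertex $j\in\{1,2,\ldots,2\sigma^*n\}$ I will exhibit a single directed walk of length exactly $2P\sigma^*$ that starts at the leader vertex $0$ and whose $p$th edge lies in $\hat{\mathscr{E}}(k+p-1)$, which is precisely the requirement that $(0,j)\in\hat{\mathscr{E}}(k)\circ\cdots\circ\hat{\mathscr{E}}(k+2P\sigma^*-1)$. As in Lemma~\ref{lemma:5.3}, the skeleton of each walk is supplied by \textbf{C1} and \textbf{C2}: for the follower $v_{\delta_z}$ to which $j$ belongs there is a directed path $W_z=v_0\rightarrow v_{\delta_1}\rightarrow\cdots\rightarrow v_{\delta_z}$ in $\tilde{\mathscr{G}}$ with $z\leq P$. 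I will lift this path, edge by edge, into the augmented delay digraph $\hat{\mathscr{G}}(\cdot)$ using the four structural facts 1)--4) listed above the lemma, and then route to the correct delayed block at the end.

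The first edge is handled by fact 1): since $(v_0,v_{\delta_1})\in\tilde{\mathscr{E}}$ we have $(0,\delta_1+n)\in\hat{\mathscr{E}}(k')$ at every time $k'$, so one step takes the walk from $0$ to the block-$0$ velocity node $\delta_1+n$. The work is in lifting each follower hop $v_{\delta_y}\rightarrow v_{\delta_{y+1}}$. The edge carrying $v_{\delta_y}$'s information into $v_{\delta_{y+1}}$ appears, at time $k'$, as an edge from the block-$\sigma_{\delta_{y+1}\delta_y}(k')$ position node of $\delta_y$ to the block-$0$ velocity node $\delta_{y+1}+n$, because the term $\frac{2\tau}{\gamma}|\mathscr{A}_s(k')|$ in $D_s(k')$ reads the delayed copy $y(k'-s)$ with $s=\sigma_{\delta_{y+1}\delta_y}(k')$. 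Thus from $\delta_y+n$ I first step to the block-$0$ position node $\delta_y$ (fact 2)), then drive the walk down the delay line $\delta_y\rightarrow\delta_y+2n\rightarrow\cdots$ using the shift edges of fact 3), and finally cross into $\delta_{y+1}+n$ by the follower edge. The delicate point is timing: the shift edges advance the block index by exactly one per time step, so to cross at a chosen time $k'$ the walk must sit at the block-$s'$ position node of $\delta_y$ at that instant, where $s'=\sigma_{\delta_{y+1}\delta_y}(k')$. I resolve this by first parking the walk at the block-$0$ position node $\delta_y$ (which carries a self-loop by fact 4)), waiting there until time $k'-s'$, and only then launching the $s'$ shift steps so that block $s'$ is reached exactly at $k'$. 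Since $s'\leq\sigma_{max}=\sigma^*-1$, one verifies that each hop can be completed within $\sigma^*+1\leq 2\sigma^*$ time steps, independently of the actual, time-varying delay values.

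After at most $1+(z-1)(\sigma^*+1)$ steps the walk reaches the block-$0$ velocity node $\delta_z+n$. If the target $j$ is a block-$0$ velocity node this is already reached; a block-$0$ position target needs one extra fact-2) step, and a target living in a delayed block $s\geq1$ is reached by a final run of $s\leq\sigma^*-1$ shift edges of fact 3). Because the delayed-block nodes carry no self-loops, the length of the walk cannot be adjusted at the target; instead I pad to the exact total $2P\sigma^*$ earlier, by inserting self-loop steps at the block-$0$ or leader nodes (all of which have self-loops by fact 4)) before launching the final shift run, so that it terminates precisely at step $2P\sigma^*$. A direct count gives total length at most $P(\sigma^*+1)\le 2P\sigma^*$, leaving slack $P(\sigma^*-1)\ge0$ for this padding, and the bookkeeping for the self-loop padding is exactly the mechanism formalized in Theorem~\ref{theorem:2.11}.

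The principal obstacle, and the reason the bound is $2P\sigma^*$ rather than the delay-free $P$ of Lemma~\ref{lemma:5.3}, is the coordination of the walk with the time-varying delays: at the instant a follower hop is executed the augmented graph only offers the edge that reads from block $\sigma_{\delta_{y+1}\delta_y}(k')$, and this value changes with $k'$ and is not under our control. The key idea that makes the argument go through is that the self-loops on the block-$0$ nodes let us defer each crossing to a freely chosen time $k'$ and then back-solve for the instant $k'-s'$ at which the shift run must begin; this decouples the choice of crossing time from the unknown delay and guarantees one valid block-$s'$ arrival per hop. Verifying that these local constructions splice into a single walk of the prescribed common length $2P\sigma^*$ for every target $j$ is the remaining routine but careful part.
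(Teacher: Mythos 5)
Your proposal is correct and follows essentially the same route as the paper's proof: lift the path guaranteed by \textbf{C1}--\textbf{C2} into $\hat{\mathscr{G}}(\cdot)$ using facts 1)--4), park at the self-looped block-$0$ nodes to back-solve the launch time of each shift run so that the block-$\sigma_{\delta_{y+1}\delta_y}(k')$ node is reached exactly at the chosen crossing instant $k'$, and finish with a terminal shift run (padded earlier by self-loops) to hit the delayed-block targets. The only difference is bookkeeping — you budget $\sigma^*+1$ steps per hop against the paper's $2\sigma^*$ — which does not change the argument.
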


\begin{proof}
By \textbf{C1} and \textbf{C2}, there exists a path $v_0\rightarrow v_{u_{1}}\rightarrow v_{u_{2}}\rightarrow\cdots\rightarrow v_{u_{z}}$ in $\tilde{\mathscr{G}}$ for each vertex $v_{u_{z}}$. First of all, it can be derived that $(0,u_1+n)\in\hat{\mathscr{E}}(k)$ because $(v_0,v_{u_{1}})\in\tilde{\mathscr{E}}$, which together with the fact that $(u_1+n,u_1)\in\hat{\mathscr{E}}(k+\sigma^*-1)$, guarantees that
\begin{equation}\label{sys:7.5}
\begin{aligned}
(0,u_1)\in\hat{\mathscr{E}}(k)\circ\cdots\circ\hat{\mathscr{E}}(k+\sigma^*-1),
\end{aligned}
\end{equation}
where $(0,0)\in\hat{\mathscr{E}}(k),\ldots,(0,0)\in\hat{\mathscr{E}}(k+\sigma^*-3),(0,u_1)\in\hat{\mathscr{E}}(k+\sigma^*-2),
(u_1+n,u_1)\in\hat{\mathscr{E}}(k+\sigma^*-1)$.

For convenience, the time delay of edge $(v_{u_1},v_{u_2})$ at time $(k+2\sigma^*-1)\tau$ is denoted by $\tilde{\sigma}$, that is,  $\sigma_{u_2u_1}(k+2\sigma^*-1)=\tilde{\sigma}$. Then we can deduce the following result
\begin{equation}\label{sys:7.6}
\begin{aligned}
(u_1,u_2)\in\hat{\mathscr{E}}(k+\sigma^*)\circ\cdots\circ\hat{\mathscr{E}}(k+3\sigma^*-1),
\end{aligned}
\end{equation}
where $(u_1,u_1)\in\hat{\mathscr{E}}(k\!+\!\sigma^*),\ldots,(u_1,u_1)\in\hat{\mathscr{E}}\big(k\!+\!2\sigma^*\!-\!2\!-\!\tilde{\sigma}\big), (u_1,u_1+2n)\in\hat{\mathscr{E}}\big(k\!+\!2\sigma^*\!-\!1\!-\!\tilde{\sigma}\big),
(u_1+2n,u_1+4n)\in\hat{\mathscr{E}}\big(k\!+\!2\sigma^*\!-\!\tilde{\sigma}\big),\ldots,
(u_1+2(\tilde{\sigma}-1)n,u_1+2\tilde{\sigma}n)
\in\hat{\mathscr{E}}\big(k\!+\!2\sigma^*\!-\!2\big),(u_1+2\tilde{\sigma}n,u_2+n)\in\hat{\mathscr{E}}\big(k\!+\!2\sigma^*\!-\!1\big),
(u_2+n,u_2)\in\hat{\mathscr{E}}\big(k\!+\!2\sigma^*\big),(u_2,u_2)\in\hat{\mathscr{E}}\big(k\!+\!2\sigma^*\big),\ldots,
(u_2,u_2)\in\hat{\mathscr{E}}\big(k\!+\!3\sigma^*-1\big)$.

Using the method of analyzing edge $(v_{u_1},v_{u_2})$ to analyze the edges $(v_{u_2},v_{u_3}),\ldots$, $(v_{u_{z-1}},v_{u_z})$, we can get
\begin{equation}\label{sys:7.7}
\begin{aligned}
&(u_2,u_3)\in\hat{\mathscr{E}}(k+3\sigma^*)\circ\cdots\circ\hat{\mathscr{E}}(k+5\sigma^*-1),\\
&(u_3,u_4)\in\hat{\mathscr{E}}(k+5\sigma^*)\circ\cdots\circ\hat{\mathscr{E}}(k+7\sigma^*-1),\\
& \ \ \ \ \vdots\\
&(u_{z-1},u_z)\in\hat{\mathscr{E}}(k+(2z-3)\sigma^*)\circ\cdots\circ\hat{\mathscr{E}}(k+(2z-1)\sigma^*-1).
\end{aligned}
\end{equation}
Combining (\ref{sys:7.5}), (\ref{sys:7.6}) and (\ref{sys:7.7}), we have
\begin{equation}\label{sys:7.8}
\begin{aligned}
(0,u_z)\in\hat{\mathscr{E}}(k)\circ\cdots\circ\hat{\mathscr{E}}(k+(2z-1)\sigma^*-1).
\end{aligned}
\end{equation}

Since $P\geq z$ and the fact that vertex $u_1$ has self-loop, where $P$ is the longest distance from the leader to the followers and $z$ is the distance from the leader to follower $v_{u_z}$, we can further derive that
\begin{equation}\label{sys:7.9}
\begin{aligned}
(0,u_z)\in\hat{\mathscr{E}}(k)\circ\cdots\circ\hat{\mathscr{E}}(k+(2P-1)\sigma^*-1).
\end{aligned}
\end{equation}
And since $(u_z,u_z+n)\in\hat{\mathscr{E}}(k+(2P-1)\sigma^*)$, we have
\begin{equation}\label{sys:7.10}
\begin{aligned}
(0,u_z)\in\hat{\mathscr{E}}(k)\circ\cdots\circ\hat{\mathscr{E}}(k+(2P-1)\sigma^*),\\
(0,u_z+n)\in\hat{\mathscr{E}}(k)\circ\cdots\circ\hat{\mathscr{E}}(k+(2P-1)\sigma^*),
\end{aligned}
\end{equation}
which together with the facts $(u_z+(2s-2)n,u_z+2sn)\in\hat{\mathscr{E}}(k)$ and $(u_z+(2s-1)n,u_z+(2s+1)n)\in\hat{\mathscr{E}}(k)$ for any $s=1,2,\ldots,\sigma^*-1$ and $k\in\mathbb{N}$, ensures that
\begin{equation}\label{sys:7.11}
\begin{aligned}
&(0,u_z+2sn)\in\hat{\mathscr{E}}(k)\circ\cdots\circ\hat{\mathscr{E}}(k+2P\sigma^*-1),\\
&(0,u_z+(2s+1)n)\in\hat{\mathscr{E}}(k)\circ\cdots\circ\hat{\mathscr{E}}(k+2P\sigma^*-1),
\end{aligned}
\end{equation}
where $s=0,1,\ldots,\sigma^*-1$. This also implies that the composition $\hat{\mathscr{E}}(k)\circ\hat{\mathscr{E}}(k+1)\circ\cdots\circ\hat{\mathscr{E}}(k+2P\sigma^*-1)$ associated with the set $\{1,2,\ldots,2\sigma^*n\}$ is rooted at the vertex $0$.
\end{proof}

Based on the result of Lemma \ref{lemma:7.1}, we now present a necessary and sufficient condition for bipartite tracking of MASs with time delays.

\begin{theorem}\label{theorem:7.2}
Suppose that gain parameter $\gamma$ meets condition (\ref{sys:4.7}). The bipartite tracking for systems (\ref{sys:4.1}) and (\ref{sys:4.2}) with distributed protocol (\ref{sys:7.1}) can be realized if and only if the communication topology satisfies \textbf{C1} and  \textbf{C2}.
\end{theorem}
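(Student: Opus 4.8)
The plan is to follow the sufficiency/necessity template already established in Theorems~\ref{theorem:3.3}, \ref{theorem:4.3} and \ref{theorem:5.5}, with the genuinely new work concentrated in the sufficiency direction. First I would observe that bipartite tracking is equivalent to the asymptotic stability of the augmented delay system (\ref{sys:7.3}), i.e. to showing $\lim_{k\to\infty}\big\|\prod_{s=0}^{k}E(s)\big\|_\infty=0$; once $Y(k)\to0$ we have in particular $y(k)\to0$, and the three limits in Definition~\ref{definition:4.1} are read off exactly as in the proof of Theorem~\ref{theorem:4.3}. Under (\ref{sys:4.7}) the $E(k)$ are sub-stochastic with positive diagonals, so the augmented matrices $\hat{\mathscr{A}}(k)$ in (\ref{sys:7.4}) are genuinely row-stochastic, with vertex $0$ absorbing (its row being $[1,\mathbf{0}]$) and with lower-right block equal to $E(k)$.

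The key step is to convert the purely graph-theoretic conclusion of Lemma~\ref{lemma:7.1} into the analytic estimate $\big\|\prod_{s=k}^{k+2P\sigma^*-1}E(s)\big\|_\infty<1$. I would note that the block product
\[
\prod_{s=k}^{k+2P\sigma^*-1}\hat{\mathscr{A}}(s)=\begin{pmatrix}1 & \mathbf{0}\\ * & \prod_{s=k}^{k+2P\sigma^*-1}E(s)\end{pmatrix}
\]
has its lower-right block equal to the corresponding product of the $E(s)$ and remains row-stochastic. Because the nonzero pattern of a product of (nonnegative) adjacency matrices is governed by the composition of their edge sets, the statement that $\hat{\mathscr{E}}(k)\circ\cdots\circ\hat{\mathscr{E}}(k+2P\sigma^*-1)$ associated with $\{1,\ldots,2\sigma^*n\}$ is rooted at $0$ translates precisely into $\big[\prod_{s=k}^{k+2P\sigma^*-1}\hat{\mathscr{A}}(s)\big]_{i,0}>0$ for every row $i\in\{1,\ldots,2\sigma^*n\}$. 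Row-stochasticity then forces $\Lambda_i\big[\prod_{s=k}^{k+2P\sigma^*-1}E(s)\big]=1-\big[\prod\hat{\mathscr{A}}\big]_{i,0}<1$ for every such $i$, which is exactly the desired bound.

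With this window bound in hand the conclusion is routine: partition the time axis into the consecutive windows $[2\theta P\sigma^*,\,2\theta P\sigma^*+2P\sigma^*)$, $\theta\in\mathbb{N}$, set $E^*(\theta)=\prod_{s=2\theta P\sigma^*}^{2\theta P\sigma^*+2P\sigma^*-1}E(s)$ so that $\|E^*(\theta)\|_\infty<1$ by Lemma~\ref{lemma:7.1}, and estimate
\[
\Big\|\prod_{s=0}^{k}E(s)\Big\|_\infty\le\prod_{\theta}\|E^*(\theta)\|_\infty\longrightarrow 0,
\]
mirroring (\ref{sys:3.14}) and (\ref{sys:4.15}); this yields $Y(k)\to0$ and hence, via Definition~\ref{definition:4.1}, bipartite tracking. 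For the necessity direction I would argue exactly as in Theorem~\ref{theorem:3.3}: failure of \textbf{C1} lets cooperation and competition balance so that some follower settles at neither $x_0$ nor $-x_0$, while failure of \textbf{C2} leaves a follower permanently decoupled from the leader; these cases are identical in spirit to that proof and can be cited rather than repeated.

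I expect the crux to be the second paragraph, namely making airtight the passage from the combinatorial rooted-composition property to the strict column positivity of the product. Two points need care. First, one must justify that the nonzero pattern of $\prod\hat{\mathscr{A}}(s)$ is faithfully captured by the edge-set composition; here the nonnegativity of every $\hat{\mathscr{A}}(s)$ is essential, since it rules out cancellation and lets a single composition path force the corresponding product entry to be positive. Second, one must check that the delay augmentation in $E(k)$ (the shifted identity blocks and the extra vertices $2n+1,\ldots,2\sigma^*n$) is consistent with the self-loop bookkeeping in properties 1)--4); this is precisely the role played by Theorem~\ref{theorem:2.11} inside Lemma~\ref{lemma:7.1}, which carries a path realized on one sub-window unbroken across the whole window of length $2P\sigma^*$. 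Once these are settled, the absorbing structure of vertex $0$ together with row-stochasticity delivers the norm bound mechanically.
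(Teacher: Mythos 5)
Your proposal is correct, and its overall architecture coincides with the paper's: reduce bipartite tracking to $\lim_{k\to\infty}\|\prod_{s=0}^{k}E(s)\|_\infty=0$, obtain a uniform window bound $\|E^*(\theta)\|_\infty<1$ from Lemma~\ref{lemma:7.1}, telescope over consecutive windows of length $2P\sigma^*$, and dispose of necessity by the same three-case argument as Theorem~\ref{theorem:3.3}. The one place where you genuinely diverge is the step that converts the rooted-composition conclusion of Lemma~\ref{lemma:7.1} into the analytic bound. The paper does this by unrolling the composition path $(0,r_1),(r_1,r_2),\ldots,(r_{2P\sigma^*-1},r_{2P\sigma^*})$ into an explicit chain of row-sum inequalities, as in (\ref{sys:7.12})--(\ref{sys:7.15}): first $\Lambda_{r_1}[E(\cdot)]<1$, then each positive entry $[E(\cdot)]_{r_{t+1}r_t}>0$ propagates the strict inequality to the next partial product, in the style of Theorem~\ref{theorem:2.4}. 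You instead exploit the bordered row-stochastic matrix $\hat{\mathscr{A}}(k)$ of (\ref{sys:7.4}): the product $\prod_s\hat{\mathscr{A}}(s)$ stays row-stochastic and block lower-triangular with lower-right block $\prod_s E(s)$, the rooted composition is equivalent (by nonnegativity, which you correctly flag as the reason no cancellation can occur) to strict positivity of every entry in the zeroth column of that product, and row-stochasticity then gives the identity $\Lambda_i\big[\prod_s E(s)\big]=1-\big[\prod_s\hat{\mathscr{A}}(s)\big]_{i,0}<1$ in one line. The two mechanisms prove the same inequality; yours is more compact and makes the role of the augmentation (\ref{sys:7.4}) transparent, while the paper's chain argument is self-contained at the level of the $E(s)$ and quantifies, implicitly, how far below $1$ the row sums fall. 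Both are valid; no gap.
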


\begin{proof}
\emph{\underline{Sufficiency}}:\
Divide the time axis into a series of intervals $[2\theta P\sigma^*\tau, 2(\theta+1)P\sigma^*h\tau)$, $\theta\in\mathbb{N}$. Denote
\[E^*(\theta)=\prod_{s=2\theta P\sigma^*}^{2(\theta+1)P\sigma^*h-1}E(s).\]
Now we analyze $\|E^*(\theta)\|_{\infty}$. By Lemma \ref{lemma:7.1}, for any $u\in\{1,2,\ldots,2\sigma^*n\}$, there holds $(0,r)\in\hat{\mathscr{E}}(2\theta P\sigma^*)\circ\cdots\circ\hat{\mathscr{E}}(2(\theta+1)P\sigma^*h-1)$. Without loss of generality, assume that $(0,r_1)\in\hat{\mathscr{E}}(2\theta P\sigma^*),(r_1,r_2)\in\hat{\mathscr{E}}(2\theta P\sigma^*+1),\ldots,(r_{2P\sigma^*-1},r_{2P\sigma^*})\in\hat{\mathscr{E}}(2(\theta+1)P\sigma^*h-1)$, where $r_1,r_2,\ldots,r_{2P\sigma^*}\in\{1,2,\ldots,2\sigma^*n\}$ and $r_{2P\sigma^*}=u$. Firstly, we can get by $(0,u_1)\in\hat{\mathscr{E}}(2\theta P\sigma^*)$ that
\begin{equation}\label{sys:7.12}
\begin{aligned}
\Lambda_{r_1}\big[E(2\theta P\sigma^*)\big]<1,
\end{aligned}
\end{equation}
which combines the condition $(r_1,r_2)\in\hat{\mathscr{E}}(2\theta P\sigma^*+1)$, guarantees that
\begin{equation}\label{sys:7.13}
\begin{aligned}
\Lambda_{r_2}\left[\prod_{s=2\theta P\sigma^*}^{2\theta P\sigma^*+1}E(s)\right]&=\sum_{j\neq r_1}\big[E(2\theta P\sigma^*+1)\big]_{r_2j}\Lambda_{j}\big[E(2\theta P\sigma^*)\big]\\
& \ \ \ \ +\big[E(2\theta P\sigma^*+1)\big]_{r_2r_1}\Lambda_{r_1}\big[E(2\theta P\sigma^*)\big]<1.
\end{aligned}
\end{equation}
Furthermore, based on the condition $(r_2,r_3)\in\hat{\mathscr{E}}(2\theta P\sigma^*+1)$, we have
\begin{equation}\label{sys:7.14}
\begin{aligned}
\Lambda_{r_3}\left[\prod_{s=2\theta P\sigma^*}^{2\theta P\sigma^*+2}E(s)\right]
&=\sum_{j\neq r_2}\big[E(2\theta P\sigma^*+2)\big]_{r_3j}\Lambda_{j}\left[\prod_{s=2\theta P\sigma^*}^{2\theta P\sigma^*+1}E(s)\right]\\
& \ \ \ \ +\big[E(2\theta P\sigma^*+1)\big]_{r_3r_2}\Lambda_{r_2}\left[\prod_{s=2\theta P\sigma^*}^{2\theta P\sigma^*+1}E(s)\right]<1.
\end{aligned}
\end{equation}
Continue to analyze the edges $(r_3,r_4),\ldots,(r_{2P\sigma^*-1},r_{2P\sigma^*})$, and we can finally deduce that
\begin{equation}\label{sys:7.15}
\begin{aligned}
\Lambda_{r}\left[\prod_{s=2\theta P\sigma^*}^{2(\theta+1)P\sigma^*h-1}E(s)\right]<1.
\end{aligned}
\end{equation}
This means that $\|E^*(\theta)\|_{\infty}<1$ for any $\theta\in\mathbb{N}$. Equivalently, from error system (\ref{sys:7.3}), it can be obtained that
\begin{equation}\label{sys:7.16}
\begin{aligned}
\lim_{k\rightarrow\infty}\|Y(k)\|_{\infty}&\leq\lim_{k\rightarrow\infty}\Big\|\prod_{s=0}^kE(s)\Big\|_{\infty}\|Y(0)\|_{\infty}\\
&\leq\lim_{\theta\rightarrow\infty}\prod_{s=0}^{\theta}\|E^*(s)\|_{\infty}\|Y(0)\|_{\infty}=0.
\end{aligned}
\end{equation}
That is to say, the bipartite tracking is achieved.

\emph{\underline{Necessity}}: The proof of the necessity is similar to the proof of Theorem~\ref{theorem:3.3}, so it is omitted here.
\end{proof}

\section{Bipartite tracking of Second-order MASs with arbitrary switching topologies}\label{section:8}

In the actual MASs, locally visible neighbors of each agent will likely change over time because the communication links between agents may be unreliable owing to disturbances and communication range limitations. This shifts our attention to the problem of bipartite tracking of MASs with switching topologies.

In the discrete-time setting, the information interactions among the followers at time instant $k\tau$ are described by a structurally balanced signed digraph $\mathscr{G}_k=(\mathscr{V},\mathscr{E}_k)$. Let $\mathscr{A}_k=[a^k_{ij}]$ stand for the weighted adjacency matrix of $\mathscr{G}_k$. Let $a^k_{i0}$ describe the directed information interaction from the leader to follower $v_i$ at time instant $k\tau$. Let $\tilde{\mathscr{G}}_k=(\tilde{\mathscr{V}}_k,\tilde{\mathscr{E}}_k)$ be a digraph being composed of digraph $\mathscr{G}_k$, vertex $v_0$ and the directed edges from $v_0$ to another vertices. In addition, assume that all the modulus of weighting factors have uniform lower and upper bound, i.e. $|a^k_{ij}|,|a^k_{i0}|\in[\underline{\varrho},\overline{\varrho}]$, where $0<\underline{\varrho}<\overline{\varrho}$.

For realizing bipartite tracking, the following conditions based on the communication topologies are required:
\begin{enumerate}
\item [\textbf{C3}.] All signed digraphs $\tilde{\mathscr{G}}_k$, $k\in\mathbb{N}$ are structurally balanced. In each $\tilde{\mathscr{G}}_k$, the vertices are divided into two subsets $\mathscr{V}_1$ and $\mathscr{V}_2$ with intra-subset cooperation and inter-subset competition.
\item [\textbf{C4}.] All discrete times $k\tau$, $k\in\mathbb{N}$ are divided into a sequence of continuous, uniformly bounded and nonempty time intervals $[s_j\tau,s_{j+1}\tau)$, $j\in\mathbb{N}$, starting at $s_0=0$. In the union digraph $\bigcup_{k=s_j}^{s_{j+1}-1}\tilde{\mathscr{G}}_k$ related to each interval, for each follower, there is a directed path from the leader and to that follower.
\end{enumerate}

\subsection{The case with a static leader}\label{section:8.1}

Consider second-order MASs (\ref{sys:4.1}) and (\ref{sys:4.2}). The distributed state feedback controller is given by
\begin{eqnarray}\label{sys:8.1}
\begin{aligned}
u_{i}(k)=-\gamma \vartheta_{i}(k)&+\sum\limits_{v_j\in\mathscr{N}^k_i}|a^k_{ij}|\big[\sgn(a^k_{ij})x_{j}(k)-x_{i}(k)\big]\\
&+|a^k_{i0}|\big[\sgn(a^k_{i0})x_{0}(k)-x_{i}(k)\big],
\end{aligned}
\end{eqnarray}
where $\gamma>0$ is a gain parameter and $\mathscr{N}^k_i$ is the set of the neighbors of agent $v_i$ at time instant $k\tau$.

Substituting controller (\ref{sys:8.1}) into systems (\ref{sys:4.1}) and (\ref{sys:4.2}), we have
\begin{eqnarray}\label{sys:8.2}
\begin{aligned}
y(k+1)=\big[\Phi_k\otimes I_{p}\big]y(k),
\end{aligned}
\end{eqnarray}
where $y(k)$ is defined in (\ref{sys:4.5}) and
\begin{eqnarray*}
    \Phi_k=\left(\begin{array}{@{\hspace{0.1em}}cc@{\hspace{0.1em}}}
                   I_{n}\!-\!\frac{\gamma\tau}{2}I_{n} & \frac{\gamma\tau}{2}I_{n} \\
                    \frac{\gamma\tau}{2}I_{n}\!-\!\frac{2\tau}{\gamma}\big(\mathscr{D}_k\!+\!\mathscr{B}_k\!-\!|\mathscr{A}_k|\big) & I_{n}\!-\!\frac{\gamma\tau}{2}I_{n} \\
                  \end{array}
\right)
\end{eqnarray*}
with
\begin{equation*}
\begin{aligned}
&\mathscr{D}_k=\diag\Big\{\sum_{v_j\in\mathscr{N}^k_1}|a^k_{1j}|,\sum_{v_j\in\mathscr{N}^k_2}|a^k_{2j}|,\ldots,\sum_{v_j\in\mathscr{N}^k_n}|a^k_{nj}|\Big\},\\
&\mathscr{B}_k=\diag\big\{|a^k_{10}|,|a^k_{20}|,\ldots,|a^k_{n0}|\big\}.
\end{aligned}
\end{equation*}
When the parameter $\gamma$ is selected as
\begin{eqnarray}\label{sys:8.3}
2\sqrt{n\overline{\varrho}}\leq\gamma<\frac{2}{\tau},
\end{eqnarray}
$\Phi_k$, $k\in\mathbb{N}$ are sub-stochastic matrices with positive diagonal elements. Thus, the bipartite tracking issue is equivalently transformed into the product convergence issue of ISubSM, which is equivalent to prove $\lim_{k\rightarrow\infty}\prod_{s=0}^k\Phi_s=\mathbf{0}$.

A sufficient condition of realizing second-order bipartite tracking with a static leader and switching topologies is established below.

\begin{theorem}\label{theorem:8.1}
Consider systems (\ref{sys:4.1}) and (\ref{sys:4.2}). Suppose that gain parameter $\gamma$ meets condition (\ref{sys:8.3}). The distributed controller (\ref{sys:8.1}) is said to solve the problem of bipartite tracking if the communication topologies satisfy \textbf{C3} and  \textbf{C4}.
\end{theorem}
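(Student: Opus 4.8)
The plan is to follow the template of Lemma~\ref{lemma:4.2} and Theorem~\ref{theorem:4.3}, but to replace the single–graph path–tracing by a \emph{frontier–growth} argument adapted to the joint connectivity in \textbf{C4}. Since the statement only asserts sufficiency, no necessity part is required. Under \eqref{sys:8.3} the matrices $\Phi_k$ are sub-stochastic with strictly positive diagonal entries, and I would first record three elementary propagation rules dictated by the block structure of $\Phi_k$: (i) if follower $v_i$ detects the leader at time $k$ (i.e.\ $a^k_{i0}\neq0$) then $\Lambda_{i+n}[\Phi_k]<1$; (ii) the constant block $\frac{\gamma\tau}{2}I_n$ gives $[\Phi_k]_{i,i+n}=\frac{\gamma\tau}{2}>0$ for \emph{every} $k$; and (iii) if $(v_j,v_i)\in\mathscr{E}_k$ then $[\Phi_k]_{i+n,j}=\frac{2\tau}{\gamma}|a^k_{ij}|>0$.

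Next I would establish a \emph{persistence} property: if a cumulative product $G$ satisfies $\Lambda_m[G]<1$, then for any sub-stochastic $\Phi_k$ with $[\Phi_k]_{mm}>0$ one has $\Lambda_m[\Phi_k G]=\sum_\ell[\Phi_k]_{m\ell}\Lambda_\ell[G]<\Lambda_m[\Phi_k]\le1$, so once a row index falls below $1$ in the running product it stays below $1$ forever. Combining persistence with (i)--(iii), detection of an already–informed position index $a$ by follower $v_b$ forces $\Lambda_{b+n}<1$ via (iii), and the ever–present coupling (ii) then forces $\Lambda_b<1$ one step later; this velocity–then–position mechanism is exactly what produces the factor $2$ in Lemma~\ref{lemma:4.2}, and each implication is an instance of the computation proving Theorem~\ref{theorem:2.4}.

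The core step is the analogue of Lemma~\ref{lemma:4.2}: over a window of finitely many consecutive \textbf{C4}–intervals the product $\prod\Phi_s$ satisfies $\|\cdot\|_\infty<1$. I would prove it by showing that the reached set $R_j$ (followers whose position row has dropped below $1$ by the end of the $j$th interval) grows strictly whenever $R_{j-1}\neq\mathscr{V}$. Given $R_{j-1}$, persistence makes every node of $R_{j-1}$ already informed at the \emph{start} of interval $j$; by \textbf{C4} the leader reaches every follower in the union digraph of that interval, so for any $v_i\notin R_{j-1}$ a leader–to–$v_i$ path contains a first edge $(v_a,v_b)$ with $v_a\in R_{j-1}\cup\{v_0\}$ and $v_b\notin R_{j-1}$. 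Since $v_a$ is already informed and this edge appears at some instant of interval $j$, rules (iii) and (ii) inform $v_b$, giving $R_j\supsetneq R_{j-1}$. Hence after at most $n$ intervals (plus one extra step for the final position propagation) every row of the cumulative product lies below $1$.

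Finally, the uniform weight bounds $|a^k_{ij}|,|a^k_{i0}|\in[\underline{\varrho},\overline{\varrho}]$ together with the uniform boundedness of the \textbf{C4}–intervals control the window length and all relevant positive entries uniformly, so there is a constant $c<1$ with $\|\prod_{\mathrm{window}}\Phi_s\|_\infty\le c$ independent of the window; partitioning the time axis into such windows $\Phi^*(\theta)$ yields $\lim_{k\to\infty}\|\prod_{s=0}^{k}\Phi_s\|_\infty\le\lim_{\theta\to\infty}c^{\theta}=0$, exactly as in \eqref{sys:4.15}, whence $y(k)\to0$ and bipartite tracking follows from Definition~\ref{definition:4.1}. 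I expect the frontier–growth step to be the main obstacle: unlike the fixed–graph case of Lemma~\ref{lemma:4.2}, the path edges inside one interval may appear in arbitrary temporal order, so single–interval propagation can stall, and the argument survives only because persistence carries informed nodes across interval boundaries, making them available at the start of the next interval regardless of when their outgoing edges fire. Making this interplay between joint connectivity, arbitrary edge ordering, and the two–layer position/velocity coupling precise while keeping the norm bound uniform is the delicate part.
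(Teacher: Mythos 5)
Your overall strategy coincides with the paper's: under (\ref{sys:8.3}) the matrices $\Phi_k$ are sub-stochastic with positive diagonals, a row sum once below $1$ persists under further left-multiplication, joint connectivity (\textbf{C4}) lets the set of ``informed'' rows grow from one \textbf{C4}-interval to the next, and a uniform bound $\|\prod_{\mathrm{window}}\Phi_s\|_\infty<1$ finishes the proof exactly as in (\ref{sys:8.16}). Your frontier-set bookkeeping is in fact cleaner than the paper's induction, which at stage $\theta$ records only the rows of the single vertex $v_{\delta_\theta}$ and then asserts that the next crossing edge emanates from $v_0$ or from $v_{\delta_\theta}$ specifically, whereas it may emanate from any earlier $v_{\delta_r}$; your formulation with $R_{j-1}\cup\{v_0\}$ repairs that imprecision.

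The gap is the claim that the frontier grows in \emph{every single} \textbf{C4}-interval, so that $n$ intervals plus one step suffice. Your rule (iii) feeds the velocity row $b+n$ of a new node from the \emph{position} column of its parent $a$, so it requires $\Lambda_a<1$ in the cumulative product at the instant the crossing edge fires; and the position row of a newly reached $v_b$ only drops one time step \emph{after} its velocity row does. If the crossing edge into $v_b$ fires at the last instant of interval $j$, then $\Lambda_b<1$ is first achieved at the first instant of interval $j+1$, so $v_b\notin R_j$ under your definition; and if the only crossing edge out of $R_{j-1}\cup\{v_b\}$ in interval $j+1$ fires at that very first instant, the product update still sees $\Lambda_b\ge1$ and the propagation stalls for the entire interval (\textbf{C4} does not force the edge to reappear later in the interval). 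Hence $R_j\supsetneq R_{j-1}$ can fail, and the window of $n$ intervals is not enough in the worst case. The paper sidesteps this by allotting \emph{two} consecutive \textbf{C4}-intervals per newly informed follower: the velocity row drops somewhere in the first, and the ever-present coupling $[\Phi_k]_{b,b+n}=\gamma\tau/2>0$ forces the position row down by the end of the second, giving the window of $2n$ intervals in (\ref{sys:8.4}). Your argument is repaired verbatim by the same doubling; only the window length, not the conclusion or the method, changes.
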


\begin{proof}\
In order to get the main conclusion, we first need to obtain the following inequality
\begin{eqnarray}\label{sys:8.4}
\Big\|\prod_{s=s_j}^{s_{j+2n}}\Phi_s\Big\|_{\infty}<1, \ j\in\mathbb{N}.
\end{eqnarray}
Under condition (\ref{sys:8.3}), the matrices $\Phi_k$, $k\in\mathbb{N}$ are sub-stochastic and contain positive diagonal elements. Below we use mathematical induction to prove (\ref{sys:8.4}).

\emph{\textbf{Step I.}} Firstly, consider the time interval $[s_j\tau,s_{j+2}\tau)$. From the conditions \textbf{C3} and \textbf{C4}, it is known that for each follower, there is a directed path from the leader $v_0$ to that follower in the union digraph $\bigcup_{s=s_j}^{s_{j+1}-1}\tilde{\mathscr{G}}_s$. Thus, we can find a vertex $v_{\delta_1}$ such that $(v_0,v_{\delta_1})\in\bigcup_{s=s_j}^{s_{j+1}-1}\tilde{\mathscr{E}}_s$. Without losing generality, we assume that $(v_0,v_{\delta_1})\in\tilde{\mathscr{E}}_{l_j}$, where $s_j\leq l_j<s_{j+1}$, then it can be obtained that
\begin{eqnarray}\label{sys:8.5}
\Lambda_{\delta_1+n}\big[\Phi_{l_j}\big]<1,
\end{eqnarray}
and further
\begin{eqnarray}\label{sys:8.6}
\begin{aligned}
\Lambda_{\delta_1+n}\left[\prod_{s=s_j}^{l_j}\Phi_{s}\right]
&=\sum_{i=1}^{2n}\big[\Phi_{l_j}\big]_{\delta_1+n,i}\Lambda_i\left[\prod_{s=s_j}^{l_j-1}\Phi_{s}\right]\\
&\leq\Lambda_{\delta_1+n}\big[\Phi_{l_j}\big]<1.
\end{aligned}
\end{eqnarray}
According to the result of Theorem \ref{theorem:2.4}, one knows that (\ref{sys:8.6}) and the condition $\big[\Phi_{s_{j+1}-1}\big]_{\delta_1+n,\delta_1+n}>0$ can guarantee
\begin{eqnarray}\label{sys:8.7}
\begin{aligned}
\Lambda_{\delta_1+n}\left[\prod_{s=s_j}^{s_{j+1}-1}\Phi_{s}\right]<1,
\end{aligned}
\end{eqnarray}
which together with the conditions $\big[\Phi_{s_{j+2}-1}\big]_{\delta_1+n,\delta_1+n}>0$ and $\big[\Phi_{s_{j+2}-1}\big]_{\delta_1,\delta_1+n}>0$, ensures
\begin{eqnarray}\label{sys:8.8}
\begin{aligned}
\Lambda_{\delta_1+n}\left[\prod_{s=s_j}^{s_{j+2}-1}\Phi_{s}\right]<1, \ \Lambda_{\delta_1}\left[\prod_{s=s_j}^{s_{j+2}-1}\Phi_{s}\right]<1.
\end{aligned}
\end{eqnarray}

\emph{\textbf{Step II.}} In the interval $[s_{j}\tau,s_{j+2\theta}\tau)$, $\theta=1,2,\ldots,n-1$, we assume that
\begin{eqnarray}\label{sys:8.9}
\begin{aligned}
\Lambda_{\delta_{\theta}+n}\left[\prod_{s=s_{j}}^{s_{j+2\theta}-1}\Phi_{s}\right]<1, \ \Lambda_{\delta_{\theta}}\left[\prod_{s=s_{j}}^{s_{j+2\theta}-1}\Phi_{s}\right]<1.
\end{aligned}
\end{eqnarray}
Consider the interval $[s_{j}\tau,s_{j+2\theta+2}\tau)$. Since there is a directed path from the leader $v_0$ to that follower in the union digraph $\bigcup_{s=s_{j+2\theta}}^{s_{j+2\theta+1}-1}\tilde{\mathscr{G}}_s$, we can always find a vertex $v_{\delta_{\theta+1}}\neq v_{\delta_{\theta}}$ such that $(v_0,v_{\delta_{\theta+1}})\in\bigcup_{s=s_{j+2\theta}}^{s_{j+2\theta+1}-1}\tilde{\mathscr{E}}_s$ or $(v_{\delta_\theta},v_{\delta_{\theta+1}})\in\bigcup_{s=s_{j+2\theta}}^{s_{j+2\theta+1}-1}\tilde{\mathscr{E}}_s$. Without loss of generality, we assume that $(v_0,v_{\delta_{\theta+1}})\in\tilde{\mathscr{E}}_{l_{j+2\theta}}$ or $(v_{\delta_\theta},v_{\delta_{\theta+1}})\in\tilde{\mathscr{E}}_{l_{j+2\theta}}$, where $s_{j+2\theta}\leq l_{j+2\theta}<s_{j+2\theta+1}$. Below we discuss two situations.

1) If $(v_0,v_{\delta_{\theta+1}})\in\tilde{\mathscr{E}}_{l_{j+2\theta}}$, then similar to the analysis for the interval $[s_j\tau,s_{j+2}\tau)$, we can obtain that
\begin{eqnarray}\label{sys:8.10}
\begin{aligned}
\Lambda_{\delta_{\theta+1}+n}\left[\prod_{s=s_{j+2\theta}}^{s_{j+2\theta+1}-1}\Phi_{s}\right]<1, \ \Lambda_{\delta_{\theta+1}}\left[\prod_{s=s_{j+2\theta}}^{s_{j+2\theta+1}-1}\Phi_{s}\right]<1.
\end{aligned}
\end{eqnarray}
Since $[\prod_{s=s_{j}}^{s_{j+2\theta}-1}\Phi_{s}]_{\delta_{\theta+1},\delta_{\theta+1}}>0$ and $[\prod_{s=s_{j}}^{s_{j+2\theta}-1}\Phi_{s}]_{\delta_{\theta+1}+n,\delta_{\theta+1}+n}>0$, we can derive from the result in Theorem \ref{theorem:2.3} that
\begin{eqnarray}\label{sys:8.11}
\begin{aligned}
\Lambda_{\delta_{\theta+1}+n}\left[\prod_{s=s_{j}}^{s_{j+2\theta+1}-1}\Phi_{s}\right]<1, \ \Lambda_{\delta_{\theta+1}}\left[\prod_{s=s_{j}}^{s_{j+2\theta+1}-1}\Phi_{s}\right]<1.
\end{aligned}
\end{eqnarray}
Furthermore, since $[\prod_{s=s_{j+2\theta+1}}^{s_{j+2\theta+2}-1}\Phi_{s}]_{\delta_{\theta+1},\delta_{\theta+1}}>0$ and $[\prod_{s=s_{j+2\theta+1}}^{s_{j+2\theta+2}-1}\Phi_{s}]_{\delta_{\theta+1}+n,\delta_{\theta+1}+n}>0$, we get
\begin{eqnarray}\label{sys:8.12}
\begin{aligned}
\Lambda_{\delta_{\theta+1}+n}\left[\prod_{s=s_{j}}^{s_{j+2\theta+2}-1}\Phi_{s}\right]<1, \ \Lambda_{\delta_{\theta+1}}\left[\prod_{s=s_{j}}^{s_{j+2\theta+2}-1}\Phi_{s}\right]<1.
\end{aligned}
\end{eqnarray}

2) If $(v_{\delta_\theta},v_{\delta_{\theta+1}})\in\tilde{\mathscr{E}}_{l_{j+2\theta}}$, then we have
\begin{eqnarray}\label{sys:8.13}
\begin{aligned}
\big[\Phi_{s_{j+2\theta}}\big]_{\delta_{\theta+1}+n,\delta_\theta}>0, \ \big[\Phi_{s_{j+2\theta}}\big]_{\delta_{\theta+1},\delta_{\theta+1}+n}>0.
\end{aligned}
\end{eqnarray}
Combining (\ref{sys:8.9}) and (\ref{sys:8.13}), we can also derive the inequalities in (\ref{sys:8.12}) according to Theorem \ref{theorem:2.4}.

In addition, by the result in (\ref{sys:8.9}) and the conditions $[\prod_{s=s_{j+2\theta}}^{s_{j+2\theta+1}-1}\Phi_{s}]_{\delta_{\theta},\delta_{\theta}}>0$ and $[\prod_{s=s_{j+2\theta}}^{s_{j+2\theta+1}-1}\Phi_{s}]_{\delta_{\theta}+n,\delta_{\theta}+n}>0$, one has
\begin{eqnarray}\label{sys:8.14}
\begin{aligned}
\Lambda_{\delta_{\theta}+n}\left[\prod_{s=s_{j}}^{s_{j+2\theta+2}-1}\Phi_{s}\right]<1, \ \Lambda_{\delta_{\theta}}\left[\prod_{s=s_{j}}^{s_{j+2\theta+2}-1}\Phi_{s}\right]<1.
\end{aligned}
\end{eqnarray}

Summarizing (\ref{sys:8.12}) and (\ref{sys:8.14}), we can observe that in the interval $[s_{j}\tau,s_{j+2\theta+2}\tau)$, there exist four integers $\delta_{\theta}, \delta_{\theta}+n, \delta_{\theta+1}, \delta_{\theta+1}+n$ such that $\Lambda_{i}\big[\prod_{s=s_{j}}^{s_{j+2\theta+2}-1}\Phi_{s}\big]<1$, $i\in\{\delta_{\theta}, \delta_{\theta}+n, \delta_{\theta+1}, \delta_{\theta+1}+n\}$.

Based on the results obtained by the above mathematical induction, we can get that
\begin{eqnarray}\label{sys:8.15}
\begin{aligned}
\Lambda_{i}\big[\prod_{s=s_{j}}^{s_{j+2n}-1}\Phi_{s}\big]<1, \ i=1,2,\ldots,2n.
\end{aligned}
\end{eqnarray}
This means that the result (\ref{sys:8.4}) holds.

Let $\Phi^*_{\theta}=\prod_{s=s_{2\theta n}}^{s_{2(\theta+1)n}-1}\Phi_s$, $\theta\in\mathbb{N}$. Then it is known from (\ref{sys:8.4}) that $\big\|\Phi^*_{\theta}\big\|_{\infty}<1$. It follows that
\begin{eqnarray}\label{sys:8.16}
\begin{aligned}
\lim_{k\rightarrow\infty}\|y(k+1)\|_{\infty}&=\lim_{k\rightarrow\infty}\left\|\left[\prod_{s=0}^k\Phi_s\otimes I_p\right]y(0)\right\|_{\infty}\\
&=\lim_{\theta\rightarrow\infty}\left\|\left[\prod_{s=0}^\theta \Phi^*_s\otimes I_p\right]y(0)\right\|_{\infty}\\
&\leq\lim_{\theta\rightarrow\infty}\prod_{s=0}^\theta\big\|\Phi^*(\theta)\big\|_{\infty}\|y(0)\|_{\infty}=0.
\end{aligned}
\end{eqnarray}
Consequently, bipartite tracking with a static leader is realized.
\end{proof}

\begin{figure}[t]
  \centering
    \subfigure[$\tilde{\mathscr{G}}_a$]{
    \includegraphics[width=1.5in]{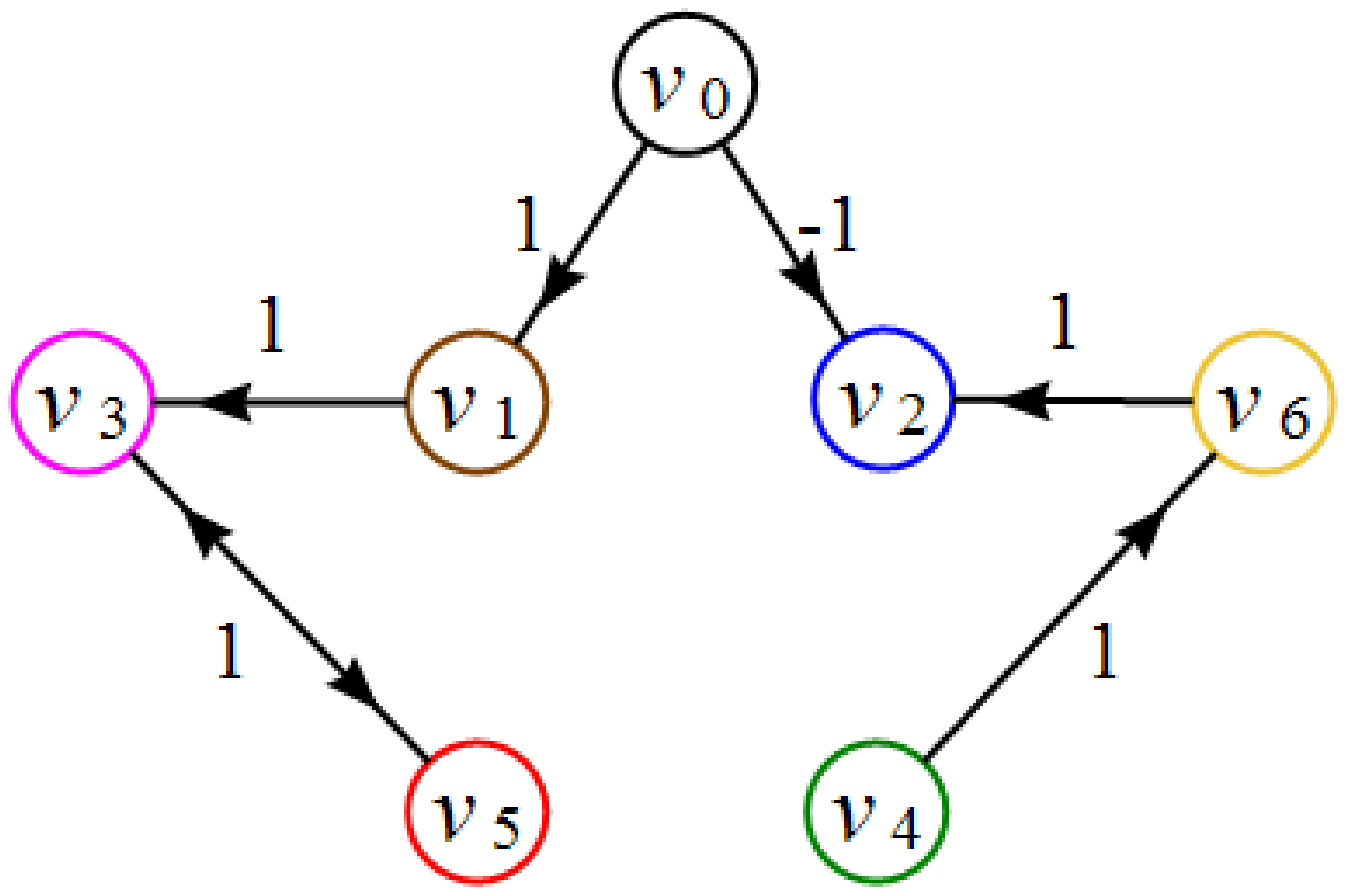}}
      \subfigure[$\tilde{\mathscr{G}}_b$]{
    \includegraphics[width=1.5in]{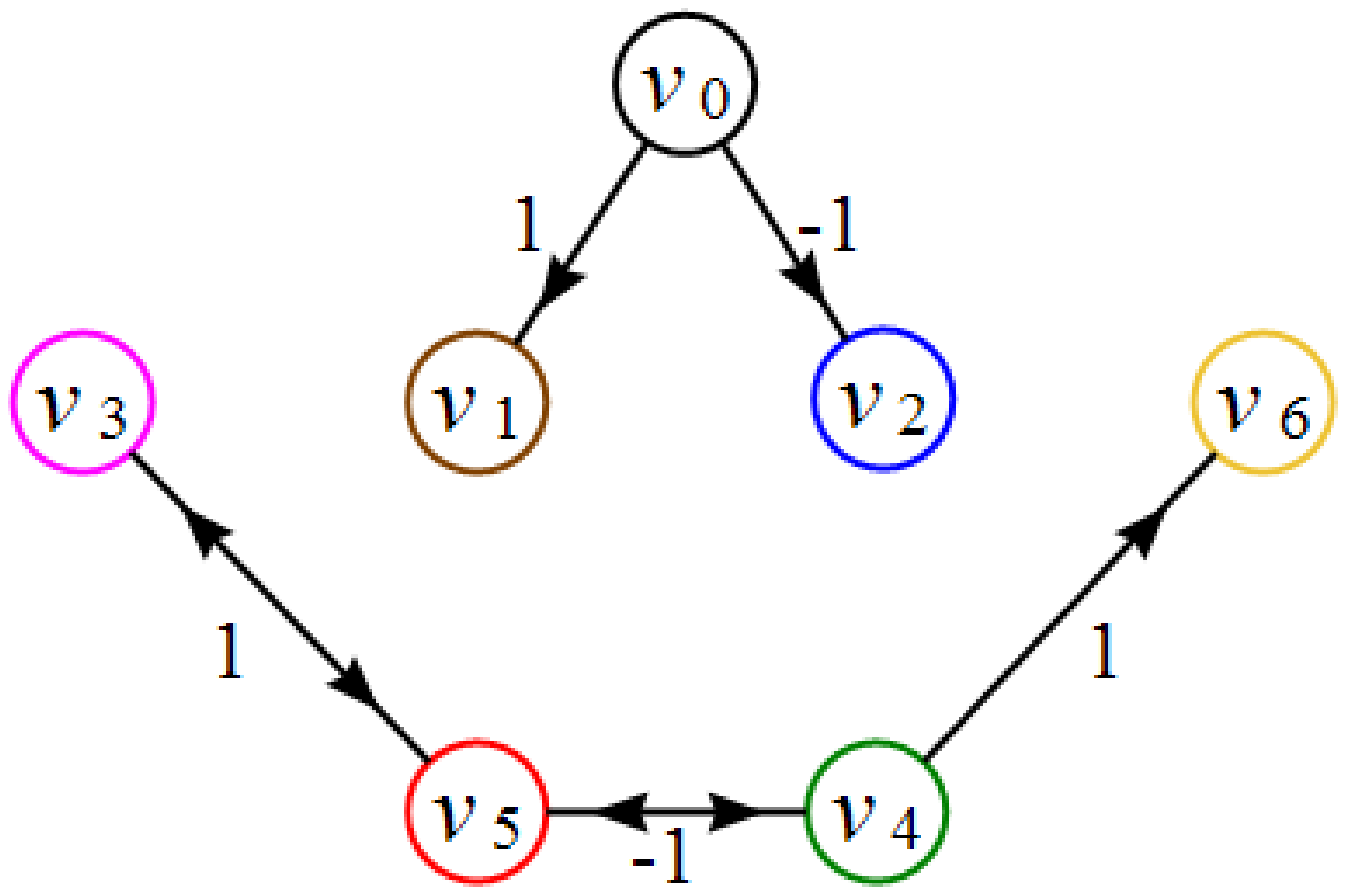}}
    \subfigure[$\tilde{\mathscr{G}}_c$]{
    \includegraphics[width=1.5in]{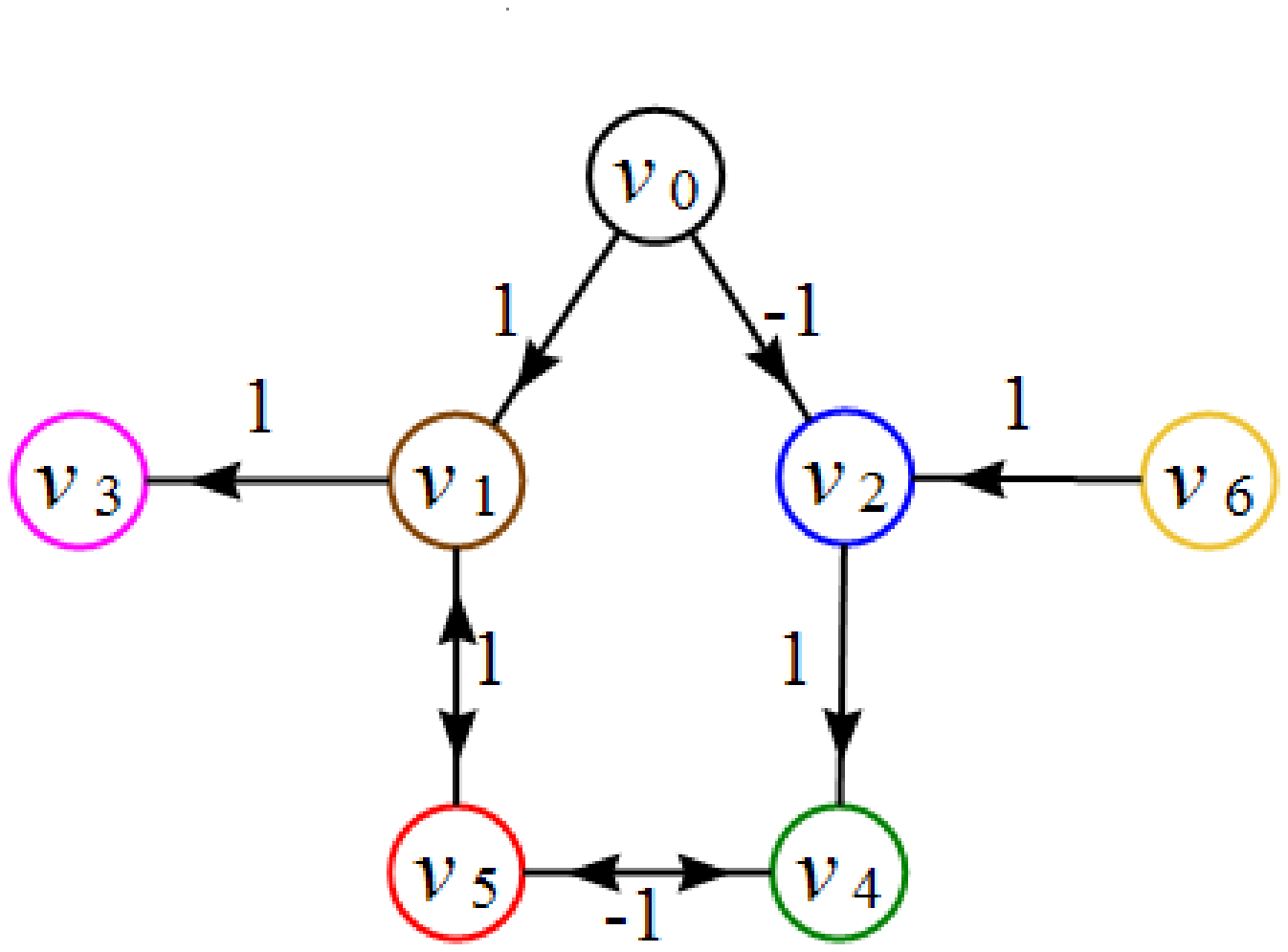}}
  \caption{Switching topologies. }\label{fig7}
\end{figure}

\begin{figure}[t]
  \centering
      \subfigure[position trajectories of second-order dynamic agents]{
    \includegraphics[width=2.4in]{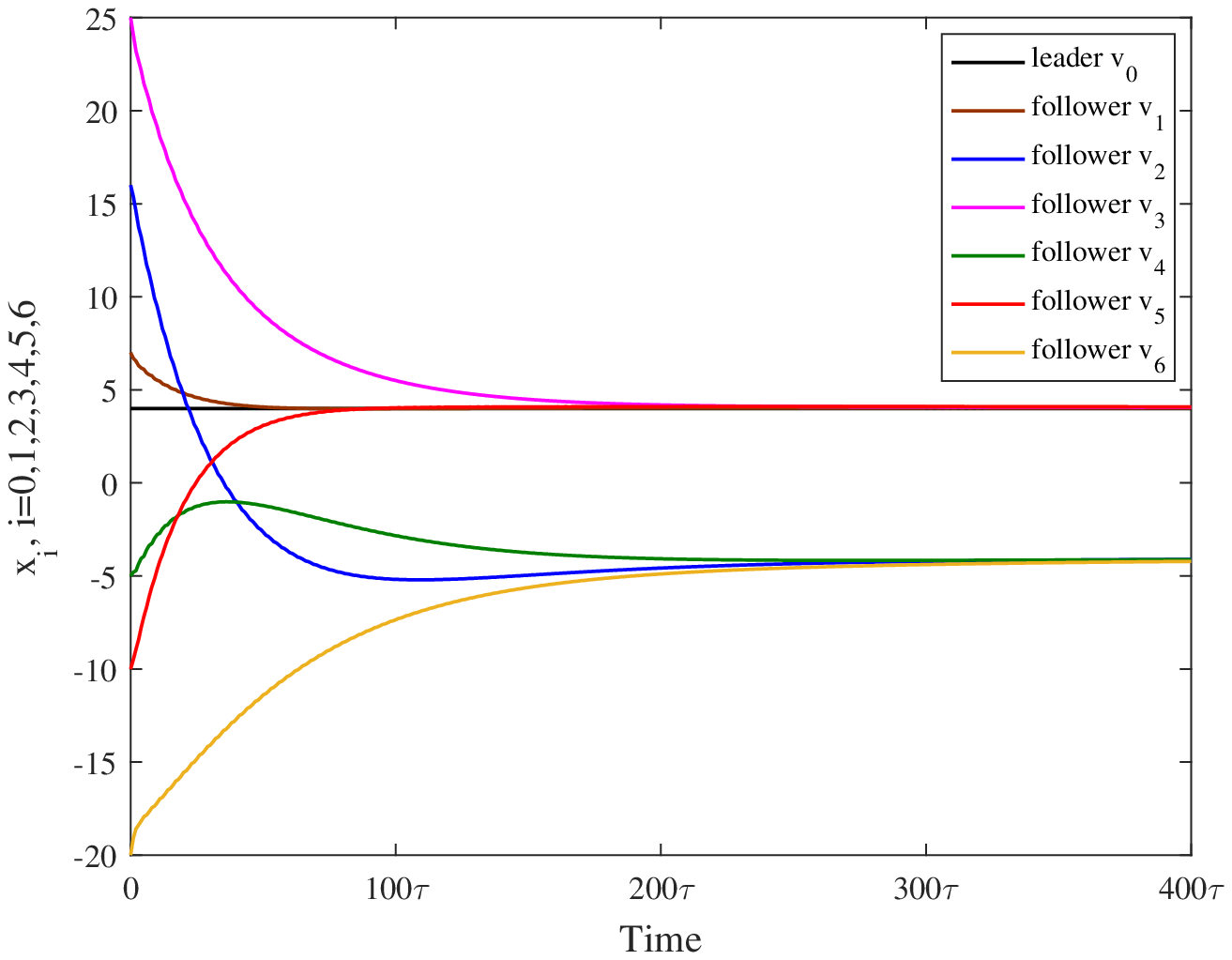}}
    \subfigure[velocity trajectories of second-order dynamic agents]{
    \includegraphics[width=2.4in]{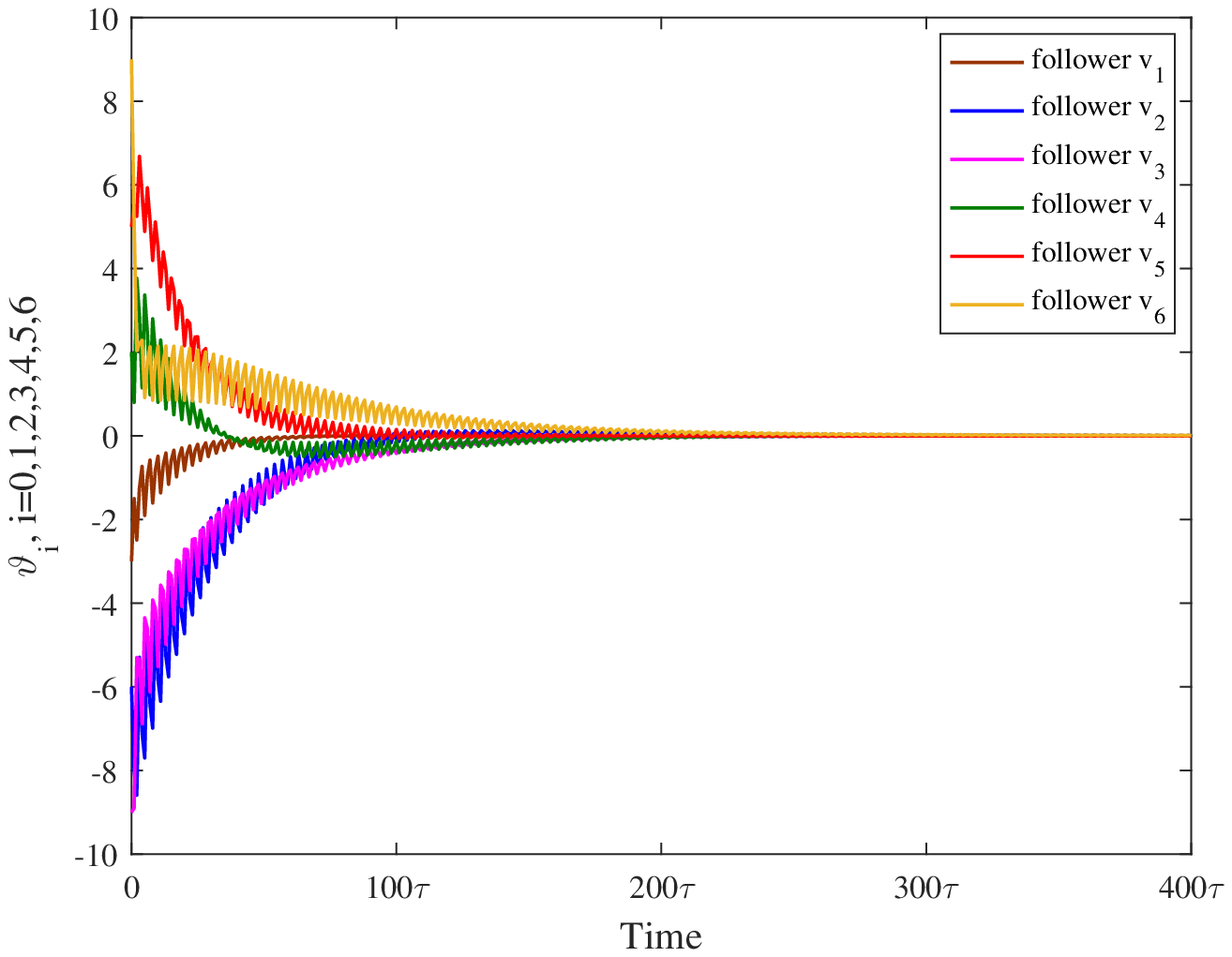}}
  \caption{State trajectories of all agents in Example~\ref{example:8.2}. }\label{fig8}
\end{figure}

\begin{example}\label{example:8.2}
Consider a team of seven agents interacting by switching topologies $\tilde{\mathscr{G}}_a, \tilde{\mathscr{G}}_b$ and $\tilde{\mathscr{G}}_c$,  shown in Fig.~\ref{fig7}. The agents are divided in to two subgroups, i.e.,  $\mathscr{V}_1=\{v_0,v_1,v_2,v_3\}$ and $\mathscr{V}_2=\{v_4,v_5,v_6\}$. Divid all discrete time instants $0,\tau,\ldots,k\tau,\ldots$, into a series of time intervals $\big[3\theta\tau, 3(\theta+1)\tau\big)$, $\theta\in\mathbb{N}$. Suppose that the network topologies undergo a periodic switching every $\tau=0.1$, from $\tilde{\mathscr{G}}_a$ to $\tilde{\mathscr{G}}_b$, from $\tilde{\mathscr{G}}_b$ to $\tilde{\mathscr{G}}_c$ and, then $\tilde{\mathscr{G}}_c$ to $\tilde{\mathscr{G}}_a$. Obviously, the conditions \textbf{C3} and \textbf{C4} are satisfied. Because the modulus of all edge weights is equal to 1, so $\overline{\varrho}=1$. Select $\gamma=6$ that satisfies (\ref{sys:8.3}). Lastly, Fig.~\ref{fig7} plots the agents' state trajectories, from which the realizing of the bipartite tracking with a static leader in the case of switching topologies can be observed.
\end{example}

\subsection{The case with an active leader}\label{section:8.2}

Consider second-order MASs (\ref{sys:5.1}) and (\ref{sys:5.2}). Each agent updates its states depending on the following distributed controller
\begin{eqnarray}\label{sys:8.17}
\begin{aligned}
u_{i}(k)=&\sum\limits_{v_j\in\mathscr{N}^k_i}|a^k_{ij}|\big[\sgn(a^k_{ij})x_{j}(k)-x_{i}(k)\big]\\
&+|a^k_{i0}|\big[\sgn(a^k_{i0})x_{0}(k)-x_{i}(k)\big]\\
&+\beta\sum\limits_{v_j\in\mathscr{N}^k_i}|a^k_{ij}|\big[\sgn(a^k_{ij})\vartheta_{j}(k)-\vartheta_{i}(k)\big]\\
&+\beta |a^k_{i0}|\big[\sgn(a^k_{i0})\vartheta_{0}-\vartheta_{i}(k)\big],
\end{aligned}
\end{eqnarray}
where $\beta>0$ is a gain parameter.

Based on the vector constructions in (\ref{sys:5.5}), controller (\ref{sys:8.17}) is used to convert systems (\ref{sys:5.1}) and (\ref{sys:5.2}) to the following error system
\begin{eqnarray}\label{sys:8.18}
\begin{aligned}
\xi(k+1)=[N_k\otimes I_{p}]\xi(k),
\end{aligned}
\end{eqnarray}
where
\begin{eqnarray*}
    N_k=\left(\begin{array}{c@{\hspace{0.4em}}c}
                   I_{n}-\frac{\tau}{\beta}I_{n} & \frac{\tau}{\alpha\beta}I_{n} \\
                    -\frac{\alpha\tau}{\beta}I_{n} & I_{n}+\frac{\tau}{\beta}I_{n}-\beta \tau \big(\mathscr{D}_k+\mathscr{B}_k-|\mathscr{A}_k|\big) \\
                  \end{array}
\right).
\end{eqnarray*}
The implementation of bipartite tracking for systems (\ref{sys:5.1}) and (\ref{sys:5.2}) is equivalent to the achievement of the asymptotic stability of error system (\ref{sys:8.18}), that is, to prove $\lim_{k\rightarrow\infty}\prod_{s=0}^kN_s=\mathbf{0}$. In general, it is difficult to directly study this convergence using the nonnegative matrix theory since there are negative elements in $N_k$. To solve this convergence issue, we construct a new matrix
\begin{eqnarray*}
    \Gamma_k=\left(\begin{array}{c@{\hspace{0.4em}}c}
                   I_{n}-\frac{\tau}{\beta}I_{n} & \frac{\tau}{\alpha\beta}I_{n} \\
                    \frac{\alpha\tau}{\beta}I_{n} & I_{n}+\frac{\tau}{\beta}I_{n}-\beta \tau \big(\mathscr{D}_k+\mathscr{B}_k-|\mathscr{A}_k|\big) \\
                  \end{array}
\right)
\end{eqnarray*}
for each $N_k$. To ensure that $\Gamma_k$ is a super-stochastic matrix, we present the following lemma.

\begin{lemma}\label{lemma:8.3}
If the parameter $\beta$ satisfies the following inequality
\begin{eqnarray}\label{sys:8.19}
\sqrt{\frac{1+\alpha}{\underline{\varrho}}}<\beta<\frac{1}{\tau n\bar{\varrho}},
\end{eqnarray}
then $\Gamma_k$ is super-stochastic matrix with positive diagonal elements and its row sums satisfy:
\begin{eqnarray*}
\begin{aligned}
&\Lambda_{i}[\Gamma_k]=1-\frac{\tau}{\beta}+\frac{\tau}{\alpha\beta}<1;\\
&\Lambda_{i+n}[\Gamma_k]\leq1+\frac{\alpha \tau}{\beta}+\frac{\tau}{\beta}-\beta\tau \underline{\varrho}<1, \ \text{if} \ |a^k_{i0}|>0;\\
&\Lambda_{i+n}[\Gamma_k]=1+\frac{\alpha \tau}{\beta}+\frac{\tau}{\beta}>1, \ \text{if} \ |a^k_{i0}|=0,
\end{aligned}
\end{eqnarray*}
where $i\in\{1,2,\ldots,n\}$.
\end{lemma}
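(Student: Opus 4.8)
The plan is to verify the two requirements of Definition~\ref{definition:2.5} in turn: that every entry of $\Gamma_k$ is nonnegative with a strictly positive diagonal, and that each row sum equals the stated value. Because the block pattern of $\Gamma_k$ matches that of $\Psi(k)$ in Lemma~\ref{lemma:5.2} under the substitutions $b_m\to\underline{\varrho}$ and $d_M\to n\bar{\varrho}$, I would reuse that lemma's strategy.

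The first and principal step is to show $\beta>\tau$, which makes the top-left block $(1-\frac{\tau}{\beta})I_n$ nonnegative with positive diagonal. Using both sides of (\ref{sys:8.19}) at once, the inequality $\sqrt{\frac{1+\alpha}{\underline{\varrho}}}<\beta<\frac{1}{\tau n\bar{\varrho}}$ gives $\tau<\frac{1}{n\bar{\varrho}}\sqrt{\frac{\underline{\varrho}}{1+\alpha}}$, and then
\[
\beta>\sqrt{\tfrac{1+\alpha}{\underline{\varrho}}}>\tfrac{1}{\sqrt{n\bar{\varrho}}}>\tfrac{1}{\sqrt{n\bar{\varrho}(1+\alpha)}}=\tfrac{1}{n\bar{\varrho}}\sqrt{\tfrac{n\bar{\varrho}}{1+\alpha}}\geq\tfrac{1}{n\bar{\varrho}}\sqrt{\tfrac{\underline{\varrho}}{1+\alpha}}>\tau,
\]
where the intermediate steps rely only on $n\bar{\varrho}\geq\underline{\varrho}$ (from $n\geq1$ and $\bar{\varrho}\geq\underline{\varrho}$) and $1+\alpha>1$. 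The remaining blocks $\frac{\tau}{\alpha\beta}I_n$ and $\frac{\alpha\tau}{\beta}I_n$ are nonnegative outright.

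For the bottom-right block $\Omega_k:=I_n+\frac{\tau}{\beta}I_n-\beta\tau(\mathscr{D}_k+\mathscr{B}_k-|\mathscr{A}_k|)$ I would note that its off-diagonal entries equal $\beta\tau|a^k_{ij}|\geq0$, while each diagonal entry is $1+\frac{\tau}{\beta}-\beta\tau([\mathscr{D}_k]_{ii}+|a^k_{i0}|)$; since $[\mathscr{D}_k]_{ii}+|a^k_{i0}|\leq n\bar{\varrho}$ and $\beta<\frac{1}{\tau n\bar{\varrho}}$ forces $\beta\tau n\bar{\varrho}<1$, this entry is positive. Hence $\Gamma_k$ is nonnegative with positive diagonal, that is, super-stochastic in the sense of Definition~\ref{definition:2.5}.

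It remains to compute the row sums. For $i\in\{1,\ldots,n\}$ the $i$th row sums to $1-\frac{\tau}{\beta}+\frac{\tau}{\alpha\beta}<1$, the strict inequality coming from $\alpha>1$. For the rows $i+n$ the decisive cancellation---used throughout the paper---is that $[\mathscr{D}_k]_{ii}=\sum_j|a^k_{ij}|$ exactly absorbs the off-diagonal mass $\beta\tau|\mathscr{A}_k|$, so $\Lambda_i[\Omega_k]=1+\frac{\tau}{\beta}-\beta\tau|a^k_{i0}|$; adding the $\frac{\alpha\tau}{\beta}$ from the bottom-left block gives $\Lambda_{i+n}[\Gamma_k]=1+\frac{\alpha\tau}{\beta}+\frac{\tau}{\beta}-\beta\tau|a^k_{i0}|$. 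If $|a^k_{i0}|=0$ this exceeds $1$; if $|a^k_{i0}|>0$, then $|a^k_{i0}|\geq\underline{\varrho}$ yields $\Lambda_{i+n}[\Gamma_k]\leq1+\frac{(1+\alpha)\tau}{\beta}-\beta\tau\underline{\varrho}$, which is $<1$ exactly because $\beta>\sqrt{\frac{1+\alpha}{\underline{\varrho}}}$ is equivalent to $\frac{(1+\alpha)\tau}{\beta}<\beta\tau\underline{\varrho}$. I expect the lower-bound chain establishing $\beta>\tau$ to be the only real obstacle, since it is the single place where both bounds in (\ref{sys:8.19}) must be combined; the rest reduces to block-wise arithmetic and the degree cancellation.
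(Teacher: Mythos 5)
Your proposal is correct and follows essentially the same route as the paper's own proof: both combine the two sides of (\ref{sys:8.19}) to get $\tau<\frac{1}{n\bar{\varrho}}\sqrt{\underline{\varrho}/(1+\alpha)}$ and hence $\beta>\tau$, use $\beta<\frac{1}{\tau n\bar{\varrho}}$ for positivity of the diagonal of the bottom-right block, and then read off the row sums via the cancellation of $\mathscr{D}_k$ against $|\mathscr{A}_k|$, closing with $\beta^2\underline{\varrho}>1+\alpha$ for the strict bound when $|a^k_{i0}|>0$. No gaps.
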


\begin{proof}
Firstly, according to condition (\ref{sys:8.19}), we can deduce that
\begin{eqnarray*}
\beta>\sqrt{\frac{1+\alpha}{\underline{\varrho}}}>\frac{1}{\sqrt{n\bar{\varrho}}}
>\frac{1}{\sqrt{n\bar{\varrho}(1+\alpha)}}.
\end{eqnarray*}
In addition, we can also see $\sqrt{\frac{1+\alpha}{\underline{\varrho}}}<\frac{1}{\tau n\bar{\varrho}}$ through (\ref{sys:8.19}). It follows that $\tau<\frac{1}{n\bar{\varrho}}\sqrt{\frac{\underline{\varrho}}{1+\alpha}}$. Consequently,
\begin{eqnarray*}
\tau<\frac{1}{n\bar{\varrho}}\sqrt{\frac{n\bar{\varrho}}{1+\alpha}}
=\frac{1}{\sqrt{n\bar{\varrho}(1+\alpha)}}.
\end{eqnarray*}
This means that $\beta>\tau$. And thus, $I_{n}-\frac{\tau}{\beta}I_{n}$ is a diagonal matrix in which all elements lie on $(0,1)$. Next we analyze the diagonal elements of matrix $\Gamma^*_k=I_{n}+\frac{\tau}{\beta}I_{n}-\beta \tau \big(\mathscr{D}_k+\mathscr{B}_k-|\mathscr{A}_k|\big)$. It can be observed that the minimum diagonal element of $\Gamma_k$ is $1+\frac{\tau}{\beta}-\beta\tau n\bar{\varrho}$. According to the condition $\beta<\frac{1}{\tau n\bar{\varrho}}$, it can be deduced that $1+\frac{\tau}{\beta}-\beta\tau n\bar{\varrho}>0$. Thus, the matrix $\Gamma^*_k$ contains positive diagonal elements. In addition, since $|\mathscr{A}_k|$ is a nonnegative matrix, the non-diagonal elements in $\Gamma^*_k$ are nonnegative. Consequently, $\Gamma^*_k$ is a nonnegative matrix with positive diagonal elements. This implies that $\Gamma_k$ is a nonnegative matrix with positive diagonal elements.

Now we analyze the row sums of matrix $\Gamma_k$. Obviously, the sums of the first $n$ rows of matrix $\Gamma_k$ satisfy $\Lambda_{i}[\Gamma_k]=1-\frac{\tau}{\beta}+\frac{\tau}{\alpha\beta}<1$, $i=1,2,\ldots,n$. For the sums of the last $n$ rows of matrix $\Gamma_k$, we can obtain the following results: if $|a^k_{i0}|=0$, then $\Lambda_{n+i}[\Gamma_k]=1+\frac{\alpha \tau}{\beta}+\frac{\tau}{\beta}>1$, and if $|a^k_{i0}|>0$, then $\Lambda_{n+i}[\Gamma_k]\leq1+\frac{\alpha \tau}{\beta}+\frac{\tau}{\beta}-\beta\tau \underline{\varrho}<1$ based on the fact $\underline{\varrho}>\frac{1+\alpha}{\beta^2}$ that is deduced through (\ref{sys:8.19}). This completes the proof.
\end{proof}

With the above preparations, we have the following result.

\begin{theorem}\label{theorem:8.4}
Consider systems (\ref{sys:5.1}) and (\ref{sys:5.2}) with distributed protocol (\ref{sys:8.17}). Suppose that the communication topologies satisfy \textbf{C3} and \textbf{C4}. The bipartite tracking can be achieved if condition (\ref{sys:8.19}) and the following conditions hold
\begin{subequations}\label{sys:8.20}
\begin{align}
&\frac{\tau}{\beta}<\frac{\sqrt[n\varsigma-1]{\alpha}-1}{1+\alpha}, \label{sys:8.20a}\\
&g^{n\varsigma}-\beta \tau \underline{\varrho}\varpi^{n\varsigma-1}<1,\label{sys:8.20b}
\end{align}
\end{subequations}
where $g=1+\frac{\tau}{\beta}+\frac{\alpha \tau}{\beta}$, $\alpha>g^{n\varsigma-1}$, $\varpi=\min\big\{\beta \tau\underline{\varrho},1+\frac{\tau}{\beta}-\beta \tau n\overline{\varrho}\big\}$ and $\varsigma$ is the upper bound of the number of time instants in the intervals $[s_j\tau,s_{j+1}\tau), j\in\mathbb{N}$.
\end{theorem}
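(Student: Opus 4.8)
The plan is to convert the bipartite tracking problem for the active-leader system~\eqref{sys:8.18} into a product-convergence statement for the super-stochastic matrices $\Gamma_k$, exactly paralleling the strategy used in Theorem~\ref{theorem:5.5}, but now with the switching-topology bookkeeping of Theorem~\ref{theorem:8.1}. First I would observe that by Lemma~\ref{lemma:8.3}, under condition~\eqref{sys:8.19} each $\Gamma_k$ is super-stochastic with positive diagonal elements and $|N_k|=\Gamma_k$, so that $\|\prod_{s=0}^{k}N_s\|_\infty\leq\|\prod_{s=0}^{k}\Gamma_s\|_\infty$. Thus it suffices to prove $\lim_{k\rightarrow\infty}\|\prod_{s=0}^{k}\Gamma_s\|_\infty=0$, and the whole argument reduces to showing that over each long block of length $n\varsigma$ the infinity norm of the product of the $\Gamma_s$ strictly contracts.

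The key steps are as follows. I would first establish the block contraction estimate
\begin{equation*}
\Big\|\prod_{s=s_j}^{s_{j+2n}-1}\Gamma_s\Big\|_\infty<1,\qquad j\in\mathbb{N},
\end{equation*}
by a mathematical induction on the BFS-layers of the leader reachability, mirroring Steps~I and~II of the proof of Theorem~\ref{theorem:8.1}. The base case uses \textbf{C4}: within $\bigcup_{s=s_j}^{s_{j+1}-1}\tilde{\mathscr{G}}_s$ there is a vertex $v_{\delta_1}$ with $(v_0,v_{\delta_1})$ an edge, hence some $\Gamma_{l_j}$ has $\Lambda_{\delta_1+n}[\Gamma_{l_j}]<1$; the inductive step propagates this deficiency along a directed path to the remaining followers. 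The crucial difference from the sub-stochastic case is that, since some rows of $\Gamma_k$ have sum $g=1+\tfrac{\tau}{\beta}+\tfrac{\alpha\tau}{\beta}>1$, the row sums can \emph{grow} as the product lengthens; to control this I would invoke Theorem~\ref{theorem:2.8} rather than Theorem~\ref{theorem:2.4}. This is where conditions~\eqref{sys:8.20a} and~\eqref{sys:8.20b} enter: \eqref{sys:8.20b} is precisely the Theorem~\ref{theorem:2.8} inequality $g^{q}-(g-c)\varphi^{q-1}<1$ specialized to block length $q=n\varsigma$ with $\varphi=\varpi$ and $g-c\geq\beta\tau\underline{\varrho}$, while \eqref{sys:8.20a} guarantees the auxiliary bound $\alpha>g^{n\varsigma-1}$ needed to keep the first $n$ rows' sums below $1$ (the analogue of~\eqref{sys:5.24}--\eqref{sys:5.26} in Lemma~\ref{lemma:5.4}).

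Once the block estimate holds, I would finish by dividing the time axis into the super-blocks $[s_{2\theta n}\tau,s_{2(\theta+1)n}\tau)$, setting $\Gamma^*_\theta=\prod_{s=s_{2\theta n}}^{s_{2(\theta+1)n}-1}\Gamma_s$, and concluding $\|\Gamma^*_\theta\|_\infty<1$, so that
\begin{equation*}
\lim_{k\rightarrow\infty}\Big\|\prod_{s=0}^{k}N_s\Big\|_\infty\leq\lim_{\theta\rightarrow\infty}\prod_{s=0}^{\theta}\|\Gamma^*_s\|_\infty\|\xi(0)\|_\infty=0,
\end{equation*}
which forces $e_x(k)\to 0$ and $e_\vartheta(k)\to 0$ and hence bipartite tracking with an active leader by Definition~\ref{definition:5.1}. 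The main obstacle I anticipate is the careful verification, inside the induction, that the growth factors $g$ accumulated over the uniformly bounded intervals do not overwhelm the single per-block contraction supplied by each newly-reached follower: the counting must ensure that the total block length is bounded by $n\varsigma$ (at most $n$ followers, each requiring an interval of at most $\varsigma$ time instants) so that the exponents in~\eqref{sys:8.20b} match exactly. Aligning these indices—so that the Theorem~\ref{theorem:2.8} hypotheses ($s_1<s_2$ with $\Lambda_{i_1}[F_{s_1}]<1$ and $[F_{s_2}]_{i_2i_1}>0$) are met within every block despite the arbitrary ordering of topologies—will be the delicate bookkeeping step.
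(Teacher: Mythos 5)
Your overall strategy coincides with the paper's: reduce to $\lim_{k\rightarrow\infty}\|\prod_{s=0}^{k}\Gamma_s\|_\infty=0$ via $|N_k|=\Gamma_k$, contract over blocks of \textbf{C4}-intervals using Theorem~\ref{theorem:2.8} for the last $n$ rows and the condition $\alpha>g^{n\varsigma-1}$ for the first $n$ rows, then telescope. However, there is one concrete mismatch that would prevent the argument from closing as written: you take the contraction block to be $2n$ consecutive \textbf{C4}-intervals, $[s_j\tau,s_{j+2n}\tau)$, copying the bookkeeping of Theorem~\ref{theorem:8.1}. Such a block contains up to $2n\varsigma$ time instants, so the Theorem~\ref{theorem:2.8} estimate it yields is $g^{2n\varsigma}-\beta\tau\underline{\varrho}\,\varpi^{2n\varsigma-1}$, which is \emph{not} controlled by hypothesis~\eqref{sys:8.20b}, whose exponents are $n\varsigma$; since $g>1$ and $\varpi<1$, the extra factor can push the bound above $1$. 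You yourself note that the total block length must be $n\varsigma$ ``so that the exponents in~\eqref{sys:8.20b} match exactly,'' but you do not reconcile that with your choice of $2n$ intervals.

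The resolution, which is what the paper's proof does, is that in the active-leader case only $n$ intervals are needed per block. Theorem~\ref{theorem:8.1} needs two intervals per newly reached follower because in the sub-stochastic matrix $\Phi_k$ the row-sum deficiency must first pass from the velocity row $\delta+n$ down to the position row $\delta$ (via $[\Phi_k]_{\delta,\delta+n}=\frac{\gamma\tau}{2}>0$) before it can reach the next follower. In $\Gamma_k$ the follower-to-follower coupling sits directly in the lower-right block: $[\Gamma_k]_{\delta_2+n,\delta_1+n}=\beta\tau|a^k_{\delta_2\delta_1}|>0$ whenever $(v_{\delta_1},v_{\delta_2})\in\tilde{\mathscr{E}}_k$, so the deficiency propagates among the rows $n+1,\ldots,2n$ in a single \textbf{C4}-interval per new follower, and the rows $1,\ldots,n$ never participate in the propagation at all --- they are handled wholesale by your Step~II bound $1-\frac{\tau}{\beta}+\frac{\tau}{\alpha\beta}g^{n\varsigma-1}<1$. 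Taking the block as $[s_j\tau,s_{j+n}\tau)$ and the super-blocks as $[s_{\theta n}\tau,s_{(\theta+1)n}\tau)$ makes the exponents match~\eqref{sys:8.20b} exactly, and the rest of your outline then goes through.
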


\begin{proof}
We first analyze the infinite norm of the product of matrices $\Gamma_k$ in a bounded time interval $[s_j\tau,s_{j+n}\tau)$. Under condition (\ref{sys:8.19}), the matrices $\Gamma_k$, $k\in\mathbb{N}$ are super-stochastic. Below we have two steps to prove the inequality
\begin{eqnarray}\label{sys:8.21}
\Big\|\prod_{s=s_j}^{s_j+n}\Gamma_s\Big\|_{\infty}<1.
\end{eqnarray}

\textbf{\emph{Step I.}} We first analyze the sums of the last $n$ rows of the matrix product $\prod_{s=s_j}^{s_j+n}\Gamma_s$. Consider the interval $[s_j\tau,s_{j+1}\tau)$. By \textbf{C3} and \textbf{C4}, it is known that for each follower, there is a directed path from the leader $v_0$ to that follower in the union digraph $\bigcup_{s=s_j}^{s_{j+1}-1}\tilde{\mathscr{G}}_s$. Consequently, we can find a vertex $v_{\delta_1}$ such that $(v_0,v_{\delta_1})\in\bigcup_{s=s_j}^{s_{j+1}-1}\tilde{\mathscr{E}}_s$. Without losing generality, it is assumed that $(v_0,v_{\delta_1})\in\tilde{\mathscr{E}}_{l_j}$, where $s_j\leq l_j<s_{j+1}$, then it can be obtained that
\begin{eqnarray}\label{sys:8.22}
\begin{aligned}
\Lambda_{\delta_1+n}[\Gamma_{l_j}]\leq g-\beta\tau \underline{\varrho}<1.
\end{aligned}
\end{eqnarray}
Since $[\Gamma_{s_{j+1}-1}]_{\delta_1+n,\delta_1+n}\geq\varpi$, we can obtain from the result in Theorem \ref{theorem:2.8} that
\begin{eqnarray}\label{sys:8.23}
\begin{aligned}
\Lambda_{\delta_1+n}\left[\prod_{s=l_j}^{s_{j+1}-1}\Gamma_{s}\right]\leq g^{s_{j+1}-l_j}-\beta\tau \underline{\varrho}\varpi^{s_{j+1}-l_j-1}.
\end{aligned}
\end{eqnarray}
Since $\Lambda_{i+n}[\Gamma_{k}]\leq g$ for any $i=1,2,\ldots,n$ and $k\in\mathbb{N}$, we have
\begin{eqnarray}\label{sys:8.24}
\begin{aligned}
\Lambda_{i+n}\left[\prod_{s=s_j}^{l_j-1}\Gamma_{s}\right]\leq g^{l_j-s_j}, \ i=1,2,\ldots,n.
\end{aligned}
\end{eqnarray}
According to (\ref{sys:8.23}) and (\ref{sys:8.24}), we can deduce that
\begin{eqnarray}\label{sys:8.25}
\begin{aligned}
\Lambda_{\delta_1+n}\left[\prod_{s=s_j}^{s_{j+1}-1}\Gamma_{s}\right]&=
\sum_{i=1}^{2n}\left[\prod_{s=l_j}^{s_{j+1}-1}\Gamma_{s}\right]_{\delta_1+n,i}\Lambda_{i}\left[\prod_{s=s_j}^{l_j-1}\Gamma_{s}\right]\\
&\leq \big(g^{s_{j+1}-l_j}-\beta\tau \underline{\varrho}\varpi^{s_{j+1}-l_j-1}\big)g^{l_j-s_j}\\
&=g^{s_{j+1}-s_j}-\beta\tau \underline{\varrho}\varpi^{s_{j+1}-s_j-1}\\
&\leq g^{\varsigma}-\beta\tau \underline{\varrho}\varpi^{\varsigma-1}.
\end{aligned}
\end{eqnarray}

Consider the time interval $[s_{j}\tau,s_{j+2}\tau)$. Since there is a directed path from the leader $v_0$ to that follower in the union digraph $\bigcup_{s=s_{j+1}}^{s_{j+2}-1}\tilde{\mathscr{G}}_s$, we can always find a vertex $v_{\delta_{2}}\neq v_{\delta_{1}}$ such that $(v_0,v_{\delta_{2}})\in\bigcup_{s=s_{j+1}}^{s_{j+2}-1}\tilde{\mathscr{E}}_s$ or $(v_{\delta_1},v_{\delta_{2}})\in\bigcup_{s=s_{j+1}}^{s_{j+2}-1}\tilde{\mathscr{E}}_s$. Without loss of generality, we assume that $(v_0,v_{\delta_{2}})\in\tilde{\mathscr{E}}_{l_{j+1}}$ or $(v_{\delta_1},v_{\delta_{2}})\in\tilde{\mathscr{E}}_{l_{j+1}}$, where $s_{j+1}\leq l_{j+1}<s_{j+2}$. We discuss two situations below. For the case $(v_0,v_{\delta_{2}})\in\tilde{\mathscr{E}}_{l_{j+1}}$, we have $\Lambda_{\delta_{2}+n}\big[\Gamma_{l_{j+1}}\big]\leq g-\beta\tau \underline{\varrho}<1$.
By Theorem \ref{theorem:2.8}, it can be derived that
\begin{eqnarray}\label{sys:8.26}
\begin{aligned}
\Lambda_{\delta_{2}+n}\left[\prod_{s=l_{j+1}}^{s_{j+2}-1}\Gamma_{s}\right]\leq g^{s_{j+2}-l_{j+1}}-\beta\tau \underline{\varrho}\varpi^{s_{j+2}-l_{j+1}-1},
\end{aligned}
\end{eqnarray}
which combines the result $\Lambda_{i+n}\big[\prod_{s=s_j}^{l_{j+1}-1}\Gamma_{s}\big]\leq g^{l_{j+1}-s_j}$, $i=1,2,\ldots,2n$, guarantees that
\begin{eqnarray}\label{sys:8.27}
\begin{aligned}
\Lambda_{\delta_2+n}\left[\prod_{s=s_j}^{s_{j+2}-1}\Gamma_{s}\right]\leq g^{2\varsigma}-\beta\tau \underline{\varrho}\varpi^{2\varsigma-1}.
\end{aligned}
\end{eqnarray}
For the case $(v_{\delta_1},v_{\delta_{2}})\in\tilde{\mathscr{E}}_{l_{j+1}}$, we can still get the inequality in (\ref{sys:8.27}) based on the result in Theorem \ref{theorem:2.8}. In addition, for (\ref{sys:8.25}), we can further deduce
\begin{eqnarray}\label{sys:8.28}
\begin{aligned}
\Lambda_{\delta_1+n}\left[\prod_{s=s_j}^{s_{j+2}-1}\Gamma_{s}\right]\leq g^{2\varsigma}-\beta\tau \underline{\varrho}\varpi^{2\varsigma-1}.
\end{aligned}
\end{eqnarray}
Consequently, in the time interval $[s_{j}\tau,s_{j+2}\tau)$, there always exist two different vertices $v_{\delta_{1}}, v_{\delta_{2}}\!\in\!\mathscr{V}$ such that $\Lambda_{i+n}\big[\prod_{s=s_j}^{s_{j+2}-1}\Gamma_{s}\big]\leq g^{2\varsigma}-\beta\tau \underline{\varrho}\varpi^{2\varsigma-1}$, $i\in\{\delta_1,\delta_2\}$.

By using the analysis method similar to the interval $[s_{j}\tau,s_{j+2}\tau)$, we can derive that in the interval $[s_{j}\tau,s_{j+n}\tau)$, there exist $n$ different vertices $v_{\delta_{1}},v_{\delta_{2}},\ldots, v_{\delta_{n}}\in\mathscr{V}$ with $\{\delta_1,\delta_2,\ldots,\delta_n\}=\{1,2,\ldots,n\}$ such that
\begin{eqnarray}\label{sys:8.29}
\begin{aligned}
\Lambda_{i+n}\left[\prod_{s=s_j}^{s_{j+n}-1}\Gamma_{s}\right]\leq g^{n\varsigma}-\beta\tau \underline{\varrho}\varpi^{n\varsigma-1}, \ i=1,2,\ldots,n.
\end{aligned}
\end{eqnarray}
By condition (\ref{sys:8.20b}), we can see that the sums of the last $n$ rows of the matrix product $\prod_{s=s_j}^{s_j+n}\Gamma_s$ are less than 1.

\textbf{\emph{Step 2.}} Now we discuss the sums of the first $n$ rows of the matrix product $\prod_{s=s_j}^{s_j+n}\Gamma_s$. Since $\alpha>g^{n\varsigma-1}$ and the fact $g>1$, we have $1-\frac{\tau}{\beta}+\frac{ \tau}{\alpha\beta}g^{n\varsigma-1}<1$. According to the expressions of the inequalities in (\ref{sys:5.24}) and (\ref{sys:5.25}), one has
\begin{eqnarray}\label{sys:8.30}
\begin{aligned}
\Lambda_{i}\left[\prod_{s=s_j}^{s_{j+n}}\Gamma_s\right]\leq 1-\frac{\tau}{\beta}+\frac{ \tau}{\alpha\beta}g^{n\varsigma-1}<1, \ i=1,2,\ldots,n.
\end{aligned}
\end{eqnarray}
Obviously, (\ref{sys:8.29}) and (\ref{sys:8.30}) mean that (\ref{sys:8.21}) has to be established.

Divide the time axis into a series of intervals $[s_{\theta n}\tau, s_{(\theta+1) n}\tau)$, $\theta\in\mathbb{N}$. Let $\Gamma^*_{\theta}=\prod_{s=s_{\theta n}}^{s_{(\theta+1)n}-1}\Gamma_s$, $\theta\in\mathbb{N}$. Then it is known from (\ref{sys:8.21}) that $\big\|\Gamma^*_{\theta}\big\|_{\infty}<1$. It follows that
\begin{eqnarray}
\begin{aligned}
\lim_{k\rightarrow\infty}\|\xi(k+1)\|_{\infty}&\leq\lim_{k\rightarrow\infty}\left\|\left[\prod_{s=0}^kN_s\otimes I_p\right]\right\|_{\infty}\|\xi(0)\|_{\infty}\\
&\leq\lim_{k\rightarrow\infty}\left\|\left[\prod_{s=0}^k\Gamma_s\otimes I_p\right]\right\|_{\infty}\|\xi(0)\|_{\infty}\\
&\leq\lim_{\theta\rightarrow\infty}\prod_{s=0}^\theta\big\|\Gamma^*(\theta)\big\|_{\infty}\|\xi(0)\|_{\infty}=0.
\end{aligned}
\end{eqnarray}
That is to say, bipartite tracking with an active leader is realized.
\end{proof}

\section{Bipartite tracking of second-order MASs with random packet losses}\label{section:9}

In practice, one of the most common phenomena in networked control systems is data packet losses caused by interference or congestion. In this section, we consider the bipartite tracking behavior of second-order MASs with random packet losses from the controller to the actuator.

The phenomenon of packet losses is characterized by a Bernoulli stochastic variable $\theta_k\in\{0,1\}$, $k\in\mathbb{N}$. More specifically, if $\theta_k=1$, then the control packets from the controller are successfully transmitted to the actuator, and if $\theta_k=0$, then the control packets are lost. Let the Bernoulli stochastic variable $\theta_k$ satisfies:
\begin{equation*}
\begin{aligned}
&\emph{Prob}\{\theta_k=1\}=\mathbb{E}\big\{\theta_k\big\}=\bar{\theta},\\
&\emph{Prob}\{\theta_k=0\}=1-\mathbb{E}\big\{\theta_k\big\}=1-\bar{\theta},
\end{aligned}
\end{equation*}
where $\bar{\theta}\in[0,1]$ is known as the successful transmission rate. Based on the above description of lossy links, the control input to the actuator is designed as
\begin{equation}\label{sys:9.1}
u_{i}(k)=\left\{
\begin{aligned}
&u^c_{i}(k), \ \theta_k=1,\\
&0, \ \ \ \ \ \ \ \theta_k=0,
\end{aligned}
\right.
\end{equation}
where the linear feedback adopted in remote controller is given by
\begin{eqnarray}\label{sys:9.2}
\begin{aligned}
u^c_{i}(k)=&\sum_{v_j\in\mathscr{N}_i}|a_{ij}|\big[\sgn(a_{ij})x_{j}(k)-x_{i}(k)\big]\\
&+|a_{i0}|\big[\sgn(a_{i0})x_{0}(k)-x_{i}(k)\big]\\
&+\beta\sum_{v_j\in\mathscr{N}_i(k)}|a_{ij}|\big[\sgn(a_{ij})\vartheta_{j}(k)-\vartheta_{i}(k)\big]\\
&+\beta |a_{i0}|\big[\sgn(a_{i0})\vartheta_{0}-\vartheta_{i}(k)\big],
\end{aligned}
\end{eqnarray}
in which $\beta>0$ is a gain parameter. The transmission process of control packets from the controller to the actuator under lossy links is described in Fig. \ref{fig9}.

\begin{figure}[t]
  \centering
    \includegraphics[width=2in]{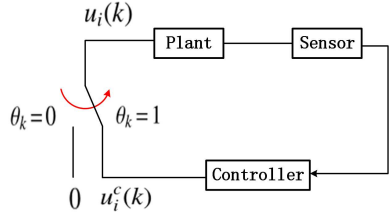}
  \caption{The transmission process of control packets from the controller to the actuator under lossy links.}\label{fig9}
\end{figure}

\begin{definition}\label{definition:9.1}
The bipartite tracking for systems (\ref{sys:5.1}) and (\ref{sys:5.2}) in expectation is said to be realized if
\begin{eqnarray*}
\begin{aligned}
\forall v_i\in\mathscr{V}_1,\ & \lim_{k\rightarrow \infty}\mathbb{E}\big[\|x_{i}(k)-x_{0}(k)\|\big]=0,\
\lim_{k\rightarrow \infty}\mathbb{E}\big[\|\vartheta_{i}(k)-\vartheta_{0}\|\big]=0;\\
\forall v_i\in\mathscr{V}_{2},\ &\lim_{k\rightarrow \infty}\mathbb{E}\big[\|x_{i}(k)+x_{0}(k)\|\big]=0,\
\lim_{k\rightarrow \infty}\mathbb{E}\big[\|\vartheta_{i}(k)+\vartheta_{0}\|\big]=0.
\end{aligned}
\end{eqnarray*}
\end{definition}

Under control protocol (\ref{sys:9.2}), systems (\ref{sys:5.1}) and (\ref{sys:5.2}) can be written as
\begin{eqnarray}\label{sys:9.3}
\mathbb{E}\big[\xi(k+1)\big]=
\left\{
\begin{aligned}
&\big[H\otimes I_{p}\big]\mathbb{E}[\xi(k)], \ \ \theta_k=1,\\
&\big[H^*\otimes I_{p}\big]\mathbb{E}[\xi(k)], \ \theta_k=0,
\end{aligned}
\right.
\end{eqnarray}
where $\xi(k)$ is defined in (\ref{sys:5.5}), and
\begin{eqnarray*}
\begin{aligned}
    &H=\left(\begin{array}{c@{\hspace{0.4em}}c}
                   I_{n}-\frac{\tau}{\beta}I_{n} & \frac{\tau}{\alpha\beta}I_{n} \\
                    -\frac{\alpha\tau}{\beta}I_{n} & I_{n}+\frac{\tau}{\beta}I_{n}-\beta \tau \big(\mathscr{D}+\mathscr{B}-|\mathscr{A}|\big) \\
                  \end{array}
\right),\\
&H^*=\left(\begin{array}{c@{\hspace{0.4em}}c}
                   I_{n}-\frac{\tau}{\beta}I_{n} & \frac{\tau}{\alpha\beta}I_{n} \\
                    -\frac{\alpha\tau}{\beta}I_{n} & I_{n}+\frac{\tau}{\beta}I_{n}\\
                  \end{array}
\right).
\end{aligned}
\end{eqnarray*}

The main result is given below.

\begin{theorem}\label{theorem:9.2}
Consider systems (\ref{sys:5.1}) and (\ref{sys:5.2}), where the gain parameter $\beta$ satisfies the the following inequalities
\begin{eqnarray}\label{sys:9.4}
\begin{aligned}
\max\left\{\sqrt{\frac{1+\alpha}{b_m}},\frac{\alpha^{\frac{P}{P-1}}-1}{\tau b_m\varphi^{P-1}}\right\}<\beta\leq\frac{1}{\tau d_M},
\end{aligned}
\end{eqnarray}
where $\alpha>(1+\frac{\tau}{\beta}+\frac{\alpha \tau}{\beta})^{P-1}$, and $d_M,b_m,\varphi$ are defined in (\ref{sys:3.7}), (\ref{sys:5.7}) and (\ref{sys:5.9}), respectively. Suppose that the successful transmission rate satisfies $\bar{\theta}>0$. Using distributed protocol (\ref{sys:9.1}), the bipartite tracking in expectation is achieved  if and only if the communication topology meets the conditions \textbf{C1} and \textbf{C2}.
\end{theorem}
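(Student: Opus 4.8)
The plan is to mirror the structure used for the active-leader case in Section~5 (Theorem~\ref{theorem:5.5}), adapting it to the expectation dynamics produced by the Bernoulli packet-loss model. The key observation is that taking expectations in \eqref{sys:9.3} linearizes the system: since $\theta_k$ is Bernoulli with $\mathbb{E}[\theta_k]=\bar\theta$ and is independent of the state, the expected one-step update reads $\mathbb{E}[\xi(k+1)] = [\bar H\otimes I_p]\,\mathbb{E}[\xi(k)]$, where $\bar H=\bar\theta H+(1-\bar\theta)H^*$. First I would write out $\bar H$ explicitly; its upper block rows coincide with those of $H$ and $H^*$, while the lower-right block becomes $I_n+\frac{\tau}{\beta}I_n-\bar\theta\beta\tau(\mathscr{D}+\mathscr{B}-|\mathscr{A}|)$, so only the weight of the coupling term is scaled by $\bar\theta$. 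The whole bipartite-tracking-in-expectation problem then reduces to proving $\lim_{k\to\infty}\|\bar H^{k}\|_\infty=0$, i.e. $\rho(\bar H)<1$.

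Next I would repeat the sign-flip trick of Section~5: because $\bar H$ contains the negative block $-\frac{\alpha\tau}{\beta}I_n$, I replace it by its modulus $\bar\Psi=|\bar H|$, obtained by flipping the sign of that block, and use $\|\prod_s \bar H\|_\infty\le\|\prod_s\bar\Psi\|_\infty$ exactly as in \eqref{sys:5.8}. The point is that $\bar\Psi$ is a genuine super-stochastic matrix once $\beta$ obeys \eqref{sys:9.4} — the argument of Lemma~\ref{lemma:5.2} goes through verbatim with $b_m$ and $d_M$ replaced by $\bar\theta b_m$ and $\bar\theta d_M$ where the coupling enters, and since $\bar\theta>0$ the relevant row sums $\Lambda_{n+i}[\bar\Psi]$ are still strictly below $1$ for every follower $v_i$ that detects the leader ($|a_{i0}|>0$). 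Because the interaction topology is now static (no $k$-dependence), the composition-of-edge-sets machinery collapses: I only need the single-matrix spectral bound rather than the asynchronous product version. Concretely I would verify, using conditions \textbf{C1} and \textbf{C2}, that $\bar\Psi^{P}$ (or the augmented $\check{\mathscr A}$-construction of \eqref{sys:5.9}) has all row sums strictly less than $1$ — this is the static analogue of Lemma~\ref{lemma:5.4}, and the constants $\alpha$, $\varphi$, $P$ in \eqref{sys:9.4} are precisely what makes the super-stochastic bound of Theorem~\ref{theorem:2.8} yield $\|\bar\Psi^{P}\|_\infty<1$.

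For sufficiency I would then conclude $\rho(\bar H)\le\rho(\bar\Psi)\le\|\bar\Psi^{P}\|_\infty^{1/P}<1$, so $\mathbb{E}[\xi(k)]\to 0$, which unpacks through the definitions in \eqref{sys:5.5} into the four limits of Definition~\ref{definition:9.1}. For necessity I would argue exactly as in Theorem~\ref{theorem:3.3}: if \textbf{C1} fails one constructs (via the three structural-balance cases) a configuration where some follower cannot settle to $\pm x_0$, and if \textbf{C2} fails some follower never receives leader information even in expectation, so its error mean cannot vanish.

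The main obstacle I anticipate is not the algebra but justifying that taking expectations genuinely decouples the noise from the dynamics. One must be careful that $\theta_k$ is independent across time and independent of $\xi(k)$, so that $\mathbb{E}[\xi(k+1)]=\bar H\,\mathbb{E}[\xi(k)]$ holds as a clean linear recursion rather than merely bounding a conditional expectation; this is what lets the deterministic super-stochastic product theory apply unchanged. A secondary subtlety is confirming that $\bar\theta>0$ is exactly the right threshold — if $\bar\theta=0$ every packet is lost, the coupling term disappears entirely, and the leader information never propagates, so the strict positivity of $\bar\theta$ is what preserves the strict row-sum deficit $\Lambda_{n+i}[\bar\Psi]<1$ that drives the whole contraction argument.
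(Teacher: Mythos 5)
You take a genuinely different route from the paper, and it contains a gap. The paper's proof never forms an averaged transition matrix: it keeps \eqref{sys:9.3} as a switched system, builds the modulus matrix $\Psi$ of the \emph{lossless} matrix $H$ only, shows $\lim_{k\to\infty}\Psi^k=\mathbf{0}$ by the method of Theorem~\ref{theorem:5.5}, and then invokes $\bar\theta>0$ to pass to the expected error (admittedly rather tersely, since the realized product interleaves $H$ with $H^*$). Your proposal instead works with $\bar H=\bar\theta H+(1-\bar\theta)H^*$, which is a cleaner and more honest reduction of the expectation dynamics, provided $\theta_k$ is independent of $\xi(k)$ as you note.

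The gap is in the claim that ``since $\bar\theta>0$ the relevant row sums $\Lambda_{n+i}[\bar\Psi]$ are still strictly below $1$.'' In the averaged matrix the leader coupling is scaled by $\bar\theta$, so for a follower with $|a_{i0}|>0$ the corresponding row sum of $\bar\Psi$ is
\begin{equation*}
\Lambda_{n+i}[\bar\Psi]=1+\frac{\alpha\tau}{\beta}+\frac{\tau}{\beta}-\bar\theta\,\beta\tau|a_{i0}|,
\end{equation*}
which is below $1$ only if $\beta^2\bar\theta\,|a_{i0}|>1+\alpha$, i.e.\ $\beta>\sqrt{(1+\alpha)/(\bar\theta b_m)}$. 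Condition \eqref{sys:9.4} only guarantees $\beta>\sqrt{(1+\alpha)/b_m}$, so for $\bar\theta$ small the deficit disappears, $\bar\Psi$ has no row with sum below $1$ in the lower block, and the entire contraction argument (Lemma~\ref{lemma:5.2}, Lemma~\ref{lemma:5.4}, Theorem~\ref{theorem:2.8}) has nothing to bite on; mere positivity of $\bar\theta$ is not the right threshold for your route. To repair it you would either have to strengthen \eqref{sys:9.4} by replacing $b_m$ with $\bar\theta b_m$ (and correspondingly $\varphi$ with its $\bar\theta$-scaled analogue), thereby proving a weaker theorem than the one stated, or abandon averaging and argue on random products directly: over a long enough window the matrix $H$ occurs with positive probability, and a product-of-super-stochastic-matrices argument in the spirit of Lemma~\ref{lemma:5.4} (with $|H^*|$ playing the role of the ``idle'' factors, exactly as in the asynchronous setting where a follower only occasionally hears from the leader) yields the contraction. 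The necessity half of your proposal is fine and matches the paper's.
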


\begin{proof}
\emph{\underline{Sufficiency}}:\
For the matrix $H$, we construct the following matrix
\begin{eqnarray*}
\begin{aligned}
    &\Psi=\left(\begin{array}{c@{\hspace{0.4em}}c}
                   I_{n}-\frac{\tau}{\beta}I_{n} & \frac{\tau}{\alpha\beta}I_{n} \\
                    \frac{\alpha\tau}{\beta}I_{n} & I_{n}+\frac{\tau}{\beta}I_{n}-\beta \tau \big(\mathscr{D}+\mathscr{B}-|\mathscr{A}|\big) \\
                  \end{array}
\right).
\end{aligned}
\end{eqnarray*}
We can deduce that under the condition $\sqrt{\frac{1+\alpha}{b_m}}<\beta\leq\frac{1}{\tau d_M}$, $\Psi$ is a super-stochastic matrix and its row sums satisfy: 1) $\Lambda_{i}[\Psi]\!=\!1\!-\!\frac{\tau}{\beta}\!+\!\frac{\tau}{\alpha\beta}\!<\!1$; 2) $\Lambda_{n+i}[\Psi]\!=\!1\!+\!\frac{\alpha \tau}{\beta}\!+\!\frac{\tau}{\beta}\!-\!\beta\tau |a_{i0}|\!<\!1$ if $|a_{i0}|\!>\!0$; 3) $\Lambda_{n+i}[\Psi]\!=\!1\!+\!\frac{\alpha \tau}{\beta}\!+\!\frac{\tau}{\beta}\!>\!1$ if $|a_{i0}|\!=\!0$, where $i\in\{1,2,\ldots,n\}$. Furthermore, under the condition $\beta>\frac{\alpha^{\frac{P}{P-1}}-1}{\tau b_m\varphi^{P-1}}$, we can derive by the proof method of Theorem \ref{theorem:5.5} that $\lim_{k\rightarrow\infty}\Psi^k=\mathbf{0}$. And since the successful transmission rate satisfies $\bar{\theta}>0$, we have
\begin{eqnarray}\label{sys:9.5}
\begin{aligned}
\lim_{k\rightarrow\infty}\mathbb{E}\big[\|\xi(k+1)\|_{\infty}\big]
&\leq\lim_{k\rightarrow\infty}\mathbb{E}\big[\|H^k\|_{\infty}\big]\mathbb{E}\big[\|\xi(0)\big\|_{\infty}\big]\\
&\leq\lim_{k\rightarrow\infty}\mathbb{E}\big[\|\Psi^k\|_{\infty}\big]\mathbb{E}\big[\|\xi(0)\big\|_{\infty}\big]=0.
\end{aligned}
\end{eqnarray}
This implies that error system (\ref{sys:9.3}) converges gradually to zeros when $\bar{\theta}>0$. Therefore, the bipartite tracking in expectation is achieved.

\emph{\underline{Necessity}}:
The proof of the necessity is similar to the proof of Theorem~\ref{theorem:3.3}, so it is omitted here.
\end{proof}

\section{Bipartite tracking of second-order MASs over a random signed network}\label{section:10}

As a matter of fact, the quality of wireless communication channels in actual networks tends to change with time due to congestion and random fading, which makes random networks more widely used. Therefore, we consider the problem of bipartite tracking of MASs over a random signed network.

Assume the information interactions among the followers are described by a random signed network $\mathscr{G}=\{\mathscr{V},\mathscr{E}\}$, an edge $(i,j)$ is determined randomly with probability $p_{ij}\in[0,1]$. The weight matrix of random network $\mathscr{G}$ is denoted by $\mathscr{W}=[w_{ij}]$, where $w_{ij}\neq0$ if $(j,i)\in\mathscr{E}$, and otherwise, $w_{ij}=0$. Let $\bar{\mathscr{A}}=[\bar{a}_{ij}]$ represent the random adjacency matrix of $\mathscr{G}$, where
\begin{equation*}
\bar{a}_{ij}=\left\{
\begin{aligned}
&w_{ij} \ \ {\rm with \ probability} \ p_{ij}\\
&0 \ \ \ \ \ {\rm with \ probability} \ 1-p_{ij}
\end{aligned}
\right..
\end{equation*}
Thus the expectation of the adjacency matrix can be written as
\begin{equation*}
\mathbb{E}[\bar{\mathscr{A}}]=
             \begin{bmatrix}
               p_{11}w_{11} & \cdots & p_{1n}w_{1n} \\
               \vdots & \ddots & \vdots \\
               p_{n1}w_{n1} & \cdots & p_{nn}w_{nn} \\
             \end{bmatrix}.
\end{equation*}
In addition, for any follower $v_i$, the edge $(v_0,v_i)$ is determined randomly with probability $p_{i0}$ and its weight is denoted by $w_{i0}$. The adjacency element from the leader to follower $v_i$ satisfies: $\bar{a}_{i0}=w_{i0}$ with probability $p_{i0}$, and $\bar{a}_{i0}=0$ with probability $1-p_{i0}$. The information interactions among all agents are described by a random network $\tilde{\mathscr{G}}=\{\tilde{\mathscr{V}},\tilde{\mathscr{E}}\}$.

Consider second-order MASs (\ref{sys:5.1}) and (\ref{sys:5.2}). The control input is designed as
\begin{eqnarray}\label{sys:10.1}
\begin{aligned}
u_{i}(k)=&\sum_{j=1}^n|\bar{a}_{ij}|\big[\sgn(\bar{a}_{ij})x_{j}(k)-x_{i}(k)\big]\\
&+|\bar{a}_{i0}|\big[\sgn(\bar{a}_{i0})x_{0}(k)-x_{i}(k)\big]\\
&+\beta\sum_{j=1}^n|\bar{a}_{ij}|\big[\sgn(\bar{a}_{ij})\vartheta_{j}(k)-\vartheta_{i}(k)\big]\\
&+\beta |\bar{a}_{i0}|\big[\sgn(\bar{a}_{i0})\vartheta_{0}-\vartheta_{i}(k)\big],
\end{aligned}
\end{eqnarray}
in which $\beta>0$ is a gain parameter.

Using control protocol (\ref{sys:10.1}), systems (\ref{sys:5.1}) and (\ref{sys:5.2}) can be expressed as
\begin{eqnarray}\label{sys:10.2}
\mathbb{E}\big[\xi(k+1)\big]=\mathbb{E}\big[\bar{H}\big]\mathbb{E}\big[\xi(k)\big],
\end{eqnarray}
where $\xi(k)$ is defined in (\ref{sys:5.5}), and
\begin{eqnarray*}
\begin{aligned}
\mathbb{E}[\bar{H}]=\left(\begin{array}{c@{\hspace{0.4em}}c}
                   I_{n}-\frac{\tau}{\beta}I_{n} & \frac{\tau}{\alpha\beta}I_{n} \\
                    -\frac{\alpha\tau}{\beta}I_{n} & I_{n}+\frac{\tau}{\beta}I_{n}-\beta \tau \big(\mathbb{E}[\bar{\mathscr{D}}]+\mathbb{E}[\bar{\mathscr{B}}]-\big|\mathbb{E}[\bar{\mathscr{A}}]\big|\big) \\
                  \end{array}
\right),
\end{aligned}
\end{eqnarray*}
with
\begin{eqnarray*}
\begin{aligned}
&\mathbb{E}[\bar{\mathscr{D}}]=\diag\big\{\sum_{j=1}^np_{1j}|w_{1j}|,\sum_{j=1}^np_{2j}|w_{2j}|,\ldots,\sum_{j=1}^np_{nj}|w_{nj}|\big\},\\
&\mathbb{E}[\bar{\mathscr{B}}]=\diag\left\{p_{10}|w_{10}|,p_{20}|w_{20}|,\ldots,p_{n0}|w_{n0}|\right\}.
\end{aligned}
\end{eqnarray*}

\begin{theorem}\label{theorem:10.2}
Consider systems (\ref{sys:5.1}) and (\ref{sys:5.2}) under control protocol (\ref{sys:10.1}). Suppose that the gain parameter $\beta$ satisfies
\begin{eqnarray}\label{sys:10.3}
\begin{aligned}
\max\left\{\sqrt{\frac{1+\alpha}{\bar{b}_m}},\frac{\alpha^{\frac{P}{P-1}}-1}{\tau \bar{b}_m\bar{\varphi}^{P-1}}\right\}<\beta\leq\frac{1}{\tau \bar{d}_M},
\end{aligned}
\end{eqnarray}
where $\alpha>(1+\frac{\tau}{\beta}+\frac{\alpha \tau}{\beta})^{P-1}$, and
\begin{eqnarray*}
\begin{aligned}
&\bar{d}_M=\max\big\{\sum_{j=1}^np_{ij}|w_{ij}|+p_{i0}|w_{i0}| \mid i=1,2,\ldots,n\big\},\\
&\bar{b}_m=\min\big\{p_{i0}|w_{i0}| \mid i=1,2,\ldots,n\big\},\\
&\bar{\varphi}=\min\big\{[\Phi']_{ij}\mid [\Phi']_{ij}>0, \ i,j=1,2,\ldots,n\big\},
\end{aligned}
\end{eqnarray*}
in which $\Phi'=I_{n}+\frac{\tau}{\beta}I_{n}-\beta \tau \big(\mathbb{E}[\bar{\mathscr{D}}]+\mathbb{E}[\bar{\mathscr{B}}]-\big|\mathbb{E}[\bar{\mathscr{A}}]\big|\big)$. The bipartite tracking in expectation is achieved if and only if the expected graph of communication topology meets the conditions \textbf{C1} and \textbf{C2}.
\end{theorem}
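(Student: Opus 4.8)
The plan is to reduce the stochastic convergence problem to the deterministic stability of the single constant matrix $\mathbb{E}[\bar{H}]$ appearing in the expected error recursion (\ref{sys:10.2}), and then to dispatch that matrix with the super-stochastic machinery already built for the active-leader case. First I would note that, since the random edges are drawn independently across time with fixed probabilities and weights, $\mathbb{E}[\bar{H}]$ is time-invariant, so (\ref{sys:10.2}) is the linear time-invariant iteration $\mathbb{E}[\xi(k+1)]=\mathbb{E}[\bar{H}]\,\mathbb{E}[\xi(k)]$, and bipartite tracking in expectation (Definition~\ref{definition:9.1}) amounts to proving $\lim_{k\rightarrow\infty}\|\mathbb{E}[\bar{H}]^{k}\|_{\infty}=0$, i.e. $\rho(\mathbb{E}[\bar{H}])<1$. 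Because $\mathbb{E}[\bar{H}]$ carries the negative block $-\frac{\alpha\tau}{\beta}I_{n}$, I would pass to its entrywise absolute value $\bar{\Psi}$, obtained by flipping that block to $+\frac{\alpha\tau}{\beta}I_{n}$, exactly as in the transition from $H$ to $\Psi$ in Section~\ref{section:5} and in the proof of Theorem~\ref{theorem:9.2}. Then $|\mathbb{E}[\bar{H}]|=\bar{\Psi}$ gives $\|\mathbb{E}[\bar{H}]^{k}\|_{\infty}\leq\|\bar{\Psi}^{k}\|_{\infty}$, so it suffices to show $\rho(\bar{\Psi})<1$.

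Next I would establish the super-stochastic structure of $\bar{\Psi}$ together with the needed connectivity, reusing the three lemmas of Section~\ref{section:5} with the expected data $\mathbb{E}[\bar{\mathscr{D}}],\mathbb{E}[\bar{\mathscr{B}}],\mathbb{E}[\bar{\mathscr{A}}]$ and the constants $\bar{d}_M,\bar{b}_m,\bar{\varphi}$ in place of $d_M,b_m,\varphi$. Under the left inequality of (\ref{sys:10.3}), namely $\sqrt{(1+\alpha)/\bar{b}_m}<\beta\leq 1/(\tau\bar{d}_M)$, a verbatim repeat of Lemma~\ref{lemma:5.2} shows $\bar{\Psi}$ is super-stochastic with positive diagonal: its first $n$ row sums equal $1-\frac{\tau}{\beta}+\frac{\tau}{\alpha\beta}<1$, while row $n+i$ has sum below $1$ exactly when $p_{i0}|w_{i0}|>0$ and above $1$ otherwise. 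Forming the augmented stochastic matrix $\check{\mathscr{A}}$ and digraph $\check{\mathscr{G}}$ as in (\ref{sys:5.9}), the rows with sum $\geq 1$ are precisely the velocity vertices $n+u$ of followers not directly linked to the leader in the expected graph. Because the iteration is time-invariant, the composition argument of Lemma~\ref{lemma:5.3} collapses: condition \textbf{C2} on the expected graph supplies, for each follower $v_u$, a directed leader-to-$v_u$ path of length at most $P$, so the $P$-fold self-composition of the fixed edge set $\check{\mathscr{E}}$ is rooted at $0$ on $\{n+1,\ldots,2n\}$, with no factor $h$ (which is why (\ref{sys:10.3}) uses $P$ rather than $Ph$). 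Feeding this into the row-sum estimate of Lemma~\ref{lemma:5.4}, and using the right inequality of (\ref{sys:10.3}) together with $\alpha>(1+\frac{\tau}{\beta}+\frac{\alpha\tau}{\beta})^{P-1}$, yields $\|\bar{\Psi}^{P}\|_{\infty}<1$; since $\rho(\bar{\Psi})^{P}=\rho(\bar{\Psi}^{P})\leq\|\bar{\Psi}^{P}\|_{\infty}<1$, we conclude $\rho(\bar{\Psi})<1$.

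With $\rho(\bar{\Psi})<1$ in hand, $\|\mathbb{E}[\bar{H}]^{k}\|_{\infty}\leq\|\bar{\Psi}^{k}\|_{\infty}\rightarrow 0$, so $\mathbb{E}[\xi(k)]\rightarrow\mathbf{0}$; unpacking $\xi$ via (\ref{sys:5.5}) gives the expected-error conditions of Definition~\ref{definition:9.1}, establishing sufficiency. For necessity I would argue exactly as in Theorem~\ref{theorem:3.3}: if \textbf{C1} fails on the expected graph the followers split incorrectly (or a follower under equal cooperative and competitive expected influence settles at neither $x_0$ nor $-x_0$), and if \textbf{C2} fails some follower's expected state is decoupled from the leader, so in each case the limits in Definition~\ref{definition:9.1} cannot hold. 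I expect the main obstacle to be the bookkeeping that certifies the rooted-composition and Lemma~\ref{lemma:5.4} estimates transfer to the expected setting, specifically checking that $\bar{b}_m$ and $\bar{\varphi}$ still lower-bound the relevant leader weight and the positive off-diagonal entries of $\bar{\Psi}$, and that the nonzero chains demanded by Theorem~\ref{theorem:2.6} persist when every edge weight is replaced by its expectation $p_{ij}w_{ij}$; once this is verified, convergence is immediate from the time-invariance of $\mathbb{E}[\bar{H}]$.
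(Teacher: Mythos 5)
Your proposal is correct and follows essentially the same route as the paper: both replace $\mathbb{E}[\bar{H}]$ by its nonnegative companion $\mathbb{E}[\bar{\Psi}]$, verify super-stochasticity under the left inequality of (\ref{sys:10.3}) via the analogue of Lemma~\ref{lemma:5.2} with $\bar{d}_M,\bar{b}_m$, and then invoke the rooted-composition and product estimates of the Theorem~\ref{theorem:5.5} machinery (with $P$ replacing $Ph$ because the expected iteration is time-invariant) to get $\|\mathbb{E}[\bar{\Psi}]^{P}\|_{\infty}<1$ and hence convergence, with necessity handled as in Theorem~\ref{theorem:3.3}. Your write-up merely fills in details the paper compresses into the phrase ``by the proof method of Theorem~\ref{theorem:5.5}.''
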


\begin{proof}
\emph{\underline{Sufficiency}}:\
For the matrix $\mathbb{E}[\bar{H}]$, we construct the following matrix
\begin{eqnarray*}
\begin{aligned}
    &\mathbb{E}[\bar{\Psi}]=\left(\begin{array}{c@{\hspace{0.4em}}c}
                   I_{n}-\frac{\tau}{\beta}I_{n} & \frac{\tau}{\alpha\beta}I_{n} \\
                    \frac{\alpha\tau}{\beta}I_{n} & I_{n}+\frac{\tau}{\beta}I_{n}-\beta \tau \big(\mathbb{E}[\bar{\mathscr{D}}]+\mathbb{E}[\bar{\mathscr{B}}]-\big|\mathbb{E}[\bar{\mathscr{A}}]\big|\big) \\
                  \end{array}
\right).
\end{aligned}
\end{eqnarray*}
Then it can be deduced that under the condition $\sqrt{\frac{1+\alpha}{\bar{b}_m}}<\beta\leq\frac{1}{\tau \bar{d}_M}$, $\mathbb{E}[\bar{\Psi}]$ is a super-stochastic matrix. Furthermore, when $\beta>\frac{\alpha^{\frac{P}{P-1}}-1}{\tau \bar{b}_m\bar{\varphi}^{P-1}}$, we can derive by the proof method of Theorem \ref{theorem:5.5} that $\lim_{k\rightarrow\infty}\mathbb{E}^k[\bar{\Psi}]=0_{2n\times 2n}$. It thus follows that
\begin{eqnarray}\label{sys:10.4}
\begin{aligned}
\lim_{k\rightarrow\infty}\mathbb{E}\big[\|\xi(k+1)\|_{\infty}\big]
&\leq\lim_{k\rightarrow\infty}\big\|\mathbb{E}\big[\bar{H}]\big\|^k_{\infty}\mathbb{E}\big[\|\xi(0)\big\|_{\infty}\big]\\
&\leq\lim_{k\rightarrow\infty}\big\|\mathbb{E}\big[\bar{\Psi}]\big\|^k_{\infty}\mathbb{E}\big[\|\xi(0)\big\|_{\infty}\big]=0.
\end{aligned}
\end{eqnarray}
Therefore, the bipartite tracking in expectation is achieved.

\emph{\underline{Necessity}}:
The proof of the necessity is similar to the proof of Theorem~\ref{theorem:3.3}, so it is omitted here.
\end{proof}

\section{Bipartite tracking of second-order MASs with matrix disturbance}\label{section:11}

\subsection{The case with a static leader}\label{section:11.1}

In actual systems, external interference is almost inevitable, which causes errors in the designed controller. Due to the "butterfly effect", small errors may cause serious deterioration of system performance. In addition, the matrix to be decomposed in practical application fields is generally a measurement matrix mixed with disturbances. Naturally, we want to give sufficient conditions to implement bipartite tracking when there is disturbance on the adjacency matrix $\mathscr{A}$.

Consider second-order MASs (\ref{sys:4.1}) and (\ref{sys:4.2}). Let $\Delta\mathscr{A}=[\Delta a_{ij}]_{n\times n}$ denote the linear disturbance of matrix $\mathscr{A}$. It is assumed that the matrices $\mathscr{A}$ and $\Delta\mathscr{A}$ are the same type. We design the following distributed protocol:
\begin{eqnarray}\label{sys:11.1}
\begin{aligned}
u_{i}(k)=-\gamma\vartheta_i(k)&+\sum\limits_{v_j\in\mathscr{N}_i}|a_{ij}+\Delta a_{ij}|\big[\sgn(a_{ij}+\Delta a_{ij})x_{j}(k)-x_{i}(k)\big]\\
&+|a_{i0}+\Delta a_{i0}|\big[\sgn(a_{i0}+\Delta a_{i0})x_{0}(k)-x_{i}(k)\big],
\end{aligned}
\end{eqnarray}
where $\gamma>0$ is a gain parameter. Using protocol (\ref{sys:11.1}), systems (\ref{sys:4.1}) and (\ref{sys:4.2}) can be written as
\begin{eqnarray}\label{sys:11.2}
\begin{aligned}
y(k+1)=\big[\big(C+\Delta C\big)\otimes I_{p}\big]y(k),
\end{aligned}
\end{eqnarray}
where $y(k)$ is defined in (\ref{sys:4.5}), and
\begin{eqnarray*}
   & C=\left(\begin{array}{@{\hspace{0.1em}}cc@{\hspace{0.1em}}}
                   I_{n}\!-\!\frac{\gamma\tau}{2}I_{n} & \frac{\gamma\tau}{2}I_{n} \\
                    \frac{\gamma\tau}{2}I_{n}\!-\!\frac{2\tau}{\gamma}\big(\mathscr{D}\!+\!\mathscr{B}\!-\!|\mathscr{A}|\big) & I_{n}\!-\!\frac{\gamma\tau}{2}I_{n} \\
                  \end{array}
\right),\\
&\Delta C=\left(\begin{array}{@{\hspace{0.1em}}cc@{\hspace{0.1em}}}
                   \mathbf{0} & \mathbf{0} \\
                    -\frac{2\tau}{\gamma}\big(\Delta\mathscr{D}\!+\!\Delta\mathscr{B}\!-\!|\Delta\mathscr{A}|\big) & \mathbf{0} \\
                  \end{array}
\right),
\end{eqnarray*}
in which $\Delta C$ is the disturbance of matrix $C$, and
\begin{equation*}
\begin{aligned}
&\Delta\mathscr{D}=\diag\Big\{\sum_{v_j\in\mathscr{N}_1}|\Delta a_{1j}|,\sum_{v_j\in\mathscr{N}_2}|\Delta a_{2j}|,\ldots,\sum_{v_j\in\mathscr{N}_n}|\Delta a_{nj}|\Big\},\\
&\Delta\mathscr{B}=\diag\Big\{|\Delta a_{10}|,|\Delta a_{20}|,\ldots,|\Delta a_{n0}|\Big\}.
\end{aligned}
\end{equation*}

In the following, we show the result of bipartite tracking in the presence of matrix disturbance.

\begin{theorem}\label{theorem:11.1}
Consider systems (\ref{sys:4.1}) and (\ref{sys:4.2}), where the gain parameter $\psi$ satisfies the the following inequality
\begin{eqnarray}\label{sys:11.3}
2\sqrt{\tilde{d}_M}\leq\gamma<\frac{2}{\tau},
\end{eqnarray}
where $\tilde{d}_M=\max\big\{\sum_{v_j\in\mathscr{N}_i}|a_{ij}\!+\!\Delta a_{ij}|\!+\! |a_{i0}\!+\!\Delta a_{i0}|\mid i=1,2,\ldots,n\big\}$. The distributed protocol (\ref{sys:10.1}) solves the problem of bipartite tracking if and only if the communication topology meets the conditions \textbf{C1} and \textbf{C2}.
\end{theorem}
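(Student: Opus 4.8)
The plan is to reduce the disturbed, time-invariant error system (\ref{sys:11.2}) to a single sub-stochastic matrix and then invoke Theorem \ref{theorem:2.2}. The decisive observation is that $\mathscr{A}$ and $\Delta\mathscr{A}$ are of the same type, so that for every pair $(i,j)$ one has $\sgn(a_{ij}+\Delta a_{ij})=\sgn(a_{ij})$, $|a_{ij}+\Delta a_{ij}|=|a_{ij}|+|\Delta a_{ij}|$, and $a_{ij}+\Delta a_{ij}=0\Leftrightarrow a_{ij}=0$ (and likewise for the leader weights $a_{i0}$). Hence, writing $\hat{\mathscr{A}}=\mathscr{A}+\Delta\mathscr{A}$ with $|\hat{\mathscr{A}}|=|\mathscr{A}|+|\Delta\mathscr{A}|$, $\hat{\mathscr{D}}=\mathscr{D}+\Delta\mathscr{D}$ and $\hat{\mathscr{B}}=\mathscr{B}+\Delta\mathscr{B}$, protocol (\ref{sys:11.1}) is exactly the synchronous static-leader protocol (\ref{sys:4.3}) driven by $\hat{\mathscr{A}}$, and the constant coefficient matrix in (\ref{sys:11.2}) equals $C+\Delta C$, the matrix $C$ of Section \ref{section:4} built from $\hat{\mathscr{A}}$. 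Because $\hat{\mathscr{A}}$ has the same support and sign pattern as $\mathscr{A}$, the digraph $\tilde{\mathscr{G}}$ satisfies \textbf{C1} and \textbf{C2} if and only if the digraph associated with $\hat{\mathscr{A}}$ does.

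First I would verify that, under (\ref{sys:11.3}) (which is (\ref{sys:4.7}) with $d_M$ replaced by the disturbed degree $\tilde d_M$), $C+\Delta C$ is a sub-stochastic matrix with positive diagonal entries. A direct row-sum computation gives $\Lambda_i[C+\Delta C]=1$ for the first $n$ rows, while $\Lambda_{n+i}[C+\Delta C]=1-\frac{2\tau}{\gamma}|\hat a_{i0}|$; by the same-type property $|\hat a_{i0}|>0\Leftrightarrow|a_{i0}|>0$, so the deficient rows $\mathcal{S}_1$ are precisely the velocity-block rows $n+i$ of the followers directly linked to the leader, and $\mathcal{S}_2$ consists of all position-block rows together with the velocity-block rows of leader-unlinked followers.

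For sufficiency I would establish the nonzero element chain hypothesis of Theorem \ref{theorem:2.2}. The off-diagonal structure of $F:=C+\Delta C$ is $[F]_{i,\,n+i}>0$, $[F]_{n+i,\,i}>0$, and $[F]_{n+i,\,j}>0\Leftrightarrow(v_j,v_i)\in\mathscr{E}$. Given any $k\in\mathcal{S}_2$ attached to follower $v_{k_0}$, condition \textbf{C2} supplies a directed path $v_0\rightarrow v_{i_0}\rightarrow v_{j_1}\rightarrow\cdots\rightarrow v_{k_0}$ with $v_{i_0}$ leader-linked; lifting this path to $F$ yields a zigzag chain $n+i_0\rightarrow i_0\rightarrow n+j_1\rightarrow j_1\rightarrow\cdots\rightarrow n+k_0\,(\rightarrow k_0)$ of nonzero entries from the deficient index $n+i_0\in\mathcal{S}_1$ to $k$, with consecutive indices distinct because the vertices of the path are distinct. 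Theorem \ref{theorem:2.2} then yields $\rho(C+\Delta C)<1$, so $\lim_{k\to\infty}(C+\Delta C)^k=\mathbf{0}$ and error system (\ref{sys:11.2}) is asymptotically stable, which by the definition of $y(k)$ in (\ref{sys:4.5}) gives bipartite tracking. The necessity of \textbf{C1} and \textbf{C2} follows verbatim from the necessity part of Theorem \ref{theorem:3.3}, applied to the combined graph of $\hat{\mathscr{A}}$.

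The main obstacle is the reduction step rather than the spectral estimate: one must use the same-type assumption carefully to guarantee that no cancellation occurs in $a_{ij}+\Delta a_{ij}$, so that the support, the sign pattern, and hence \textbf{C1}--\textbf{C2} are inherited unchanged by $\hat{\mathscr{A}}$; and one must faithfully translate the leader-reachability of \textbf{C2} into the doubled index set $\{1,\dots,2n\}$ to produce the chains required by Theorem \ref{theorem:2.2}. A minor point to check is that the velocity-block diagonal stays positive at the boundary $\gamma=2\sqrt{\tilde d_M}$, which is why the chain is routed through the off-diagonal edges $[F]_{n+i,j}$ rather than relying on that diagonal.
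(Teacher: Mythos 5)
Your proposal is correct and matches the paper's proof in substance: both hinge on the same-type assumption forcing $|a_{ij}+\Delta a_{ij}|=|a_{ij}|+|\Delta a_{ij}|$ with no sign cancellation, so that under (\ref{sys:11.3}) the constant matrix $C+\Delta C$ is sub-stochastic with positive diagonal, row sums equal to $1$ except at the velocity-block rows of leader-linked followers, and the same support as the undisturbed case, after which power convergence gives stability of (\ref{sys:11.2}); the necessity is delegated to Theorem~\ref{theorem:3.3} in both. The only divergence is the final convergence step: the paper simply says ``similar to the proof method of Theorem~\ref{theorem:4.3}'', i.e.\ it reuses the product machinery of Lemma~\ref{lemma:4.2} and Theorem~\ref{theorem:2.4} (which for a time-invariant matrix degenerates to $\|(C+\Delta C)^{2Ph}\|_\infty<1$), whereas you invoke Theorem~\ref{theorem:2.2} directly on the single matrix, building the required non-zero element chains by the zigzag lift $n+i_0\rightarrow i_0\rightarrow n+j_1\rightarrow j_1\rightarrow\cdots$ of a leader-to-follower path. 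Both routes are valid and rest on the same path-lifting idea; your version is somewhat more self-contained and also more explicit than the paper's two-line sketch, and your remark about routing the chain through the entries $[F]_{n+i,j}$ rather than the possibly vanishing entry $[F]_{n+i,i}$ at the boundary $\gamma=2\sqrt{\tilde d_M}$ is a detail the paper glosses over.
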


\begin{proof}
\emph{\underline{Sufficiency}}:\
Under condition (\ref{sys:11.3}), we can deduce that $C+\Delta C$ is a sub-stochastic matrix in which the diagonal elements are positive and the row sums satisfy:
\begin{eqnarray*}
\left\{
\begin{aligned}
&\Lambda_{i}\big[C+\Delta C\big]<1 \ {\rm if} \ (v_0,v_i)\in\tilde{\mathscr{E}},\\
&\Lambda_{i}\big[C+\Delta C\big]=1 \ {\rm if} \ (v_0,v_i)\notin\tilde{\mathscr{E}},
\end{aligned}
\right.
\end{eqnarray*}
where $i\in\{1,2,\ldots,n\}$. Thus, similar to the proof method of Theorem \ref{theorem:4.3}, we get
\begin{eqnarray*}
\begin{aligned}
\lim_{k\rightarrow\infty}\big[C+\Delta C\big]^k=\mathbf{0}.
\end{aligned}
\end{eqnarray*}
This implies that error system (\ref{sys:11.2}) converges gradually to zeros. That is to say, the bipartite tracking under matrix disturbance is realized.

\emph{\underline{Necessity}}:
The proof of the necessity is similar to the proof of Theorem~\ref{theorem:3.3}, so it is omitted here.
\end{proof}

\begin{remark}\label{remark:10.2}
Under the conditions \textbf{C1} and \textbf{C2}, the selection range of the parameter $\gamma$ given in (\ref{sys:4.7}) can guarantee that the matrices $C(k)$, $k\in\mathbb{N}$ are sub-stochastic and $\lim_{k\rightarrow\infty}\prod_{s=0}^kC(s)=\mathbf{0}$, namely, the asynchronous bipartite tracking in the absence of matrix disturbance is realized. However, condition (\ref{sys:4.7}) cannot guarantee that bipartite tracking can still be achieved if there exists the disturbance $\Delta C$ in system (\ref{sys:11.2}). This is because under condition (\ref{sys:4.7}), matrix $C+\Delta C$ may not be a sub-stochastic matrix, and thus error system (\ref{sys:11.2}) is likely to fail to converge. Therefore, it is required to give a more strict parameter selection range (\ref{sys:11.3}) to ensure that $C+\Delta C$ is still a sub-stochastic matrix. Thus, the convergence of error system (\ref{sys:11.2}) can be proved by using the properties of sub-stochastic matrix.
\end{remark}

\subsection{The case with an active leader}\label{section:11.2}

In the following, we consider the problem of bipartite tracking for second-order MASs (\ref{sys:5.1}) and (\ref{sys:5.2}) with the matrix disturbance. Assume that the adjacency matrix $\mathscr{A}$ and its disturbance $\Delta\mathscr{A}$ are the same type. The distributed protocol with disturbance is designed as:
\begin{eqnarray}\label{sys:11.4}
\begin{aligned}
u_{i}(k)=&\sum\limits_{v_j\in\mathscr{N}_i}|a_{ij}+\Delta a_{ij}|\big[\sgn(a_{ij}+\Delta a_{ij})x_{j}(k)-x_{i}(k)\big]\\
&+|a_{i0}+\Delta a_{i0}|\big[\sgn(a_{i0}+\Delta a_{i0})x_{0}(k)-x_{i}(k)\big]\\
&+\beta\sum\limits_{v_j\in\mathscr{N}_i}|a_{ij}+\Delta a_{ij}|\big[\sgn(a_{ij}+\Delta a_{ij})\vartheta_{j}(k)-\vartheta_{i}(k)\big]\\
&+\beta |a_{i0}+\Delta a_{i0}|\big[\sgn(a_{i0}+\Delta a_{i0})\vartheta_{0}\!-\!\vartheta_{i}(k)\big],
\end{aligned}
\end{eqnarray}
where $\beta>0$ is a fixed gain parameter. Under control protocol (\ref{sys:11.4}), the error system takes the following form
\begin{eqnarray}\label{sys:11.5}
\begin{aligned}
\xi(k+1)=\big[\big(H+\Delta H\big)\otimes I_{p}\big]\xi(k),
\end{aligned}
\end{eqnarray}
where $\xi(k)$ is defined in (\ref{sys:5.5}), and
\begin{eqnarray*}
\begin{aligned}
&H=\left(\begin{array}{c@{\hspace{0.4em}}c}
                   I_{n}-\frac{\tau}{\beta}I_{n} & \frac{\tau}{\alpha\beta}I_{n} \\
                    -\frac{\alpha\tau}{\beta}I_{n} & I_{n}+\frac{\tau}{\beta}I_{n}-\beta \tau \big(\mathscr{D}+\mathscr{B})-|\mathscr{A}|\big) \\
                  \end{array}
\right),\\
&\Delta H=\left(\begin{array}{c@{\hspace{0.4em}}c}
                   \mathbf{0} & \mathbf{0} \\
                    \mathbf{0} & \beta \tau \big(|\Delta\mathscr{A}|-\Delta\mathscr{D}-\Delta\mathscr{B}\big) \\
                  \end{array}
\right).
\end{aligned}
\end{eqnarray*}

A necessary and sufficient condition for bipartite tracking under adjacency matrix disturbance is established below.

\begin{theorem}\label{theorem:11.3}
Consider systems (\ref{sys:5.1}) and (\ref{sys:5.2}), where the gain parameter $\beta$ satisfies the the following inequalities
\begin{eqnarray}\label{sys:11.6}
\begin{aligned}
\max\left\{\sqrt{\frac{1+\alpha}{b_m+\Delta b_m}},\frac{\alpha^{\frac{P}{P-1}}-1}{\tau (b_m+\Delta b_m)\varphi_1^{P-1}}\right\}<\beta\leq\frac{1}{\tau \tilde{d}_M},
\end{aligned}
\end{eqnarray}
where
\begin{eqnarray*}
\begin{aligned}
&b_m+\Delta b_m=\min\{|b_i+\Delta b_i| \mid i=1,2,\ldots,n\},\\
&\varphi_1=\min\big\{[\Phi']_{ij}\mid [\Phi']_{ij}>0, \ i,j=1,2,\ldots,n\big\},
\end{aligned}
\end{eqnarray*}
with $\Phi'\!=\!I_{n}\!+\!\frac{\tau}{\beta}I_{n}\!-\!\beta \tau(\mathscr{D}\!+\!\Delta\mathscr{D}\!+\!\mathscr{B}\!+\!\Delta\mathscr{B}\!-\!|\mathscr{A}|\!-\!|\mathscr{A}|)$, and $\alpha$ is defined in (\ref{sys:5.5}). The distributed protocol (\ref{sys:11.1}) solves the problem of bipartite tracking if and only if the communication topology meets the conditions \textbf{C1} and \textbf{C2}.
\end{theorem}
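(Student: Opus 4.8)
The plan is to mirror the proof of Theorem~\ref{theorem:5.5} (the active-leader case without disturbance), adapting it to the perturbed matrix $H+\Delta H$. First I would observe that $H+\Delta H$ has the same block structure as $H$, with the lower-right block replaced by $I_n+\frac{\tau}{\beta}I_n-\beta\tau(\mathscr{D}+\Delta\mathscr{D}+\mathscr{B}+\Delta\mathscr{B}-|\mathscr{A}+\Delta\mathscr{A}|)$. Because $\mathscr{A}$ and $\Delta\mathscr{A}$ are assumed to be of the same type, the signs of the entries of $\mathscr{A}+\Delta\mathscr{A}$ coincide with those of $\mathscr{A}$, so $|\mathscr{A}+\Delta\mathscr{A}|=|\mathscr{A}|+|\Delta\mathscr{A}|$ entrywise; this is the key algebraic fact that lets the perturbed off-diagonal block stay nonnegative and lets the effective row degrees be $\tilde{d}_M$ rather than $d_M$. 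The essential step is then to construct the companion nonnegative matrix $H^{+}$ obtained from $H+\Delta H$ by flipping the sign of the $-\frac{\alpha\tau}{\beta}I_n$ block to $+\frac{\alpha\tau}{\beta}I_n$, exactly as $\Psi$ was built from $H$ in Section~\ref{section:5}, so that $|H+\Delta H|=H^{+}$ entrywise.

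Next I would verify, under the first bound $\sqrt{(1+\alpha)/(b_m+\Delta b_m)}<\beta\le 1/(\tau\tilde{d}_M)$ in~(\ref{sys:11.6}), that $H^{+}$ is a super-stochastic matrix with positive diagonal entries; this is the analogue of Lemma~\ref{lemma:5.2}, now computed with the perturbed degrees $\tilde{d}_M$ and the perturbed leader-weight bound $b_m+\Delta b_m$. The row sums split exactly as before: the first $n$ rows sum to $1-\frac{\tau}{\beta}+\frac{\tau}{\alpha\beta}<1$, while row $n+i$ sums to $1+\frac{\alpha\tau}{\beta}+\frac{\tau}{\beta}-\beta\tau|b_i+\Delta b_i|$, which is below $1$ precisely when follower $v_i$ sees the leader and above $1$ otherwise. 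With $H^{+}$ super-stochastic, the infinity-norm domination $\|(H+\Delta H)^k\|_\infty\le\|(H^{+})^k\|_\infty$ holds as in~(\ref{sys:5.8}), so it suffices to show $\lim_{k\to\infty}\|(H^{+})^k\|_\infty=0$.

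For that final convergence I would invoke the machinery of Lemmas~\ref{lemma:5.3} and~\ref{lemma:5.4} verbatim, now applied to the static-topology product $\prod_{s=0}^{Ph-1}H^{+}=(H^{+})^{Ph}$ (here the topology is fixed, so the edge-set composition argument of Lemma~\ref{lemma:5.3} reduces to the existence, guaranteed by \textbf{C1} and \textbf{C2}, of a length-$\le P$ leader-rooted path to each follower). The second bound in~(\ref{sys:11.6}), namely $\beta>(\alpha^{P/(P-1)}-1)/(\tau(b_m+\Delta b_m)\varphi_1^{P-1})$ together with $\alpha>(1+\frac{\tau}{\beta}+\frac{\alpha\tau}{\beta})^{P-1}$, is exactly the hypothesis~(\ref{sys:5.17}) of Lemma~\ref{lemma:5.4} with $\varphi$ replaced by the perturbed quantity $\varphi_1$ and $b_m$ replaced by $b_m+\Delta b_m$; it yields $\|(H^{+})^{Ph}\|_\infty<1$ and hence $\rho(H^{+})<1$. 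The sufficiency then follows by the telescoping estimate of~(\ref{sys:5.28}), and the necessity is identical to Theorem~\ref{theorem:3.3} and may be omitted.

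The main obstacle I anticipate is bookkeeping rather than conceptual: I must justify carefully that the same-type hypothesis forces $|\mathscr{A}+\Delta\mathscr{A}|=|\mathscr{A}|+|\Delta\mathscr{A}|$ and that the structural-balance partition \textbf{C1} is preserved under the disturbance, so that the sign pattern needed to build $H^{+}$ (and to interpret the model transformation of~(\ref{sys:5.5})) is unchanged. Once that sign-pattern stability is nailed down, every remaining estimate is a direct transcription of the Section~\ref{section:5} arguments with the parameters $d_M,b_m,\varphi$ upgraded to $\tilde{d}_M,\,b_m+\Delta b_m,\,\varphi_1$.
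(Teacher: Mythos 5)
Your proposal follows essentially the same route as the paper: the paper likewise forms the nonnegative companion matrix (it writes $\Phi+\Delta H$, which is exactly your $H^{+}$), checks under the first bound in (\ref{sys:11.6}) that it is super-stochastic with the stated row sums, and then invokes the proof method of Theorem~\ref{theorem:5.5} under the second bound to get $\lim_{k\to\infty}[\Phi+\Delta H]^k=\mathbf{0}$ and conclude via the infinity-norm domination. Your added care about the same-type hypothesis forcing $|\mathscr{A}+\Delta\mathscr{A}|=|\mathscr{A}|+|\Delta\mathscr{A}|$ is a detail the paper leaves implicit, and the only slip is cosmetic: the relevant power here is $P$ (fixed, synchronous topology), not $Ph$.
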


\begin{proof}
\emph{\underline{Sufficiency}}:\
Define
\begin{eqnarray*}
\begin{aligned}
\Phi=\left(\begin{array}{c@{\hspace{0.4em}}c}
                   I_{n}-\frac{\tau}{\beta}I_{n} & \frac{\tau}{\alpha\beta}I_{n} \\
                    \frac{\alpha\tau}{\beta}I_{n} & I_{n}+\frac{\tau}{\beta}I_{n}-\beta \tau \big(\mathscr{D}+\mathscr{B})-|\mathscr{A}|\big) \\
                  \end{array}
\right).
\end{aligned}
\end{eqnarray*}
Under the condition $\sqrt{\frac{1+\alpha}{b_m+\Delta b_m}}<\beta\leq\frac{1}{\tau \tilde{d}_M}$, it can be derived that $\Psi+\Delta H$ is a super-stochastic matrix and its row sums satisfy:
\begin{eqnarray*}
\begin{aligned}
&\Lambda_{i}[\Phi+\Delta H]=1-\frac{\tau}{\beta}+\frac{\tau}{\alpha\beta}<1;\\
&\Lambda_{n+i}[\Phi+\Delta H]=1+\frac{\alpha \tau}{\beta}+\frac{\tau}{\beta}-\beta\tau |a_{i0}|<1, \ \text{if} \ |a_{i0}|>0;\\
&\Lambda_{n+i}[\Phi+\Delta H]=1+\frac{\alpha \tau}{\beta}+\frac{\tau}{\beta}>1, \ \text{if} \ |a_{i0}|=0,
\end{aligned}
\end{eqnarray*}
where $i\in\{1,2,\ldots,n\}$. Furthermore, when $\beta>\frac{\alpha^{\frac{P}{P-1}}-1}{\tau (b_m+\Delta b_m)\varphi_1^{P-1}}$, we can derive by the proof method of Theorem \ref{theorem:5.5} that
\begin{eqnarray*}
\begin{aligned}
\lim_{k\rightarrow\infty}\big[\Phi+\Delta H\big]^k=\mathbf{0}.
\end{aligned}
\end{eqnarray*}
It follows that
\begin{align*}
\lim_{k\rightarrow\infty}\|\xi(k+1)\|_{\infty}
&\leq\lim_{k\rightarrow\infty}\big\|H+\Delta H\big\|^k_{\infty}\big\|\xi(0)\big\|_{\infty}\\
&\leq\lim_{k\rightarrow\infty}\big\|\Phi+\Delta H\big\|^k_{\infty}\big\|\xi(0)\big\|_{\infty}=0.
\end{align*}
This implies that the bipartite tracking under matrix disturbance is achieved.

\emph{\underline{Necessity}}:
The proof of the necessity is similar to the proof of Theorem~\ref{theorem:3.3}, so it is omitted here.
\end{proof}

\begin{remark}
In Theorem \ref{theorem:5.5}, the appropriate parameter selection range in (\ref{sys:5.7}) and (\ref{sys:5.8}) can guarantee the convergence of the products of infinite super-stochastic matrices $\Phi(k)$, $k\in\mathbb{N}$, namely, $\lim_{k\rightarrow\infty}\prod_{s=0}^k\Phi(s)=\mathbf{0}$. In Theorem \ref{theorem:11.3}, in order to guarantee that the matrix convergence $\lim_{k\rightarrow\infty}\big[\Phi+\Delta H\big]^k=\mathbf{0}$ holds under the matrix disturbance, our approach is to establish a more strict parameter selection range (\ref{sys:11.6}) to ensure that $\Phi+\Delta H$, $k\in\mathbb{N}$ is a super-stochastic matrix, and then the convergence of error system (\ref{sys:11.5}) can be proved by using the properties of super-stochastic matrix.
\end{remark}

\section{Bipartite bounded tracking of second-order MASs with external noise disturbance}\label{section:12}

In reality, external disturbance is common in MASs due to various uncertainties such as model mismatch, channel noise, measurement error, etc. In this section, we analyze the problem of bipartite bounded tracking for second-order MASs with an active leader under the influence of external noise disturbance.

Consider second-order MASs (\ref{sys:5.1}) and (\ref{sys:5.2}). We design a linear control protocol based on the position and velocity information of agent $v_i$ and its neighbors, as follows
\begin{eqnarray}\label{sys:12.1}
\begin{aligned}
u_{i}(k)=\omega_i(k)+u^c_{i}(k),
\end{aligned}
\end{eqnarray}
where $u^c_{i}(k)$ is given in (\ref{sys:9.2}), $\omega_i(k)\in\mathbb{R}^p$ denotes the noise disturbance that satisfies $\|\omega_i(k)\|\leq\bar{\omega}$, in which $\bar{\omega}$ is a constant.

When the system is disturbed, there will be a bipartite bounded tracking phenomenon. Next, we give the definition of bipartite bounded tracking consensus for the first time.

\begin{definition}\label{definition:12.1}
Using protocol (\ref{sys:12.1}), the bipartite bounded tracking for systems (\ref{sys:5.1}) and (\ref{sys:5.2}) is said to be realized if there exist real numbers $c_1, c_2$ such that
\begin{eqnarray*}
\begin{aligned}
\forall v_i\in\mathscr{V}_1,\ & \lim_{k\rightarrow \infty}\left\|x_{i}(k)-x_{0}(k)\right\|\leq c_1,\
\lim_{k\rightarrow \infty}\left\|\vartheta_{i}(k)-\vartheta_{0}\right\|\leq c_2;\\
\forall v_i\in\mathscr{V}_{2},\ &\lim_{k\rightarrow \infty}\left\|x_{i}(k)+x_{0}(k)\right\|\leq c_1,\
\lim_{k\rightarrow \infty}\left\|\vartheta_{i}(k)+\vartheta_{0}\right\|\leq c_2.
\end{aligned}
\end{eqnarray*}
\end{definition}

Denote
\[\omega(k)=[\omega^T_1(k),\ldots,\omega^T_m(k),-\omega^T_{m+1}(k),\ldots,-\omega^T_n(k)]^T.\]
Then systems (\ref{sys:5.1}) and (\ref{sys:5.2}) under control protocol (\ref{sys:12.1}) can be written as the following form
\begin{eqnarray}\label{sys:12.2}
\begin{aligned}
e(k+1)=[\Omega_1\otimes I_p] e(k)+[\Omega_2\otimes I_p]\omega(k),
\end{aligned}
\end{eqnarray}
where $e(k)=\big[e^T_x(k),e^T_\vartheta(k)\big]^T$ in which $e_x(k)$ and $e_\vartheta(k)$ are defined in (\ref{sys:3.5}) and (\ref{sys:5.5}), respectively, and
\begin{eqnarray*}
\begin{aligned}
\Omega_1=\left(
           \begin{array}{cc}
             I_n & \tau I_n \\
            -\tau\mathscr{H} & I_n-\beta\tau\mathscr{H}\\
           \end{array}
         \right),\Omega_2=\left(
           \begin{array}{c}
             \mathbf{0} \\
            \tau I_n\\
           \end{array}
         \right)
\end{aligned}
\end{eqnarray*}
with $\mathscr{H}=\mathscr{D}+\mathscr{B}-|\mathscr{A}|$.

The characteristic polynomial of $\Omega_1$ is written as
\begin{eqnarray*}
\begin{aligned}
\text{det}(\lambda I_{2n}-\Omega_1)=&\text{det}\left(
    \begin{array}{cc}
      (\lambda-1)I_{n} & -\tau I_{n}\\
      \tau\mathscr{H} & (\lambda-1)I_{n}+\beta\tau\mathscr{H}\\
    \end{array}
\right)\\
=&\lambda^2I_n+(\beta\tau\mathscr{H}-2I_n)\lambda+\tau^2\mathscr{H}-\beta\tau\mathscr{H}+I_n.
\end{aligned}
\end{eqnarray*}
There exists a orthogonal matrix $J$ that satisfies $J^{-1}\mathscr{H}J=\text{diag}(\mu_{1},\mu_{2},\cdots,\mu_{n})$, where $\mu_{i}, i=1,2,\cdots,n$ are the eigenvalues of $\mathscr{H}$, it follows that
\begin{eqnarray*}
\begin{aligned}
\text{det}(\lambda I_{2n}-\Omega_1)=\lambda^2+(\beta\tau\mu_i-2)\lambda+\tau^2\mu_i-\beta\tau\mu_i+1.
\end{aligned}
\end{eqnarray*}
Therefore, the eigenvalues of $\Omega_1$ satisfy
\begin{eqnarray*}
\lambda^2+(\beta\tau\mu_i-2)\lambda+\tau^2\mu_i-\beta\tau\mu_i+1=0.
\end{eqnarray*}
For convenience, denote $f(\lambda,\mu_{i})=\lambda^2+(\beta\tau\mu_i-2)\lambda+\tau^2\mu_i-\beta\tau\mu_i+1$.

Before proceeding further, we first introduce the following known lemmas.

\begin{lemma}(\cite{Tang2011Leader})\label{lemma:12.2}
$\Omega_1$ is an $M$-matrix if and only if the topology graph meets the conditions \textbf{C1} and \textbf{C2}.
\end{lemma}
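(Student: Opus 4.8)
The plan is to reduce the claim to the sign and spectral structure of the block $\mathscr{H}=\mathscr{D}+\mathscr{B}-|\mathscr{A}|$, which is the only data-dependent ingredient of $\Omega_1$ and whose eigenvalues $\mu_i$ are exactly those feeding the characteristic polynomial $f(\lambda,\mu_i)$ computed above. First I would record that $\mathscr{H}$ is a $Z$-matrix: for $i\neq j$ one has $[\mathscr{H}]_{ij}=-|a_{ij}|\leq 0$, while the $i$th diagonal entry equals $\sum_{j}|a_{ij}|+|a_{i0}|$. Consequently each row sum of $\mathscr{H}$ equals $|a_{i0}|\geq 0$, so $\mathscr{H}$ is a grounded (Dirichlet) Laplacian that is weakly diagonally dominant, with strict dominance in precisely those rows $i$ for which $|a_{i0}|>0$, i.e. the followers detecting the leader directly. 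Being a nonsingular $M$-matrix is then equivalent to $\mathscr{H}$ being positive stable ($\mathrm{Re}\,\mu_i>0$ for all $i$), equivalently to $\mathscr{H}^{-1}\geq 0$; the bridge I intend to use is the classical characterization of nonsingular $M$-matrices among $Z$-matrices via weakly chained diagonal dominance.

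For sufficiency I would assume \textbf{C1} and \textbf{C2} and translate \textbf{C2} into the chaining hypothesis. A directed path $v_0\rightarrow v_{j_1}\rightarrow\cdots\rightarrow v_i$ in $\tilde{\mathscr{G}}$ means that $v_{j_1}$ sees the leader directly (so row $j_1$ is strictly dominant) and that the in-neighbour relation of $\mathscr{H}$ chains $i$ back to $j_1$. Since \textbf{C2} supplies such a path for every follower, every row of $\mathscr{H}$ either is strictly dominant or is connected by a directed walk in the digraph of nonzero off-diagonal entries to a strictly dominant row; this is exactly weak chained diagonal dominance, which forces $\mathscr{H}$ to be a nonsingular $M$-matrix. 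Here \textbf{C1} enters only as the standing structural-balance framework that makes the two-subgroup partition and the absolute-value bookkeeping in $\mathscr{H}$ meaningful; because $\mathscr{H}$ is built from the moduli $|a_{ij}|$, the sign pattern itself is invisible to this step, and the substantive content is carried by \textbf{C2}.

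For necessity I would argue contrapositively. If \textbf{C2} fails, collect the followers not reachable from the leader into a set $S$; by construction no edge enters $S$ from outside $S\cup\{v_0\}$ and $|a_{i0}|=0$ for $i\in S$, so the principal submatrix of $\mathscr{H}$ indexed by $S$ has every row sum equal to zero and hence annihilates the all-ones vector, making $\mathscr{H}$ singular and therefore not a nonsingular $M$-matrix. If \textbf{C1} is dropped while \textbf{C2} holds, $\mathscr{H}$ is literally unchanged and still an $M$-matrix, which is why \textbf{C1} should be read as the ambient structural-balance hypothesis of the bipartite setting rather than as an independent spectral requirement. Assembling the two implications yields the stated equivalence, in agreement with the known $M$-matrix result of \cite{Tang2011Leader}.

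The main obstacle is the necessity direction: isolating the leader-unreachable set $S$ and verifying that the induced block is an honest zero-row-sum Laplacian requires care with the edge orientations, since $(v_j,v_i)\in\mathscr{E}$ encodes influence $j\to i$, so ``reachable from the leader'' must be matched to the in-neighbour chains that appear in the rows of $\mathscr{H}$. The secondary delicacy is conceptual rather than computational, namely making explicit that the genuine \emph{iff} content is the reachability condition \textbf{C2}, with \textbf{C1} functioning as the standing structural-balance assumption under which the whole bipartite analysis is carried out.
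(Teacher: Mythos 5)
The paper supplies no proof of this lemma---it is imported wholesale from \cite{Tang2011Leader}---so there is no internal argument to compare against; what you have written is a self-contained proof, and it is essentially sound. You are right that the statement, read literally, is garbled: $\Omega_1$ contains the strictly positive off-diagonal block $\tau I_n$ and therefore cannot be an $M$-matrix; the object actually used downstream (in Lemma~\ref{lemma:12.4}, through $Re(\mu_i)>0$) is $\mathscr{H}=\mathscr{D}+\mathscr{B}-|\mathscr{A}|$, and your reduction to that block is the correct reading. The sufficiency argument via weakly chained diagonal dominance is correct, and you have matched the orientation properly: $[\mathscr{H}]_{ij}=-|a_{ij}|\neq 0$ exactly when $(v_j,v_i)\in\mathscr{E}$, so a directed path $v_0\rightarrow v_{j_1}\rightarrow\cdots\rightarrow v_i$ furnished by \textbf{C2} reverses into the required chain from row $i$ to the strictly dominant row $j_1$ (the one with $|a_{j_10}|>0$). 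The necessity argument is also correct once you note that, after permutation, $\mathscr{H}$ is block lower triangular with respect to the partition into the leader-unreachable set $S$ and its complement, so the singular zero-row-sum block $\mathscr{H}_{SS}$ (this uses $a_{ii}=0$, i.e.\ no self-loops, which is the paper's standing convention) forces $\det\mathscr{H}=0$.

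The one substantive caveat is the one you raise yourself: $\mathscr{H}$ is built solely from the moduli $|a_{ij}|$, so \textbf{C1} is invisible to the $M$-matrix property and the ``only if'' direction of the lemma, taken literally as an equivalence with the conjunction of \textbf{C1} and \textbf{C2}, is false. Your resolution---that \textbf{C1} is the standing structural-balance hypothesis needed to perform the gauge transformation that produces the error system (\ref{sys:12.2}) in the first place, while the genuine spectral content of the equivalence is \textbf{C2}---is the only reading under which the lemma is true, and it is consistent with how the paper uses the result. This is a defect of the cited statement, not of your proof.
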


\begin{lemma}(\cite{Parks1993Sta})\label{lemma:12.3}
Given a complex-coefficient polynomial
\begin{eqnarray*}
\begin{aligned}
f(w)=w^2 +cw+d
\end{aligned}
\end{eqnarray*}
where $c, d$ are complex numbers, $f(w)$ is Hurwitz stable if and only if $Re(c)>0$ and $Re(c)Im(c)Im(d)+Re^2(c)Re(d)-Im^2(d)>0$.
\end{lemma}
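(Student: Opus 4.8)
The plan is to prove this classical Hurwitz criterion by combining a boundary-crossing characterization with a continuity (homotopy) argument on the space of coefficients. Write $c = c_1 + i c_2$ and $d = d_1 + i d_2$ with $c_1 = Re(c)$, $c_2 = Im(c)$, $d_1 = Re(d)$, $d_2 = Im(d)$, and denote the two roots of $f$ by $w_1, w_2$, so that $w_1 + w_2 = -c$ and $w_1 w_2 = d$ by Vieta's formulas. Set $\Delta = c_1^2 d_1 + c_1 c_2 d_2 - d_2^2$, which is exactly the quantity appearing in the second inequality. The goal is to show that the Hurwitz region in coefficient space coincides with $\mathcal{S} = \{(c,d) : c_1 > 0, \ \Delta > 0\}$.

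First I would dispose of the two elementary ingredients. The necessity of $c_1 > 0$ is immediate, since both roots in the open left half-plane force $Re(w_1) + Re(w_2) = -c_1 < 0$. Next I would characterize when $f$ has a root on the imaginary axis by substituting $w = i\omega$ with $\omega \in \mathbb{R}$ and separating real and imaginary parts, which gives the pair $-\omega^2 - c_2\omega + d_1 = 0$ and $c_1\omega + d_2 = 0$. When $c_1 \neq 0$ the second equation forces the unique candidate $\omega^\ast = -d_2/c_1$; substituting it into the first and clearing the positive factor $c_1^2$ yields precisely $\Delta = 0$. Hence, on the half-space $c_1 > 0$, the polynomial $f$ has a root on the imaginary axis if and only if $\Delta = 0$, so $\{c_1 > 0, \Delta \neq 0\}$ contains no polynomial with an imaginary-axis root.

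The core of the argument is then a continuity-of-roots step. Since the roots of $f$ depend continuously on $(c,d)$ and cannot cross the imaginary axis inside either of the open sets $\mathcal{S} = \{c_1 > 0, \Delta > 0\}$ and $\mathcal{S}^{-} = \{c_1 > 0, \Delta < 0\}$, the number of roots in the open left half-plane is locally constant on each set. I would therefore verify that each set is path-connected: starting from an arbitrary point one can first drive $d_1$ to a large value of the appropriate sign (monotonically changing $c_1^2 d_1$ so as to preserve the sign of $\Delta$), then relax $c_2$ and $d_2$ to $0$, then move $c_1$ to $2$, and finally adjust $d_1$, all while keeping $c_1 > 0$ and the sign of $\Delta$ fixed. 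This connects $\mathcal{S}$ to the reference $f(w) = (w+1)^2$ (where $c = 2$, $d = 1$, so $\Delta = 4 > 0$ and both roots equal $-1$) and connects $\mathcal{S}^{-}$ to $f(w) = w^2 + 2w - 1$ (where $\Delta = -4 < 0$, with roots $-1 \pm \sqrt 2$). Consequently every polynomial in $\mathcal{S}$ has exactly two left half-plane roots and is Hurwitz, while every polynomial in $\mathcal{S}^{-}$ has exactly one and is therefore not; together with the boundary case $\Delta = 0$ (an imaginary-axis root, hence not Hurwitz), this yields both implications of the lemma.

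The main obstacle I anticipate is not any single computation but the rigorous bookkeeping of the homotopy: one must confirm that the two coefficient regions are genuinely path-connected and that no root escapes to or returns from infinity along the deformation, the latter being automatic here because the leading coefficient stays fixed at $1$ and the degree is constant. An alternative that sidesteps the connectivity discussion is to apply the ordinary real Routh--Hurwitz test to the degree-four real polynomial $F(w) = f(w)\,\overline{f}(w)$, whose roots are $\{w_1, w_2, \bar w_1, \bar w_2\}$ and which is Hurwitz exactly when $f$ is; this route is self-contained but trades the topological argument for a longer symbolic reduction that one must show collapses to $\Delta > 0$.
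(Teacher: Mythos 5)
Your argument is correct, but note that the paper itself offers no proof of this lemma at all: it is quoted verbatim from the cited reference (Parks and Hahn, \emph{Stability Theory}), so there is no in-paper derivation to compare against. Your proof is a legitimate self-contained derivation: the necessity of $Re(c)>0$ from Vieta, the computation showing that (for $c_1\neq 0$) an imaginary-axis root forces $\omega^\ast=-d_2/c_1$ and hence $\Delta=0$ (and conversely that $\Delta=0$ with $c_1>0$ produces the root $i\omega^\ast$, so the boundary case is indeed non-Hurwitz), and the locally-constant-root-count argument on the two open, path-connected regions $\{c_1>0,\Delta>0\}$ and $\{c_1>0,\Delta<0\}$ anchored at the reference polynomials $(w+1)^2$ and $w^2+2w-1$. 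The connectivity bookkeeping you flag as the main risk does go through: enlarging $|d_1|$ first makes $c_1^2 d_1$ dominate $|c_1c_2d_2-d_2^2|$, after which $c_2,d_2$ can be scaled to zero and $c_1,d_1$ moved freely without changing the sign of $\Delta$, and no root escapes to infinity since the polynomial stays monic of degree two. The textbook route (and your suggested alternative via the real Routh--Hurwitz test applied to $f(w)\overline{f}(\bar w)$) replaces the topological step with a symbolic reduction; your version is arguably more transparent about \emph{why} the sign of $\Delta$ is the deciding quantity, namely that $\Delta=0$ is exactly the imaginary-axis crossing locus in the half-space $c_1>0$.
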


In the following, our aim is to derive a condition that all eigenvalues of matrix $\Omega_1$ are within the unit circle by choosing the appropriate parameter $\beta$, that is, $f(\lambda,\mu_{i})$ is Schur stable for any $\mu_{i}$. Due to a fact that $\mu_i$ may be complex for directed graphs, it is not easy to determine directly whether $f(\lambda,\mu_{i})$ is Schur stable. For this consideration, we apply the bilinear transformation $\lambda=\frac{z+1}{z-1}$ to $f(\lambda,\mu_{i})$  yields
\begin{eqnarray*}
\tau^2\mu_iz^2+2(\beta\tau\mu_i-\tau^2\mu_i)z+4+\tau^2\mu_i-2\beta\tau\mu_i,
\end{eqnarray*}
which ie equal to
\begin{eqnarray*}
z^2+\frac{2(\beta-\tau)}{\tau}z+\frac{4+\tau^2\mu_i-2\beta\tau\mu_i}{\tau^2\mu_i}.
\end{eqnarray*}
It is worth pointing out that the bilinear transformation maps the open left-half plane one-to-one onto the interior of the unit circle. Thus, $f(\lambda,\mu_{i})$ is Schur stable if and only if
\begin{eqnarray*}
\begin{aligned}
g(z,\mu_{i})=z^2+\frac{2(\beta-\tau)}{\tau}z+\frac{4+\tau^2\mu_i-2\beta\tau\mu_i}{\tau^2\mu_i}
\end{aligned}
\end{eqnarray*}
is Hurwitz stable. We hence only need to prove that $g(z,\mu_{i})$ is Hurwitz stable by choosing the appropriate parameter $\beta$.

\begin{lemma}\label{lemma:12.4}
Suppose that the topology graph satisfies the conditions \textbf{C1} and \textbf{C2}. All eigenvalues of $\Omega_1$ are within the unit circle if the parameter $\beta$ satisfies
\begin{eqnarray}\label{sys:12.3}
\left\{
\begin{aligned}
&\beta>\tau, \ \ 4Re(\mu_i)+\tau^2|\mu_i|-2\beta\tau|\mu_i|>0, \ \ \ \ \ \ \ \ \ \ Im(\mu_i)=0,\\
&\beta>\tau, \ \ (\beta-\tau)^2>\frac{4|Im(\mu_i)|}{4Re(\mu_i)-\tau|\mu_i|(2\beta-\tau)}, \ \ \ \ Im(\mu_i)\neq0.
\end{aligned}
\right.
\end{eqnarray}
\end{lemma}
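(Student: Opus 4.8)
The plan is to reduce the spectral condition on the $2n\times 2n$ matrix $\Omega_1$ to a family of scalar problems, one per eigenvalue $\mu_i$ of $\mathscr{H}$, and then invoke the Hurwitz test of Lemma~\ref{lemma:12.3}. The excerpt has already performed the reduction: the similarity $J^{-1}\mathscr{H}J=\mathrm{diag}(\mu_1,\ldots,\mu_n)$ factors $\det(\lambda I_{2n}-\Omega_1)$ into the scalar quadratics $f(\lambda,\mu_i)=\lambda^2+(\beta\tau\mu_i-2)\lambda+\tau^2\mu_i-\beta\tau\mu_i+1$, so the requirement that every eigenvalue of $\Omega_1$ lies in the open unit disc is exactly the requirement that each $f(\lambda,\mu_i)$ is Schur stable. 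Because $\mathscr{H}$ need not be symmetric for a directed graph, the $\mu_i$ may be genuinely complex, so I would not test Schur stability directly but instead pass through the bilinear map $\lambda=\frac{z+1}{z-1}$ set up just before the statement; this converts Schur stability of $f(\lambda,\mu_i)$ into Hurwitz stability of $g(z,\mu_i)=z^2+\frac{2(\beta-\tau)}{\tau}z+\frac{4+\tau^2\mu_i-2\beta\tau\mu_i}{\tau^2\mu_i}$, which is the object to which Lemma~\ref{lemma:12.3} applies.

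Next I would read off from $g(z,\mu_i)$ the complex coefficients $c=\frac{2(\beta-\tau)}{\tau}$ and $d=1-\frac{2\beta}{\tau}+\frac{4}{\tau^2\mu_i}$. The decisive simplification is that $c$ is real, so $Im(c)=0$ and $Re(c)=\frac{2(\beta-\tau)}{\tau}$; hence the first Parks requirement $Re(c)>0$ collapses to the condition $\beta>\tau$ that heads both lines of (\ref{sys:12.3}). With $Im(c)=0$ the second Parks requirement reduces to $Re^2(c)\,Re(d)-Im^2(d)>0$. To evaluate this I would rationalise $1/\mu_i=\overline{\mu_i}/|\mu_i|^2$, obtaining $Re(d)=1-\frac{2\beta}{\tau}+\frac{4Re(\mu_i)}{\tau^2|\mu_i|^2}$ and $Im(d)=-\frac{4Im(\mu_i)}{\tau^2|\mu_i|^2}$, and then substitute these together with $Re^2(c)=\frac{4(\beta-\tau)^2}{\tau^2}$.

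The argument then splits on whether $\mu_i$ is real. When $Im(\mu_i)=0$ the term $Im^2(d)$ vanishes and the Parks inequality is simply $Re(d)>0$; clearing the positive factor coming from $\tau^2|\mu_i|$ yields the first line of (\ref{sys:12.3}). When $Im(\mu_i)\neq0$ I would insert the three quantities into $Re^2(c)\,Re(d)>Im^2(d)$, multiply through by the positive power of $|\mu_i|$ needed to clear denominators, and rearrange into a lower bound on $(\beta-\tau)^2$ of the form displayed in the second line of (\ref{sys:12.3}). At this last step I would invoke Lemma~\ref{lemma:12.2}: under \textbf{C1}--\textbf{C2} the matrix $\Omega_1$ is an $M$-matrix, which forces every $\mu_i$ to have $Re(\mu_i)>0$; this positivity is precisely what controls the sign of the bracket $4Re(\mu_i)-\tau|\mu_i|(2\beta-\tau)$ and so legitimises dividing across by it to isolate $(\beta-\tau)^2$.

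The main obstacle is this final rearrangement in the complex case: one must track the signs carefully while clearing the fractional powers of $|\mu_i|$ and confirm that the $M$-matrix bound $Re(\mu_i)>0$, combined with $\beta>\tau$, keeps the relevant denominator positive, so that the Parks inequality is genuinely equivalent to (not merely implied by) the stated threshold. The real case and the first Parks condition are routine; it is the sign bookkeeping in the quadratic-in-$(\beta-\tau)$ estimate, and its reconciliation with the printed form of (\ref{sys:12.3}), that requires the most care.
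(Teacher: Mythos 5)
You follow the paper's route in every essential: factor $\det(\lambda I_{2n}-\Omega_1)$ into the scalar quadratics $f(\lambda,\mu_i)$, pass to $g(z,\mu_i)$ by the bilinear map, apply Lemma~\ref{lemma:12.3} with $Im(c)=0$ so that the test collapses to $Re(c)>0$ together with $Re^2(c)Re(d)>Im^2(d)$, rationalise $1/\mu_i$, split on whether $Im(\mu_i)$ vanishes, and invoke Lemma~\ref{lemma:12.2} to secure $Re(\mu_i)>0$. That is exactly the paper's proof.

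The one step that would not close as you describe it is precisely the one you flag as delicate. After clearing denominators, the exact Parks inequality in the complex case has the form $(\beta-\tau)^2\bigl(4Re(\mu_i)-\tau|\mu_i|(2\beta-\tau)\bigr)>4[Im(\mu_i)]^2/|\mu_i|$ (in the paper's normalisation), whereas the printed threshold in (\ref{sys:12.3}) has numerator $4|Im(\mu_i)|$; the latter follows from the former only through the one-sided estimate $[Im(\mu_i)]^2\le |Im(\mu_i)|\,|\mu_i|$ (the paper makes the same move explicit in the corresponding lemma of Section~\ref{section:13}). The displayed condition is therefore strictly stronger than, not equivalent to, Hurwitz stability of $g(z,\mu_i)$, so your announced goal of showing the Parks inequality ``genuinely equivalent to (not merely implied by) the stated threshold'' is unattainable --- and also unnecessary, since the lemma asserts only sufficiency. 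Relatedly, your (correct) rationalisation $1/\mu_i=\overline{\mu_i}/|\mu_i|^2$ yields denominators $\tau^2|\mu_i|^2$ in $Re(d)$ and $Im(d)$, while the paper's proof and the conditions in (\ref{sys:12.3}) carry $|\mu_i|$ in their place; carried through exactly, your real case gives $4+\tau^2\mu_i-2\beta\tau\mu_i>0$ rather than the printed first line. So the ``reconciliation with the printed form'' you defer is not routine sign bookkeeping: it requires either adopting the paper's normalisation or observing that the printed conditions imply, but do not coincide with, the exact stability conditions.
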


\begin{proof}
Under the conditions \textbf{C1} and \textbf{C2}, we know from Lemma \ref{lemma:12.2} that $\Omega_1$ is an $M$-matrix. This implies that $Re(\mu_i)>0$ for any $i=1,2,\ldots,n$. Let
\begin{eqnarray*}
\begin{aligned}
&Re(c)=\frac{2(\beta-\tau)}{\tau}, Im(c)=0, \\
&Re(d)=\frac{4Re(\mu_i)+\tau^2|\mu_i|-2\beta\tau|\mu_i|}{\tau^2|\mu_i|}, Im(d)=\frac{-4Im(\mu_i)}{\tau^2|\mu_i|}.
\end{aligned}
\end{eqnarray*}
By Lemma \ref{lemma:12.3}, $g(z,\mu_{i})$ is Hurwitz stable if and only if
\begin{equation}\label{sys:12.4}
\frac{2(\beta-\tau)}{\tau}>0, \frac{4(\beta-\tau)^2}{\tau^2}\frac{4Re(\mu_i)+\tau^2|\mu_i|-2\beta\tau|\mu_i|}{\tau^2|\mu_i|}>\frac{16[Im(\mu_i)]^2}{\tau^4|\mu_i|^2}.
\end{equation}
If $Im(\mu_i)=0$, we can deduce from (\ref{sys:12.4}) that $g(z,\mu_{i})$ is Hurwitz stable if and only if
\begin{equation}\label{sys:12.5}
\beta>\tau, \ \ 4Re(\mu_i)+\tau^2|\mu_i|-2\beta\tau|\mu_i|>0.
\end{equation}
Now consider the case of $Im(\mu_i)\neq0$. To ensure the establishment of (\ref{sys:12.4}), it is need to satisfy
\begin{equation}\label{sys:12.6}
\beta>\tau, \ \ (\beta-\tau)^2>\frac{4|Im(\mu_i)|}{4Re(\mu_i)-\tau|\mu_i|(2\beta-\tau)}.
\end{equation}
This completes the proof.
\end{proof}

Next, we state the main result of bipartite bounded tracking under noise disturbance.

\begin{theorem}\label{theorem:12.5}
Suppose that the gain parameter $\beta$ satisfies the conditions in (\ref{sys:12.3}). The bipartite bounded tracking for systems (\ref{sys:5.1}) and (\ref{sys:5.2}) with distributed protocol (\ref{sys:12.1}) can be implemented if and only if the topology graph meets the conditions \textbf{C1} and \textbf{C2}.
\end{theorem}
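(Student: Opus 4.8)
The plan is to reduce the claim to the Schur stability of $\Omega_1$ together with a bounded-input bounded-state estimate for the affine recursion \eqref{sys:12.2}, and to dispatch necessity by the same case analysis used for Theorem~\ref{theorem:3.3}. For sufficiency I would first invoke Lemma~\ref{lemma:12.4}: under \textbf{C1} and \textbf{C2}, the choice of $\beta$ in \eqref{sys:12.3} places every eigenvalue of $\Omega_1$ strictly inside the unit disk, so that $\rho(\Omega_1)<1$ and hence $\Omega_1\otimes I_p$ is Schur stable. Unrolling \eqref{sys:12.2} then gives the variation-of-constants representation
\begin{equation*}
e(k)=\big(\Omega_1^{k}\otimes I_p\big)e(0)+\sum_{s=0}^{k-1}\big(\Omega_1^{k-1-s}\otimes I_p\big)\big(\Omega_2\otimes I_p\big)\omega(s).
\end{equation*}
Using $\|A\otimes I_p\|_\infty=\|A\|_\infty$, the fact that $\|\Omega_2\|_\infty=\tau$, and the uniform bound $\|\omega(s)\|_\infty\le\bar\omega$, I would estimate
\begin{equation*}
\|e(k)\|_\infty\le\big\|\Omega_1^{k}\big\|_\infty\|e(0)\|_\infty+\tau\bar\omega\sum_{j=0}^{k-1}\big\|\Omega_1^{j}\big\|_\infty.
\end{equation*}

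Because $\rho(\Omega_1)<1$, the homogeneous term vanishes as $k\to\infty$, and by Lemma~\ref{lemma:6.1} we have $\|\Omega_1^{j}\|_\infty\le z\,j^{2n-1}\rho^{j}(\Omega_1)$ for all large $j$; since exponential decay dominates polynomial growth, the series $\sum_{j\ge 0}\|\Omega_1^{j}\|_\infty$ converges to some finite $\Sigma$. Hence $\limsup_{k\to\infty}\|e(k)\|_\infty\le c:=\tau\bar\omega\,\Sigma$. Recalling that $e(k)=[e_x^T(k),e_\vartheta^T(k)]^T$ with $e_x$ and $e_\vartheta$ given in \eqref{sys:3.5} and \eqref{sys:5.5}, the first $n$ coordinate blocks encode the position errors $x_i(k)\mp x_0(k)$ and the last $n$ blocks the velocity errors $\vartheta_i(k)\mp\vartheta_0$, with the sign pattern for $\mathscr{V}_1$ and $\mathscr{V}_2$ already built into those definitions. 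Taking $c_1=c_2=c$ then yields exactly the bounds required by Definition~\ref{definition:12.1}.

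For necessity I would follow the structure of the proof of Theorem~\ref{theorem:3.3}. If \textbf{C1} fails, the three sub-cases (all nonnegative, all nonpositive, or mixed weights) show that at least one follower is driven toward a reference that is neither $x_0$ nor $-x_0$; since the active leader’s state ramps linearly, the mismatched follower tracks a different ramp and its error grows without bound. If \textbf{C2} fails, some follower is decoupled from the leader and merely integrates its own noise term, again yielding an unbounded error. In both situations Lemma~\ref{lemma:12.2} shows that $\Omega_1$ is no longer an $M$-matrix, so by Lemma~\ref{lemma:12.4} one loses $\rho(\Omega_1)<1$ and a unit-circle mode survives, which already precludes any uniform bound.

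The hard part will be the forced-response estimate rather than the eigenvalue placement, which Lemma~\ref{lemma:12.4} hands us. Specifically, one must extract a \emph{summable} bound on $\|\Omega_1^{j}\|_\infty$ from $\rho(\Omega_1)<1$—this is exactly where Lemma~\ref{lemma:6.1} is indispensable—and then assemble the explicit constant $c$ so that the $\limsup$ bound is genuinely uniform in $k$. The subtlety to stress is that mere marginal stability $\rho(\Omega_1)=1$ would not suffice: a unit-modulus mode resonating with the persistent bounded disturbance $\omega(k)$ would make the convolution sum diverge, so the strict inequality $\rho(\Omega_1)<1$ guaranteed by \eqref{sys:12.3} is essential, and it is precisely this that the necessity argument shows to be forced by \textbf{C1} and \textbf{C2}.
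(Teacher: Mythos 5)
Your proposal is correct and follows the same overall architecture as the paper's proof: invoke Lemma~\ref{lemma:12.4} to place all eigenvalues of $\Omega_1$ strictly inside the unit disk, unroll the affine recursion \eqref{sys:12.2} into a variation-of-constants formula, bound the homogeneous term by $\rho(\Omega_1)<1$, bound the forced term uniformly in $k$, and defer necessity to the case analysis of Theorem~\ref{theorem:3.3}. The one place you genuinely diverge is in how the convolution sum $\sum_{j}\|\Omega_1^{j}\|_{\infty}$ is controlled. The paper asserts that there is an \emph{orthogonal} matrix $J_1$ with $\Omega_1=J_1^{-1}KJ_1$ and $K$ diagonal, and then bounds the sum by $\|J_1^{-1}\|_{\infty}\|J_1\|_{\infty}(1-\rho)^{-1}$; this yields an explicit constant but rests on a diagonalizability-by-orthogonal-similarity claim that holds only for normal matrices, which $\Omega_1$ need not be. You instead apply Lemma~\ref{lemma:6.1} to get $\|\Omega_1^{j}\|_{\infty}\le z\,j^{2n-1}\rho^{j}(\Omega_1)$ and conclude summability from the domination of polynomial growth by exponential decay. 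Your route gives a less explicit bound $c=\tau\bar\omega\Sigma$ but is valid without any normality assumption, so it is the more robust of the two; your added remark that marginal stability $\rho(\Omega_1)=1$ would not suffice is also a correct and worthwhile observation that the paper does not make.
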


\begin{proof}
\emph{\underline{Sufficiency}}: For error system (\ref{sys:12.2}), we have
\begin{eqnarray}\label{sys:12.7}
\begin{aligned}
\|e(k)\|_{\infty}&\leq\|\Omega^k_1\|_{\infty}\|e(0)\|_{\infty}
+\left\|\sum_{i=0}^{k-1}\Omega^{k-(i+1)}_1\Omega_2\right\|_{\infty}\|\omega(k)\|_{\infty}\\
&\leq\|\Omega^k_1\|_{\infty}\|e(0)\|_{\infty}
+\bar{\omega}\tau\left\|\sum_{i=0}^{k-1}\Omega^{k-(i+1)}_1\right\|_{\infty}.
\end{aligned}
\end{eqnarray}
According to a fact that there exists an orthogonal matrix $J_1$ such that $\Omega_1=J^{-1}_1KJ_1$, where $K={\rm diag}\{\lambda_1,\lambda_2,\ldots,\lambda_{2n}\}$ and $\lambda_i$, $i=1,2,\ldots,2n$ are the eigenvalues of matrix $\Omega_1$. Then we have
\begin{eqnarray}\label{sys:12.8}
\begin{aligned}
\bar{\omega}\tau\left\|\sum_{i=0}^{k-1}\Omega^{k-(i+1)}_1\right\|_{\infty}\leq\bar{\omega}\tau\|J^{-1}_1\|_{\infty}\|J_1\|_{\infty}
\left\|\sum_{i=0}^{k-1}|K|^{k-(i+1)}_1\right\|_{\infty}.
\end{aligned}
\end{eqnarray}
Under the conditions in (\ref{sys:12.3}), we know from Lemma \ref{lemma:12.4} that all eigenvalues of $\Omega_1$ are within the unit circle, which means that $\rho=\max\{|\lambda_1|,|\lambda_2|,\ldots,|\lambda_{2n}|\}<1$. Then we have $\lim_{k\rightarrow\infty}\|\Omega^k_1\|_{\infty}=0$. Furthermore,
\begin{eqnarray}\label{sys:12.9}
\begin{aligned}
\lim_{k\rightarrow\infty}\|e(k)\|_{\infty}&\leq\bar{\omega}\tau\|J^{-1}_1\|_{\infty}\|J_1\|_{\infty}
\left\|(I_n-|K|)^{-1}\right\|_{\infty}\\
&=\bar{\omega}\tau\|J^{-1}_1\|_{\infty}\|J_1\|_{\infty}(1-\rho)^{-1}.
\end{aligned}
\end{eqnarray}
According to Definition \ref{definition:12.1}, it is known that the bipartite bounded tracking is realized.

\emph{\underline{Necessity}}: The proof of the necessity is similar to the proof of Theorem~\ref{theorem:3.3}, so it is omitted here.
\end{proof}

\begin{figure}[t]
  \centering
    \subfigure[position trajectories of the agents]{
    \includegraphics[width=2.4in]{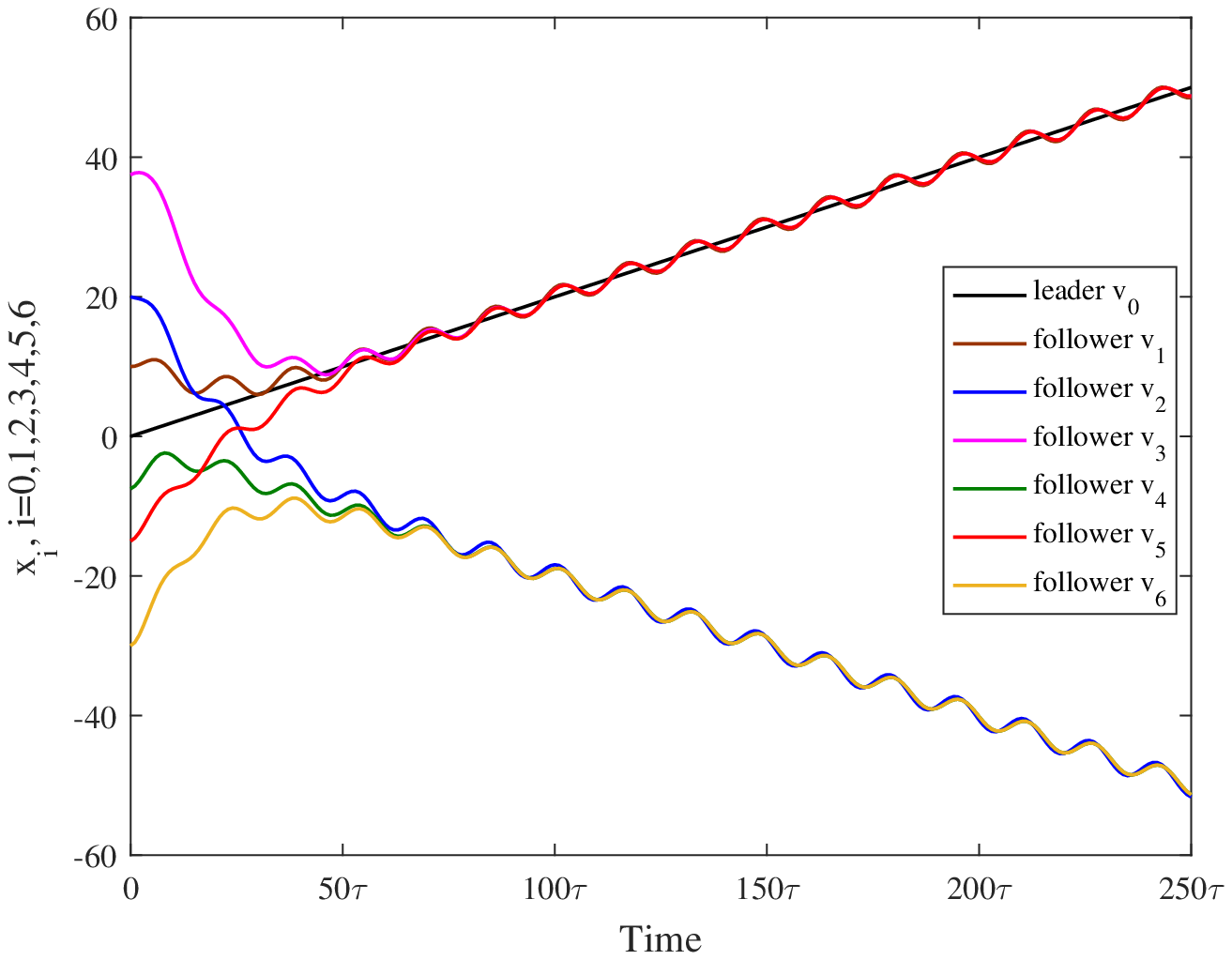}\label{fig10a}}
      \subfigure[velocity trajectories of the agents]{
    \includegraphics[width=2.4in]{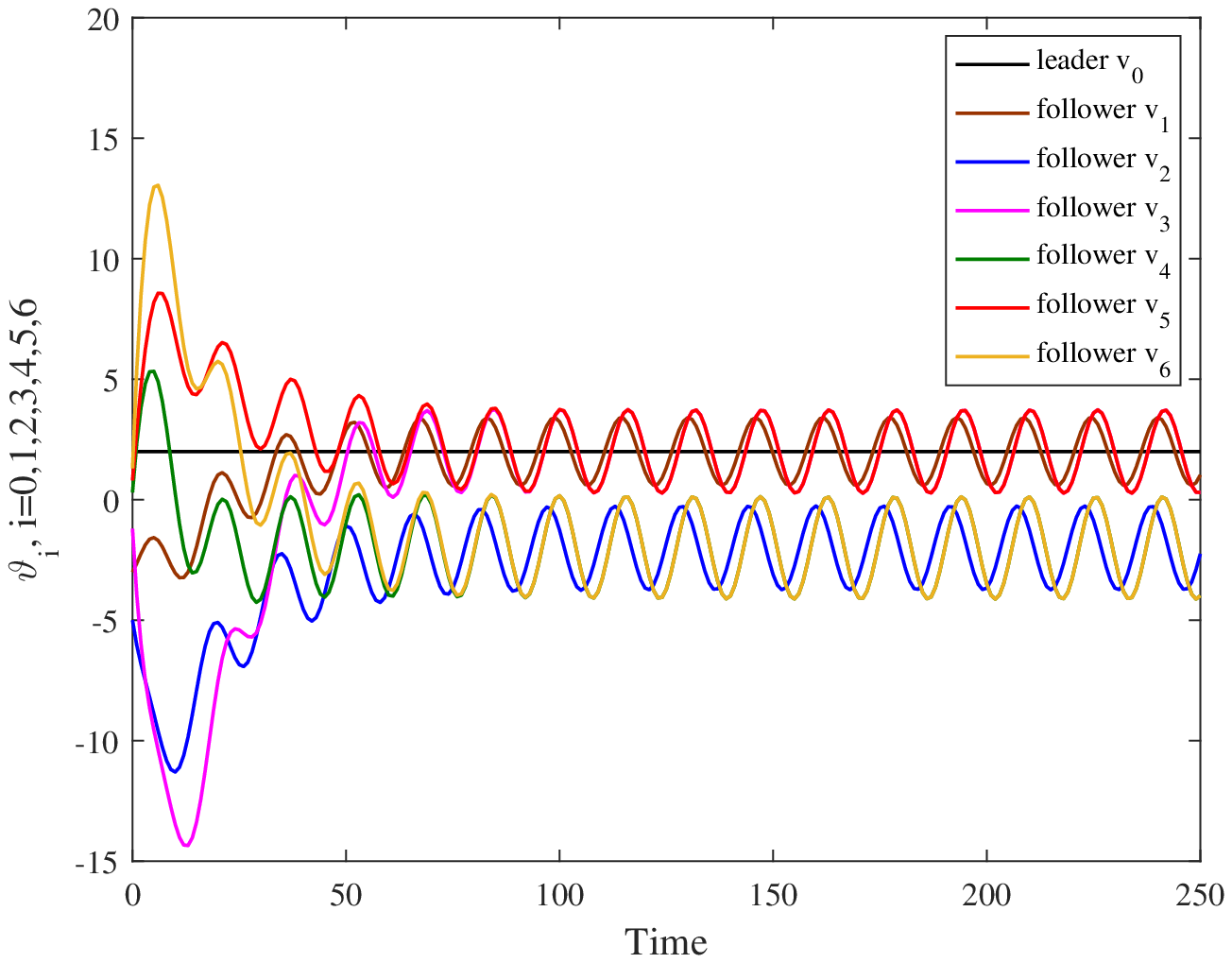}\label{fig10b}}
  \caption{State trajectories of all agents with noise disturbance in Example~\ref{example:12.7}. }\label{fig10}
\end{figure}

\begin{remark}\label{remark:12.6}
In Theorem (\ref{theorem:11.3}) and Theorem (\ref{theorem:12.5}), we consider the effects of matrix disturbance and external noise disturbance on system dynamics, respectively. In the case of matrix disturbance, we can guarantee the realization of bipartite tracking by giving the appropriate parameter selection range. For the case of external noise disturbance, we find another interesting dynamic phenomenon of bipartite bounded tracking for the first time.
\end{remark}

\begin{example}\label{example:12.7}
Consider the phenomenon of bipartite bounded tracking of second-order MASs with noise disturbance over the signed digraph $\tilde{\mathscr{G}}_2$ in Fig.~\ref{fig4}. To satisfy condition (\ref{sys:12.3}), we select $\tau=0.1$ and $\beta=2$. The noise disturbance satisfies $\omega_i(k)=0.5*\sin(0.4k)$. Fig.~\ref{fig10a} and Fig.~\ref{fig10b} show the position and velocity trajectories of the agents, from which it can be see that the bipartite bounded tracking is realized.
\end{example}

\section{Bipartite bounded tracking of MASs with a leader of unmeasurable velocity and acceleration}\label{section:13}

In this section, we consider the bipartite bounded tracking phenomenon for MASs when the followers are unable to accurately measure the leader's velocity and acceleration.

In the discrete-time setting, the leader in the system has the following dynamics:
\begin{eqnarray}\label{sys:13.1}
\begin{aligned}
&x_{0}(k+1)=x_{0}(k)+\tau\vartheta_0(k)+\frac{\tau^2}{2}a(k),\\
&\vartheta_{0}(k+1)=\vartheta_{0}(k)+\tau a(k),
\end{aligned}
\end{eqnarray}
where $a(k)=a_0(k)+\zeta(k)$ is the time-varying acceleration of the leader. In our problem formulation, the leader's velocity and acceleration are unknown to all agents, and that the leader's position is only available for a subset of agents. However, each agent has an approximate estimate $a_0(k)$ for the leader's acceleration. The error between the approximation $a_0(k)$ and the actual value $a(k)$ is bounded by a given boundary $\zeta$, that is, $\|a(k)-a_{0}(k)\|_{\infty}=\|\zeta(k)\|_{\infty}\leq\bar{\zeta}$.

Since the leader's velocity $\vartheta_0(k)$ and acceleration $a(k)$ cannot be measured during information interaction, its value cannot be used to control the design. Instead, in order to ensure the implementation of bipartite bounded tracking, it is necessary to estimate $\vartheta_0(k)$ and $a(k)$ through the location information of neighbors in the evolution process. The discrete-time dynamics of each follower is given by
\begin{eqnarray}\label{sys:13.2}
\begin{aligned}
&x_{i}(k+1)=x_{i}(k)+\tau u_{i}(k), \\
&\vartheta_{i}(k+1)=\vartheta_{i}(k)+\tau \tilde{a}_{i}(k), \ \ i=1,2,\ldots,n,
\end{aligned}
\end{eqnarray}
where $u_{i}(k)$ and $\tilde{a}_{i}(k)$ are the follower $v_i$'s estimates about the leader's velocity $\vartheta_0(k)$ and acceleration $a(k)$, respectively. We design the following local control scheme consists of two parts:
\begin{eqnarray}\label{sys:13.3}
\begin{aligned}
u_{i}(k)=\vartheta_i(k)+\kappa_i\frac{\tau}{2}\tilde{a}_{i}(k)&+\phi_1\sum\limits_{v_j\in\mathscr{N}_i}|a_{ij}|
\big[\sgn(a_{ij})x_{j}(k)-x_{i}(k)\big]\\
&+\phi_1|a_{i0}|\big[\sgn(a_{i0})x_{0}(k)-x_{i}(k)\big],
\end{aligned}
\end{eqnarray}
\begin{eqnarray}\label{sys:13.4}
\begin{aligned}
\tilde{a}_{i}(k)=\kappa_i a_0(k)&+\phi_2\phi_1\sum\limits_{v_j\in\mathscr{N}_i}|a_{ij}|\big[\sgn(a_{ij})x_{j}(k)-x_{i}(k)\big]\\
&+\phi_2\phi_1|a_{i0}|\big[\sgn(a_{i0})x_{0}(k)-x_{i}(k)\big],
\end{aligned}
\end{eqnarray}
where $\phi_1>0, \phi_2>0$ are gain parameters, and $\kappa_i$ satisfies: $\kappa_i=1$ if $v_i\in\mathscr{V}_1$, and $\kappa_i=-1$ if $v_i\in\mathscr{V}_2$.

By substituting (\ref{sys:13.3}) and (\ref{sys:13.4}) into systems (\ref{sys:13.1}) and (\ref{sys:13.2}), the following error system can be derived
\begin{eqnarray}\label{sys:13.5}
\begin{aligned}
e(k+1)=[\Upsilon_1\otimes I_p] e(k)+[\Upsilon_2\otimes I_p]\zeta(k),
\end{aligned}
\end{eqnarray}
where
\begin{eqnarray*}
\begin{aligned}
\Upsilon_1=\left(
           \begin{array}{cc}
             I_n-\big(\phi_1\tau+\frac{1}{2}\phi_1\phi_2\tau^2\big)\mathscr{H} & \tau I_n \\
            -\phi_1\phi_2\tau\mathscr{H} & I_n\\
           \end{array}
         \right),\Upsilon_2=\left(
           \begin{array}{c}
             \frac{\tau^2}{2}\mathbf{1}_n \\
            \tau\mathbf{1}_n\\
           \end{array}
         \right)
\end{aligned}
\end{eqnarray*}
with $\mathbf{1}_n\in\mathbb{R}^{n\times1}$ is a column vector in which all elements are 1.

The characteristic polynomial of $\Upsilon_1$ is written as
\begin{eqnarray*}
\begin{aligned}
\text{det}(\lambda I_{2n}-\Upsilon_1)=\lambda^2+\frac{2\phi_1\tau\mu_i+\phi_2\phi_1\tau^2\mu_i-4}{2}\lambda+\frac{2+\phi_2\phi_1\tau^2\mu_i-2\phi_1\tau\mu_1}{2}.
\end{aligned}
\end{eqnarray*}
Therefore, the eigenvalues of $\Upsilon_1$ satisfy
\begin{eqnarray*}
\lambda^2+\frac{2\phi_1\tau\mu_i+\phi_2\phi_1\tau^2\mu_i-4}{2}\lambda+\frac{2+\phi_2\phi_1\tau^2\mu_i-2\phi_1\tau\mu_1}{2}=0.
\end{eqnarray*}
For convenience, denote $f^*(\lambda,\mu_{i})=\lambda^2+\frac{2\phi_1\tau\mu_i+\phi_2\phi_1\tau^2\mu_i-4}{2}\lambda+\frac{2+\phi_2\phi_1\tau^2\mu_i-2\phi_1\tau\mu_1}{2}$. Applying the bilinear transformation $\lambda=\frac{z+1}{z-1}$ to $f^*(\lambda,\mu_{i})$, we can derive that $f^*(\lambda,\mu_{i})$ is Schur stable if and only if
\begin{eqnarray*}
\begin{aligned}
g^*(z,\mu_{i})=z^2+\frac{2-\phi_2\tau}{\phi_2\tau}z+\frac{4-2\phi_1\tau\mu_i}{\phi_2\phi_1\tau^2\mu_i}
\end{aligned}
\end{eqnarray*}
is Hurwitz stable. We hence only need to prove that $g^*(z,\mu_{i})$ is Hurwitz stable by choosing the appropriate parameter $\phi$.

\begin{lemma}\label{lemma:12.1}
Suppose that the topology graph satisfies the conditions \textbf{C1} and \textbf{C2}. All eigenvalues of $\Upsilon_1$ are within the unit circle if the parameters $\phi_1, \phi_2$ satisfy
\begin{eqnarray}\label{sys:13.6}
\left\{
\begin{aligned}
&\phi_1<\frac{2Re(\mu_i)}{\tau|\mu_i|^2}, \ \phi_2<\frac{2}{\tau}, \ \ \ \ \ \ \ \ \ \ \ \ \ \ \ \ \ \ \ \ \ \ \ \ \ \ \ \ \ \ \ \ \ \ \ \ \ \ \ \ Im(\mu_i)=0,\\
&\phi_1<\frac{2Re(\mu_i)}{|\mu_i|^2}, \ \phi_2<\frac{2}{\tau}-\sqrt{\frac{8\phi_2}{\tau^2\phi_1(2Re(\mu_i)-\phi_1\tau|\mu_i|^2)}}, \ \ Im(\mu_i)\neq0.
\end{aligned}
\right.
\end{eqnarray}
\end{lemma}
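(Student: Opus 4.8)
The plan is to follow the same route already used for Lemma~\ref{lemma:12.4}: reduce the Schur stability of $f^*(\lambda,\mu_i)$ to the Hurwitz stability of $g^*(z,\mu_i)$ through the bilinear map $\lambda=\frac{z+1}{z-1}$ (already carried out in the text), and then apply the complex-coefficient criterion of Lemma~\ref{lemma:12.3}. First I would invoke Lemma~\ref{lemma:12.2}: under \textbf{C1} and \textbf{C2} the associated matrix is an $M$-matrix, which forces $Re(\mu_i)>0$ for every eigenvalue $\mu_i$ of $\mathscr{H}$. This positivity is the structural fact that keeps the subsequent inequalities feasible for admissible $\phi_1,\phi_2$, so it must be recorded at the outset.

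Writing $g^*(z,\mu_i)=z^2+cz+d$ with $c=\frac{2-\phi_2\tau}{\phi_2\tau}$ and $d=\frac{4-2\phi_1\tau\mu_i}{\phi_2\phi_1\tau^2\mu_i}$, I observe that $c$ is real, hence $Im(c)=0$ and the second requirement of Lemma~\ref{lemma:12.3} collapses to $Re^2(c)Re(d)-Im^2(d)>0$. The key computational step is to rationalize $d$ by multiplying numerator and denominator by $\bar{\mu}_i$, which produces $Re(d)=\frac{4Re(\mu_i)-2\phi_1\tau|\mu_i|^2}{\phi_2\phi_1\tau^2|\mu_i|^2}$ and $Im(d)=\frac{-4Im(\mu_i)}{\phi_2\phi_1\tau^2|\mu_i|^2}$. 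The first Hurwitz requirement $Re(c)>0$ then translates immediately into $\phi_2<\frac{2}{\tau}$.

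I would then split into the two cases recorded in (\ref{sys:13.6}). When $Im(\mu_i)=0$ the term $Im(d)$ vanishes, so the remaining inequality reduces to $Re(d)>0$, i.e. $4Re(\mu_i)-2\phi_1\tau|\mu_i|^2>0$, which yields $\phi_1<\frac{2Re(\mu_i)}{\tau|\mu_i|^2}$ and recovers the first line of (\ref{sys:13.6}). When $Im(\mu_i)\neq0$ I would substitute the explicit $Re(c),Re(d),Im(d)$ into $Re^2(c)Re(d)-Im^2(d)>0$, clear the common positive factor $\phi_2^2\phi_1^2\tau^4|\mu_i|^4$, and solve the resulting inequality for $\phi_2$ to reach the stated bound, keeping $2Re(\mu_i)-\phi_1\tau|\mu_i|^2>0$ so that the square-root term is well defined. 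Finally, since the bilinear transformation maps the open left half-plane bijectively onto the interior of the unit disc, Hurwitz stability of $g^*(z,\mu_i)$ for every $i$ is equivalent to all eigenvalues of $\Upsilon_1$ lying strictly inside the unit circle, which closes the argument.

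The hard part will be the bookkeeping in the case $Im(\mu_i)\neq0$: after rationalization the inequality $Re^2(c)Re(d)-Im^2(d)>0$ couples $\phi_1$ and $\phi_2$, and isolating $\phi_2$ cleanly while maintaining positivity of $2Re(\mu_i)-\phi_1\tau|\mu_i|^2$ demands careful sign tracking. I expect this algebraic manipulation, rather than any conceptual difficulty, to be the principal obstacle, exactly as in the companion Lemma~\ref{lemma:12.4}.
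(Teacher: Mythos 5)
Your proposal follows the paper's proof essentially verbatim: invoke Lemma~\ref{lemma:12.2} to get $Re(\mu_i)>0$, identify $Re(c)$, $Im(c)$, $Re(d)$, $Im(d)$ for $g^*(z,\mu_i)$ after rationalizing by $\bar{\mu}_i$, apply Lemma~\ref{lemma:12.3}, and split on whether $Im(\mu_i)$ vanishes. The only detail worth flagging is that in the case $Im(\mu_i)\neq0$ the stated bound on $\phi_2$ does not come from solving $Re^2(c)Re(d)-Im^2(d)>0$ exactly; the paper first weakens it via the estimate $[Im(\mu_i)]^2<|\mu_i|^2$ to the sufficient condition $(2-\phi_2\tau)^2\big(2Re(\mu_i)-\phi_1\tau|\mu_i|^2\big)>\frac{8\phi_2}{\phi_1}$ and then isolates $\phi_2$, which is legitimate since the lemma only asserts sufficiency.
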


\begin{proof}
It is known from Lemma \ref{lemma:12.2} that $\Omega_1$ is an $M$-matrix under the conditions \textbf{C1} and \textbf{C2}. This implies that $Re(\mu_i)>0$ for any $i=1,2,\ldots,n$. Let
\begin{eqnarray*}
\begin{aligned}
&Re(c)=\frac{2-\phi_2\tau}{\phi_2\tau}, Im(c)=0, \\
&Re(d)=\frac{4Re(\mu_i)-2\phi_1\tau|\mu_i|^2}{\phi_2\phi_1\tau^2|\mu_i|^2}, Im(d)=\frac{-4Im(\mu_i)}{\phi_2\phi_1\tau^2|\mu_i|^2}.
\end{aligned}
\end{eqnarray*}
By Lemma \ref{lemma:12.3}, $g(z,\mu_{i})$ is Hurwitz stable if and only if
\begin{equation}\label{sys:13.7}
\frac{2-\phi_2\tau}{\phi_2\tau}>0, \frac{(2-\phi_2\tau)^2}{\phi^2_2\tau^2}\frac{4Re(\mu_i)-2\phi_1\tau|\mu_i|^2}{\phi_2\phi_1\tau^2|\mu_i|^2}>\frac{16[Im(\mu_i)]^2}{\phi_2^2\phi_1^2\tau^4|\mu_i|^4}.
\end{equation}
If $Im(\mu_i)=0$, we can deduce from (\ref{sys:13.7}) that $g(z,\mu_{i})$ is Hurwitz stable if and only if
\begin{equation}\label{sys:13.8}
\phi_1<\frac{2Re(\mu_i)}{\tau|\mu_i|^2}, \ \phi_2<\frac{2}{\tau}.
\end{equation}
Consider the case of $Im(\mu_i)\neq0$. Since $[Im(\mu_i)]^2<|\mu_i|^2$, the inequalities in (\ref{sys:13.7}) hold if the conditions are satisfied
\begin{equation}\label{sys:13.9}
\phi_2<\frac{2}{\tau}, \ \ (2-\phi_2\tau)^2\big(2Re(\mu_i)-\phi_1\tau|\mu_i|^2\big)>\frac{8\phi_2}{\phi_1}.
\end{equation}
Clearly, $(2-\phi_2\tau)^2\big(2Re(\mu_i)-\phi_1\tau|\mu_i|^2\big)>\frac{8\phi_2}{\phi_1}$ is satisfied only if $2Re(\mu_i)-\phi_1\tau|\mu_i|^2>0$, that is, $\phi_1<\frac{2Re(\mu_i)}{|\mu_i|^2}$. Therefore, the inequalities in (\ref{sys:13.9}) hold if the following conditions are satisfied
\begin{equation}\label{sys:13.10}
\phi_1<\frac{2Re(\mu_i)}{|\mu_i|^2}, \ \phi_2<\frac{2}{\tau}-\sqrt{\frac{8\phi_2}{\tau^2\phi_1(2Re(\mu_i)-\phi_1\tau|\mu_i|^2)}}.
\end{equation}
This completes the proof.
\end{proof}

Now we represent the necessary and sufficient condition for bipartite bounded tracking of MASs with a leader whose velocity and acceleration are unmeasurable.

\begin{theorem}\label{theorem:13.2}
Suppose that the gain parameters $\phi_1, \phi_2$ satisfy the conditions in (\ref{sys:13.6}). Consider systems (\ref{sys:13.1}) and (\ref{sys:13.2}). The distributed protocols (\ref{sys:13.3}) and (\ref{sys:13.4}) solve the bipartite bounded tracking problem if and only if the topology graph meets the conditions \textbf{C1} and \textbf{C2}.
\end{theorem}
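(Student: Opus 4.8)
The plan is to follow the sufficiency argument of Theorem~\ref{theorem:12.5} almost verbatim, since the error system (\ref{sys:13.5}) shares the inhomogeneous structure $e(k+1)=[\Upsilon_1\otimes I_p]e(k)+[\Upsilon_2\otimes I_p]\zeta(k)$ with (\ref{sys:12.2}), and the Schur stability of the homogeneous part is already delivered by Lemma~\ref{lemma:12.1}. First I would unroll the recursion (\ref{sys:13.5}) into
\begin{equation*}
e(k)=\big[\Upsilon_1^k\otimes I_p\big]e(0)+\sum_{i=0}^{k-1}\big[\Upsilon_1^{k-i-1}\Upsilon_2\otimes I_p\big]\zeta(i),
\end{equation*}
using the mixed-product property of the Kronecker product, then take infinity norms and invoke the uniform bound $\|\zeta(k)\|_{\infty}\leq\bar{\zeta}$ to decouple the decaying initial-error contribution from the persistent forcing contribution.

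Next, under \textbf{C1}, \textbf{C2} and the gain conditions (\ref{sys:13.6}), Lemma~\ref{lemma:12.1} guarantees $\rho:=\max\{|\lambda_1|,\ldots,|\lambda_{2n}|\}<1$, where $\lambda_i$ are the eigenvalues of $\Upsilon_1$. I would then show the homogeneous term tends to $0$ and control the forcing term by a convergent series: bounding $\|\Upsilon_1^m\|_{\infty}\leq z\,m^{2n-1}\rho^m$ via Lemma~\ref{lemma:6.1}, the tail $\sum_{i=0}^{k-1}\|\Upsilon_1^{k-i-1}\Upsilon_2\|_{\infty}$ is dominated by $\|\Upsilon_2\|_{\infty}\sum_{m\geq0}z\,m^{2n-1}\rho^m<\infty$. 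Collecting the estimates produces a finite limit of the form $\lim_{k\to\infty}\|e(k)\|_{\infty}\leq C\bar{\zeta}$ for an explicit constant $C$ depending on $\rho$ and $\|\Upsilon_2\|_{\infty}$, from which the bounds $c_1,c_2$ required by Definition~\ref{definition:12.1} are read off, certifying bipartite bounded tracking. Necessity I would handle exactly as in Theorem~\ref{theorem:3.3}: failure of \textbf{C1} prevents the two subgroups from settling into opposite-signed states, and failure of \textbf{C2} leaves some follower permanently cut off from the leader, so in either case no uniform bound on the tracking error can exist.

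The main obstacle I anticipate is technical rather than conceptual: sufficiency hinges on summing $\sum_i\|\Upsilon_1^{k-i-1}\Upsilon_2\|_{\infty}$, and the cleanest route assumes $\Upsilon_1$ is diagonalizable so that $\|\Upsilon_1^m\|_{\infty}$ decays like $\rho^m$ (this is the spectral decomposition $\Upsilon_1=J_1^{-1}KJ_1$ implicitly used in the proof of Theorem~\ref{theorem:12.5}). Because $\Upsilon_1$ may possess repeated or defective eigenvalues for directed signed graphs, I would guard against this by using the polynomial-times-geometric bound of Lemma~\ref{lemma:6.1} rather than a clean diagonalization; since $\sum_{m}m^{2n-1}\rho^m$ still converges for $\rho<1$, the boundedness of the forcing term---and hence the existence of $c_1,c_2$---survives even in the defective case. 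The only remaining delicate point is checking that the two branches of (\ref{sys:13.6}) are simultaneously consistent and nonvacuous for the full spectrum $\{\mu_i\}$ of $\mathscr{H}$, which amounts to verifying that the admissible window for $\phi_1,\phi_2$ is nonempty once $\operatorname{Re}(\mu_i)>0$ is secured by the $M$-matrix property of Lemma~\ref{lemma:12.2}.
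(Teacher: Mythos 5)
Your proposal is correct and follows essentially the same route as the paper, whose proof of Theorem~\ref{theorem:13.2} is simply deferred to Theorem~\ref{theorem:12.5}: unroll the inhomogeneous recursion (\ref{sys:13.5}), invoke Lemma~\ref{lemma:12.1} under \textbf{C1}, \textbf{C2} and (\ref{sys:13.6}) to get $\rho(\Upsilon_1)<1$, bound the forced response by a convergent series to extract $c_1,c_2$, and handle necessity as in Theorem~\ref{theorem:3.3}. Your one deviation---replacing the paper's diagonalization $\Omega_1=J_1^{-1}KJ_1$ with the polynomial-times-geometric bound of Lemma~\ref{lemma:6.1}---is actually more careful than the original, since $\Upsilon_1$ need not be diagonalizable (let alone orthogonally similar to a diagonal matrix), and the series $\sum_{m}m^{2n-1}\rho^{m}$ still converges for $\rho<1$, so the conclusion is unaffected.
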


\begin{proof}
This proof is similar to the proof of Theorem \ref{theorem:12.5}, so it is omitted here.
\end{proof}

\begin{figure}[t]
  \centering
    \subfigure[position trajectories of the agents]{
    \includegraphics[width=2.4in]{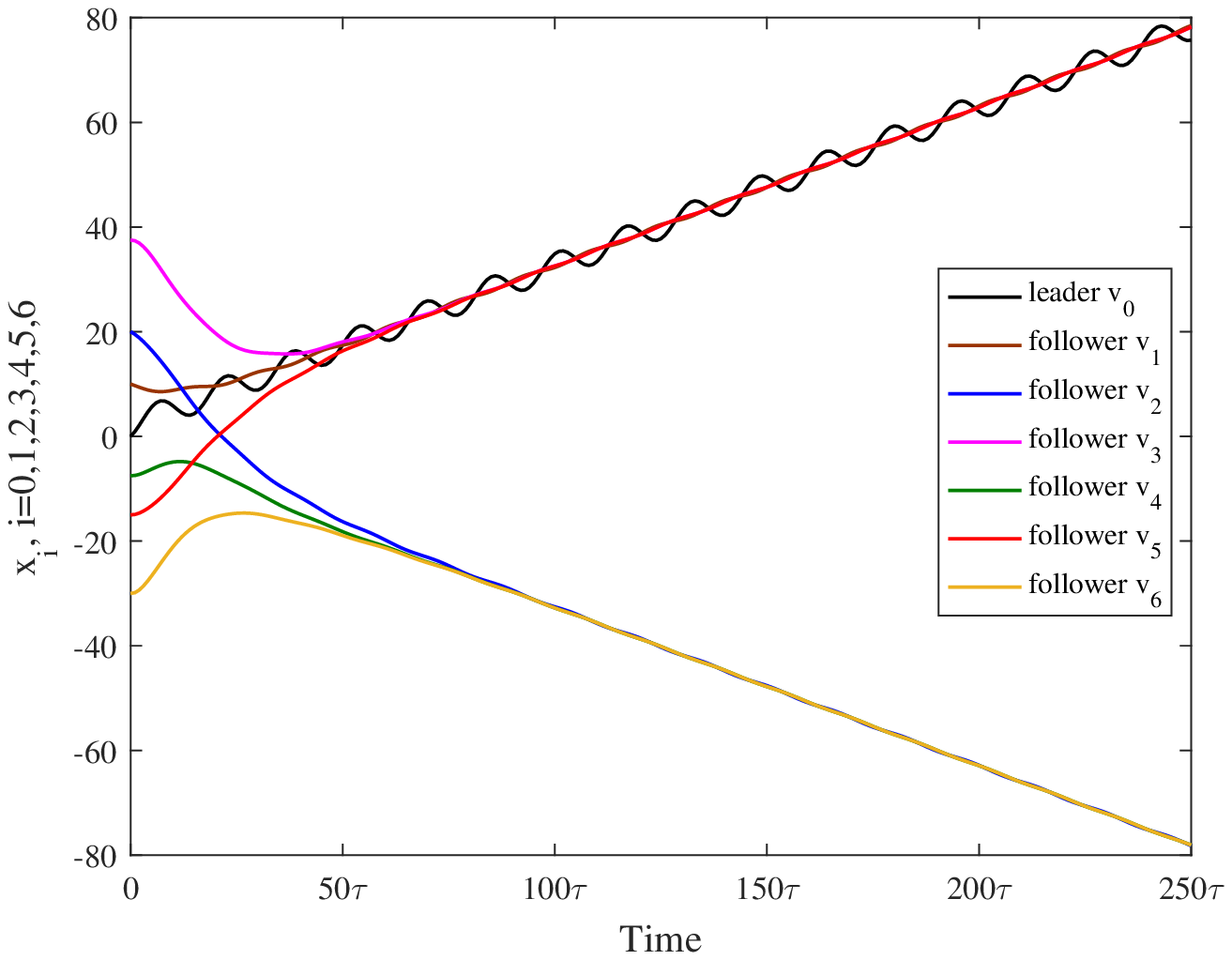}}
      \subfigure[velocity trajectories of the agents]{
    \includegraphics[width=2.4in]{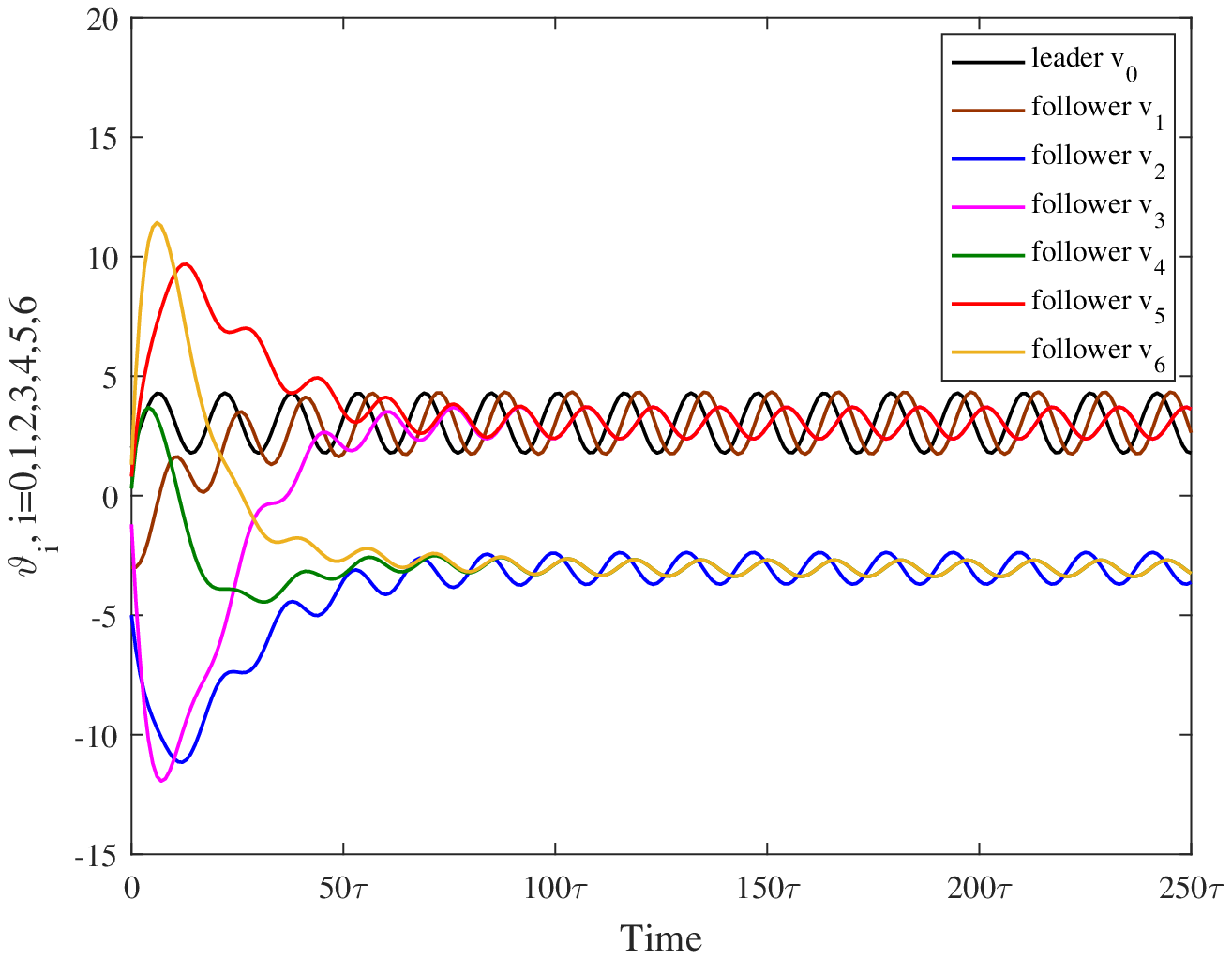}}
  \caption{State trajectories of all agents in Example~\ref{example:13.3}. }\label{fig11}
\end{figure}

\begin{example}\label{example:13.3}
Consider second-order MASs with a leader of unmeasurable velocity and acceleration. The agents interact with each other through the signed digraph $\tilde{\mathscr{G}}_2$ in Fig.~\ref{fig4}. To satisfy condition (\ref{sys:13.6}), we select $\tau=0.1$, $\phi_1=3$ and $\phi_2=1$. It is assumed that the leader moves with a time-varying acceleration $a(k)=1+0.5\sin(0.4k)$, and each follower has an approximate estimate $a_0(k)=1$ for the leader's acceleration. Finally, we can observe from the position and velocity trajectories in Fig.~\ref{fig11} that the bipartite bounded tracking is achieved.
\end{example}

\section{Conclusion}\label{section:14}

In this paper, the concept of super-stochastic matrix has been put forward for the first time, and the conclusions about the product convergence of ISubSM and ISupSM have been established by using systematic algebraic-graphical methods. Based on the product convergence of ISubSM and ISupSM, the bipartite tracking problems of first-order MASs, second-order MASs and general linear MASs under the asynchronous interactions have been examined, respectively. Moreover, the product convergence of ISubSM and ISupSM has also been applied to the analysis of bipartite tracking dynamics under different actual scenario settings, including time delays, switching topologies, random networks, lossy links, matrix disturbance, external noise disturbance, and a leader of unmeasurable velocity and acceleration. Finally, the correctness of the theoretical results has been verified by numerical examples. It is worth noting that the product properties of ISubSM and ISupSM can not only solve the bipartite tracking issues considered in this paper, but also have certain reference value for analyzing other coordination control issues of discrete-time MASs, such as containment control, formation control, flocking control, etc., which is also what we are trying to do in the future.

\section*{Acknowledgments}

The authors gratefully acknowledge the suggestions and comments by the associate editor and anonymous reviewers. This research was supported in part by the National Science Foundation of China (U1830207, 61772003, 61903066), the National Key R\&D Program of China (2017YFC1501005, 2018YFC150 5203), the China Postdoctoral Science Foundation (2017M612944, 2018T110962), the Special Postdoctoral Foundation of Sichuan Province, the Fundamental Research Funds for the Central Universities (ZYGX2018J087), the Australian Research Council (DP120104986) , and the NSW Cyber Security Network in Australia (P00025091).

\bibliographystyle{siamplain}

\end{document}